 \newcommand{\co}[1]{}
 \newtheorem{theorem}{Theorem}[section]
 \newtheorem{lemma}[theorem]{Lemma}
 \newtheorem{corollary}[theorem]{Corollary}
 \newtheorem{definition}[theorem]{Definition}
\newtheorem{defn}[theorem]{Definition}
\begin{document}
%
\title{HybridNN: Supporting Network Location Service on Generalized Delay Metrics}
%
%
%
\author{Yongquan~Fu,
Yijie~Wang, and~Ernst~Biersack,
\thanks{Yongquan Fu and Yijie Wang are with National Key Laboratory for Parallel and
Distributed Processing, College of Computer Science, University of
Defense Technology.}
\thanks{Ernst Biersack is with Networking and Security Department, Eurecom.}
}
\markboth{Journal of \LaTeX\ Class Files,~Vol.~6, No.~1, January~2007}%
{Shell \MakeLowercase{\textit{et al.}}: Bare Demo of IEEEtran.cls
for Journals}
%



\maketitle

\begin{abstract}
Distributed Nearest Neighbor Search (DNNS) locates service nodes
that have shortest interactive delay towards requesting hosts.
DNNS provides an important service for large-scale latency
sensitive networked applications, such as VoIP, online network
games, or interactive network services on the cloud. Existing work
assumes the delay to be symmetric, which does not generalize to
applications that are sensitive to one-way delays, such as the
multimedia video delivery from the servers to the hosts. We
propose a relaxed inframetric model for the network delay space
that does not assume the triangle inequality and delay symmetry to
hold. We prove that the DNNS requests can be completed efficiently
if the delay space exhibits modest inframetric dimensions, which
we can observe empirically. Finally, we propose a DNNS method
named HybridNN (\textit{Hybrid} \textit{N}earest \textit{N}eighbor
search) based on the inframetric model for fast and accurate DNNS.
For DNNS requests, HybridNN chooses closest neighbors accurately
via the inframetric modelling, and scalably by combining delay
predictions with direct probes to a pruned set of neighbors.
Simulation results show that HybridNN locates nearly optimally the
nearest neighbor. Experiments on PlanetLab show that HybridNN can
provide accurate nearest neighbors that are close to optimal with
modest query overhead and maintenance traffic.
\end{abstract}

\section{Introduction}
\label{intro}

Latency-sensitive applications, such as P2P based VoIP and IPTV
\cite{DBLP:journals/cacm/RodriguesD10}, interactive network
services on the cloud (e.g., Office Live Workspace
\cite{MSOffice}, Google Maps \cite{GoogleMap}), online network
games, need to transmit data from geo-distributed servers (called
a service node) in real-time to many hosts quickly. High
transmission delays reduce the Quality of Experience (QoE) of
users \cite{Agboma:2008:QQM:1497185.1497210}, which lead to
significant business losses
\cite{DBLP:journals/ccr/GreenbergHMP09}. For instance, Google
reports that its revenue decreases by 20\% when the latency of
showing search results increases by 500 ms; similarly, Amazon
claims that its sales amount decreases by 1\% if the page-response
latency increases by 100 ms
\cite{DBLP:journals/ccr/GreenbergHMP09}.

Since there are hundreds or thousands of service nodes that
provide identical services to hosts, there is an increasing push
for service providers to route real-time data to a host from
geo-distributed servers that are nearest to that host. For
example, Google routes users' search queries to
geographical-nearby servers \cite{DBLP:conf/imc/KrishnanMSJKAG09};
Akamai redirects hosts' content requests to replica servers mainly
based on proximity conditions \cite{Su:2006:DBA:1151659.1159962};
CoralCDN \cite{coralCDN} uses OASIS
\cite{DBLP:conf/nsdi/FreedmanLM06} and DONAR
\cite{DBLP:conf/sigcomm/WendellJFR10} to select proxy servers near
to end hosts based on geographic distances. However, selecting
nearest servers to hosts are still far from standard due to
several challenges.

\co{Donnybrook disseminates game updates via proximity-aware
multicast; }

First, \textit{selecting nearest servers must prove to be
reliable, since service providers need to ensure the QoE fairly
for all hosts}. Selecting nearest servers using proximity
coordinates
\cite{Guyton95locatingnearby,DBLP:conf/icdcs/CostaCRK04} or
geographic distances \cite{DBLP:conf/nsdi/FreedmanLM06} suffer
from the mismatch between the estimated delays and real-world
delays \cite{DBLP:conf/imc/KrishnanMSJKAG09}, which makes the
selection accuracy hard to be predicted. On the other hand,
selecting nearest servers using distributed search such as
Meridian \cite{DBLP:conf/sigcomm/WongSS05} or OASIS
\cite{DBLP:conf/nsdi/FreedmanLM06} avoid such mismatch problems
using direct probes, but may terminate at service nodes that are
much worse than the nearest ones, since the search is easily
trapped into local minima due to the clustering
\cite{DBLP:conf/imc/VishnumurthyF08} and Triangle Inequality
Violations (TIV) \cite{Lumezanu:2009:TIV:1644893.1644914}
properties of the delay space.

\co{Since  . Besides, new servers may be added and existing
servers may be unavailable due to the planned maintenance or
crashes. As a result, }

Second, \textit{selecting nearest servers must be aware of
unidirectional delays whenever possible}. Since routing on the
Internet is asymmetric \cite{Pathak:2008:MSI:1791949.1791975}, the
delays from servers to hosts may deviate those in the reverse
direction in several times. Furthermore, One-Way Delay (OWD)
measurements become increasingly practical due to the advance of
measurement techniques such as OWAMP \cite{rfc4656} or Reverse
Traceroute \cite{DBLP:conf/nsdi/Katz-BassettMASSWAK10}. However,
delay optimizations using Round Trip Time (RTT) ignores such delay
asymmetry. For multimedia streaming, application-level multicast,
or more generalized applications where data flows in one
directions, such agnostics of unidirectional delays degrades the
effectiveness of selected servers, as shown in Fig
\ref{fig:illustrationGeneralizedDelay}.

  \begin{figure}[tp]
  \leavevmode \centering \setlength{\epsfxsize}{0.5\hsize}
  \epsffile{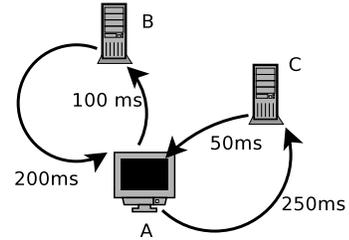}
  \caption{Illustrating the RTT and OWDs. Suppose $B$ and $C$
are two servers that are able to supply short videos to host $A$.
If we use the RTT metric to minimize the delay of video delivery,
we may arbitrarily choose any of them to send videos to host $A$
based on the RTT metric, since the RTT between $A$, $B$ and that
between $A$, $C$ are all 300 ms. However, since the video files
are transmitted from servers to hosts, the OWDs from servers to
hosts become more important
\cite{Pathak:2008:MSI:1791949.1791975}. We can see that the OWD
from server $C$ to host $A$ is four times less than that from
server $B$ to host $A$. Therefore, choosing server $C$ to serve
host $A$ significantly minimizes the content transmission delay
for host $A$, which is feasible only when we use the OWD metric
for delay optimizations.}
  \label{fig:illustrationGeneralizedDelay}
\end{figure}

Third, \textit{selecting nearest servers must find good tradeoff
between the response time and timeliness}. The response time lasts
several seconds for server selections using on-demand probing such
as Meridian \cite{DBLP:conf/sigcomm/WongSS05} or OASIS
\cite{DBLP:conf/nsdi/FreedmanLM06}. However, the response time
degrades the QoE of users in latency-sensitive applications, such
as online workspace, online music. OASIS caches nearest servers
for each IP prefix using in-advance probes once a week, which has
better response time. However, the cached server selections tend
be suboptimal, since the delays vary due to routing dynamics or
server workloads \cite{Paxson:1997:EIP:263105.263155}, and service
nodes may be added or removed dynamically. Therefore, it is
difficult to find good tradeoff between response time and the
timeliness of server selections.

\co{the nearest servers for hosts vary dynamically, since  As a
result, the selection process should be refreshed frequently
enough to adapt to such dynamics, which inevitably incurs
measurement costs. However, the bandwidth costs for locating
nearest servers need to be shortened as low as possible in order
to limit the service expenses. to be useful only in background.
 must adopt a background

, e.g., interactive workspace, Google maps

 Since the QoE of users
may degrade even the response time is within several seconds. For
instance, We leave the third challenge can be resolved using
in-advance server selections similar as OASIS
\cite{DBLP:conf/nsdi/FreedmanLM06}. }

The goal of this paper is to provide new algorithms to address the
first two challenges. To this end, we develop a general enough
delay model that captures the major statistics of the delay space,
including: TIV, delay dynamics and asymmetry of delays. This
papers makes three contributions.

First, we analytically demonstrate that we can find approximately
nearest servers quickly by iteratively searching closer nodes to
the host using sampled nodes from proximity regions of each node.
However, the analytical method requires a large number of samples,
which does not scale well.

Second, we introduce a novel distributed algorithm, named
HybridNN, that finds nearest service nodes for any machine on the
 Internet (called a target). This algorithm derives from our analytical
method, which preserves the accuracy and speediness of the
analytical method. However, HybridNN has better dynamic adaptation
and reduced measurement costs.

\noindent (i) \textbf{Dynamic adaptation}. A practical DNNS
algorithm needs to proactively maintain moderate service nodes as
samples for DNNS queries, irrespective of the system dynamics.
HybridNN dynamically maintains such neighbors using a concentric
ring used in Meridian \cite{DBLP:conf/sigcomm/WongSS05} or OASIS
\cite{DBLP:conf/nsdi/FreedmanLM06}. However, HybridNN has two
improvements:
\begin{itemize}
    \item The maximum number of nodes stored per ring is derived from the
lower bounds of required samples in the analytical method, which
implies that HybridNN requires the lowest possible number of
samples that has the same accuracy guarantee as the analytical
method.

    \item HybridNN proposes a biased sampling based concentric ring
maintenance scheme, in order to sample enough nodes for each ring.
Specifically, different from previous neighbor discoveries based
on a gossip protocol, we also periodically discover a small number
of nearest nodes and farthest nodes to each node as neighbors in
the concentric ring. This is because given a concentric ring, the
innermost and outermost rings contain only a few neighbors
compared to other rings, which are hardly to be sampled using a
gossip based neighbor discovery protocol.
\end{itemize}

\co{ continues the search process whenever some sampled server has
the same or shorter delay to the target with the current server

 As a result, with our biased sampling neighbor discoveries,
we are able to locate enough candidate neighbors for any target.

 we improve the neighbor discovery and maintenance overhead of
the concentric ring:
\begin{itemize}
    \item \textbf{Biased sampling based neighbor discovery}. We first find
that most neighbors are mapped into only a few rings on the middle
portion of the concentric ring; while the innermost and outermost
rings of most nodes contain only a few neighbors, which are hardly
to be sampled using a gossip based neighbor discovery protocol.
Then we propose to use a biased sampling approach to complement
the gossip protocol: we search nearest and farthest neighbors to
improve the fullness of the concentric rings. As a result, we are
able to locate enough candidate neighbors for each ring, which
satisfies the sampling conditions of the simple DNNS method.
    \item \textbf{Neighbor replacements using delay estimations}.
\end{itemize}}

\co{

() In order to sample enough neighbors that satisfies the sampling
conditions of the simple DNNS method. Specifically, We first find
that most neighbors are mapped into only a few rings on the middle
portion of the concentric ring, while the innermost and outermost
rings of most nodes contain only a few neighbors that are hardly
to be sampled using a gossip based neighbor discovery protocol.
Then we propose to use a biased sampling approach: we search
nearest and farthest neighbors to improve the fullness of the
concentric rings, besides using the gossip based neighbor
discovery.

 that are possibly close to hosts through concentric
rings of neighbors that was initially proposed by Meridian
\cite{DBLP:conf/sigcomm/WongSS05}. besides using a gossip based
neighbor discovery, we  neighbors for innermost and outermost
rings, since such rings find few or no nodes if we only use the
gossip protocol for the neighbor discovery. most nodes are not
mapped into such rings, which are hardly sampled by the gossip
process. due to their low percent in order to guarantee to find a
candidate server closer to the host with high probability.

}

\noindent (ii) \textbf{Reducing measurement costs}. HybridNN
adopts scalable delay predictions to reduce the measurement costs.
\begin{itemize}
    \item HybridNN maintains the concentric rings using estimated
pairwise delays with the revision \cite{DBLP:conf/imc/WangZN07} of
the Vivaldi network coordinate
\cite{DBLP:conf/sigcomm/DabekCKM04}, which significantly reduces
the maintenance overhead of HybridNN compared to Meridian.

    \item HybridNN selects candidate neighbors that are close to the target
using delay predictions. Since delay predictions are only
approximations of real-world delays, HybridNN also uses a small
number of delay probes to avoid being misled by inaccurate delay
predictions. Interestingly, although the network coordinate
distances are symmetric, we empirically find that our hybrid delay
measurement approach provides the accurate nearest next-hop
neighbor for both symmetric and asymmetric delay data sets. This
is because we replace inaccurate coordinate distances with direct
probes using the error indicator of Vivaldi coordinate, which
relieves the mismatch between symmetric coordinate distances and
asymmetric delays.
\end{itemize}

\co{ , which reduce the probing costs of DNNS queries

 Specifically, HybridNN combines estimated delays with a small
number of delay probes, in order to reduce the measurement costs
with delay estimations whenever possible and avoid being affected
by the inaccurate delay estimations. To do that, HybridNN select
samples at each search step using (i) HybridNN selects a small
number of nearest candidate neighbors from those candidate
neighbors whose coordinates are accurate enough; (ii) HybridNN
finds candidate neighbors having erroneous coordinates, using the
error indicators of coordinates provided by Vivaldi; (iii)
HybridNN finds candidate neighbors having erroneous delay
estimations caused by TIV, using a proposed heuristic of locating
TIV using Vivaldi.}

Third, we validate our algorithm using real-world delay data sets
and PlanetLab deployments. Through simulation study, we show that
HybridNN finds servers close to optimal for symmetric and
asymmetric delay data sets. In fact, in more than 95\% of cases,
HybridNN locates the ground-truth nearest servers for the targets.
Furthermore, most queries terminate within four search hops, which
implies that HybridNN can return the search results fast. Using
PlanetLab deployments, we confirm that HybridNN can locate
accurate nearest servers with low query loads and control
overhead, with moderate query time that improves Meridian in more
than 15\% of cases.

\co{
\begin{itemize}
    \item We propose a realistic theoretical model for the delay
space that generalizes to symmetric and asymmetric delays, from
which we study the feasibility of DNNS methods.
    \item We show a simple DNNS method that proves to find
the approximately nearest neighbor in logarithmic search hops,
which however suffers from high measurement costs.
    \item We improve the simple DNNS method to
obtain a practical DNNS method.
\end{itemize}

Specifically, we present a relaxed inframetric model that does not
require the symmetry and triangle inequality to hold
(Sec~\ref{theoryDNNS}), which generalizes the seminal work on the
inframetric model \cite{DBLP:conf/infocom/FraigniaudLV08} to
generalized delays. Since our DNNS problem generalizes diverse
delay metrics that exhibits asymmetry and Triangle Inequality
Violations (TIV), we can analyze DNNS more realistically.

Next, we analyze the growth of the relaxed inframetric model,
which generalizes the growth in the metric space
\cite{510013,DBLP:conf/sigcomm/WongSS05}. The growth determines
the ratio between two closed balls at identical centers with
different radii, and has lead to nearest neighbor methods in the
metric space \cite{510013,DBLP:conf/sigcomm/WongSS05}, which rely
on the triangle inequality to hold that does not suit our problem
context. Therefore, we need new theoretical analysis to generalize
the DNNS analysis into the relaxed inframetric model.

Using the growth property of the relaxed inframetric model, we
prove that we can locate approximately nearest servers quickly for
any target. Our theoretical result immediately leads to a simple
DNNS method that recursively locates a $\beta$ (($\beta \le 1$))
times closer server to the target, by sampling enough servers at
each search step.

However, the simple DNNS method has high measurement costs. Since
the number of samples depends on the growth and the delay
reduction threshold $\beta$, we find that the number of required
samples becomes close to 100 and beyond with increasing growth or
decreasing $\beta$. To realize a practical DNNS method, next we
propose a Hybrid Nearest Neighbor Search method HybridNN that
locates approximately nearest service nodes for targets with low
overhead (Sec~\ref{practicalDNNS}).

HybridNN reduces the measurement costs in two ways. (i)
\textit{Avoid measurements}. HybridNN combines the delay
estimations \cite{DBLP:conf/imc/WangZN07} and a small number of
delay probes to reduce probe costs. Interestingly, although the
network coordinate distances are symmetric, we empirically find
that our hybrid delay measurement approaches provide accurate
nearest next-hop neighbor. This is because we replace inaccurate
coordinate distances with direct probes, which relieves the errors
of delay estimations caused by the mismatch between symmetric
coordinate distances and asymmetric delays. (ii)\textit{ Reduce
selected neighbors}. Since the number of sampled neighbors
decreases with increasing delay reduction threshold $\beta$,
HybridNN set $\beta$ to be 1, which means that HybridNN allows
search steps that does not require the delay reductions. As a
result, the number of required samples is significantly reduced
compared to the simple DNNS method. Besides, HybridNN also has
better approximated nearest neighbors due to increased $\beta$,
since the approximation ratio $1/{\beta}$ of the simple DNNS is
reverse to $\beta$.

Furthermore, in order to sample enough neighbors for DNNS query at
each service node, HybridNN maximizes the diversity of neighbor
sets through biased sampling of neighbors due to the skewed
distributions of delays. Each service node $P$ uniformly samples
neighbors in a global manner and oversamples neighbors whose
delays to $P$ lie in the low-density delay ranges in the delay
distributions, which are not easily sampled by the uniform
approach.

\co{ When node $P$ receives a DNNS request, node $P$ selects
suitable candidate neighbors using the relaxed inframetric model,
and determines the nearest next-hop neighbor by combining the
TIV-aware network coordinate \cite{DBLP:conf/imc/WangZN07} based
delay estimation and additional delay probes.}

Simulation results show that HybridNN achieves nearly optimal
nearest neighbor selection (Sec~\ref{simulationExp}). A prototype
deployment on PlanetLab confirms that HybridNN can accurately
locate neighbors in three hops with modest communication overhead
(Sec~\ref{planetlabExp}). }

\co{

To achieve the goal of delay reduction for large-scale networks
via geo-distributed servers, we need a method that can find the
service node for the requesting host fast and accurately.

However, diverse delay metrics increase the complexity of nearest
server location. Although RTTs are frequently used for delay
optimizations, one way delays (OWD) become increasingly important
for delay optimizations from one-way directions
\cite{Pathak:2008:MSI:1791949.1791975}, such as the multimedia
streaming or file transfers from servers to hosts. OWDs are
typically asymmetric due to the routing asymmetry
\cite{Pathak:2008:MSI:1791949.1791975}. As a result, asymmetry in
delays degrades the effectiveness of delay optimizations that are
agnostic of unidirectional delay statistics.

\textbf{Example 1 (DNNS for short video delivery)}. We give an
illustrative example for delay optimization in delivering short
video clips to hosts (such as YouTube) using diverse delay
metrics, since short video clips are sensitive to delays due to
their small sizes \cite{Cheng:2010:CDS:1806565.1806591}. As shown
in Fig \ref{fig:illustrationGeneralizedDelay}, suppose $B$ and $C$
are two servers that are able to supply short videos to host $A$.
If we use the RTT metric to minimize the delay of video delivery,
we may arbitrarily choose any of them to send videos to host $A$
based on the RTT metric, since the RTT between $A$, $B$ and that
between $A$, $C$ are all 300 ms. However, since the video files
are transmitted from servers to hosts, the OWDs from servers to
hosts become more important
\cite{Pathak:2008:MSI:1791949.1791975}. We can see that the OWD
from server $C$ to host $A$ is four times less than that from
server $B$ to host $A$. Therefore, choosing server $C$ to serve
host $A$ significantly minimizes the content transmission delay
for host $A$, which is feasible only when we use the OWD metric
for delay optimizations.

Although we have seen the benefits of using OWDs for
latency-sensitive applications, we are aware of the difficulty of
measuring diverse delay metrics. RTT is a frequently used delay
metric since we can easily measure the RTT using system built-in
commands like \textit{Ping} or \textit{Traceroute}. On the other
hand, OWD is less frequently used than RTT, since measuring
accurate OWDs requires strict time synchronization and access to
both nodes, which is a hard problem
\cite{Pathak:2008:MSI:1791949.1791975}. However, we have seen that
measuring OWDs is becoming increasingly practical due to recent
significant technical advances. For instance, Gurewitz et al.
\cite{1638554} estimate OWDs between network nodes based on
multiple one-way measurements without time synchronization.
Reverse Traceroute \cite{DBLP:conf/nsdi/Katz-BassettMASSWAK10} is
able to estimate accurate OWDs for single links based on RTT
measurements using geo-distributed vantage points. Therefore, we
focus on minimizing a generalized delay metrics for
latency-sensitive applications, which includes both RTTs and OWDs.
}

\co{ As a result, asymmetric delays degrade the accuracy of
existing nearest server location methods, due to the assumption of
symmetry.}

\co{ Since the large scales of service nodes cause performance
bottlenecks and single points of failures for centralized based
approaches
\cite{Guyton95locatingnearby,Carter:1997:SSU:839292.843086,Carter19992529,DBLP:journals/ccr/SharmaXBL06,Su:2008:RNP:1439271.1439581,DBLP:conf/nsdi/MadhyasthaKAKV09,DBLP:conf/osdi/MadhyasthaIPDAKV06},

}

\co{ More specifically, we study how to locate nearest service
nodes for symmetric and asymmetric delays in large-scale
distributed systems. We formulate the problem of Distributed
Nearest Neighbor Search (DNNS) to find nearest service nodes for
hosts, using the distributed collaborations of service nodes for
better scalability (Sec~\ref{sysModel}), which generalizes the
distributed closest node discovery
\cite{1019369,Waldvogel02efficienttopology-aware,DBLP:conf/icdcs/CostaCRK04,DBLP:conf/sigcomm/WongSS05,DBLP:conf/nsdi/FreedmanLM06,DBLP:conf/sigcomm/WendellJFR10}.
Our DNNS formulation serves as the basis for our theoretical
analysis and algorithm design.

 Recently, sophisticated techniques
for DNNS work have been proposed, such as the centralized approach
\cite{Guyton95locatingnearby,Carter:1997:SSU:839292.843086,Carter19992529,DBLP:journals/ccr/SharmaXBL06,Su:2008:RNP:1439271.1439581,DBLP:conf/nsdi/MadhyasthaKAKV09,DBLP:conf/osdi/MadhyasthaIPDAKV06}
and the distributed approach
\cite{1019369,Waldvogel02efficienttopology-aware,DBLP:conf/icdcs/CostaCRK04,DBLP:conf/sigcomm/WongSS05,DBLP:conf/nsdi/FreedmanLM06,DBLP:conf/sigcomm/WendellJFR10}.
The centralized scheme measures the proximity  between candidate
servers and hosts, then uses a centralized sorting process to
select nearest servers for hosts. However, the centralized
approach does not scale well with increasing number of service
nodes or hosts, and requires high bandwidth costs of measurement
updates.

On the other hand, the distributed approach iteratively selects
closer servers using distributed collaborations of servers, which
is more scalable than the centralized approach. However, existing
distributed approach is easily caught into poor local minima,
since the TIV \cite{DBLP:conf/imc/WangZN07} and clustering
\cite{DBLP:conf/imc/VishnumurthyF08} in the delay metric causes
the distributed approach to terminate earlier before locating the
real nearest servers to hosts. Therefore, significant challenges
still arise in the nearest server redirection for large-scale and
dynamic service nodes. }

\co{
 typically assumes
the symmetry and triangle inequality to hold for the delay space.
However, existing work does not work well in asymmetric delays,
since asymmetric delays degrade the DNNS accuracy as shown in
Example 1. Besides, researchers have shown that the delay space
exhibits prevalence of Triangle Inequality Violations (TIV)
\cite{Lumezanu:2009:TIV:1644893.1644914,DBLP:conf/imc/WangZN07,DBLP:journals/ton/ZhangNNRDW10},
which limits the accuracy of DNNS methods
\cite{DBLP:conf/imc/WangZN07}. }

\co{ Existing DNNS work . As a result, the search accuracy and
search completion time are not bounded. Furthermore, e }


\co{
\section{Motivation}

\co{ Our motivating applications are delay-sensitive applications,
where the Quality of Experience of hosts depends on the
interactive delays between hosts and service nodes that provide
network services. Several delay-sensitive applications include:
\begin{itemize}
    \item \textbf{DNS lookup}. Multiple DNS servers can answer the lookup requests from users. Users need to obtain the DNS results as soon as
    possible.
   \item \textbf{Online Workspace}. Content providers place multiple servers that save the workspace operated by users. Users need to write workspace
   updates into the remote servers in real-time.
    \item \textbf{Web surfing}. Content is replicated across geographically distributed set of CDN servers. Users download Web pages quickly from
    the optimal CDN server. High waiting time hurts users' Web experiences.
    \item \textbf{Network Game}. The game states are saved into the remote servers. Users receive and send the game update data from remote servers in
    real-time.
\end{itemize}

Since there are hundreds or thousands of service nodes that
provide identical services to hosts, in order to increase the
Quality of Experience of hosts, one well-known application
optimization scheme is to route real-time data to a host from
geo-distributed servers that are nearest to that host
\cite{Su:2006:DBA:1151659.1159962,coralCDN,DBLP:conf/sigcomm/WongSS05}.
}

Recall that our objective is to select the service node that has
the smallest delay to hosts from hundreds or thousands of service
nodes, in order to improve the Quality of Experience of hosts as
well as the returned avenue of service providers.

\co{
 that provide identical services to hosts

Minimizing delays of service access is important to improve the
Quality of Experience of hosts as well as the returned avenue of
service providers. For example, Google reported that its revenue
decreases by 20\% when the latency of showing search results
increases by 500 ms; similarly, Amazon claimed that its sales
amount decreases by 1\% if the page-response latency increases by
100 ms \cite{DBLP:journals/ccr/GreenbergHMP09}.}

As discussed in Sec \ref{intro}, there are two basic kind of delay
metrics: One-Way Delay (OWD) and the Round Trip Time (RTT) that
differ in both definition and application. We will briefly
summarize their definitions and focus on their application on
latency-sensitive applications. Let $d$ denotes the OWD or the
RTT. Let
${\mathord{\buildrel{\lower3pt\hbox{$\scriptscriptstyle\rightarrow$}}
\over d}}$ and
${\mathord{\buildrel{\lower3pt\hbox{$\scriptscriptstyle\leftrightarrow$}}
\over d}}$ denote the OWD and RTT, respectively. Let  $A$ and $B$
be two machines on the Internet.

\textbf{Definition}. The OWD
${\mathord{\buildrel{\lower3pt\hbox{$\scriptscriptstyle\rightarrow$}}
\over d} _{AB}}$ from $A$ to $B$ is the delay of the forwarding
path from $A$ to $B$ \cite{rfc2679}. Two OWDs between any node
pair are mostly asymmetric since 88-98\% routing paths are
asymmetric \cite{1577769}. On the other hand, the RTT
${\mathord{\buildrel{\lower3pt\hbox{$\scriptscriptstyle\leftrightarrow$}}
\over d} _{AB}}$ between node $A$ and $B$ is the sum of the
forwarding delay
${\mathord{\buildrel{\lower3pt\hbox{$\scriptscriptstyle\rightarrow$}}
\over d} _{AB}}$ from $A$ to $B$ and the reverse delay
${\mathord{\buildrel{\lower3pt\hbox{$\scriptscriptstyle\rightarrow$}}
\over d} _{BA}}$ from $B$ to $A$. As shown in Fig
\ref{fig:illustrationGeneralizedDelay}. we can see that RTT is
symmetric by definition. However, Choffnes et al.
\cite{Choffnes:2010:PTE:1764873.1764880} mention that the RTTS may
be asymmetric due to the variation of the queueing delays at end
hosts.

\co{ between $A$ and $B$, since we have
${\mathord{\buildrel{\lower3pt\hbox{$\scriptscriptstyle\leftrightarrow$}}
\over d} _{AB}} =
{\mathord{\buildrel{\lower3pt\hbox{$\scriptscriptstyle\leftrightarrow$}}
\over d} _{BA}} =
{\mathord{\buildrel{\lower3pt\hbox{$\scriptscriptstyle\rightarrow$}}
\over d} _{AB}} +
{\mathord{\buildrel{\lower3pt\hbox{$\scriptscriptstyle\rightarrow$}}
\over d} _{BA}}$}

\textbf{Application}. RTT is quite useful for interactive
applications that need frequent message exchanges between two
parties, since RTT indicates the completion time of the
interactions that involve one or few request and reply messages
\cite{Pathak:2008:MSI:1791949.1791975}, such as DNS lookup, HTTP
document downloads. However, with increasing popularity of
multimedia streaming such as YouTube where data always flows from
servers to hosts, OWD becomes increasing important since directly
optimizing OWD from servers to hosts is more beneficial
\cite{Pathak:2008:MSI:1791949.1791975,Cheng:2010:CDS:1806565.1806591}
as shown in Example 1.

Due to the application importance of RTTs and OWDs, we study a
generalized delay optimization problem, which asks the feasibility
of devising accurate algorithms to redirect hosts to nearest
service nodes based on OWDs or RTTs. To the best of our knowledge,
we are the first to study the server redirection (i.e., network
location service) on the generalized delay metrics.

\co{ We are aware of extensive work on nearest server redirection
using the RTT as the delay metric, such as the centralized
approach
\cite{Guyton95locatingnearby,Carter:1997:SSU:839292.843086,Carter19992529,DBLP:journals/ccr/SharmaXBL06,Su:2008:RNP:1439271.1439581,DBLP:conf/nsdi/MadhyasthaKAKV09,DBLP:conf/osdi/MadhyasthaIPDAKV06}
and the distributed approach
\cite{1019369,Waldvogel02efficienttopology-aware,DBLP:conf/icdcs/CostaCRK04,DBLP:conf/sigcomm/WongSS05,DBLP:conf/nsdi/FreedmanLM06,DBLP:conf/sigcomm/WendellJFR10}.
The centralized scheme measures the proximity  between candidate
servers and hosts, then uses a centralized sorting process to
select nearest servers for hosts. However, the centralized
approach does not scale well with increasing number of service
nodes or hosts, and requires high bandwidth costs of measurement
updates. On the other hand, the distributed approach iteratively
selects closer servers using distributed collaborations of
servers, which is more scalable than the centralized approach.
However, existing distributed approach is easily caught into poor
local minima, since the TIV \cite{DBLP:conf/imc/WangZN07} and
clustering \cite{DBLP:conf/imc/VishnumurthyF08} in the delay
metric causes the distributed approach to terminate earlier before
locating the real nearest servers to hosts. Therefore, significant
challenges still arise in the nearest server redirection for
large-scale and dynamic service nodes.}

\co{ However, existing work has two main weaknesses: (i)
\textit{failures to guarantee the accuracy of found servers}, due
to the lack of a realistic RTT delay model that matches the
statistical properties of the RTT metric such as the Triangle
Inequality Violation (TIV)
\cite{DBLP:conf/imc/WangZN07,Lumezanu:2009:TIV:1644893.1644914}.
(ii) \textit{imbalance between search periods and search bandwidth
costs}. The centralized approach needs}

\co{

Besides, existing work also  balance the tradeoff between the
search accuracy and bandwidth costs.

\subsection{Exploring Design Alternatives}

We are aware that there are many related work on server
redirection using the RTT metric. We will show the challenges of
applying related work on the generalized delay metrics. }

\co{

This is because OWD is able to show the delay of the data
forwarding path from servers to hosts, which is used to

different kind of delay metrics.

importance of both

measurement feasibility

existing approach fails

}

}

\section{System Model}
\label{sysModel}

\co{

\subsection{Notations}

We list several important notations.

\noindent (i) \textbf{Round-Trip Time (RTT)} The RTT between node
$A$ and $B$ is the sum of the forwarding delay from $A$ to $B$ and
the reverse delay from $B$ to $A$. RTT is symmetric between $A$
and $B$, since the routing paths for two simultaneous RTT
measurements at node $A$ and node $B$ are the same on the
Internet. RTT is a frequently used delay metric since we can
easily measure the RTT using \textit{Ping}-like commands. Besides,
RTT is quite useful, since RTT indicates the completion time of
the interactions that involve one or few request and reply
messages \cite{Pathak:2008:MSI:1791949.1791975}, such as DNS
lookup, HTTP document downloads.

\noindent (ii)  \textbf{One-Way Delay (OWD)} The OWD from $A$ to
$B$ is the delay of the forwarding delay from $A$ to $B$
\cite{rfc2679}. Two OWDs for a node pair are mostly asymmetric
since 88-98\% routing paths are asymmetric \cite{1577769}. Many
network applications depend on the direction that data flows, such
as the multimedia streaming or bulk data transfers \cite{rfc2679}.
Based on the OWD, we can see the delay differences of the
forwarding path and the reverse path. However, different from RTT,
OWD is more difficult to measure, since OWD measurements require
strict time synchronization between two nodes. Recently, Gurewitz
et al. \cite{1638554} estimate OWDs between network nodes based on
multiple one-way measurements without time synchronization.
Reverse Traceroute \cite{DBLP:conf/nsdi/Katz-BassettMASSWAK10}
estimates OWDs based on RTT measurements from multiple vantage
points.

\noindent (iii) \textbf{Delay}. The delay between node $A$ to node
$B$ is the RTT or the OWD between node $A$ and $B$.

}
%
%
%




\co{
\subsection{Motivating Applications}

Our motivating applications are delay-sensitive applications,
where the Quality of Experience of hosts depends on the
interactive delays between hosts and service nodes that provide
network services. Several delay-sensitive applications include:
\begin{itemize}
    \item \textbf{DNS lookup}. Multiple DNS servers can answer the lookup requests from users. Users need to obtain the DNS results as soon as
    possible.
   \item \textbf{Online Workspace}. Content providers place multiple servers that save the workspace operated by users. Users need to write workspace
   updates into the remote servers in real-time.
    \item \textbf{Web surfing}. Content is replicated across geographically distributed set of CDN servers. Users download Web pages quickly from
    the optimal CDN server. High waiting time hurts users' Web experiences.
    \item \textbf{Network Game}. The game states are saved into the remote servers. Users receive and send the game update data from remote servers in
    real-time.
\end{itemize}

Since there are hundreds or thousands of service nodes that
provide identical services to hosts, in order to increase the
Quality of Experience of hosts, one well-known application
optimization scheme is to route real-time data to a host from
geo-distributed servers that are nearest to that host
\cite{Su:2006:DBA:1151659.1159962,coralCDN,DBLP:conf/sigcomm/WongSS05}.
To achieve the goal of delay reduction for large-scale networks
via geo-distributed servers, we need a method that can find the
service node for the requesting host fast and accurately. }

%
%
%

\subsection{Problem Definition}

In this section, we formally define the nearest server location
problem. Let $V$ denote a set of service nodes and hosts. Let a
distance function $d$ denote the pairwise delays between node
pairs in $V$. Let $N$ be the number of service nodes.

Our objective is to minimize the serving delays of
latency-sensitive applications by finding a service node for a
requesting host with the minimum delay. As discussed in the
previous section, we expect a generalized delay optimization
scenario where the delay may be symmetric or asymmetric according
to the problem context and measurement tools. Furthermore, the
service nodes may be added or removed, which causes system churns.
As a result, we need to locate the service node that is closest to
the target from dynamic service nodes.

We study a distributed approach to realize our objective, since
the centralized approach has several well-known weaknesses,
including: it requires global delay measurements that is hard to
obtain for dynamic service nodes; it incurs the single point of
failures. On the other hand, the distributed approach avoids such
weaknesses through collaborations of service nodes. Specifically,
we formulate the \textbf{Distributed Nearest Neighbor Search}
(DNNS) as:
\begin{defn}
(Distributed Nearest Neighbor Search): For a set of dynamic
service nodes, given any target $T$ on the Internet, the objective
of the \textit{Distributed Nearest Neighbor Search} is to find one
service node that has the smallest delay to $T$, based on the
distributed collaboration of service nodes. \label{DNNSDef}
\end{defn}
The definition of DNNS is not novel, since existing research on
closest server discovery
\cite{1019369,Waldvogel02efficienttopology-aware,DBLP:conf/icdcs/CostaCRK04,DBLP:conf/sigcomm/WongSS05,DBLP:conf/nsdi/FreedmanLM06,DBLP:conf/sigcomm/WendellJFR10}
has formulated the similar problem. Intuitively, DNNS consists of
multiple steps. At each step, a current service node $P$ tries to
locate a new service node that is closer to the target $T$ than
node $P$. The flowchart of a sample DNNS query is shown in
Fig~\ref{fig:illustrationDN2S}. When a host $T$ accesses a
networked service, the local service client module creates a DNNS
query to locate the nearest service machine to the client $T$. The
query message is firstly forwarded to the bootstrap machine of the
DNNS service (Step 1). Then our DNNS query system will forward the
query message recursively until locating a nearest service machine
(Step $2 \rightarrow 3$). Finally, our system returns the contact
addresses of the found service nodes to host $T$ (Step 4).

\co{ However, the DNNS definition serves as the foundation for our
theoretical analysis and algorithm design for the problem of
nearest neighbor search.}

  \begin{figure}[tp]
  \leavevmode \centering \setlength{\epsfxsize}{0.5\hsize}
  \epsffile{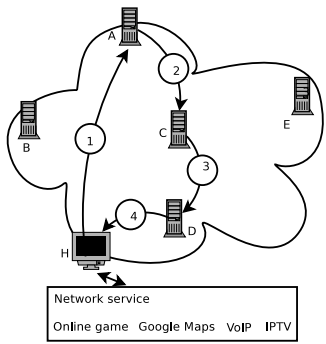}
  \caption{A DNNS query service substrate for network services.}
  \label{fig:illustrationDN2S}
\end{figure}

\co{

A related problem is the \textbf{Nearest Neighbor Search} (NNS).
Simply speaking, NNS is the problem of preprocessing a set of
points $M_p$ in a metric space $M$ so that we can locate the
nearest point in $M_p$ to a query point $q \in M$
\cite{Krauthgamer:2005:BCN:1132633.1132643}. NNS is a fundamental
problem in many applications including machine learning,
computational biology, data mining, pattern recognition, and
ad-hoc networks. However, NNS does not fit our latency-sensitive
context due to three main differences:
\begin{itemize}
    \item \textbf{non-metric
distance function}, NNS assumes the distance function $d$ is fixed
and satisfies the triangle inequality and the symmetry, which
violates the dynamic, asymmetric and TIV properties of the delay
metric on the Internet;

    \item \textbf{dynamic service nodes}, NNS
assumes a static set of points $M_p$ for answering the NNS
queries, however, the set of service nodes may be updated
dynamically;

    \item \textbf{incomplete measurements}, NNS assumes
global knowledge of points and the distance function, however,
collecting the complete pairwise delays is difficult due to
quadric measurements with respect to the number of service nodes
and hosts.
\end{itemize}
These three differences from NNS poses significant challenges for
finding the nearest service nodes in a centralized way, which has
difficulty in locating up-to-date nearest service nodes
cost-efficiently, and incurs single point of failures. Therefore,
we formulate the problem of \textbf{Distributed Nearest Neighbor
Search} (DNNS) that adapts to the dynamic delay metrics using
distributed collaborations of service nodes as:

\begin{defn}
(Distributed Nearest Neighbor Search): For a set of dynamic
service nodes $V (t)$, given any target node $T$ in the Internet,
the objective is to find one closest service node from $V (t)$
that has the smallest delay to $T$, based on the distributed
collaboration of service nodes. \label{DNNSDef}
\end{defn}

\co{ We are aware that existing closest node discoveries
\cite{DBLP:conf/sigcomm/WongSS05} also proposes the problem of
closest node discovery that uses a multi-hop search where each hop
exponentially reduces the distance to the target. However, our
DNNS formulation relaxes the requirement of the exponential delay
reduction to the target in Meridian
\cite{DBLP:conf/sigcomm/WongSS05}, which helps bypass local minima
from our empirical evaluation.}

}

\subsection{Key DNNS Requirements}

To be useful for latency-sensitive applications, we identify key
goals for the DNNS:
\begin{itemize}
    \item \textbf{Accurate}, we need to find a service node with the lowest interactive
    time in order to increase the Quality of Experience of users.
    \item \textbf{Fast}, we need to obtain the nearest service node with low query periods. Otherwise, long query time makes the DNNS less
    attractive for server redirections in latency-sensitive applications.
    \item \textbf{Scalable}, the DNNS process should incur low bandwidth costs with increasing system size.
    \item \textbf{Resilient to churns}, the DNNS process should find accurate results when the service nodes crash or new service nodes are added.
\end{itemize}

\subsection{Discussion}
\label{cacheDiscuss}

Since the DNNS process may last several seconds due to on-demand
probing, performing DNNS for each query from hosts may even hurt
the Quality of Experience of users, which is significant for small
Web objects. For example, Google typically returns responses in
less than 0.4 seconds; however, such low response periods are
difficult to be realized when applying the DNNS process before
returning the responses.

Therefore, in order to realize a practical nearest server
redirection service, we need to proactively run DNNS for each host
and redirect hosts' requests using cached DNNS results, in order
to achieve millisecond-level response time. For example, OASIS
\cite{DBLP:conf/nsdi/FreedmanLM06} shows that it is feasible to
cache DNNS queries of IP prefixes for server redirections without
reducing the DNNS accuracy.

We do not study how to organize cache results in this paper;
instead, we assume that a DNNS caching service exists to map
hosts' requests to nearest servers using cached DNNS queries. Our
focus is to realize an accurate, scalable and resilient DNNS
system with low DNNS query periods. Since if the DNNS query last
long periods, then crawling DNNS for every IP prefix will be less
efficient.

\co{ In this paper,  cached DNNS results will be obsolete due to
the delay variations and system churns. The completion time of
DNNS queries determine the effectiveness of cached DNNS results, }

\co{ However, we can see that there exist a tradeoff between the
bandwidth costs and the timeliness for cached DNNS results. The
cached results may be obsolete if we 'lazily' perform DNNS queries
within long intervals, due to the delay variations and system
churns. On the other hand, if we frequently perform DNNS queries,
the bandwidth costs increase accordingly.

However, the caching approach is useless if hosts issue DNNS
queries to un-cached targets.}

\section{Related Work}

First, for the theoretical computer science field, research on the
nearest neighbor search mainly focuses on designing efficient
algorithms in the metric space
\cite{Hjaltason:2003:ISS:958942.958948,nn_survey,Chavez:2001:SMS:502807.502808,citeulike:8175150}.
However, applying algorithms in the metric space into DNNS is
inappropriate, since the delay space violates the triangle
inequality that is required by the metric space model
\cite{DBLP:conf/imc/WangZN07}.

On the other hand, for the network system field, research on
nearest neighbor search can be classified into centralized and
distributed approaches according to the communication patterns of
the search process.

\subsection{Centralized Approaches}

The centralized scheme uses a centralized sorting process to
select nearest neighbors for target nodes. However, the
centralized approach does not scale well with increasing system
size, since collecting and transmitting the distance measurements
easily cause performance bottlenecks, which degrades the service
availability.

Guyton et al.~\cite{Guyton95locatingnearby} pioneer the research
on finding the closest server replica in a centralized manner.
They use the Hotz's
 metric \cite{hotzMetric} to represent pairwise hop distances using $O(N)$
measurements to landmark nodes, where $N$ denotes the number of
server replicas. However, smaller hop distances do not mean the
shorter delays, because one hop may pass continents or a data
center. Later Carter and Crovella
\cite{Carter:1997:SSU:839292.843086,Carter19992529} combine the
RTT and available bandwidth measurements to dynamically select
optimal server replica with minimal response time. However, the
dynamic server selection approach does not scale well due to the
quadric measurement costs. Netvigator
\cite{DBLP:journals/ccr/SharmaXBL06} collects RTT values from
hosts to landmarks and milestone nodes based on the Traceroute
measurements, and estimates nearest servers based on local
clustering. However, Netvigator does not guarantee the estimation
accuracy, and may get obsolete results since Netvigator does not
perform active measurements. Different from Netvigator, CRP
\cite{Su:2008:RNP:1439271.1439581} leverage the dynamic
association of nodes with replica servers from CDNs to determine
the proximity between end hosts. CRP incurs low maintenance costs
similar as Netvigator. However, CRP does not guarantee the
accuracy. iPlane
\cite{DBLP:conf/nsdi/MadhyasthaKAKV09,DBLP:conf/osdi/MadhyasthaIPDAKV06}
constructs a synthetic topology structure for the Internet. iPlane
estimates the nearest servers using the approximated delays on the
synthetic topology. However, in order to provide services for
hosts spanning geo-distributed places, iPlane consumes heavy
bandwidth costs to perform active measurements.

\co{by measurements from the landmark machines on the PlanetLab}

\subsection{Distributed Approaches}

The DNNS approach iteratively selects closer nodes using
distributed nearest neighbor search by local measurements towards
a small set of neighbors, which reduces the network measurement
overhead and is more scalable than the centralized approach.
Existing DNNS methods fall into four families based on their
search rules: (i) Bin based DNNS; (ii) Topology based DNNS; (iii)
Greedy search based DNNS; (iv) Ring search based DNNS.

\textbf{Bin based DNNS}. Ratnasamy et al.~\cite{1019369} assign
nodes into "bins" based on the ordered sequence of RTT
measurements to landmarks, and declare nodes are close to each
other in the same bin. However, the bin approach does not
guarantee the accuracy, and fails when the landmarks crashes.

\textbf{Topology based DNNS}. Tiers \cite{Banerjee02scalablepeer}
locates the nearest nodes by a top-down approach with a
hierarchical clustering tree, but may cause load imbalance for
nodes near the root of the tree. Besides, Tiers do not guarantee
the search accuracy since the tree does not strictly preserve the
pairwise proximity.

\textbf{Greedy search based DNNS}. Mithos
\cite{Waldvogel02efficienttopology-aware} iteratively locates
proximate neighbors with $O(N)$ hops by a gradient descent based
protocol in the overlay construction, but terminates earlier
before locating the real nearest nodes due to the limited
diversity in the neighbor set. PIC
\cite{DBLP:conf/icdcs/CostaCRK04} iteratively locates nearest
neighbors at each search step in terms of the coordinate distance.
However, PIC is prone to be trapped into the local minima since
the coordinate distance only approximates the delays. DONAR
\cite{DBLP:conf/sigcomm/WendellJFR10} redirects host requests to
optimal server relicas by considering the network proximity, the
routing optimization and server loads. DONAR uses geographic
distances as the proximity metric in order to reduce measurement
costs. However, DONAR may find suboptimal server replicas for
delay minimizations since the delay values are not consistent with
the geographic distances.

\textbf{Ring search based DNNS}. Our work is closely related to
Meridian \cite{DBLP:conf/sigcomm/WongSS05}, which seeks
approximately nearest nodes in $\log \left( N \right)$ steps.
Meridian \cite{DBLP:conf/sigcomm/WongSS05} maintains a loosely
connected overlay using a gossip based peer finding scheme. The
neighbors are organized in concentric rings with exponentially
increasing radii. For a DNNS request, Meridian iteratively locates
one next-hop node that is $\beta$ ($\beta <1$) times closer to the
target $T$ than the current Meridian node. Compared to other
families of DNNS, Meridian is more accurate by using rings of
neighbors that promote the diversity of neighbor sets
\cite{DBLP:conf/sigcomm/WongSS05}. However, several studies have
identified that Meridian may fail to find the closest service node
due to the last-hop clustering of servers
\cite{DBLP:conf/imc/VishnumurthyF08}, and TIV of the network delay
space \cite{DBLP:conf/imc/WangZN07}. Similar as Meridian, OASIS
\cite{DBLP:conf/nsdi/FreedmanLM06} organize neighbors as
concentric rings for each service node, and iteratively search
nearest service node for the request host in terms of the
geographic distances. OASIS reduces the delay measurement costs in
Meridian through the static geographic coordinates, and has low
response time using in-advance probes. However, OASIS does not
guarantee the accuracy of the search results, since selecting the
geographically closest servers may incur high delays
\cite{DBLP:conf/imc/KrishnanMSJKAG09}.

To address these problems, two adjustments are proposed:  (i)
explicitly finding the clustering subsets based on the structure
of IP addresses \cite{DBLP:conf/imc/VishnumurthyF08} or, (ii)
adding additional neighbors for DNNS that may not be chosen due to
the TIVs \cite{DBLP:conf/imc/WangZN07}. However, finding the
clusters of nodes sharing identical last hops becomes insufficient
when the service nodes spread over nearby subnets, which may still
mislead the DNNS queries due to no forwarding nodes closer enough
to the target. Furthermore, finding all neighbors that are
affected by the TIVs is challenging since calculating the TIVs for
decentralized service nodes is very difficult; besides, adding
additional neighbors for DNNS also increases the query overhead.
Due to the limitations of modifications for Meridian, significant
challenges remain in DNNS. We focus on tackling these challenges
in this paper.

\co{ Recently, we have proposed a Ring based DNNS method DKNNS for
finding $K$ nearest neighbors for any target \cite{DKNNSReport}.
DKNNS first locates an initial search node that is distant to the
target, in order to avoid being terminated earlier due to no
neighbors for backtracking. Then DKNNS finds one nearest neighbor
similar as the process like Meridian and records the corresponding
search path. Finally DKNNS backtracks along the search path to
locate $K$ nearest neighbors. DKNNS differs from our work in three
main aspects: (i) \textit{The theoretical basis is different}. The
accuracy of DKNNS relies on the metric space based model, which
has lower generalization than the DNNS results on the relaxed
inframetric model by HybridNN. (ii) \textit{The search process is
different}. DKNNS requires a distant neighbor discovery process to
avoid being terminated earlier before locating enough nearest
neighbors. However, HybridNN does not need such process, since
HybridNN uses enough sampled neighbors to find closer neighbors to
the target. (iii) \textit{The termination condition is different}.
DKNNS requires that each search step has an exponential distance
reduction to the target. However, HybridNN allows the search step
that does not reduce delay to the target, in order to reduce the
measurement costs, and find a better approximated nearest
neighbor.}

\co{ However, there are five major differences between our work
HybridNN and DKNNS. (i) \textit{The objective is difference}.
DKNNS focuses on locating multiple nearest service nodes for
parallel connections; while HybridNN aims to find one nearest
service node for a single session. (ii) \textit{The assumptions
are different}. DKNNS assumes the delay to be symmetric, which has
lower generalization than HybridNN.  (iii) \textit{The neighbor
organization is different}. DKNNS only finds the nearest neighbors
to each service node to improve the diversity of the neighbor set,
which is less effective than HybridNN for covering distant
neighbors in the delay space. (iv) \textit{The search process is
different}. DKNNS requires a distant neighbor discovery process to
avoid being terminated earlier before locating enough nearest
neighbors. However, HybridNN does not need such process, since
HybridNN uses enough sampled neighbors to find closer neighbors to
the target. (v) \textit{The termination condition is different}.
DKNNS requires that each search step has an exponential distance
reduction to the target. However, HybridNN allows the search step
that does not reduce delay to the target, in order to reduce the
measurement costs, and find a better approximated nearest
neighbor.}

\co{ Besides, the above DNNS methods focus on reducing delays from
servers to hosts, which do not consider other factors such as the
server loads or traffic engineering. DONAR
\cite{DBLP:conf/sigcomm/WendellJFR10} redirects host requests to
optimal server relicas by considering the network proximity, the
routing optimization and server loads. DONAR uses geographic
distances as the proximity metric in order to reduce measurement
costs. However, like OASIS, the delay values may deviate from the
geographic distances, which causes suboptimal selections of server
replicas for delay minimizations. }

%
%
%
%


%
%
%
%
%
%
%

\section{Data Sets}
\label{datset}

Our empirical data sets include four publicly available real-world
RTT data sets, covering the delay measurements between wide-area
DNS servers and those between end hosts \cite{HybridNNReport}.
(i) \textbf{DNS3997}. A RTT matrix collected between 3997 DNS
servers by Zhang et al.~\cite{DBLP:journals/ton/ZhangNNRDW10}
using the King method \cite{Gummadi2002_637203}. The matrix is
symmetric in that $d_{ij}=d_{ji}$, for any pair of items $i$ and
$j$, where $d$ denotes the delay matrix. (ii) \textbf{Host479}. A
RTT delay matrix based on RTT measurements that last 15-day
periods between the Vuze BitTorrent clients
\cite{DBLP:conf/infocom/ChoffnesSB10}. Host479 is asymmetric,
where in over 40\% of the cases delay pairs $d_{AB}$ and $d_{BA}$
in Host479 differ more than 4 times. This is because RTT
measurements between node pairs are not synchronized and delay
results are affected by varying queueing delays at end hosts
\cite{DBLP:conf/infocom/ChoffnesSB10}. (iii) \textbf{DNS1143}. A
RTT matrix between 1143 DNS servers collected by the MIT P2PSim
project \cite{P2PSimData} using the King method
\cite{Gummadi2002_637203}. The matrix is symmetric in that
$d_{ij}=d_{ji}$, for any pair of items $i$ and $j$, where $d$
denotes the delay matrix. (iv) \textbf{DNS2500}. A RTT matrix
between 2500 DNS servers by the Meridian project
\cite{DBLP:conf/sigcomm/WongSS05} using the King method. The
matrix is also symmetric.

\co{ which cause the RTT measurement from host $A$ to host $B$ at
time $t_0$ to be different from that in the reverse direction at
time $t_1$ ($t_0 \neq t_1$) (missing percentage 13.77\%), (missing
percentage 5\%) }

Since obtaining the one-way delays between large-scale nodes is
extremely difficult, we use Host479 as an asymmetric delay data
set. However, we do not claim that our experiments on Host479 are
the same as those on the one-way delay metric.

\co{ Future detailed study on the one-way delay data sets are
required. measurement facility to show the generalization of our
DNNS methods on the  due to the time synchronization

 Finally, due to space limits, we only report results on
DNS3997 and Host479 data sets. Empirical results on DNS1143 and
DNS2500 are similar. A few words about the notations. }


\section{A Generalized Delay Model for the Delay Space}
\label{RIMSec}

In this section, we present a simple and general enough delay
model for the delay space. Our model captures the important
characteristics of the delays, including TIV, dynamics and
asymmetry of RTTs and OWDs. In the next section, we will analyze
the DNNS problem on our model.

Assuming that we select a node $P$ in $V$ as the center of a ball,
and choose a positive real number $r$ as the radius of the ball,
then we call a \textbf{closed ball} $B_P(r)$ as the set of nodes
whose delays to node $P$ are not larger than $r$, i.e.,
$B_P(r)=\left\{ {v |d(P,v) \le r, P,v \in V} \right\}$.
Furthermore, the \textbf{volume} of a ball is the number of nodes
covered by the ball. Besides, we define the cover relation of
different set of nodes as follows:
\begin{definition}[Cover]
Let $S$ and $\Omega$ be two sets of nodes, if $\Omega \subseteq
S$, then the set $S$ is said to \textit{cover} the set $\Omega$.
\label{coverDef}
\end{definition}

\co{provides us to analyze the DNNS faithfully to the ground-truth
delays; s in order to analyze the feasibility of DNNS}

\subsection{Definition}

We first state the requirements for a delay model suitable for
RTTs and OWDs used for delay minimizations. (i) The delay model
should relax the symmetry requirements, since the OWDs are
asymmetric due to routing asymmetry \cite{1577769}. Besides,
although RTT is symmetric by only accounting for the delays on the
routing paths, real-world RTT measurements may be asymmetric due
to variations of queueing delays at end hosts or un-synchronized
measurements \cite{Choffnes:2010:PTE:1764873.1764880}. (ii) The
delay model $d$ should allow TIV to exist, since the RTT metric
exhibits TIV \cite{Lumezanu:2009:TIV:1644893.1644914}. (iii) The
delay model $d$ should allow dynamic delays, since the delay
varies from time to time \cite{Paxson:1997:EIP:263105.263155}.

Therefore, inspired by the inframetric model
\cite{DBLP:conf/infocom/FraigniaudLV08} that allows the TIVs, we
extend the inframetric model to a relaxed inframetric model that
relaxes the symmetry requirement, where the distance function $d$
satisfies:

\begin{definition}[Relaxed Inframetric Model] Let a distance function $d: V \times V \to {\Re ^{+}}$ be a relaxed $\rho$-inframetric ($\rho >1$), if $d$ satisfies the following conditions for any pair of nodes $u$ and $v$: (1) if $d(u,v)$=0, then $u$=$v$; (2) $d(u,v) \le \rho \max \left\{ {d(u,w), d(v,w)} \right\}$, for any arbitrary node $w$ satisfying $w \notin \left\{ {u,v} \right\}$.
\label{directionInframetric}
\end{definition}

\textbf{Pros of the Relaxed Inframetric Model}: The condition (2)
in Def \ref{directionInframetric} states a generalized relation of
any directed triple from $V$, which has two beneficial properties:
\begin{itemize}
    \item \textbf{TIV-adaptive}. Intuitively, smaller $\rho$ implies that
three edges are closer to each other; while larger $\rho$ implies
that one edge is significantly larger than any of the other two
edges, which may introduce a TIV. Therefore, similar as the
inframetric model, the relaxed inframetric model naturally allows
the occurrence of TIVs.

\item \textbf{Dynamics-adaptive}.  The inframetric model allows
the delay variations by varying the inframetric parameter $\rho$
to describe the relations of updated triples. Therefore, both
inframetric model and the relaxed inframetric model are able to
model variations of triples due to delay variations.

\item \textbf{Asymmetry-aware}. The relaxed inframetric model
allows the asymmetry in the delay space, which generalizes to RTTs
and OWDs. As a result, we are able to analyze DNNS on symmetric
and asymmetric delays through the relaxed inframetric model.

\end{itemize}

\co{ Therefore, we can use the inframetric model to characterize
the dynamic delay space. }

Having shown the advantages of the relaxed inframetric model, next
we discuss the statistical property of the inframetric parameter
$\rho$.

First, the seminal work states that if the delay space obeys the
triangle inequality, then $\rho$ must be smaller or equal than 2
\cite{DBLP:conf/infocom/FraigniaudLV08}. However, when $\rho$ is
smaller than 2, there may exist TIVs. For example, given a triple
with pairwise RTTs ${3,1,1.8}$, we can see that the inframeter
parameter $\rho$ is approximately 1.67 but there also exists a TIV
in the triple. Therefore, we can see that $\rho \leq 2$ is only a
\textit{necessary} but not a \textit{sufficient} condition for no
TIVs.

Second, we find that the inframetric parameter $\rho$ is quite low
for most triples. First, the 95th percentiles of all data sets of
$\rho$ are below $2.5$. Low inframetric parameter $\rho$ means the
largest edges in triples are not too much larger than the other
edges of the triples. Second, among the triples whose $\rho$ are
bigger than 2, their $\rho$ values are around 3 on average.
Therefore, selecting $\rho$=3 is reasonable to model most of the
triples.

\co{ Besides, we can bound the interval of the inframetric
parameter $\rho$ to be bigger than or equal to 1, by selecting the
maximum value of $d(u,v)$, $d(u,w)$ and $d(w,v)$ as the left side
in the condition (2) of the Definition \ref{directionInframetric}.
\begin{corollary}
The inframetric parameter $\rho \ge 1$. \label{rhoRange}
\end{corollary}}

\co{

\textbf{Pros and Cons for the Relaxed Inframetric Model}. First,
we list the Pros of the relaxed inframetric model:
\begin{itemize}
\item \textbf{Asymmetry-aware}. The relaxed inframetric model
allows the asymmetry in the delay space, which generalizes to RTTs
and OWDs. Instead, all metric space based delay models require the
symmetry to hold, which fail to generalizes to OWDs that are
important metrics for delay-sensitive applications.

\item \textbf{TIV-aware}. The inframetric model allows the triples
that violate the triangle inequality using generalized parameter
$\rho$, which suits the delay space; In contrast, all metric space
based delay models fail to represent the triples that cause TIVs.

\item \textbf{Adapting to dynamic conditions}. The inframetric
model allows the delay variations by varying the inframetric
parameter $\rho$ to describe the relations of updated triples.
Therefore, we can use the inframetric model with varied $\rho$ to
characterize the delay space over time. Instead, all metric space
based delay models do not have variables to change according to
dynamic delays.
\end{itemize}

Second, we mention the Cons of the relaxed inframetric model.
Adopting the relaxed inframetric model to the DNNS analysis
becomes non-trivial because we have very weak conditions on
triples in the relaxed inframetric model. For the metric space
based models, we can use the triangle inequality to better bound
the places of candidates that are closer to the target than the
current node.

For example, assume that the distance between the current node $A$
to the target $T$ is $d_{AT}$. If we need to find candidates that
are $\beta$ times closer to the target, (i) for the metric space
based delay model, we can sample nodes whose delay values to the
current node $P$ are within $[(1-\beta)d_{AT},(1+\beta)d_{AT}]$
due to the triangle inequality \cite{DBLP:conf/sigcomm/WongSS05};
(ii) on the other hand, for the relaxed inframetric model, we need
to sample nodes whose delay values to the current node $P$ are
within $[0,\rho \times d_{AT}]$ due to the condition 2 in Def
\ref{directionInframetric}. Since the inframetric parameter
satisfies $\rho
>2$ based on our empirical analysis and the seminal work \cite{DBLP:conf/infocom/FraigniaudLV08} and $\beta < 1$, we can see that the number of sampled
candidates on the relaxed inframetric model is generally higher
than that on the metric space. }

%
%
%

\co{
\subsection{Inframetric Parameter $\rho$ for Relaxed Inframetric Model}

Next, we empirically calculate the statistics for $\rho$. The
seminal inframetric study has already calculated $\rho$ using two
data sets. Here we extend the seminal experiments using four
diverse data sets that vary in the size, the kind of participating
hosts, and the time of measurements.

For any node triple $\left( {i,j,k} \right)$, we compute the
minimal $\rho$ that satisfies the constraint of the inframetric,
i.e.,
\[
{\rho _{\left\langle {i,j,k} \right\rangle }} \leftarrow \max
\left\{ {\frac{{{d_{ij}}}}{{\max \left\{ {{d_{ik}},{d_{kj}}}
\right\}}},\frac{{{d_{jk}}}}{{\max \left\{ {{d_{ji}},{d_{ik}}}
\right\}}},\frac{{{d_{ki}}}}{{\max \left\{ {{d_{kj}},{d_{ji}}}
\right\}}}} \right\}
\]
 Recall that $\rho <2$ is a \textit{necessary} but not \textit{sufficient} condition for the validity of the triangle inequality. Therefore, we can not directly
correlate the inframetric parameter $\rho$ to the validity of the
triangle inequality.

The statistics of $\rho$ is shown in Table \ref{rhoAll}.
Generally, $\rho$ is low for most triples; while there also exists
a small fraction of triples with higher $\rho$. First, most
triples show relatively low inframetric parameter $\rho$ (the 95th
percentiles of all data sets are below $2.5$). Lower inframetric
parameters are consistent with low TIVs in Sec~\ref{TIVSec}, which
shows that the largest edges are not too much larger than the sum
of the other edges of the triples. Second, among the triples whose
$\rho$ are bigger than 2, their $\rho$ values are around 3 on
average but may take values around 5 and higher for a small
fraction of the triples. Third, the Host479 data set shows higher
$\rho$ values than other data sets, which may be caused by the
delay aggregations.

Therefore, selecting $\rho$=3 is reasonable to model most of the
triples. Consequently, it is possible to choose a low inframetric
parameter $\rho$.

\begin{table}\footnotesize
\caption{The inframetric parameter $\rho$ statistics for all data
sets. We collect the fractions of $\rho$ that are over 2 (denoted
as $\Pr ( {\rho  > 2})$) and over 3 (denoted as $\Pr ( {\rho  > 3}
)$). Furthermore, to differentiate $\rho$ from the average case
against the extreme case, we calculate the inframetric ${\rho
_{\left\langle {i,j,k} \right\rangle }}$ with all available
triples, as well as those triples whose ${\rho _{\left\langle
{i,j,k} \right\rangle }}$ are over 2.\label{rhoAll}}{
\begin{tabular}{|p{1.5cm}|p{1.5cm}|p{1.5cm}|p{0.74cm}|p{0.74cm}|p{0.74cm}|p{0.74cm}|p{0.74cm}|p{0.74cm}|p{0.74cm}|p{0.74cm}|} \hline
& & &\multicolumn{4}{|c|}{${\rho _{\left\langle {i,j,k} \right\rangle }}$} & \multicolumn{4}{|c|}{${\rho _{\left\langle {i,j,k} \right\rangle }} >2$}\\
\cline{4-11} & $\Pr ( {\rho  > 2} )$ & $\Pr ( {\rho  > 3} )$ &
mean & pct50 &
pct5 & pct95 & mean & pct50 & pct5 & pct95\\
\hline
DNS1143 & 0.04 & 0.01 & 1.32 & 1.22 & 1.02 & 1.88 & 2.76 & 2.37  & 2.02 & 4.90\\
DNS2500 & 0.07 & 0.02 & 1.42 & 1.23 & 1.02 & 2.25 & 3.22 & 2.48  & 2.03 & 6.32\\
DNS3997 & 0.04 & 0.01 & 1.34 & 1.23 & 1.02 & 1.88 & 2.85 & 2.42  & 2.03 & 5.04\\
Host479 & 0.18 & 0.09 & 1.77 & 1.17 & 0.55 & 4.51 & 4.80 & 3.04  & 2.06 & 13.51\\
 \hline
\end{tabular}}
\end{table}%

}

\subsection{Dimensions on the Relaxed Inframetric Model}
\label{growthDimensionSec}

Having introduced the definition of the relaxed inframetric model,
now we analyze the growth dimension of the relaxed inframetric
model, which is the ratio of the number of nodes covered by two
closed balls with the identical center and varying radii
\cite{DBLP:conf/infocom/FraigniaudLV08,510013}.

The growth dimension is important for efficient DNNS. As shown by
Karger and Ruhl \cite{510013}, assuming that the growth dimension
is low, each node $P$ can uniformly sample a modest number of
nodes to locate a node that is closer to any other node in $V$.
Therefore, we can recursively find nodes closer to the target
based on the above sampling procedure, which helps the design of
the DNNS algorithms.  However, since Karger and Ruhl assumes the
triangle inequality to hold \cite{510013}, we cannot immediately
apply their DNNS results into the relaxed inframetric model.
Accordingly, we need new proof techniques for DNNS analysis.

The growth dimension for the inframetric space
\cite{DBLP:conf/infocom/FraigniaudLV08} is defined as follows:
\begin{definition}[\textit{Growth \cite{DBLP:conf/infocom/FraigniaudLV08}}]For a $\rho$-inframetric model, for any $r \in {\Re ^{+}}$ and $P \in V$, if $\left| {{B_P}\left( {\rho r} \right)} \right| \le {\gamma _g}\left| {{B_P}\left( r \right)} \right|$, where ${\gamma _g} \in {\Re ^{+}}$, the $\rho$-inframetric model is said to have a growth ${\gamma _g} \ge 1$.
\label{growthDef}
\end{definition}

\co{
\begin{definition}[\textit{Doubling \cite{DBLP:conf/infocom/FraigniaudLV08}}]For a $\rho$-inframetric model, for any $r \in {\Re ^{+}}$ and $P \in V$, if ${B_P}\left( {\rho r} \right) \subseteq { \cup _{{v_i} \in I}}{B_{{v_i}}}\left( r \right)$, where $I \subseteq V$, $\left| I \right| \le {\gamma _d}$ and ${\gamma _d} \in {\Re ^{+}}$, the $\rho$-inframetric model is said to have a doubling ${\gamma _d} \ge 1$.
\label{doublingDef}
\end{definition}
}

The growth dimension ${\gamma _g}$ on the inframetric model
generalizes the growth definition in the metric space which
assumes the triangle inequality to hold
\cite{510013,DBLP:journals/ton/ZhangNNRDW10}. Therefore, the
growth ${\gamma _g}$ inherits the intuitive meanings of the growth
definition in the metric space. Specifically,  low growth ${\gamma
_g}$ means that the number of nodes covered by the closed ball
$B_P(\rho r)$ is comparable to the number of nodes covered by the
closed ball $B_P(r)$. Therefore, when we expand a ball around a
node $P \in V$, we can see that new nodes in $V$ "come into view"
at a constant rate \cite{510013}.

\co{ However, since the DNNS results for the growth dimension in
the metric space depends on the triangle inequality to hold
\cite{510013}, we cannot immediately apply these DNNS results into
the relaxed inframetric model. Accordingly, we need new proof
techniques. }

Finally, based on Def \ref{growthDef}, the infimum of the growth
dimension ${\gamma _g}$ equals the ratio of the volume between
$B_P(\rho r)$ and $B_P(r)$ for any node $P$ and radius $r$. Since
we are interested in the infimum, when we refer to the growth of
the inframetric space, we mean the infimum accordingly.

\co{ For example, the delay space for nodes in network
applications such as Internet or Peer-to-Peer networks has low
growth dimension.}

 \co{ Similarly, the doubling dimension is the number
of balls of radius $r$ whose union covers the ball $B_P(\rho r)$.
The doubling dimension generalizes the growth dimension
\cite{DBLP:conf/infocom/FraigniaudLV08}.

The doubling dimension has an intuitive explanation for designing
accurate DNNS algorithms. If the doubling dimension is low, then
we can sample a small number of nodes to ensure that there exists
at least one node and the target are covered by a small-radius
ball, i.e., we can find a node that is closer to the target than
the current node. Accordingly, we can recursively find nodes
closer to the target based on the above sampling procedure, which
helps the design of the DNNS algorithms.

Besides, we can see that the growth has the infimum, but do not
constrain the maximal values. Consequently, from now on, when we
refer to the growth of the inframetric space, we mean the infimum.
}

Next, we empirically evaluate the growth dimension of the delay
space with respect to the radius $r$ and the inframetric parameter
$\rho$. Our evaluation complements the seminal work on the growth
in the inframetric model \cite{DBLP:conf/infocom/FraigniaudLV08}
using symmetric and asymmetric data sets. Recall that computing
the growth is trivial by comparing the volumes of the balls with
identical centers and varying radii.

Fig~\ref{fig:MedGrowth} shows the median and 90th percentile
growth values for varying radii. The median growth of most data
sets is relatively small, and declines quickly with increasing
radii for most data sets except for Host479. For Host479, the
median growth may increase as the radii increase. On the other
hand, the 90th percentile growth shows divergent dynamics for
different data sets, revealing "\textit{M}"-shape dynamics,
indicating that a small fraction of growth values may increase or
decrease with increasing radii.

Furthermore, by selecting different percentages of nodes for the
statistics, Fig~\ref{fig:MedGrowth} shows that the median growth
is less sensitive to the sample size compared to the magnitudes of
radii; while the 90th percentile growth becomes relatively more
sensitive to the sample size.

\begin{figure}[tp]
     \centering
           \subfigure[DNS1143.]
        {
          \setlength{\epsfxsize}{.44\hsize}
          \epsffile{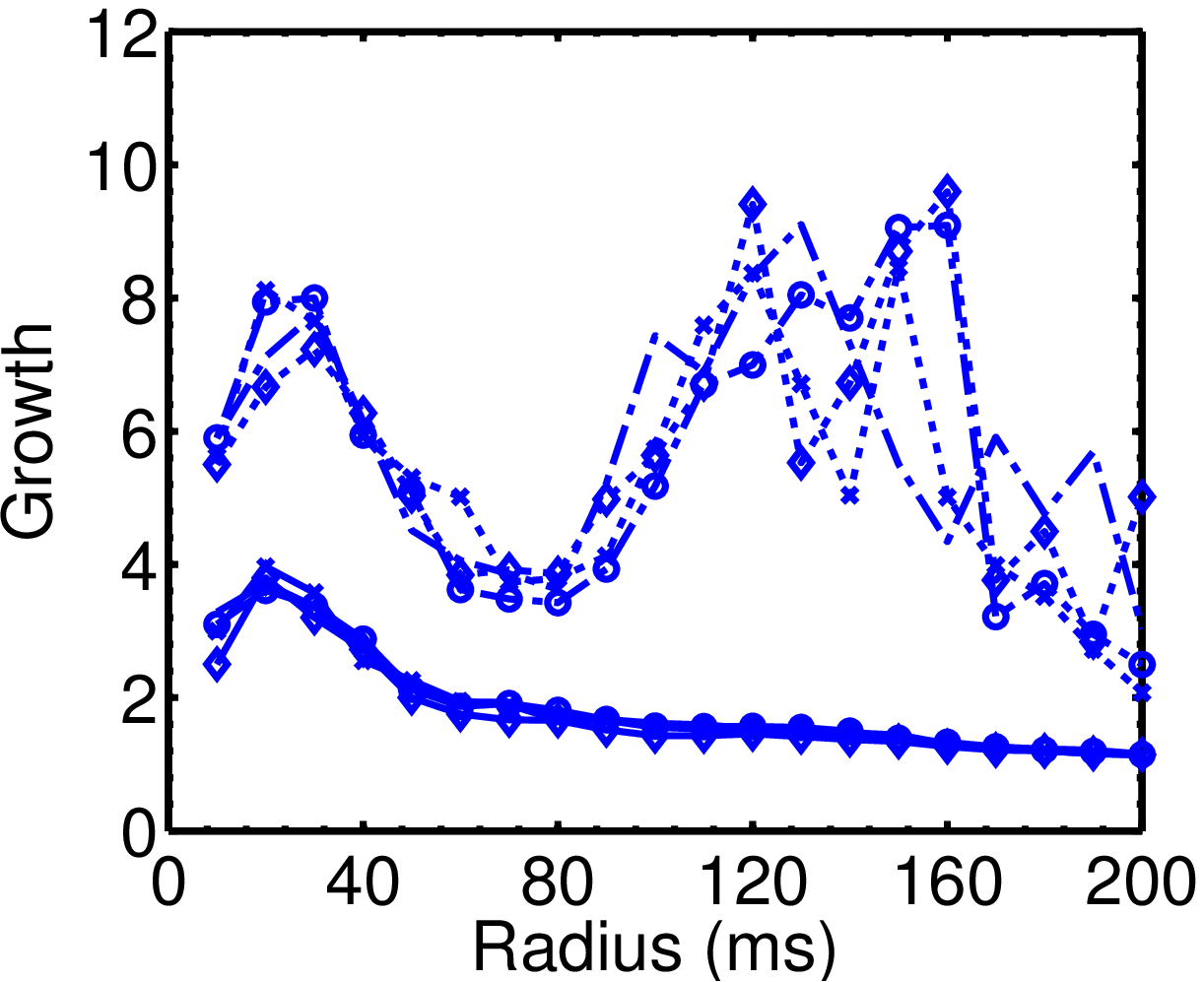}
         }
               \subfigure[DNS2500.]
        {
          \setlength{\epsfxsize}{.44\hsize}
          \epsffile{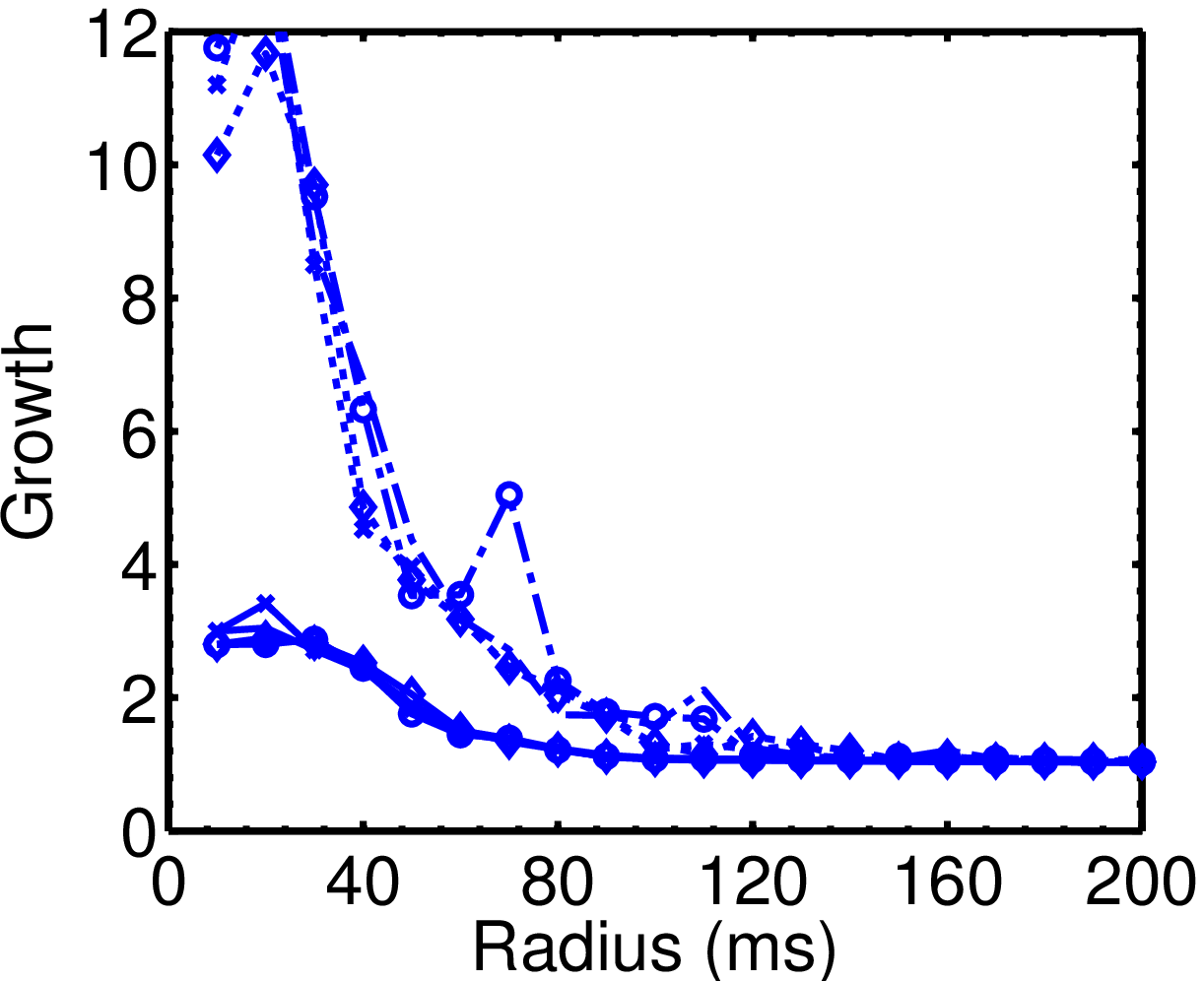}
         }
               \subfigure[DNS3997.]
        {
          \setlength{\epsfxsize}{.44\hsize}
          \epsffile{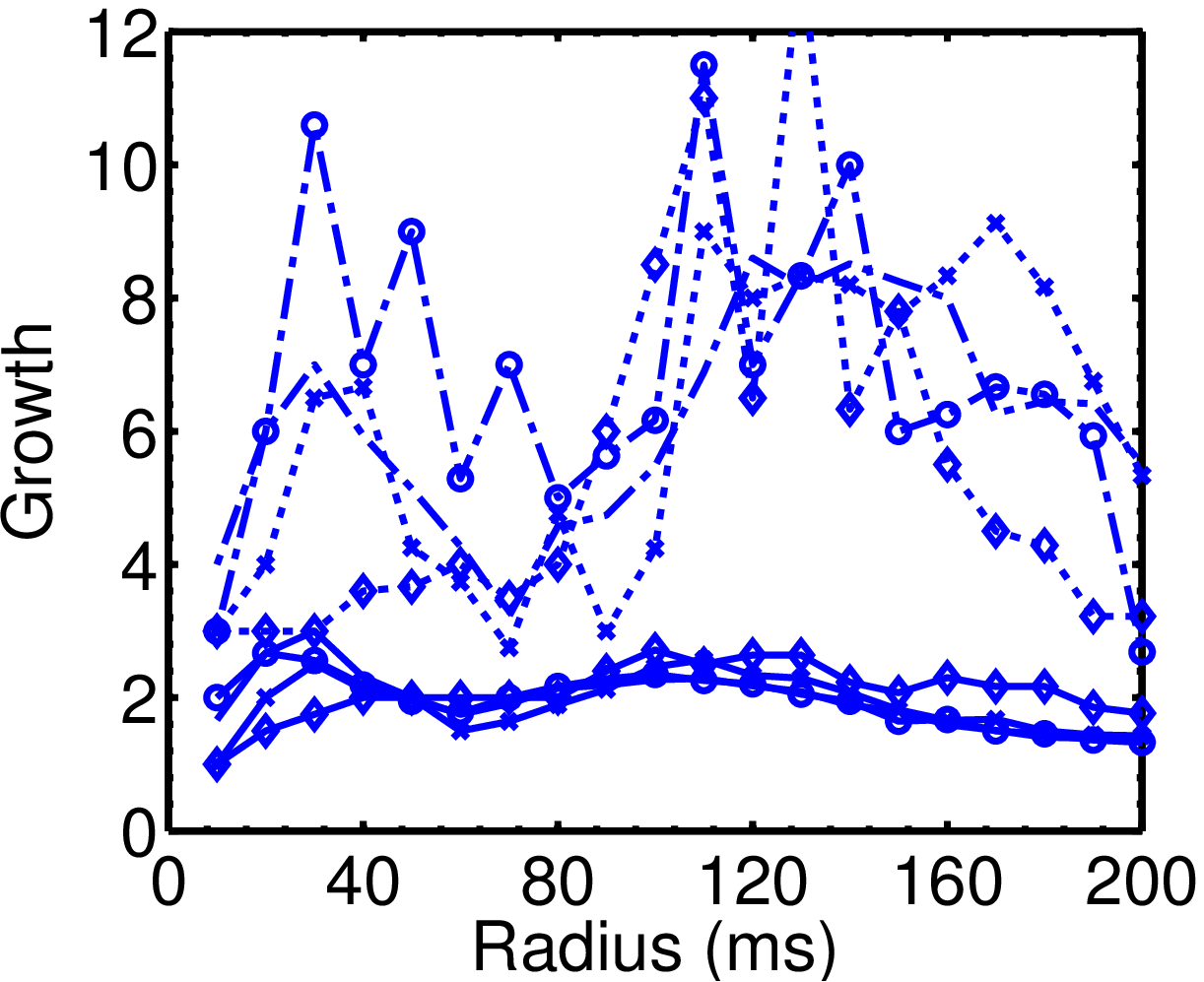}
         }
              \subfigure[Host479.]
        {
          \setlength{\epsfxsize}{.44\hsize}
          \epsffile{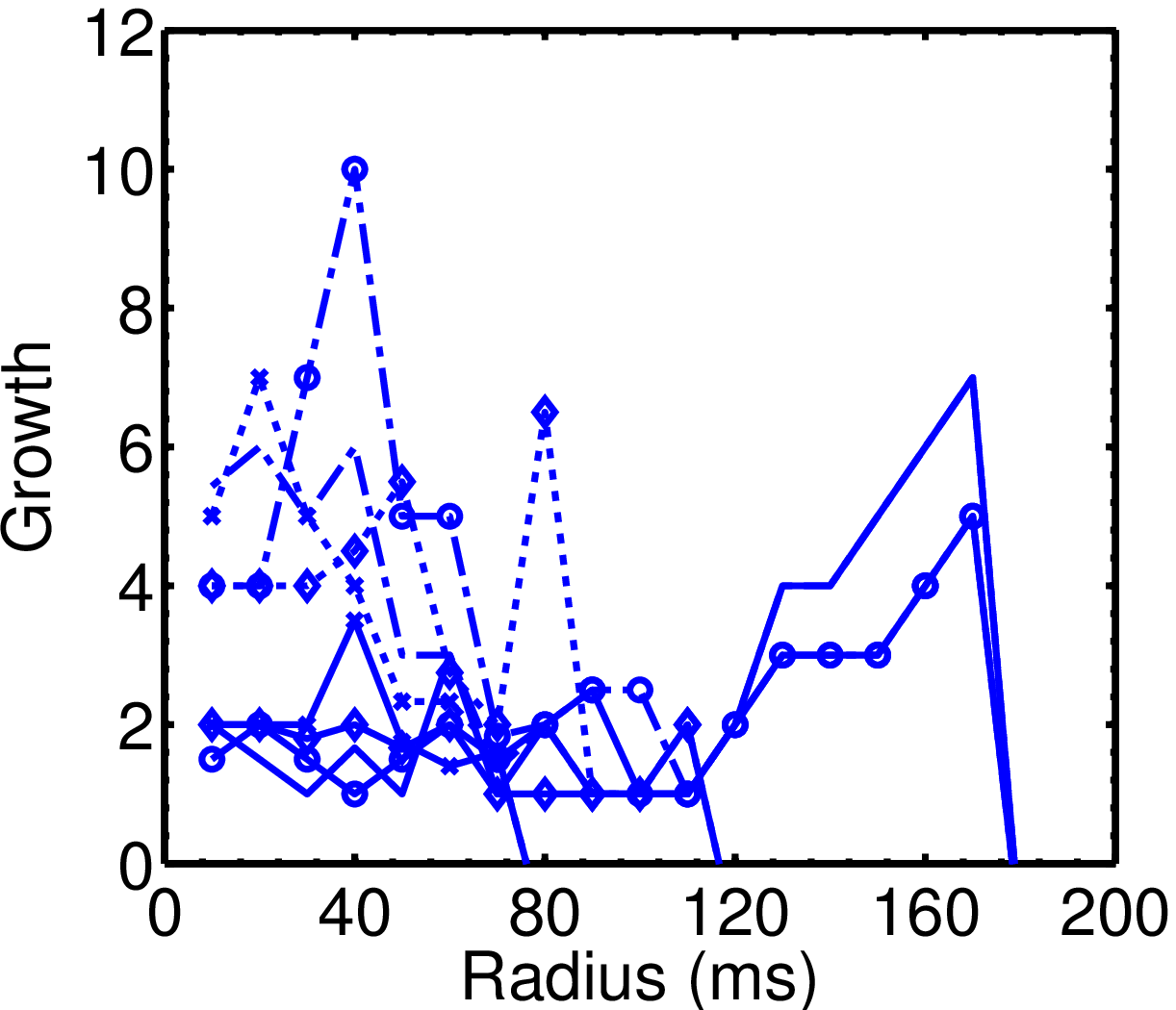}
         }

\caption{The statistics of the median and 90-th percentile growth
${\gamma _g}$ for $\rho =3$; \texttt{-$\diamondsuit$-} denotes
median values computed from sampled 20\% nodes; \texttt{-x-}
denotes median values computed from sampled 50\% nodes;
\texttt{-o-} denotes median values computed from sampled 75\%
nodes; - represents median values computed from  all nodes;
\newline \texttt{$\cdots$$\diamondsuit$$\cdots$} denotes
90-percentile values computed from sampled 20\% nodes;
\texttt{$\cdots$x$\cdots$} denotes 90-percentile values computed
from sampled 50\% nodes; \texttt{-.$o$-.} denotes 90-percentile
values computed from sampled 75\% nodes; \texttt{-.-} represents
90-percentile values computed from  all nodes.}
\label{fig:MedGrowth}
\end{figure}

\co{ Fig~\ref{fig:growthSta} shows the median of the growth as a
function of the variations of the inframetric parameter $\rho$ and
the radius $r$. Generally, the median growth values decrease
quickly with increasing radii for a fixed inframetric parameter
$\rho$ value. When the radii are smaller than 40, the growth
values increase as the inframetric parameter $\rho$ increases;
after the radii exceed 40 ms, the growth does not increase
significantly with increasing $\rho$.}

In summary, the growth metric ${\gamma _g}$ of the delay space is
quite low. Furthermore, with increasing radius, the growth
${\gamma _g}$ decreases to 2 quickly on average. However,
sometimes the growth values increase for increasing radius, which
means that there are many nodes that have similar distances to
each other. This usually corresponds to cases where the center of
the ball is a node on the edge of a cluster, where nodes in the
same cluster have smaller distances compared to those to other
nodes not in the same cluster.

\co{
\begin{figure}[tp]
     \centering
      \co{\subfigure[DNS1143.]
        {
          \setlength{\epsfxsize}{.44\hsize}
          \epsffile{figures/out-fig-2/dat2_growth.eps}
         }
               \subfigure[DNS2500.]
        {
          \setlength{\epsfxsize}{.44\hsize}
          \epsffile{figures/out-fig-2/dat6_growth.eps}
         }}
               \subfigure[DNS3997.]
        {
          \setlength{\epsfxsize}{.44\hsize}
          \epsffile{figures/out-fig-2/dat5_growth.eps}
         }
                 \subfigure[Host479.]
        {
          \setlength{\epsfxsize}{.44\hsize}
          \epsffile{figures/out-fig-2/dat9_growth.eps}
         }
     \caption{Median growth values as functions of radius and the inframetric parameter $\rho$.}
     \label{fig:growthSta}
\end{figure}
}

\co{ However, computing the exact doubling for any inframetric is
NP-complete as shown in Theorem \ref{thm:1}.

\begin{theorem}
\label{thm:1} Finding the infimum $k$ of the doubling in the
inframetric model \cite{DBLP:conf/infocom/FraigniaudLV08} and the
relaxed inframetric model in Definition \ref{directionInframetric}
are both NP-complete.
\end{theorem}

Please see the Appendix for the proof of Theorem \ref{thm:1}.
Since finding the accurate infimum of the doubling value for the
relaxed inframetric model is NP-complete, we propose a heuristic
approach. Let $r$ be the radius of the closed ball, for each
vertex $P$, we compute the number of closed balls of radius $r$
needed to cover the nodes in $B_P(\rho r)$ iteratively. During
each iteration, the closed ball of radius $r$ with the largest
volume value (denoted as $B_{Max}$) is selected, then all vertices
covered by $B_{Max}$ are removed from $B_P(\rho r)$. The iteration
ends when all vertices covered by $B_P(\rho r)$ are removed. The
number of balls of radius $r$ is returned as the approximation of
the doubling value of the inframetric space. The pseudo-code is
given in Algorithm~\ref{alg:maxCover}. Furthermore, we prove that
the maxCover algorithm is a $O\left( {\log \left( N \right)}
\right)$-approximation for the infimum of the doubling value.

 \begin{algorithm}[t]
 \begin{algorithmic}
{\small \STATE \textit{maxCover( $P$, $r$, $\rho$)} \STATE
\COMMENT{Input: a node $P$, the radius $r$, the inframetric
parameter $\rho$} \STATE \COMMENT{Output: approximated doubling
$maxCover$}
    \STATE $\Delta=B_P(\rho r)$ \algorithmiccomment{find all nodes of the closed balls}
    \STATE $maxCover$=0
    \WHILE{$\Delta$ is not empty}
          \STATE compute $argmax_{i \in \Delta} |B_i(r)|$ \algorithmiccomment{determine maximum-sized ball}
          \STATE $\Delta$ = $\Delta-B_i(r)$ \algorithmiccomment{remove all nodes in the chosen balls from the whole ball}
          \STATE $maxCover$ = $maxCover+1$ \algorithmiccomment{increase the number of covered balls}
    \ENDWHILE
    }
\end{algorithmic}
\caption{The maxCover algorithm for estimating the infimum of the
doubling value.} \label{alg:maxCover}
\end{algorithm}

\begin{theorem}
\label{thm:2} The maxCover algorithm for the infimum of the
doubling value for the relaxed $\rho$-inframetric model is a
$O\left( {\log \left( N \right)} \right)$-approximation.
\end{theorem}

 Please see the Appendix for the proof of Theorem
\ref{thm:2}. Recall that for each node $P$ and a radius $r$ the
infimum of the doubling dimension is the minimal number of
required balls of radius $r$ to cover the ball ${B_P}\left( {\rho
r} \right)$. In order to show the accuracy of the maxCover
algorithm, we calculate the exact infimum ${\gamma _d}$ of the
doubling dimension by finding in a brute-force manner the minimal
number of balls of radius $r$ to cover ${B_P}\left( {\rho r}
\right)$. Then we define the absolute error of the estimated
infimum ${{\gamma '}_d}$ by the maxCover algorithm as $\left(
{{{\gamma '}_d} - {\gamma _d}} \right)$. Fig
\ref{fig:exactDoubling} shows the absolute errors as function of
the radius $r$.  The estimated infimum matches the exact infimum
for at least 90\% of the estimations. Furthermore, the 95th and
99th percentiles are at most 1, which is the minimum absolute
error possible. Therefore, the maxCover algorithm performs much
better in practice than predicted by Theorem \ref{thm:2}.

\begin{figure}[tp]
     \centering
           \co{    \subfigure[DNS1143.]
        {
          \setlength{\epsfxsize}{.44\hsize}
          \epsffile{figures/out-fig-2/ExactDoublingError1143rho3_00output.eps}
         }
               \subfigure[DNS2500.]
        {
          \setlength{\epsfxsize}{.44\hsize}
          \epsffile{figures/out-fig-2/ExactDoublingError2250rho3_00output.eps}
         }}
               \subfigure[DNS3997.]
        {
          \setlength{\epsfxsize}{.44\hsize}
          \epsffile{figures/out-fig-2/ExactDoublingError121rho3_00output.eps}
         }
           \subfigure[Host479.]
        {
          \setlength{\epsfxsize}{.44\hsize}
          \epsffile{figures/out-fig-2/ExactDoublingError142rho3_00output.eps}
         }
     \caption{The absolute errors of the estimated doubling infimum as function of the radii of balls. The solid line ("median") denotes
     the median absolute errors; the dashed dotted line ("pct90") represents the 90th percentile value of absolute errors; the dashed line ("pct95") denotes
     the 95th percentile value of absolute errors; the dotted line ("pct99") denotes the 99th percentile value of absolute errors.}
     \label{fig:exactDoubling}
\end{figure}

Fig~\ref{fig:medDoubling} shows that most doubling estimations are
below 30, and the estimated doubling values decrease quickly as
the radii increase. Such decease of the ${\gamma _d}$ values is
also caused by the clustering in the delay space, in that dense
regions of nodes cause relatively larger doubling estimations when
the radii are small. On the other hand, the doubling values
decline quickly as the covered balls contain nodes of more than
one cluster. Furthermore, with increasing number of service nodes,
the doubling values become larger accordingly. Since the dense
regions become denser as the system size increases, thus the
required number of small-radius balls to cover large balls also
increases.

\begin{figure}[tp]
     \centering
           \co{    \subfigure[DNS1143.]
        {
          \setlength{\epsfxsize}{.44\hsize}
          \epsffile{figures/out-fig-2/Median_90_Inframetric_Doubling_Delay_N_1143_rho_2_00output.eps}
         }
          \subfigure[DNS2500.]
        {
          \setlength{\epsfxsize}{.44\hsize}
          \epsffile{figures/out-fig-2/Median_90_Inframetric_Doubling_Delay_N_2250_rho_2_00output.eps}
         }}
               \subfigure[DNS3997.]
        {
          \setlength{\epsfxsize}{.44\hsize}
          \epsffile{figures/out-fig-2/Median_90_Inframetric_Doubling_Delay_N_121_rho_2_00output.eps}
         }
           \subfigure[Host479.]
        {
          \setlength{\epsfxsize}{.44\hsize}
          \epsffile{figures/out-fig-2/Median_90_Inframetric_Doubling_Delay_N_142_rho_3_00output.eps}
         }

\caption{The statistics of the median and 90-th percentile of
${\gamma _d}$ for $\rho =3$. \newline -$\diamondsuit$- denotes
median values computed from sampled 20\% nodes, -x- denotes median
values computed from sampled 50\% nodes, -o- denotes median values
computed from sampled 75\% nodes, - represents median values
computed from  all nodes. $\cdots$$\diamondsuit$$\cdots$ denotes
90-percentile values computed from sampled 20\% nodes,
$\cdots$x$\cdots$ denotes 90-percentile values computed from
sampled 50\% nodes, \newline -.$o$-. denotes 90-percentile values
computed from sampled 75\% nodes, -. represents 90-percentile
values computed from  all nodes.} \label{fig:medDoubling}
\end{figure}

Fig~\ref{fig:doublingSta} illustrates the evolution of the
doubling values as function of both the inframetric parameter
$\rho$ and the radius $r$. The estimated doubling values decrease
fast as the radii increase, when fixing the inframetric parameter
$\rho$.  On the other hand, when the radii are smaller than 40 ms,
the estimated doubling increases quickly as $\rho$ increases; for
radii larger than 40 ms, the median of ${\gamma _d}$  does not
increase significantly. Therefore, when the inframetric parameter
$\rho$ is small or the radius is over 50ms, the estimated doubling
values stay quite low.

Furthermore, comparing Fig~\ref{fig:growthSta} with
Fig~\ref{fig:doublingSta}, the median growth values and the median
doubling values increase with increasing inframetric value $\rho$
or decreasing radius $r$.  Therefore, the qualitative behavior of
${\gamma _g}$ and ${\gamma _d}$ as functions of $\rho$ and $r$ is
quite similar.

\begin{figure}[tp]
     \centering
      \co{\subfigure[DNS1143.]
        {
          \setlength{\epsfxsize}{.44\hsize}
          \epsffile{figures/out-fig-2/dat2_doubling.eps}
         }
               \subfigure[DNS2500.]
        {
          \setlength{\epsfxsize}{.44\hsize}
          \epsffile{figures/out-fig-2/dat6_doubling.eps}
         }}
               \subfigure[DNS3997.]
        {
          \setlength{\epsfxsize}{.44\hsize}
          \epsffile{figures/out-fig-2/dat5_doubling.eps}
         }
                 \subfigure[Host479.]
        {
          \setlength{\epsfxsize}{.44\hsize}
          \epsffile{figures/out-fig-2/dat9_doubling.eps}
         }
     \caption{Median doubling values as a function of radius and the inframetric parameter $\rho$.}
     \label{fig:doublingSta}
\end{figure}
}

\co{
\subsubsection{Growth Metric}
\label{growthAnalysis}

We examine the statistics of the growth using four delay data
sets. As a large $\rho$ can lead to the significant imbalance
between $|B_P(\rho r)|$ and $|B_P(r)|$, which can not reflect the
fine-granularity relations of balls of different radii, so we
choose $\rho$ as 3 to compute the growth as well as the doubling
for the inframetric model.

Fig~\ref{fig:MedGrowth} shows the median and 90th percentile
growth values for varying radii. The median growth of most data
sets is relatively small, and declines quickly with increasing
radii for most data sets except for Host479. For Host479, the
median growth may increase as the radii increase. On the other
hand, the 90th percentile growth shows divergent dynamics for
different data sets, revealing "\textit{M}"-shape dynamics,
indicating that a small fraction of growth values may increase or
decrease with increasing radii.

The changes in ${\gamma _g}$ as a function of the radius are
correlated with the clustering structure of the delay space. When
the radius is small, all nodes included in the closed balls
$B_P(\rho r)$ and $B_P(r)$ are in the same cluster. Since the
density of the cluster is large, $|B_P(\rho r)|$ is much larger
than $|B_P(r)|$. On the other hand, with increasing radius, the
volume differences between $B_P(\rho r)$ and $B_P(r)$ become
smaller, since the number of nodes covered in $B_P(r)$ becomes
large enough to cover nearly one or more clusters and since the
number of clusters is small. Therefore, the shape of the growth
may form multiple "\textit{M}" as the radius increases.

Furthermore, by selecting different percentages of nodes for the
statistics, Fig~\ref{fig:MedGrowth} shows that the median growth
is less sensitive to the sample size compared to the magnitudes of
radii; while the 90th percentile growth becomes relatively more
sensitive to the sample size.

\begin{figure}[tp]
     \centering
              \co{ \subfigure[DNS1143.]
        {
          \setlength{\epsfxsize}{.44\hsize}
          \epsffile{figures/out-fig-2/Median_90_Inframetric_Growth_Delay_N_1143_rho_2_00output.eps}
         }
               \subfigure[DNS2500.]
        {
          \setlength{\epsfxsize}{.44\hsize}
          \epsffile{figures/out-fig-2/Median_90_Inframetric_Growth_Delay_N_2250_rho_2_00output.eps}
         }}
               \subfigure[DNS3997.]
        {
          \setlength{\epsfxsize}{.44\hsize}
          \epsffile{figures/out-fig-2/Median_90_Inframetric_Growth_Delay_N_121_rho_2_00output.eps}
         }
              \subfigure[Host479.]
        {
          \setlength{\epsfxsize}{.44\hsize}
          \epsffile{figures/out-fig-2/Median_90_Inframetric_Growth_Delay_N_479_rho_3_00output.eps}
         }

\caption{The statistics of the median and 90-th percentile growth
${\gamma _g}$ for $\rho =3$; \texttt{-$\diamondsuit$-} denotes
median values computed from sampled 20\% nodes; \texttt{-x-}
denotes median values computed from sampled 50\% nodes;
\texttt{-o-} denotes median values computed from sampled 75\%
nodes; - represents median values computed from  all nodes;
\newline \texttt{$\cdots$$\diamondsuit$$\cdots$} denotes
90-percentile values computed from sampled 20\% nodes;
\texttt{$\cdots$x$\cdots$} denotes 90-percentile values computed
from sampled 50\% nodes; \texttt{-.$o$-.} denotes 90-percentile
values computed from sampled 75\% nodes; \texttt{-.} represents
90-percentile values computed from  all nodes.}
\label{fig:MedGrowth}
\end{figure}

Fig~\ref{fig:growthSta} shows the median of the growth as a
function of the variations of the inframetric parameter $\rho$ and
the radius $r$. Generally, the median growth values decrease
quickly with increasing radii for a fixed inframetric parameter
$\rho$ value. When the radii are smaller than 40, the growth
values increase as the inframetric parameter $\rho$ increases;
after the radii exceed 40 ms, the growth does not increase
significantly with increasing $\rho$. Therefore during a DNNS
query, for small $r$, if we want to find nodes close to the
target, we need to contact more neighbors using Theorem
\ref{thm:SamplingGrowth}.

\begin{figure}[tp]
     \centering
      \subfigure[DNS1143.]
        {
          \setlength{\epsfxsize}{.44\hsize}
          \epsffile{figures/out-fig-2/dat2_growth.eps}
         }
               \subfigure[DNS2500.]
        {
          \setlength{\epsfxsize}{.44\hsize}
          \epsffile{figures/out-fig-2/dat6_growth.eps}
         }
               \subfigure[DNS3997.]
        {
          \setlength{\epsfxsize}{.44\hsize}
          \epsffile{figures/out-fig-2/dat5_growth.eps}
         }
                 \subfigure[Host479.]
        {
          \setlength{\epsfxsize}{.44\hsize}
          \epsffile{figures/out-fig-2/dat9_growth.eps}
         }
     \caption{Median growth values as functions of radius and the inframetric parameter $\rho$.}
     \label{fig:growthSta}
\end{figure}

\subsubsection{Doubling Metric}
\label{doublingAnalysis}


To determine the approximate infimum of the doubling value for the
inframetric model, for each node $P$ and radius $r$, we show the
minimum number of balls with radius $r$ that cover the ball
$B_P(\rho r)$. We set the inframetric parameter $\rho$ to 3 for
the same reasons as in the growth analysis. With
Algorithm~\ref{alg:maxCover} for estimating the ${\gamma _d}$, we
compute both, median and 90 percentile of ${\gamma _d}$.

Fig~\ref{fig:medDoubling} shows that most doubling estimations are
below 30, and the estimated doubling values decrease quickly as
the radii increase. Such decease of the ${\gamma _d}$ values is
also caused by the clustering in the delay space, in that dense
regions of nodes cause relatively larger doubling estimations when
the radii are small. On the other hand, the doubling values
decline quickly as the covered balls contain nodes of more than
one cluster.

Furthermore, with increasing number of service nodes, the doubling
values become larger accordingly. Since the dense regions become
denser as the system size increases, thus the required number of
small-radius balls to cover large balls also increases.

\begin{figure}[tp]
     \centering
               \co{\subfigure[DNS1143.]
        {
          \setlength{\epsfxsize}{.44\hsize}
          \epsffile{figures/out-fig-2/Median_90_Inframetric_Doubling_Delay_N_1143_rho_2_00output.eps}
         }
          \subfigure[DNS2500.]
        {
          \setlength{\epsfxsize}{.44\hsize}
          \epsffile{figures/out-fig-2/Median_90_Inframetric_Doubling_Delay_N_2250_rho_2_00output.eps}
         }}
               \subfigure[DNS3997.]
        {
          \setlength{\epsfxsize}{.44\hsize}
          \epsffile{figures/out-fig-2/Median_90_Inframetric_Doubling_Delay_N_121_rho_2_00output.eps}
         }
           \subfigure[Host479.]
        {
          \setlength{\epsfxsize}{.44\hsize}
          \epsffile{figures/out-fig-2/Median_90_Inframetric_Doubling_Delay_N_142_rho_3_00output.eps}
         }

\caption{The statistics of the median and 90-th percentile of
${\gamma _d}$ for $\rho =3$. \newline -$\diamondsuit$- denotes
median values computed from sampled 20\% nodes, -x- denotes median
values computed from sampled 50\% nodes, -o- denotes median values
computed from sampled 75\% nodes, - represents median values
computed from  all nodes. $\cdots$$\diamondsuit$$\cdots$ denotes
90-percentile values computed from sampled 20\% nodes,
$\cdots$x$\cdots$ denotes 90-percentile values computed from
sampled 50\% nodes, \newline -.$o$-. denotes 90-percentile values
computed from sampled 75\% nodes, -. represents 90-percentile
values computed from  all nodes.} \label{fig:medDoubling}
\end{figure}

Fig~\ref{fig:doublingSta} illustrates the evolution of the
doubling values as function of both the inframetric parameter
$\rho$ and the radius $r$. The estimated doubling values decrease
fast as the radii increase, when fixing the inframetric parameter
$\rho$.  On the other hand, when the radii are smaller than 40 ms,
the estimated doubling increases quickly as $\rho$ increases; for
radii larger than 40 ms, the median of ${\gamma _d}$  does not
increase significantly. Therefore, when the inframetric parameter
$\rho$ is small or the radius is over 50ms, the estimated doubling
values stay quite low.

Furthermore, comparing Fig~\ref{fig:growthSta} with
Fig~\ref{fig:doublingSta}, the median growth values and the median
doubling values increase with increasing inframetric value $\rho$
or decreasing radius $r$.  Therefore, the qualitative behavior of
${\gamma _g}$ and ${\gamma _d}$ as functions of $\rho$ and $r$ is
quite similar.

\begin{figure}[tp]
     \centering
      \subfigure[DNS1143.]
        {
          \setlength{\epsfxsize}{.44\hsize}
          \epsffile{figures/out-fig-2/dat2_doubling.eps}
         }
               \subfigure[DNS2500.]
        {
          \setlength{\epsfxsize}{.44\hsize}
          \epsffile{figures/out-fig-2/dat6_doubling.eps}
         }
               \subfigure[DNS3997.]
        {
          \setlength{\epsfxsize}{.44\hsize}
          \epsffile{figures/out-fig-2/dat5_doubling.eps}
         }
                 \subfigure[Host479.]
        {
          \setlength{\epsfxsize}{.44\hsize}
          \epsffile{figures/out-fig-2/dat9_doubling.eps}
         }
     \caption{Median doubling values as a function of radius and the inframetric parameter $\rho$.}
     \label{fig:doublingSta}
\end{figure}
}

\section{Efficient DNNS on the Relaxed Inframetric Model}
\label{theoryDNNS}

In this section, using the relaxed inframetric model presented in
Sec \ref{RIMSec}, we analyze how to design an efficient DNNS using
localized operations suitable for distributed systems. Proofs are
omitted due to space limits, which can be found in the full report
\cite{HybridNNReport}.

Our major result is that it is feasible to design an accurate and
fast DNNS algorithm for the relaxed inframetric mode, at the
expense of sampling enough candidate servers from the proximity
region of each node. We construct a simple DNNS process satisfying
our major result. However, the simple DNNS process incurs
relatively high measurement costs due to the sampling conditions,
which will be improved in the next section.

\co{
\subsection{Overview}

\co{ Existing DNNS schemes typically use the RTT metric, which is
not general enough for latency-sensitive applications that are
sensitive to one-way delays, such as the multimedia video delivery
from the servers to the hosts like YouTube, network game updates
from the servers to the hosts.

To allow for enough generalization, we use the RTT or the OWDs as
the delay metrics for general latency-optimization settings.}

Since our DNNS problem includes both symmetric and asymmetric
delays, we need a suitable delay model that represents such
generalized delays in order to analyze the feasibility of DNNS. To
that end, inspired by the seminal work on the inframetric
modelling of Internet delays
\cite{DBLP:conf/infocom/FraigniaudLV08}, we propose a relaxed
inframetric model that generalizes to asymmetric delays and allows
the violations of triangle inequality as the seminal inframetric
model.

Then we analyze the growth dimension of the relaxed inframetric
model, which is the ratio of the number of nodes covered by two
closed balls with the identical center and varying radii. The
growth dimension is important for designing efficient DNNS
methods, since we can use modest number of nodes to locate closer
node to a target from a modest number of sampled nodes. Then by
recursively following such sampling procedure, we can locate the
nearest node to the target.

Next, using the growth dimension in the relaxed inframetric model,
we prove that in order to obtain an exponential delay reduction to
the target at each search step, which means that the next-hop node
is $\beta$ times closer to the target than the current node,  we
need to sample enough neighbors from the proximity regions of each
current node uniformly at random.

\co{ Besides, we analyze the doubling dimension on the relaxed
inframetric model. Simply speaking, the doubling dimension
determines how many low-radius balls are required to cover a
large-radius balls. Intuitively, for DNNS context, if the doubling
dimension is low, then we can find a low-radius ball that contains
the target node based on random samplings. By recursively finding
such low-radius balls, we can find a service node that is
approximately nearest to the target node. Therefore, the doubling
dimension serves as the basis for DNNS analysis.

We prove that finding the accurate doubling dimension on the
inframetric and our relaxed inframetric model are both
NP-complete. Despite the hardness of calculating the doubling
dimension, we prove that using a greedy set cover heuristic
achieves logarithmic approximations for the doubling metrics.
Empirically, we find that we can calculate the exact doubling
dimension for more than 95\% cases.

Using the doubling dimension in the relaxed inframetric model, we
prove that in order to obtain an exponential delay reduction to
the target at each search step, which means that the next-hop node
is $\beta$ times closer to the target than the current node,  we
need to sample enough neighbors from the proximity regions of each
current node uniformly at random. }

We prove that in order to obtain an exponential delay reduction to
the target at each search step, which means that the next-hop node
is $\beta$ times closer to the target than the current node,  we
need to sample enough neighbors from the proximity regions of each
current node uniformly at random. Next, we prove that by
recursively following such sampling conditions, we can locate a
server that is $1/\beta$-approximation to the optimal: the delay
from the found server to the target is not bigger than $1/\beta$
times that from the nearest server to the target.

Observing that the uniform sampling based neighbor selections are
localized operations, we propose a simple DNNS method on the
relaxed inframetric model.

However, we face two problems for the simple DNNS method: (i)
\textbf{Local minima}, determining an optimal $\beta$ is
difficult, since at some search steps we may not find any next-hop
nodes that are $\beta$ times closer to the target, which implies
that the DNNS terminates at local minima; (ii) \textbf{Measurement
Costs}, sampling enough neighbors for continuing DNNS query may
cost high measurement bandwidth, since we may need more than 100
nodes at some search steps.

In the subsequent sections, we will improve the simple DNNS method
to obtain a practical DNNS scheme for latency-sensitive
optimizations. }

%
%


\subsection{Sampling Conditions to Locate Closer Nodes To Targets}
\label{alphaCalc}

In this section, We analyze samples required to locate a node
closer to a target than the current node based on the growth
dimension in Sec \ref{growthDimensionSec}. The sampling conditions
serves as the basis for the efficient DNNS algorithmic design.

Our results show that we can sample a server closer to the target
using bounded samples at each node. In order to obtain a node that
is $\beta$ ($\beta \in (0,1]$) times closer to the target than the
current node, we need to uniformly sample enough neighbors from
the proximity region of each current node.

Without loss of generality, assume that a node $P$ needs to locate
a node $Q$ that is $\beta$ ($\beta \le 1$) times closer to a
target $T$, which implies that $d_{QT} \le {\beta \times d_{PT}}$.
Let $d_{PT} = r$. We can see that node $Q$ must be covered by the
ball ${B_P}\left( {\rho r} \right)$, since ${d_{PQ}} \le \rho \max
\left\{ {{d_{PT}},{d_{QT}}} \right\} = \rho r$.
Fig~\ref{fig:growthIllust} shows an example of sampling a node
closer to the target $T$ in the closed ball ${B_P}\left( {\rho r}
\right)$ in the growth dimension.

\begin{figure}
  \leavevmode \centering \setlength{\epsfxsize}{.44\hsize}
  \epsffile{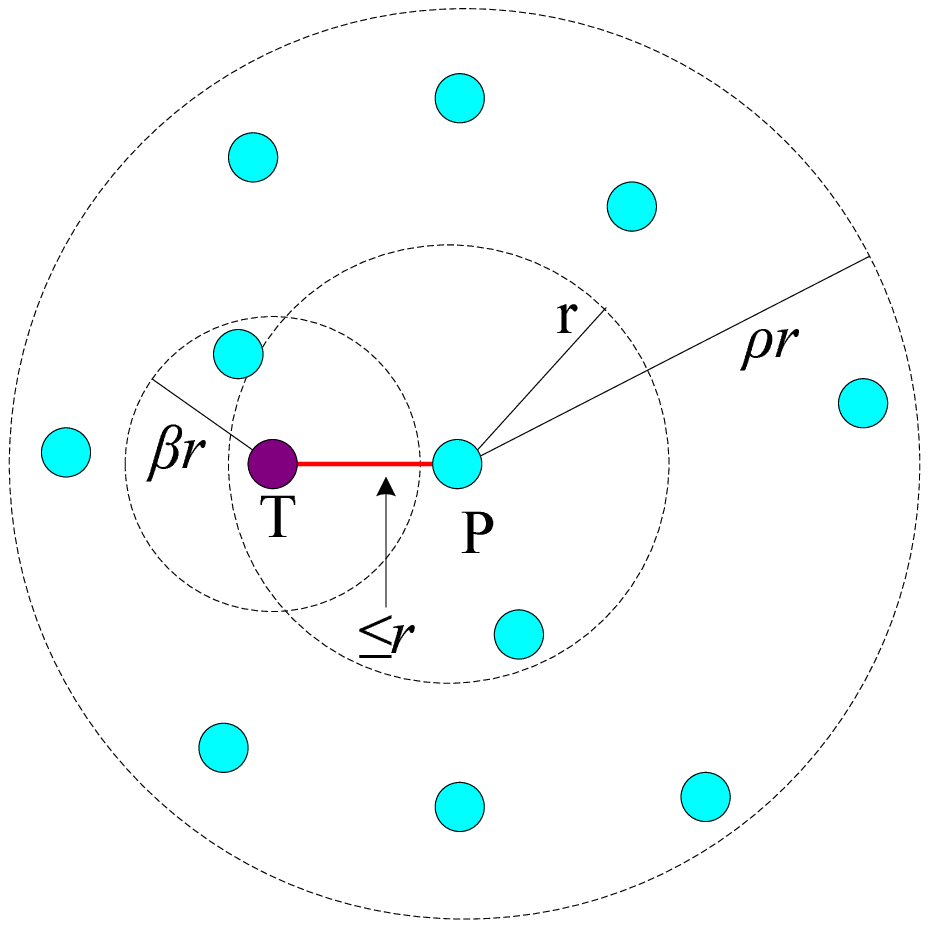}
  \caption{Sampling closer nodes to a target $T$ from ${{B_P}\left( {\rho r} \right)}$ in the $\rho$-inframetric model with growth ${\gamma _g}$.}
  \label{fig:growthIllust}
\end{figure}

We first quantify the volume differences of balls with identical
centers but different radii.
\begin{lemma}
Given a $\rho$-inframetric with growth ${\gamma _{g}}\geq1$, for
any $x\geq\rho$, $r>0$ and any node $P$, the volume of a ball
$B_P(r)$ is at most $x^{\alpha}$ smaller than that of the ball
$B_P(xr)$, where ${\log _\rho }{\gamma _{g}}  \le \alpha  \le
2{\log _\rho }{\gamma _{g}}$. \label{grid_growth}
\end{lemma}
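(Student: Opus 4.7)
My plan is to iterate the defining growth inequality from Definition~\ref{growthDef} and then handle the non-integer behavior of $\log_\rho x$ by a ceiling argument. First, I would establish by a straightforward induction on $k$ that
\[
|B_P(\rho^k r)| \;\le\; \gamma_g^{\,k}\,|B_P(r)|
\]
for every positive integer $k$ and every $r>0$, the base case being exactly the growth inequality $|B_P(\rho r)| \le \gamma_g |B_P(r)|$.

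Next, given an arbitrary $x \ge \rho$, I would choose $k = \lceil \log_\rho x \rceil$, so that $\rho^k \ge x$. Monotonicity of closed balls then gives $B_P(xr) \subseteq B_P(\rho^k r)$, hence $|B_P(xr)| \le |B_P(\rho^k r)| \le \gamma_g^{\,k}|B_P(r)|$. The key quantitative step is to convert $\gamma_g^{\,k}$ into a power of $x$. Using $k \le \log_\rho x + 1$, I would write
\[
\gamma_g^{\,k} \;\le\; \gamma_g \cdot \gamma_g^{\log_\rho x} \;=\; \gamma_g \cdot x^{\log_\rho \gamma_g}.
\]
Finally, since $x \ge \rho$ and $\log_\rho \gamma_g \ge 0$, we have $\gamma_g = \rho^{\log_\rho \gamma_g} \le x^{\log_\rho \gamma_g}$, which lets me absorb the leading factor of $\gamma_g$ into another copy of $x^{\log_\rho \gamma_g}$, giving $\gamma_g^{\,k} \le x^{\,2\log_\rho \gamma_g}$. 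Setting $\alpha = 2\log_\rho \gamma_g$ therefore establishes $|B_P(xr)| \le x^\alpha |B_P(r)|$, which is the upper end of the stated range for $\alpha$.

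For the lower bound $\alpha \ge \log_\rho \gamma_g$, I would argue tightness directly from the definition of growth: since $\gamma_g$ is the infimum such that $|B_P(\rho r)| \le \gamma_g |B_P(r)|$ holds for all $P$ and $r$, there exist configurations where equality is essentially attained. Instantiating the proposed bound $|B_P(xr)| \le x^\alpha |B_P(r)|$ at $x=\rho$ then forces $\rho^\alpha \ge \gamma_g$, i.e.\ $\alpha \ge \log_\rho \gamma_g$, so no smaller exponent can work universally.

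The main obstacle, modest as it is, is the gap of a factor of $2$ between the two ends of the interval for $\alpha$. This slack is exactly the price paid by rounding $\log_\rho x$ up to an integer $k$: the single extra factor of $\gamma_g$ introduced by the ceiling must be re-expressed as a power of $x$, and the cleanest way to hide it is to double the exponent rather than to carry an additive constant. I expect the bookkeeping around this rounding, together with the comparison $x \ge \rho$ that legalizes the absorption step, to be where the argument needs the most care; everything else is direct iteration of the growth inequality.
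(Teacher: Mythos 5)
Your proof is correct, and its core mechanism is the same as the paper's: both iterate the growth inequality $\left\lceil \log_\rho x \right\rceil$ times and then pay for the ceiling with one extra factor of $\gamma_g$ that is absorbed using $x \ge \rho$. The paper runs the recursion downward, $\left| B_P(xr) \right| \le \gamma_g \left| B_P\left(\tfrac{x}{\rho}r\right) \right| \le \cdots \le \gamma_g^{\lceil \log_\rho x\rceil}\left|B_P(r)\right|$, while you run it upward to $B_P(\rho^k r)$ and invoke ball monotonicity; this is a mirror image with identical bookkeeping, and your inequality $\gamma_g^{k} \le \gamma_g\cdot x^{\log_\rho\gamma_g} \le x^{2\log_\rho\gamma_g}$ is exactly the paper's estimate $\alpha \le \log_x\gamma_g\,(\log_\rho x + 1) = \log_\rho\gamma_g + \log_x\gamma_g \le 2\log_\rho\gamma_g$. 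Where you genuinely diverge is the lower end of the interval. The paper defines a specific, $x$-dependent exponent $\alpha = \log_x\gamma_g \cdot \lceil \log_\rho x\rceil$ and shows this particular quantity satisfies $\alpha \ge \log_x\gamma_g\cdot\log_\rho x = \log_\rho\gamma_g$; the interval is thus a sandwich on the exponent its own construction produces. You instead fix $\alpha = 2\log_\rho\gamma_g$ for the existence claim and justify the lower end as an optimality statement: since the paper takes $\gamma_g$ to be the infimum of admissible growth constants, instantiating a universal bound at $x=\rho$ forces $\rho^\alpha \ge \gamma_g$. Both readings are legitimate for this (somewhat loosely stated) lemma: the paper's buys a potentially sharper per-$x$ exponent, which can be strictly below $2\log_\rho\gamma_g$, while yours cleanly separates existence (the top of the interval always works) from tightness (nothing below the bottom can work), at the cost of leaning on the infimum interpretation of $\gamma_g$ rather than on the constructed exponent itself.
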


\co{
\begin{lemma}
Given a $\rho$-inframetric space with doubling ${\gamma _{d}} \geq
1$, for any $x\geq\rho$, $r>0$ and any node $P$, the volume of a
ball $B_P(r)$ is at most $x^{\alpha}$ smaller than that of the
ball $B_P(xr)$, where $\frac{1}{4}\log _\rho {\gamma _{d}} \le
\alpha \le \log _\rho N$, and $N$ is the total number of nodes.
\label{grid_doubling}
\end{lemma}
}

Lemma \ref{grid_growth} states that the volume differences of the
balls with identical centers and different radii are bounded by
$x^{\alpha}$, where $x$ is the multiplicative ratio between
different radii, and the parameter $\alpha$ lies in a bounded
interval.

We calculate $\alpha$ by varying the radius $r$ and the
multiplicative ratio $x$ as shown in Fig \ref{fig:alphaStat}. We
can see that $\alpha$ is mostly below 1, and decreases close to 0
quickly with increasing radius $r$ or multiplicative ratio $x$.
Therefore, the volume difference $x^{\alpha}$ scales
\textbf{sub-linearly} in most cases. On the other hand, for small
radius $r$ or low multiplicative ratio $x$, the volume difference
$x^{\alpha}$ may scale \textbf{ultra-linearly}.

\begin{figure}[tp]
     \centering
      \subfigure[DNS1143.]
        {
          \setlength{\epsfxsize}{.44\hsize}
          \epsffile{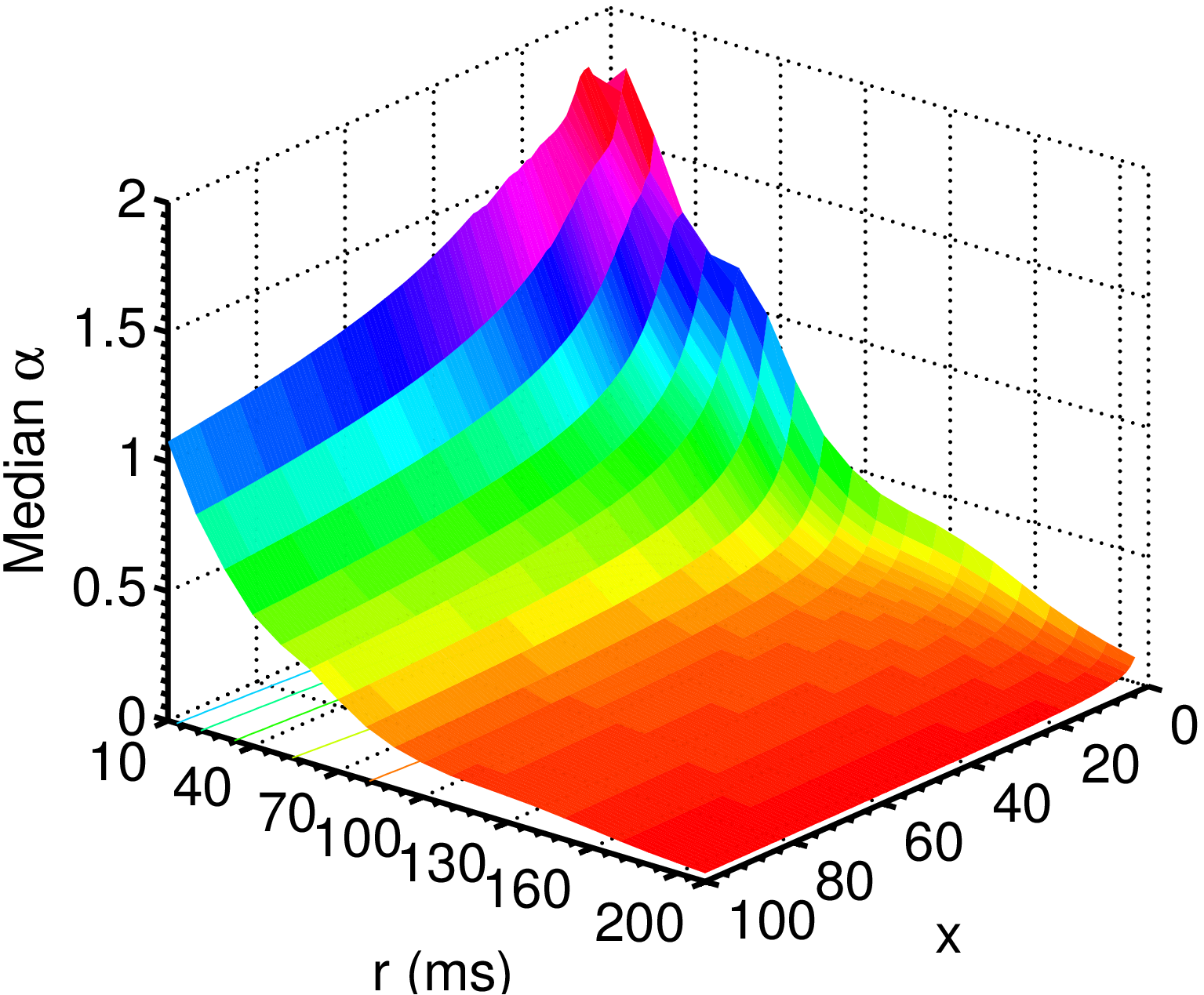}
         }
               \subfigure[DNS2500.]
        {
          \setlength{\epsfxsize}{.44\hsize}
          \epsffile{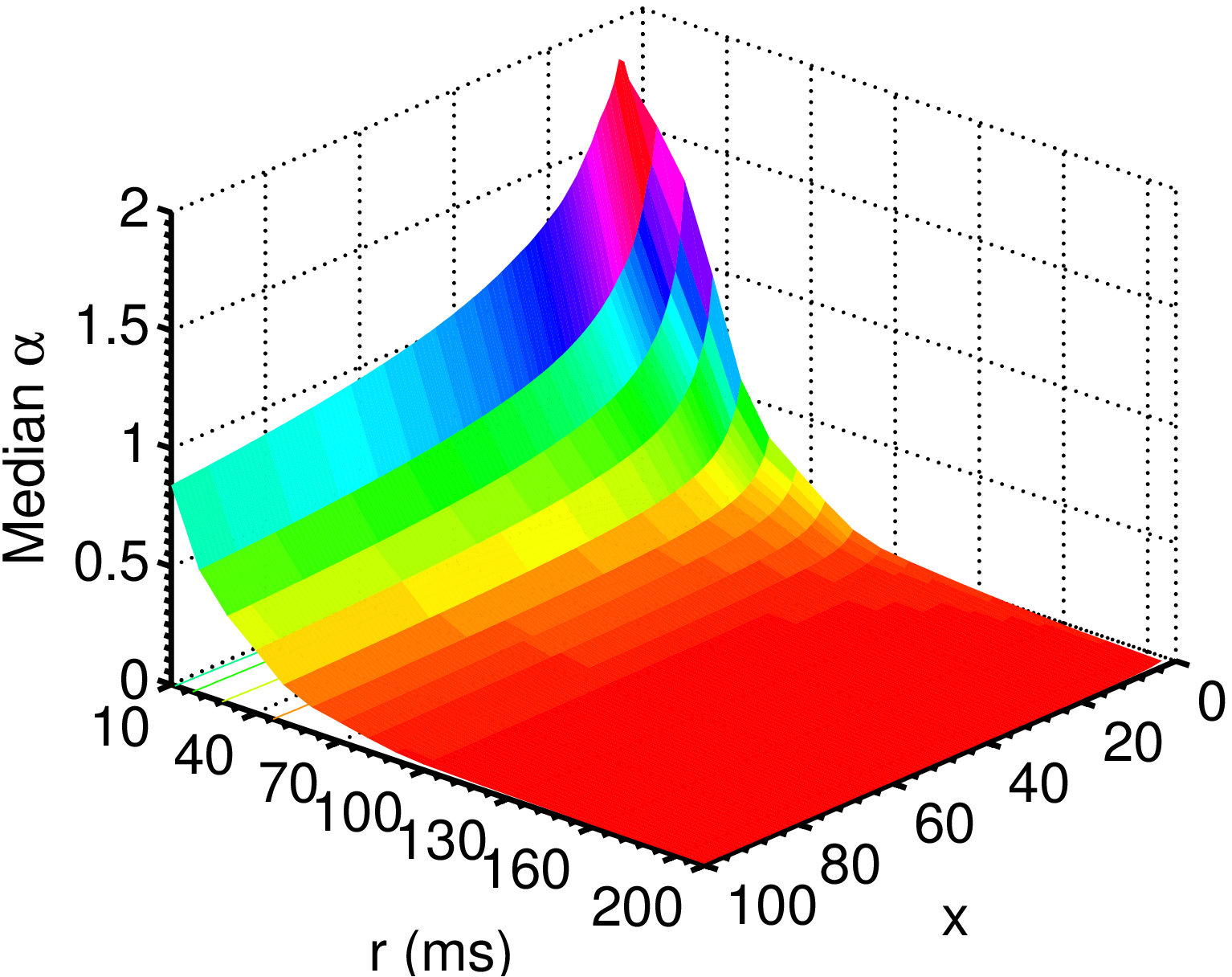}
         }
               \subfigure[DNS3997.]
        {
          \setlength{\epsfxsize}{.44\hsize}
          \epsffile{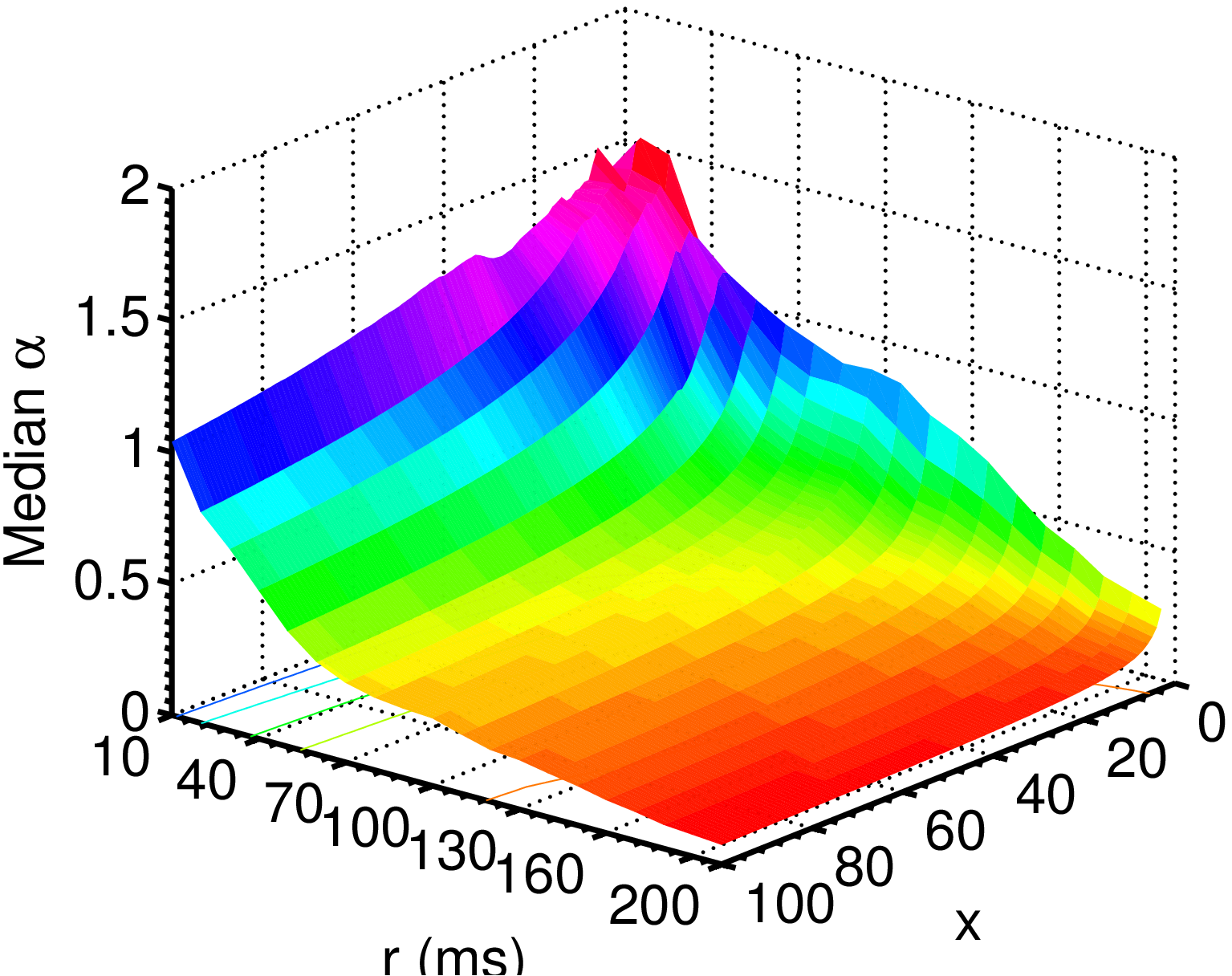}
         }
                 \subfigure[Host479.]
        {
          \setlength{\epsfxsize}{.44\hsize}
          \epsffile{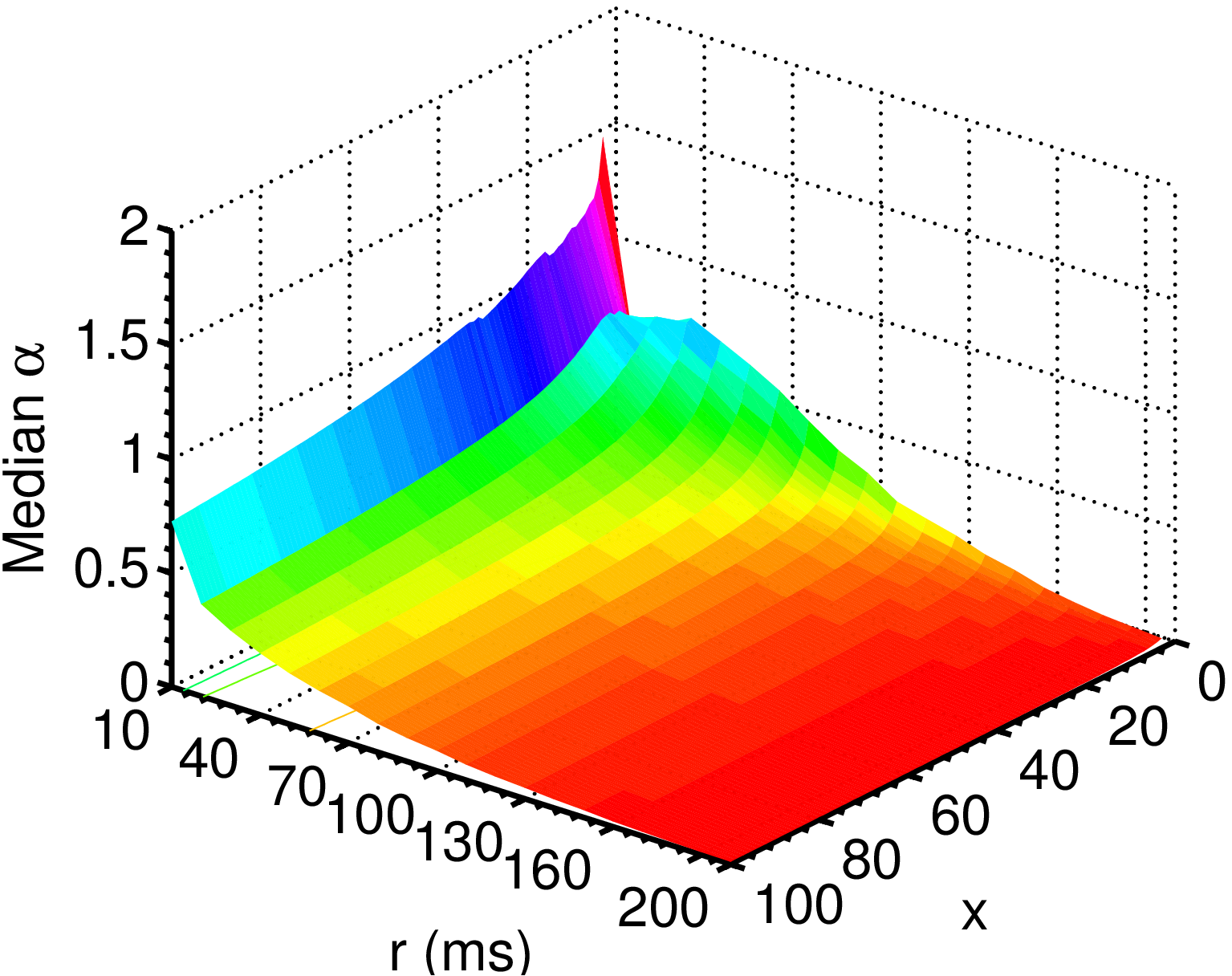}
         }
     \caption{Median $\alpha$ as function of the radius $r$ and the multiplicative ratio $x$.}
     \label{fig:alphaStat}
\end{figure}

Furthermore, we also characterize the inclusion relation of balls
with different centers, which generalizes the inclusions of balls
around a node pair in the metric space \cite{510013}. Lemma
\ref{sandwich} lays the foundation for uniform sampling nodes to
perform DNNS on the inframetric model.

\begin{lemma}
(Sandwich lemma) For any pair of node $p$ and $q$, and $d_{pq}
\leq r$, then
\[{B_q}\left( r \right) \subseteq {B_p}\left( {\rho r} \right) \subseteq {B_q}\left( {{\rho ^2}r} \right)\]
\label{sandwich}
\end{lemma}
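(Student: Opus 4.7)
The plan is to establish each of the two containments by a single instantiation of the relaxed inframetric axiom (condition (2) of Def~\ref{directionInframetric}) applied to the triple $\{p,q,v\}$, using either $q$ or $p$ as the pivot node $w$. The hypothesis $d_{pq}\le r$ will feed into both applications, and the two inclusions together give the sandwich.

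For the left inclusion $B_q(r)\subseteq B_p(\rho r)$, I will fix an arbitrary $v\in B_q(r)$ (so that $d(q,v)\le r$) and apply condition (2) to the ordered pair $(p,v)$ with pivot $w=q$, yielding
\[
d(p,v) \;\le\; \rho\,\max\{\,d(p,q),\, d(v,q)\,\}.
\]
Both quantities inside the max are at most $r$ (the first by hypothesis, the second by ball membership), so $d(p,v)\le \rho r$ and therefore $v\in B_p(\rho r)$.

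For the right inclusion $B_p(\rho r)\subseteq B_q(\rho^2 r)$, I will fix $v\in B_p(\rho r)$ (so that $d(p,v)\le \rho r$) and apply condition (2) to the ordered pair $(q,v)$ with pivot $w=p$:
\[
d(q,v) \;\le\; \rho\,\max\{\,d(q,p),\, d(v,p)\,\} \;\le\; \rho\,\max\{\,r,\,\rho r\,\} \;=\; \rho^2 r,
\]
which gives $v\in B_q(\rho^2 r)$. The jump from radius $\rho r$ to $\rho^2 r$ is forced because the ``$v$-to-pivot'' term in the max is $\rho r$ rather than $r$; this is exactly the price of starting inside the larger ball.

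The main delicate point is the asymmetric setting. Condition (2) controls the specific directional distances $d(u,w)$ and $d(v,w)$ on the right-hand side, whereas ball membership only pins down the outgoing distance from the center of the ball. I expect to reconcile directions by one further invocation of condition (2), which forces $d(a,b)$ and $d(b,a)$ to agree up to a factor of $\rho$; any loss from this direction-flip is already absorbed into the constants $\rho$ and $\rho^2$ appearing in the conclusion, so no extra slack is needed.
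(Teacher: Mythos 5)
Your two main instantiations of condition (2) --- pivot $q$ for the left inclusion and pivot $p$ for the right --- are exactly the ones used in the paper's proof, and under a symmetric reading of $d$ your computation is complete. The genuine gap is in the repair you propose for the asymmetric case. You assert that condition (2) ``forces $d(a,b)$ and $d(b,a)$ to agree up to a factor of $\rho$,'' and your proof needs this claim three times: to convert the membership bound $d(q,v)\le r$ into a bound on $d(v,q)$ (which is what the axiom actually consumes), and likewise $d(p,q)\le r$ into a bound on $d(q,p)$, and $d(p,v)\le \rho r$ into a bound on $d(v,p)$. No such quasi-symmetry follows from Definition \ref{directionInframetric}: condition (2) only ever bounds $d(u,v)$ by distances \emph{toward a third node} $w$; it never compares the two orientations of a single pair. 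Concretely, take $V=\{p,q,v\}$ with $d(p,q)=d(q,p)=d(q,v)=r$ and $d(v,q)=d(p,v)=d(v,p)=M$ for arbitrarily large $M$. Every one of the six instances of condition (2) has right-hand side $\rho\max\{\cdot,\cdot\}=\rho M$ and left-hand side at most $M$, so this is a valid relaxed $\rho$-inframetric for every $\rho>1$; yet $d(q,v)=r$ while $d(v,q)=M$, so the orientation ratio is unbounded. The same example shows the lemma itself fails under the strict directional reading: $v\in B_q(r)$ but $d(p,v)=M>\rho r$, so $B_q(r)\not\subseteq B_p(\rho r)$.

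The consequence is that the direction flip cannot be bought for ``a factor of $\rho$'' or any other constant --- it has to be assumed. The paper's own proof does exactly that, silently: it writes $d_{pi}\le\rho\max\{d_{pq},d_{qi}\}$ and $d_{qj}\le\rho\max\{d_{pq},d_{pj}\}$, i.e., it treats the index order inside the axiom as immaterial, which amounts to proving the lemma only for (effectively) symmetric distance data. Your instinct that asymmetry is the delicate point is correct --- it is a gap in the paper, not just in your write-up --- but the resolution you sketch is not available in this model, so as proposed your proof does not close.
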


Using Lemma \ref{grid_growth} and \ref{sandwich}, we can quantify
the size of sampled neighbors, to assure that at least one
neighbor lies in the closed ball ${B_T}\left( {\beta r} \right)$.

\begin{theorem}
(Sampling efficiency in the growth dimension) For a
$\rho$-inframetric model with growth ${\gamma _g} \geq 1$, for a
service node $P$, and a DNNS target $T$ satisfying ${d_{PT}} \le
r$, when selecting $3{\left( {\frac{{{\rho ^2}}}{\beta }}
\right)^\alpha }$ nodes uniformly at random from ${B_P}\left(
{\rho r} \right)$ with replacement, with probability of at least
95\%, one of these nodes will lie in ${B_T}\left( {\beta r}
\right)$, where ${\log _\rho }{\gamma _g} \le \alpha  \le 2{\log
_\rho }{\gamma _g}$ and $\beta<1$. \label{thm:SamplingGrowth}
\end{theorem}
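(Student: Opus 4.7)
The plan is to reduce the statement to a coupon-collector style concentration argument, with the two previously established lemmas doing all of the geometric work. Fix the target $T$, and observe that since $\beta<1$ and $d_{PT}\le r$, Lemma \ref{sandwich} applied with $p=P$, $q=T$ gives $B_T(r)\subseteq B_P(\rho r)$, so in particular $B_T(\beta r)\subseteq B_P(\rho r)$. The sampling experiment is therefore well posed, and ``success'' for a single draw means landing in $B_T(\beta r)$.

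Next I would estimate the single-draw success probability $|B_T(\beta r)|/|B_P(\rho r)|$ by squeezing the denominator from above in terms of the numerator. For the denominator I would apply the Sandwich lemma in the other direction to obtain $B_P(\rho r)\subseteq B_T(\rho^2 r)$, hence $|B_P(\rho r)|\le |B_T(\rho^2 r)|$. Then I would invoke Lemma \ref{grid_growth} (the growth bound) at center $T$ with multiplicative ratio $x=\rho^2/\beta$, which is admissible because $\beta<1<\rho$ forces $x\ge\rho$; this yields $|B_T(\rho^2 r)|=|B_T(x\cdot\beta r)|\le x^{\alpha}|B_T(\beta r)|$. Chaining these two inequalities gives
\[
\Pr[\,\text{a single uniform sample from }B_P(\rho r)\text{ lies in }B_T(\beta r)\,]\;\ge\;\bigl(\beta/\rho^{2}\bigr)^{\alpha}.
\]

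Finally, setting $k=3(\rho^{2}/\beta)^{\alpha}$ and using independence of the draws (sampling is with replacement), the probability that all $k$ samples miss $B_T(\beta r)$ is bounded by
\[
\bigl(1-(\beta/\rho^{2})^{\alpha}\bigr)^{k}\;\le\;\exp\!\bigl(-k\cdot(\beta/\rho^{2})^{\alpha}\bigr)\;=\;e^{-3}\;<\;0.05,
\]
where I used the elementary inequality $1-y\le e^{-y}$ for $y\in[0,1]$. Hence with probability at least $95\%$ at least one of the sampled nodes lies in $B_T(\beta r)$, which is precisely the claim.

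The main obstacle I anticipate is not the probabilistic step but the careful bookkeeping around the Sandwich lemma in the \emph{asymmetric} setting: one must check that the direction of the delays used to define $B_P(\rho r)$ and $B_T(\beta r)$ is consistent with the quantifier over the ``witness'' node $w$ in Definition \ref{directionInframetric}, and verify that $\rho^{2}/\beta$ indeed falls in the range where Lemma \ref{grid_growth} applies (it does, since $\beta\le 1<\rho$ implies $\rho^{2}/\beta\ge\rho$). Once these side conditions are dispatched, the proof is essentially a one-line combination of the two lemmas with the standard $(1-1/n)^{n}\le e^{-1}$ estimate.
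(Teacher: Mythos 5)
Your proposal is correct and follows essentially the same route as the paper's own proof: both use the Sandwich lemma (Lemma \ref{sandwich}) with $p=P$, $q=T$ to get $B_T(\beta r)\subseteq B_P(\rho r)\subseteq B_T(\rho^2 r)$, then apply Lemma \ref{grid_growth} at center $T$ with ratio $x=\rho^2/\beta$ to lower-bound the single-draw success probability by $(\beta/\rho^2)^\alpha$, and finish with the standard $(1-1/n)^{3n}\le e^{-3}<0.05$ estimate. The side conditions you flag (that $\rho^2/\beta\ge\rho$ so the growth lemma applies) are exactly the ones the paper also checks, so there is no gap.
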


Since $\alpha$ and $\rho$ are determined by the delay space, we
can see that the number of samples decreases with increasing delay
reduction threshold $\beta$. As $\beta$ approaches 1, the number
of required samples becomes approximately $3{\left( {\frac{{{\rho
^2}}}{\beta }} \right)^\alpha } \approx 3{\rho ^{2\alpha }} \in
\left[ {3\gamma _g^2,3\gamma _g^4} \right]$ based on Lemma
\ref{grid_growth}.

\co{ , whose bounds are the power functions of the growth ${\gamma
_g}$

Theorem \ref{thm:SamplingGrowth} says that

 Furthermore, since the lower bound of $\alpha$ increases with
increasing ${\gamma _g}$, the required number of nodes to sample
also goes up with increasing of ${\gamma _g}$.

As a result, since ${\gamma _g}$ and $\rho$ are determined by the
delay space, in order to reduce the sampling overhead, the delay
threshold $\beta$ should not be too small. As $\beta$ approaches
1, the number of required samples becomes approximately $3{\left(
{\frac{{{\rho ^2}}}{\beta }} \right)^\alpha } \approx 3{\rho
^{2\alpha }} \in \left[ {3\gamma _g^2,3\gamma _g^4} \right]$,
whose bounds are the power functions of the growth ${\gamma _g}$.
}

\co{ of uniformly sampling nodes in ${B_P}\left( {\rho r}
\right)$}

\co{Nevertheless,  And finding a tighter interval with the
doubling dimension is an interesting problem.

Although the growth dimension has a tighter $\alpha$ than the
doubling dimension, $\alpha$ has the same formulation in the
growth dimension and the doubling dimension. For any node $P$,
$\alpha$ is the minimal value satisfying $\frac{{\left|
{{B_P}\left( {xr} \right)} \right|}}{{\left| {{B_P}\left( r
\right)} \right|}} \le {x^\alpha }$ based on Lemma
\ref{grid_growth} and \ref{grid_doubling}, i.e.,
\begin{equation}\label{alphaDef}
\alpha = {\log _x}\left( {\frac{{\left| {{B_P}\left( {xr} \right)}
\right|}}{{\left| {{B_P}\left( r \right)} \right|}}} \right)
\end{equation}
where $r
>0$, $x \geq \rho$, and $\rho$ is the inframetric value of
the delay data set. }

\subsection{DNNS on the Inframetric Model}
\label{DN2SGrowth}

In this section, we present the analysis of DNNS on the
Inframetric model. We will show the search accuracy, search
periods and search costs related to a DNNS process. We prove that,
by recursively following such sampling conditions, we can locate a
server that is $1/\beta$-approximation to the optimal: the delay
from the found server to the target is not bigger than $1/\beta$
times that from the nearest server to the target.

First, we review the goal of each DNNS step using the sampling
conditions in Sec \ref{alphaCalc}. Assume that a node $P$ wants to
locate a node that is $\beta$ times closer to a target $T$. The
goal of the current DNNS step is to locate a node $\beta$ times
closer to the target than the current node $P$. To that end,
Theorem \ref{thm:SamplingGrowth} shows that we need to sample up
to $3{\left( {\frac{{{\rho ^2}}}{\beta }} \right)^\alpha }$ nodes
uniformly at random from ${B_P}\left( {\rho r} \right)$ with
replacement.

\co{ Assume that a node $P$ wants to locate a node that is $\beta$
times closer  to a target $T$. Based on the definition of the
relaxed inframetric model, we can see that  we need to locate a
neighbor that lies in the closed ball ${B_T}\left( {\beta r}
\right)$.}

\co{
\begin{theorem}
(Sampling efficiency in the doubling dimension) For a
$\rho$-inframetric model with doubling ${\gamma _d} \geq 1$,  for
a service node $P$, and a DNNS target $T$ satisfying ${d_{PT}} \le
r$, when selecting $3{\left( {\frac{{{\rho ^2}}}{\beta }}
\right)^\alpha }$ nodes uniformly at random from ${B_P}\left(
{\rho r} \right)$  with replacement, with probability of at least
95\%, one of these nodes will lie in ${B_T}\left( {\beta r}
\right)$, where $\beta<1$, $\frac{1}{4}{\log _\rho }{\gamma _d}
\le \alpha  \le {\log _\rho }N$. \label{thm:SamplingDoubling}
\end{theorem}

Fig~\ref{fig:doublingIllust} illustrates the closer node location
problem by a service node $P$ to the target $T$ if in the doubling
dimension. Theorem \ref{thm:SamplingDoubling} shows that the
required number of samples in the closed ball ${B_P}\left( {\rho
r} \right)$ decreases with increasing $\beta$. As $\beta$
approaches 1, the required number of samples is approximately
proportional to ${\rho ^{2\alpha }}$, where $\alpha$ satisfies
$\frac{1}{4}{\log _\rho }{\gamma _d} \le \alpha \le {\log _\rho
}N$. Therefore, when $\rho$ and ${\gamma _d}$ are low, the
required number of samples is also modest, if we choose $\alpha$
close to $\frac{1}{4}{\log _\rho }{\gamma _d}$.

\begin{figure}[tp]
\begin{minipage}[t]{.44\textwidth}
  \leavevmode \centering \setlength{\epsfxsize}{2in}
  \epsffile{illustrate/growthIllust.eps}
  \caption{Sampling closer nodes to a target $T$ from ${{B_P}\left( {\rho r} \right)}$ in the $\rho$-inframetric model with growth ${\gamma _g}$.}
  \label{fig:growthIllust}
\end{minipage}%
\hspace{1cm}
\begin{minipage}[t]{.44\textwidth}
  \leavevmode \centering \setlength{\epsfxsize}{2in}
  \epsffile{illustrate/doublingIllust.eps}
  \caption{Sampling closer nodes to a target $T$ from ${{B_P}\left( {\rho r} \right)}$ in the $\rho$-inframetric model with doubling ${\gamma _d}$.}
  \label{fig:doublingIllust}
\end{minipage}%
\end{figure}
}

Based on the sampling condition in Theorem
\ref{thm:SamplingGrowth}, performing DNNS in the growth dimension
can be formulated into a simple DNNS procedure in Definition
\ref{samplingProcedure}.

\begin{definition}[A simple DNNS method in the inframetric model]
sampling $3{\left( {\frac{{{\rho ^2}}}{\beta }} \right)^\alpha }$
neighbors from the closed ball ${B_P}\left( {\rho
{d_{\mathit{PT}}}} \right)$ at each intermediate node $P$,
forwarding the DNNS request to a next-hop node $\beta$ times
closer to the target than the node $P$, and stopping at a local
minima when we can not find such a next-hop node.
\label{samplingProcedure}
\end{definition}

Furthermore, we can quantify the efficiency of found neighbors
based on the above DNNS procedure by Corollary
\ref{corrolaryGrowthDN2S}. As a result, we can locate an
approximately optimal nearest neighbor for a target $T$ when
$\beta$ approaches one. Furthermore, the number of required search
steps is a logarithm function of the ratio $\Delta$ of the maximum
delay to the minimum delay in the delay space, indicating that the
DNNS queries can complete quickly.

\begin{definition}[$\omega$-approximation]For a DNNS request with target $T$, a found nearest neighbor $A$ is a $\omega$-approximation, if the delay between $A$ to $T$ is smaller than $\omega {d_*}$, where $d_*$ is the delay between the real nearest neighbor to $T$.
\end{definition}

\begin{corollary}
For a relaxed inframetric model with growth ${\gamma _g}$,
according to the DNNS process in Definition
\ref{samplingProcedure}, the found nearest neighbor is a
$\frac{1}{\beta }$-approximation, and the number of search steps
is smaller than ${{{\log }_{\frac{1}{\beta }}}\Delta }$, where
$\Delta$ is the ratio of the maximum delay to the minimum delay of
all pairwise delays. \label{corrolaryGrowthDN2S}
\end{corollary}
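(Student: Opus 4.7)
The plan is to prove the two claims separately: first the approximation ratio, then the bound on the number of search steps. Both follow by leveraging Theorem~\ref{thm:SamplingGrowth} applied at each iterate of the process in Definition~\ref{samplingProcedure}.

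\textbf{Approximation ratio.} I will argue by contradiction about the termination node. Suppose the DNNS process halts at a node $P$, meaning that among the $3(\rho^2/\beta)^\alpha$ nodes sampled uniformly at random from $B_P(\rho d_{PT})$, none lies in $B_T(\beta d_{PT})$. Let $A^*$ denote the true nearest neighbor of $T$, and suppose for contradiction that $d_{A^*T} < \beta d_{PT}$. Then $A^* \in B_T(\beta d_{PT})$, and since $d_{PA^*} \leq \rho\max\{d_{PT}, d_{A^*T}\} = \rho d_{PT}$ by the relaxed inframetric property, we also have $A^* \in B_P(\rho d_{PT})$. Hence the closed ball $B_T(\beta d_{PT})$ is non-empty inside $B_P(\rho d_{PT})$, which by Theorem~\ref{thm:SamplingGrowth} means the sampling procedure would have hit it with probability at least $95\%$. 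The contradiction gives $d_{A^*T} \geq \beta d_{PT}$, i.e., $d_{PT} \leq (1/\beta) d_{A^*T}$, which is precisely the claimed $1/\beta$-approximation.

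\textbf{Number of search steps.} By construction, every non-terminating step $i \to i+1$ produces a next-hop node $P_{i+1}$ with $d_{P_{i+1}T} \leq \beta\, d_{P_iT}$. Hence after $k$ non-terminating steps the distance to the target is at most $\beta^k d_{P_0T}$. Since the smallest possible pairwise delay is $d_{\min}$ and the initial distance is at most $d_{\max}$, the search must terminate as soon as $\beta^k d_{\max} < d_{\min}$, giving $k < \log_{1/\beta}(d_{\max}/d_{\min}) = \log_{1/\beta}\Delta$.

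\textbf{Main obstacle.} The subtle point is the probabilistic aspect: Theorem~\ref{thm:SamplingGrowth} only guarantees success ``with probability at least $95\%$'' per step, so strictly speaking the $1/\beta$-approximation at the terminating node holds only with high probability, and over a sequence of up to $\log_{1/\beta}\Delta$ steps a union bound (or a boosted per-step success probability by oversampling) is needed to preserve an overall high-probability guarantee. I will state this explicitly rather than hide it, and note that because the sampling confidence enters only logarithmically in $\Delta$, a modest constant blow-up in the sample size suffices to keep the end-to-end success probability high. The remaining ingredient that needs a brief justification is that $A^* \in B_P(\rho d_{PT})$, which is immediate from condition~(2) of Definition~\ref{directionInframetric}; the rest of the argument is a direct unrolling of the recursion.
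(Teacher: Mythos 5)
Your proposal is correct and follows essentially the same route as the paper's own proof: both use Theorem~\ref{thm:SamplingGrowth} to guarantee a per-step delay reduction by the factor $\beta$, derive the $\log_{1/\beta}\Delta$ bound from the geometric shrinkage between the maximum and minimum pairwise delays, and read off the $\frac{1}{\beta}$-approximation from the termination condition of the process in Definition~\ref{samplingProcedure}. If anything, your write-up is more careful than the paper's: you make the contradiction argument at the terminating node explicit (the paper only asserts that reaching $d_{vT} < \frac{1}{\beta}d_*$ forces termination), and you correctly flag that the $95\%$ per-step sampling guarantee requires a union bound or boosted sample size over the $\log_{1/\beta}\Delta$ steps --- a probabilistic caveat the paper's one-paragraph proof silently ignores.
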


\co{
\begin{corollary}
For a relaxed inframetric model with doubling ${\gamma _d}$,
according to the DNNS process in Definition
\ref{samplingProcedure}, the found nearest neighbor is a
$\frac{1}{\beta }$-approximation, and the number of search steps
is smaller than ${{{\log }_{\frac{1}{\beta }}}\Delta }$, where
$\Delta$ is the ratio of the maximum delay to the minimum delay of
all pairwise delays. \label{corrolaryDoublingDN2S}
\end{corollary}
}

\subsection{Limitations of Theoretical Results}
\label{limitations}

To find a better next-hop neighbor without missing any closer
nodes, based on the DNNS analysis in the inframetric model in
Sec~\ref{DN2SGrowth}, we should sample approximately $3{\left(
{\frac{{{\rho ^2}}}{\beta }} \right)^\alpha }$ nodes whose delays
to current node $P$ are not larger than ${\rho {d_{PT}}}$.
However, the number of the candidate neighbors may be quite high,
as shown in Fig \ref{fig:NumSamplesoutput}. We can see that  the
number of required samples exceeds 100 accordingly, for $\beta$
below 0.4 or $\alpha$ above 1. Such high number of samples implies
that we need extremely large number of samples for continuing the
DNNS query.

\co{ , where $\beta$ is the delay reduction threshold, $\alpha$ is
bounded as defined in Lemma \ref{grid_growth}, ${d_{PT}}$
represents the delay from the current node $P$ to the target $T$.
}
%

\co{ Fig \ref{fig:NumSamplesoutput} depicts the number of sampled
neighbors $3{\left( {\frac{{{\rho ^2}}}{\beta }} \right)^\alpha }$
by varying the delay reduction threshold $\beta$ from the interval
$[0.1,1]$.

Therefore, the simple DNNS method has limitations on the
measurement costs. }

On the other hand, the number of samples decreases with decreasing
$\alpha$ or with increasing $\beta$. When $\alpha$ is below 1, the
number of samples is below 33 if the delay reduction threshold
$\beta$ is above 0.8. As a result, we can see that we need to
choose a large $\beta$ in order to reduce the number of samples,
since the median values of $\alpha$ are mostly no more than 1 from
Fig \ref{fig:alphaStat}.

\begin{figure}[tp]
\leavevmode \centering \setlength{\epsfxsize}{.6\hsize}
\epsffile{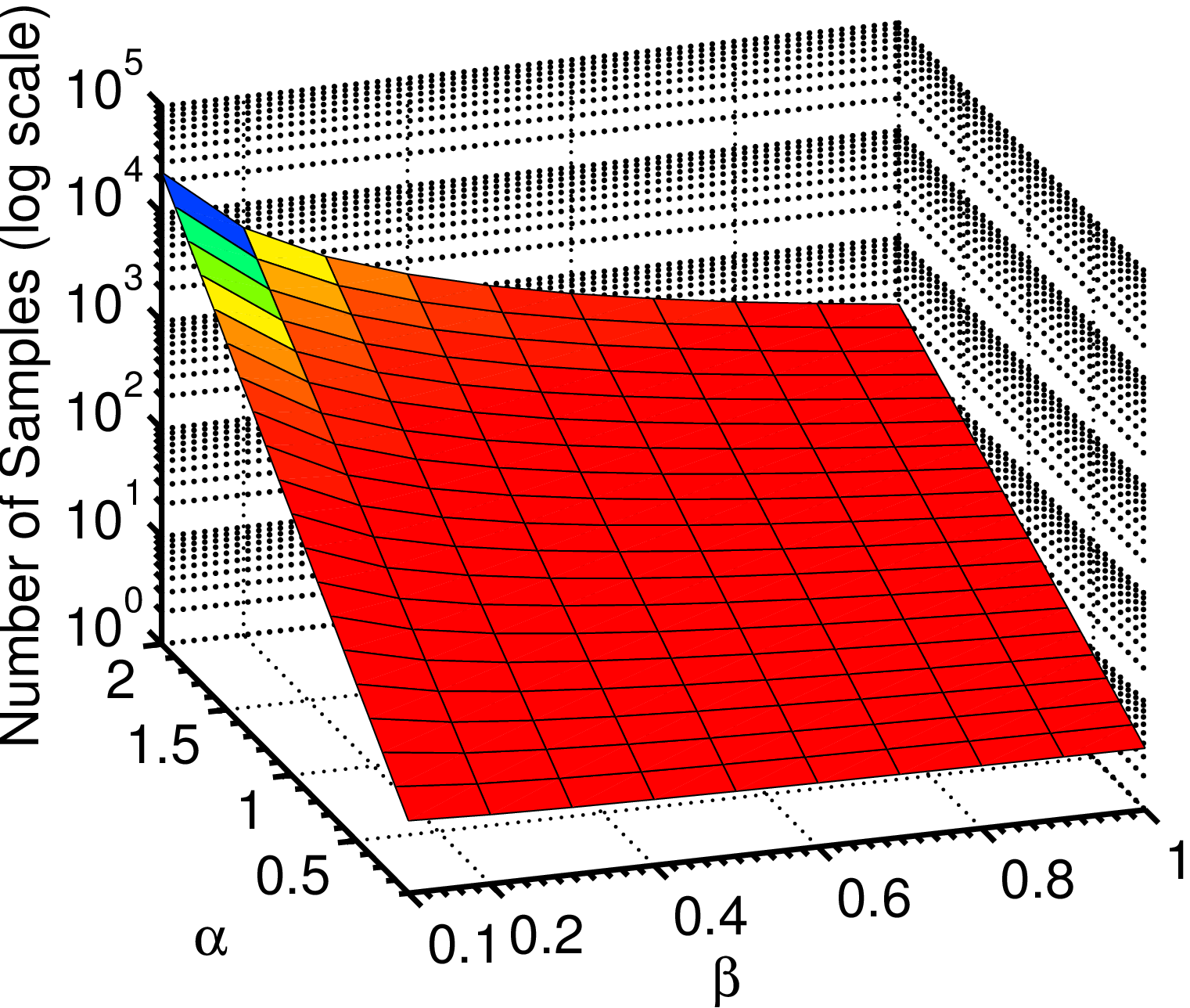} \caption{The
number of sampled neighbors $3{\left( {\frac{{{\rho ^2}}}{\beta }}
\right)^\alpha }$  by varying the volume difference parameter
$\alpha$ from the interval $[0,2]$ based on the analysis in Sec
\ref{alphaCalc} and the delay reduction threshold $\beta$. We set
the inframetric parameter $\rho$ to be 3 to represent most
triples. } \label{fig:NumSamplesoutput}
\end{figure}

\co{
\subsubsection{Fixed Exponential Delay Reduction Causes Local
Minima}

Recall that the simple DNNS method requires that each DNNS search
step must find a next-hop node that has $\beta$ times delay
reduction to the target, where $\beta$ is a fixed parameter for
each search step.

However, we find that the delay space typically contains large
clusters and outliers as shown in Fig~\ref{fig:cluStru}, where
nodes in the same cluster are close to each other and nodes in
different clusters are distant from each other
\cite{sona:xu05survey}. Both significant clusters and outliers
pose challenges for local minima of DNNS. (i) For DNNS targets
located in large clusters, the search process may terminate at a
local minima in identical clusters due to no close enough next-hop
neighbors, as delays between service nodes and the targets in the
identical cluster are more similar than inter-cluster delays. (ii)
For DNNS targets distant from most service nodes, the search
process may terminate prematurely due no next-hop nodes closer
enough to the targets. Nevertheless, for these outlier targets, it
is even more important to find nearby service nodes, since most
service nodes are far away from these the targets. }

\co{ the simple DNNS method faces the "near-plateau" problem, in
that the delay improvements of candidate neighbors are lower than
usual cases when approaching the cluster that contains the target.
As a result, the DNNS process will terminate before locating the
ground-truth nodes, which implies the prevalence of the local
minima. } \co{ First, we provide the evidence of the clustering on
the delay space. To scalably find the clustering structure of the
network delay space, we use the Lipschitz embedding
\cite{citeulike:3078497} to assign a coordinate vector to each
node, then we compute the clustering with the K-means clustering
method based on Euclidean distances. To show the existence of the
clustering, we reorganize the indexes of nodes in the delay matrix
by putting nodes in the same cluster together, then we plot the
reorganized delay matrix as a gray scale graph (normalized to the
95 percentile of the delay values), as in Fig~\ref{fig:cluStru}.
Darker pixels correspond to larger pairwise delays.

From the gray-scaled pairwise delay matrices, all data sets except
Host479 exhibit two or three well-separated clusters. The
structure of the delay space for Host479 is possibly due to the
delay aggregation operation  based on BGP prefixes from a sparse
delay matrix \cite{DBLP:conf/infocom/ChoffnesSB10}. Moreover, in
some data sets, there are several possible small-sized outliers
that are far to most other nodes.

Both significant clusters and outliers pose challenges for local
minima of DNNS. (i) For DNNS targets located in large clusters,
the search process may terminate at a local minima in identical
clusters due to no close enough next-hop neighbors, as delays
between service nodes and the targets in the identical cluster are
more similar than inter-cluster delays.  (ii) For DNNS targets
distant from most service nodes, the search process may terminate
prematurely due no next-hop nodes closer enough to the targets.
Nevertheless, for these outlier targets, it is even more important
to find nearby service nodes, since most service nodes are far
away from these the targets. }

\co{ Furthermore, the local minima may deviate significantly from
the real optimum due to the skewed multimodal delays.

For Host479, there exist two subsets of nodes that have
approximately the same distances towards other nodes, which is
inconsistent with the other three data sets.

}

 \co{ Our clustering methodology is similar with that of IDES
for selecting landmarks for network embedding
\cite{DBLP:journals/jsac/MaoSS06}. First we select a small
fraction of nodes (denote the size as $l_c$) as landmarks, and put
these nodes into a landmark list, then for each node  $i$
(including the landmarks), its Lipschitz coordinate  is defined as
the distance vector towards each landmark in the landmark list,
denoted as a $l_c$-dimensional vector
$(d_{iL_{1}},d_{iL_{2}},...,d_{iL_{l_c}})$. Next, we compute the
K-means clustering results with Euclidean distances, based on the
coordinates of all nodes. We configure the dimension $l_c$ of the
distance vectors as 50, which provides good clustering results. }

\co{
\begin{figure}[tp]
     \centering
              \co{ \subfigure[DNS1143.]
        {
          \setlength{\epsfxsize}{.44\hsize}
          \epsffile{figures/out-fig-2/clustering_P2P.eps}
         }
               \subfigure[DNS2500.]
        {
          \setlength{\epsfxsize}{.44\hsize}
          \epsffile{figures/out-fig-2/clustering_meridian2500.eps}

         }}
               \subfigure[DNS3997.]
        {
          \setlength{\epsfxsize}{.44\hsize}
          \epsffile{figures/out-fig-2/mat3997_grayscale_3.eps}

         }
            \subfigure[Host479.]
        {
          \setlength{\epsfxsize}{.44\hsize}
          \epsffile{figures/out-fig-2/clusteringMatrix_mat479_c3.eps}

         }
     \caption{Clustering structure of the delay space. To scalably find the clustering structure of the
network delay space, we use the Lipschitz embedding
\cite{citeulike:3078497} to assign a coordinate vector to each
node, then we compute the clustering with the K-means clustering
method based on Euclidean distances. To show the existence of the
clustering, we reorganize the indexes of nodes in the delay matrix
by putting nodes in the same cluster together, then we plot the
reorganized delay matrix as a gray scale graph (normalized to the
95 percentile of the delay values). Darker pixels correspond to
larger pairwise delays..}
     \label{fig:cluStru}
\end{figure}

}

\subsection{Comparison with Previous Inframetric Study}

Our relaxed inframetric model is inspired by the seminal study on
the inframetric model \cite{DBLP:conf/infocom/FraigniaudLV08} that
assumes the symmetry of the distance function. We extend the
inframetric model study for the Internet delays in four aspects:
\begin{itemize}
\co{
    \item  We prove that it is
NP-complete to compute the doubling metric for the inframetric and
the relaxed inframetric model. Besides, we prove the efficiency of
calculating the doubling metric on the relaxed inframetric model
using the greedy set cover technique.\item Moreover, we provide
detailed statistics of the doubling metric using diverse data
sets. On the contrary, the seminal work uses a randomized approach
to approximate the doubling metric on the inframetric model, which
does not provide accuracy guarantees. }

\item We extend the inframetric model to allow both symmetric and
asymmetric distance functions, which generalizes the RTTs and OWDs
that are important for latency-sensitive applications.

\item We clearly show the relation between inframetric parameter
$\rho$ and the TIV. The inframetric parameter $\rho \le 2$ is a
necessary but not sufficient condition for no TIVs.

\item We formulate the DNNS problem on the relaxed inframetric
model and propose a simple DNNS method that finds approximately
nearest neighbor for any target using at most logarithmic search
hops. Interestingly, our simple DNNS method works on both
symmetric and asymmetric delay metrics.

\end{itemize}

\co{\item We give detailed statistics for the inframetric
modelling parameter $\rho$ and the growth metric for symmetric and
asymmetric delay data sets.}

\section{Realizing a Practical DNNS}
\label{practicalDNNS}

\subsection{Overcoming Limitations of the Simple DNNS Method}
\label{limitations}

Recall that the measurement costs limits the usefulness of the
simple DNNS method defined in Def \ref{samplingProcedure} from Sec
\ref{limitations}. Besides, in the distributed system context,
since each service node does not have the global view of the delay
space, sampling enough neighbors from the closed ball centered at
each service node is difficult. We discuss design principles to
tackle these two difficulties in this section.

\subsubsection{Reduce Measurement Costs}

We reduce the measurement costs in two complementary approaches:
(i) Given that the number of required samples of the simple DNNS
method depend on varying parameters, we seek to modify the
parameters to obtain the lower bound of the required number of
samples. (ii) Given that network coordinates can be used for delay
estimations, we avoid complete measurements from selected samples
to the target using delay estimations.

First, recall that the number of samples for the simple DNNS
method increases quickly with decreasing delay reduction threshold
$\beta$. Therefore, to reduce the number of samples, we should set
the delay reduction threshold $\beta$ to be close to 1. On the
other hand, since the approximation ratio of the simple DNNS
method is ${1/{\beta}}$, we can see that large $\beta$ also leads
to better approximations of nearest neighbors. As a result, we set
$\beta$ to 1 in order to reduce the number of samples and obtain
the best approximation accuracy.

\co{ For example, the number of required samples becomes quite
large (over 100) when the delay reduction threshold $\beta$
decreases from 1 to 0.1.

Note that setting $\beta$ to be 1 does not prevent us to find the
candidate neighbor that is closest to the target at each search
step.

Besides, measurement packet losses also prolong the query periods,
since we need to wait for the completion of the measurements from
all selected neighbors.
 }

Second, although we reduce the number of samples using modified
$\beta$, we still need delay measurements between selected samples
to the targets, which consume the bandwidth costs and CPU loads of
service nodes.  Therefore, we hope to reduce the required delay
measurements while obtaining the sample that is closest to the
target. To that end, we use \textit{delay estimations based on
network coordinates} to reduce the delay measurement costs.
However, since the delay estimations incur errors due to the
embedding distortions of network coordinates, simply using delay
estimations to find the nearest neighbors becomes less reliable.
Instead, we issue delay measurements when the delay estimations
are inaccurate, so as to avoid the inaccurate delay estimations.

\co{ Furthermore, since the delay reduction threshold $\beta$ also
determines when to terminate a DNNS query, which leads to local
minima caused by the clustering in the delay space. In order to
avoid such local minima, we avoid $\beta$ to be fixed during the
DNNS process. Instead, we allow to continue the DNNS process
adaptively. Therefore, we set $\beta$ to be 1.}

\subsubsection{Sample Enough Neighbors For Continuing DNNS Query}

Based on the simple DNNS method, each DNNS service has to maintain
enough neighbors covering different delay ranges in the delay
space, in order to find the nearest neighbor to any target.
Therefore, each node has to maximize its diversity in the neighbor
set.

Gossip based neighbor management is frequently used for existing
DNNS methods. For example, Meridian
\cite{DBLP:conf/sigcomm/WongSS05} and OASIS
\cite{DBLP:conf/nsdi/FreedmanLM06} use an anti-entropy gossip
protocol to discover neighbors, and store neighbors using rings of
neighbors called concentric rings. However, during our
experiments,  the innermost and outermost rings in the concentric
ring often find no or only few neighbors compared to the capacity
of the ring, while the rest of rings with radii lying in the
middle portion of the delay distributions are filled with too many
neighbors, leading to frequent ring management events, incurring
heavy computation and communication overhead.

We explain the insufficiency of the gossip process in details.
Assuming that we know the complete delay matrix, for each node, we
compute the percent of mapped nodes for each ring, which serves as
an upper bound of sampled neighbors for that ring. Then we can
analyze whether the distributions of mapped nodes in concentric
rings affect the gossip process. As shown in Fig
\ref{fig:RTTDist}, we can see that most nodes are mapped into a
few number of rings, whose delay ranges lie in the middle portion
of the delay distributions. However, only quite a few nodes are
mapped into the innermost and outermost rings, which result in a
skewed distribution of mapped nodes for the concentric rings. As a
result, since the gossip process adopts the uniform sampling
approach, the gossip process will inevitably sample insufficient
neighbors from those rings that have too few mapped nodes.

Accordingly, to improve the concentric ring maintenance, we need
to sample enough neighbors that lie in different delay ranges. To
that end, we propose to find nearest neighbors and farthest
neighbors for each service node, in order to fill the innermost
and outermost rings in the concentric ring.

\co{the low-density regions correspond to the head portion and the
tail portion in the delay distribution}

\begin{figure}[tp]
     \centering
                    \subfigure[DNS1143.]
        {
          \setlength{\epsfxsize}{.44\hsize}
          \epsffile{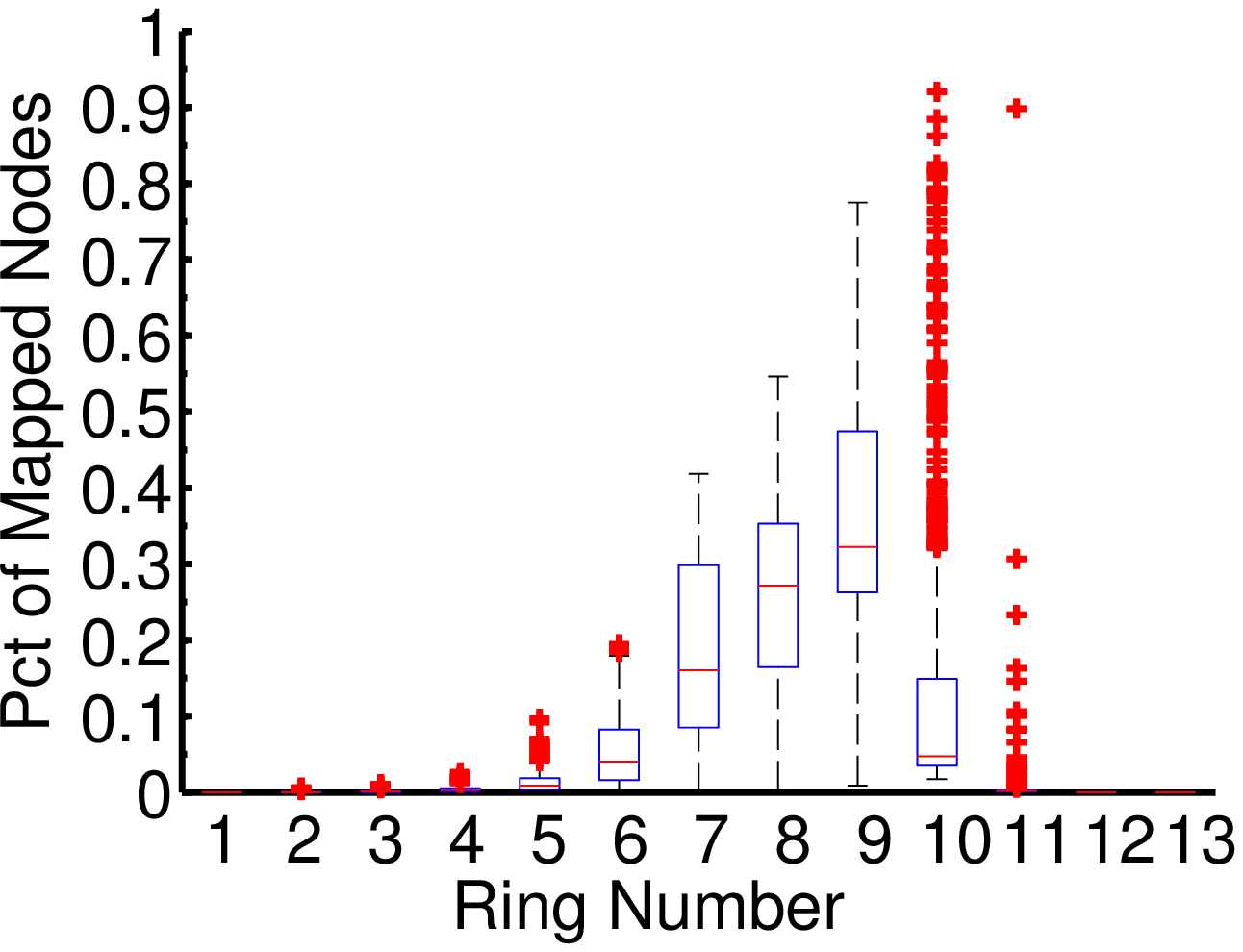}
         }
              \subfigure[DNS2500.]
        {
          \setlength{\epsfxsize}{.44\hsize}
          \epsffile{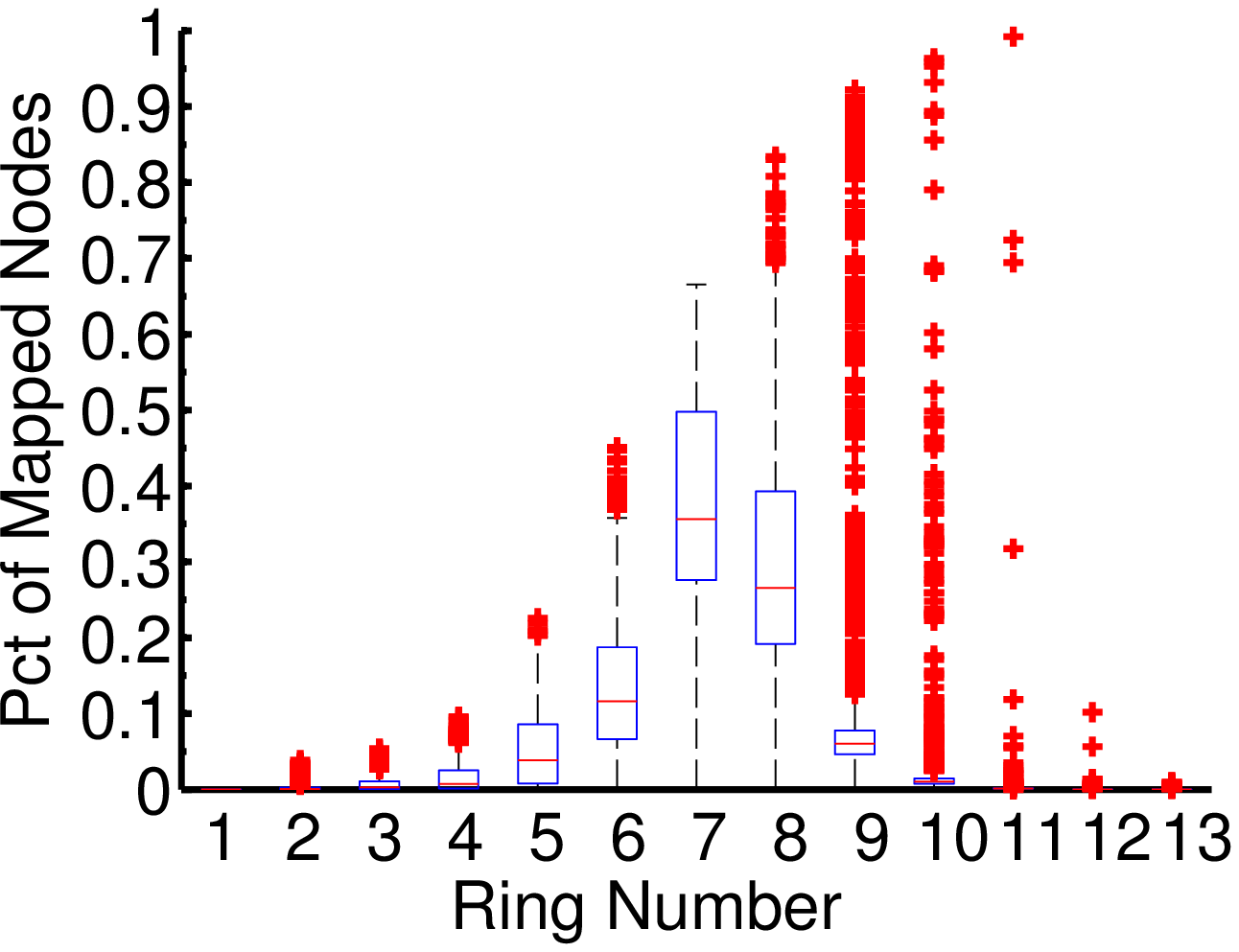}
         }
               \subfigure[DNS3997.]
        {
          \setlength{\epsfxsize}{.44\hsize}
          \epsffile{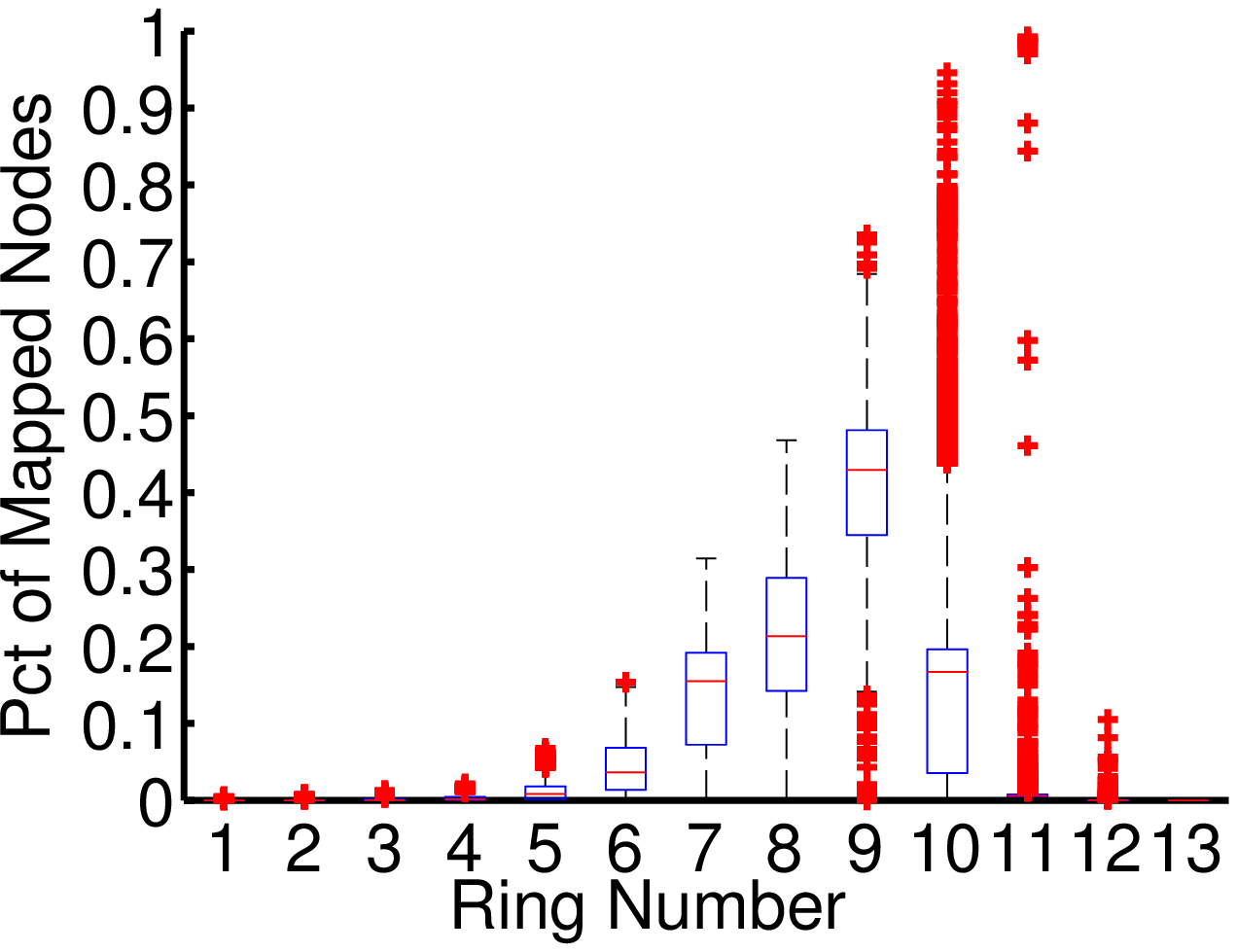}
         }
              \subfigure[Host479.]
        {
          \setlength{\epsfxsize}{.44\hsize}
          \epsffile{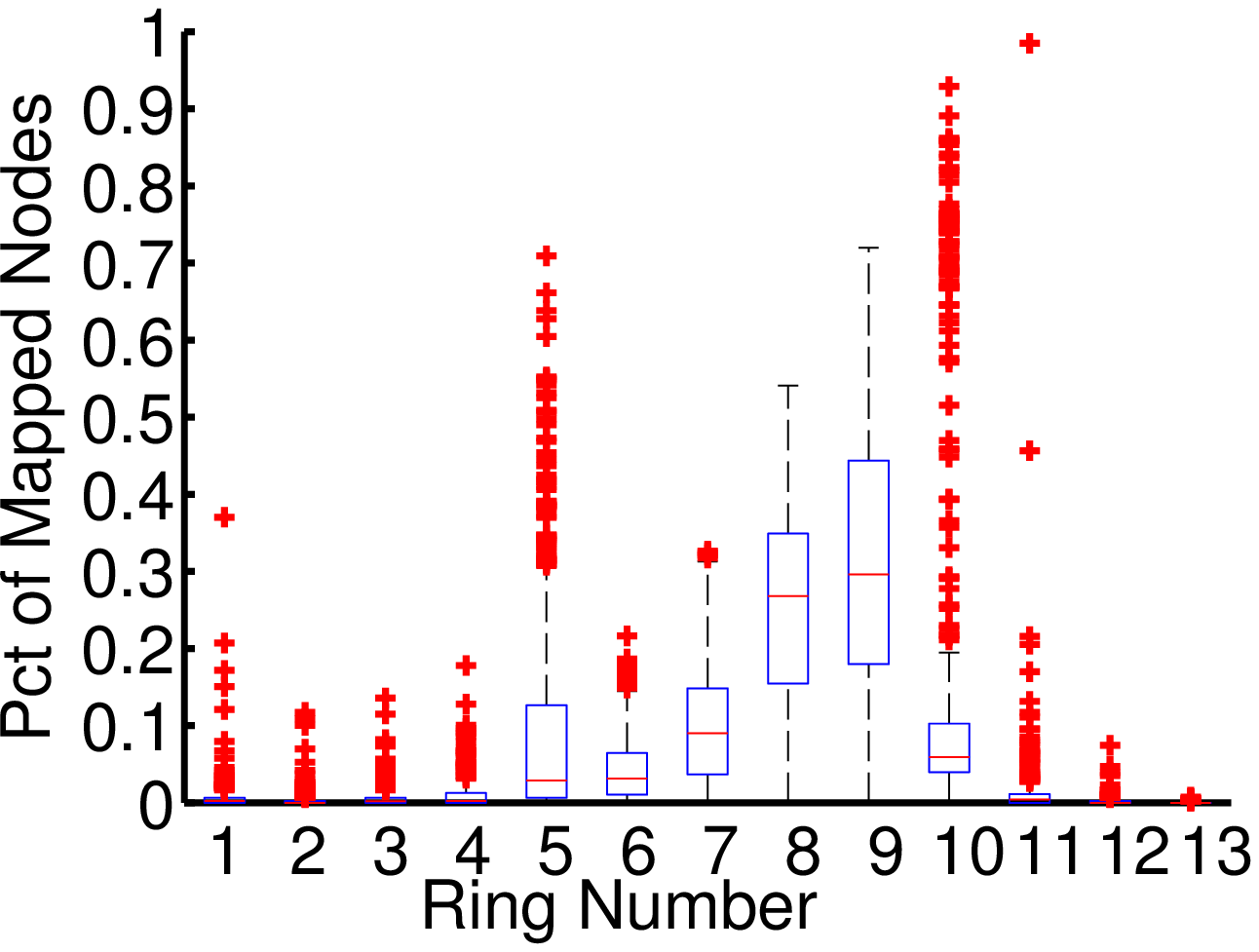}
         }
     \caption{The percent of mapped nodes into different rings, assuming that we obtain the complete delay matrix.
     The $i$-th ring
contains neighbors whose delays to a node $P$ lie in the interval
$\left( {\alpha {s^{i - 1}},\alpha {s^i}} \right]$, with $i>0$,
$\alpha$ a constant, $s$ a multiplicative increase factor
($\alpha=1$, $s=2$ ms as configured by Wong et
al.~\cite{DBLP:conf/sigcomm/WongSS05}). Besides, since our
objective is to determine the distribution of nodes mapped into
the concentric ring, we do not limit the maximum capacity of each
ring.}
     \label{fig:RTTDist}
\end{figure}

\co{
\begin{figure}[tp]
     \centering
           \co{    \subfigure[DNS1143.]
        {
          \setlength{\epsfxsize}{.44\hsize}
          \epsffile{figures/out-fig-2/fitByDistribution_1143RTTDistoutput.eps}
         }
               \subfigure[DNS2500.]
        {
          \setlength{\epsfxsize}{.44\hsize}
          \epsffile{figures/out-fig-2/fitByDistribution_2500RTTDistoutput.eps}
         }}
               \subfigure[DNS3997.]
        {
          \setlength{\epsfxsize}{.44\hsize}
          \epsffile{figures/out-fig-2/fitByDistribution_3997RTTDistoutput.eps}
         }
              \subfigure[Host479.]
        {
          \setlength{\epsfxsize}{.44\hsize}
          \epsffile{figures/out-fig-2/fitByDistribution_479RTTDistoutput.eps}
         }
     \caption{The Complementary Cumulative
Distribution Function (CCDF) of delay distributions and the fitted
statistical distributions. We choose several well-known
statistical distributions to fit the observed delay distributions,
including the Weibull distribution, the Generalized Pareto
Distribution (GPD) and the Lognormal distribution. All fitting
processes are based on the maximum likelihood estimates
implemented in Matlab 7.9. Observed delays are well fitted by the
Weibull distribution. As the shape parameters of the fitted
Weibull distribution are bigger than 1, the delay data sets are
skewed, and only a small number of the delays are in the header
and tail of the distribution.}
     \label{fig:RTTDist}
\end{figure}
}

\co{ Furthermore, we need to bound the neighbor management costs,
in that maximizes the diversity of the delay space with low
communication/storage overhead.}

\subsection{Our Design}

Based on the design principles in Sec \ref{limitations}, we design
a novel DNNS method named \textit{HybridNN} (Hybrid Nearest
Neighbor Search). We present an overview of HybridNN. To sample
enough candidate neighbors from the proximity region of the
current node, each node must first maintain a neighbor set that
contains enough neighbors within each proximity region. Then using
the neighbor set, we select candidate neighbors using the sampling
conditions of the simple DNNS method, in order to cover the
neighbors closer to the target with high probability. Next, we
determine the candidate neighbor closest to the target, using
delay estimations and direct probes, in order to obtain a better
tradeoff between sampling bandwidth and accuracy. Finally, using
the currently nearest candidate neighbor to the target, we
determine whether to terminate the DNNS query. As shown in Fig
\ref{fig:HybridNN_Illustration}, HybridNN is composed of five
components:

 \noindent   \textbf{Neighbor Maintenance}: This component maintains the
neighbor set for DNNS queries. Since nodes are mapped into the
rings at the middle portion of the concentric ring, which implies
that neighbors mapped into the head portion and tail portion of
the concentric ring are difficult to be sampled using the uniform
sampling based approach. As a result, we need to increase the
sampling probability of such neighbors, in order to fulfill the
sampling conditions for DNNS queries. To that end, we
over-sampling neighbors in the head portions and tail portions of
the concentric rings, besides we uniformly sampling neighbors
located in the middle portions of delays and.

 \noindent    \textbf{Selecting Candidate Neighbor}: This component selects
candidate neighbors to satisfy the sampling conditions of the
simple DNNS method. When a node $P$ receives a DNNS query, node
$P$ determines its delay towards the target $T$, then selects
neighbors from its diversity-optimized neighbor sets (Sec
\ref{neighborMain}) by covering possible closer neighbors towards
the target $T$ (Sec \ref{choose}). Furthermore, we prune those
neighbors that could mislead the DNNS query into poor local
minima.

 \noindent     \textbf{Coordinate Maintenance}: This component updates the
coordinate of the target in order to estimate delays to targets
from candidate neighbors, since the target machine may not have
the coordinate for delay estimation (Sec \ref{coordMaintain}).
Additionally, each service machine maintains a network coordinate
used for delay estimations.

 \noindent     \textbf{Determining Closest Neighbor}: This component determines
the neighbor nearest to the target (Sec \ref{find}). Each node
computes the candidate neighbor closest to the target using delay
estimations and direct probes, in order to balance between the
measurement costs and measurement accuracy.

 \noindent     \textbf{Termination Test}: This component determines to continue
or stop a DNNS query (Sec \ref{terminate}). Recall that in
previous section we set the delay reduction threshold $\beta$ to
be 1 on order to reduce the number of samples and obtain better
approximation ratios to the optimal results. Therefore, HybridNN
conservatively terminate the DNNS query only when all candidate
neighbors having larger delays than the current node.

Finally, HybridNN uses an extensible delay measurement interface.
For instance, by default HybridNN simply use the system-built-in
Ping command to obtain pairwise RTT measurements. When there exist
an on-demand OWD probe service such as Reverse Traceroute
\cite{DBLP:conf/nsdi/Katz-BassettMASSWAK10}, HybridNN configures a
RPC interface to request the pairwise OWD results.

\co{ , in order to obtain the pairwise RTT measurements,

\begin{enumerate}
    \item When a node $P$ receives a DNNS query, node $P$ determines its delay towards the target
$T$, then selects neighbors from its diversity-optimized neighbor
sets (Sec \ref{neighborMain}) by covering possible closer
neighbors towards the target $T$ (Sec \ref{choose}). Furthermore,
we prune those neighbors that has limited diversity in the
neighbor set, since such poor-diversity neighbors could mislead
the DNNS query into poor local minima due to no closer neighbors
to the target.
    \item Node $P$ updates the coordinate of the target, in order
    to utilize the delay prediction with coordinate distances to reduce the
    delay measurement overhead (Sec \ref{coordMaintain}).
    \item Node $P$ finds the closest candidate neighbor based on
    delay prediction and direct RTT probes (Sec \ref{find}).
    \item Node $P$ determines whether to stop or to
    forward DNNS requests to neighbors recursively (Sec \ref{terminate}).
\end{enumerate}
The novelty of HybridNN lies in four aspects:
\begin{itemize}
    \item     \item We propose an inframetric model based candidate neighbor selection scheme that guarantees to include closest neighbors among all neighbors to the DNNS targets.
    \item To reduce the probe overhead, we use network coordinates combined with a small number of direct probes to determine the nearest neighbor to the target.
    \item We prune poorly-covered neighbors that are not useful for DNNS queries in order to reduce the measurement overhead.
\end{itemize}
}

%
%
\begin{figure}[tp]
   \leavevmode \centering \setlength{\epsfxsize}{0.90\hsize}
 \epsffile{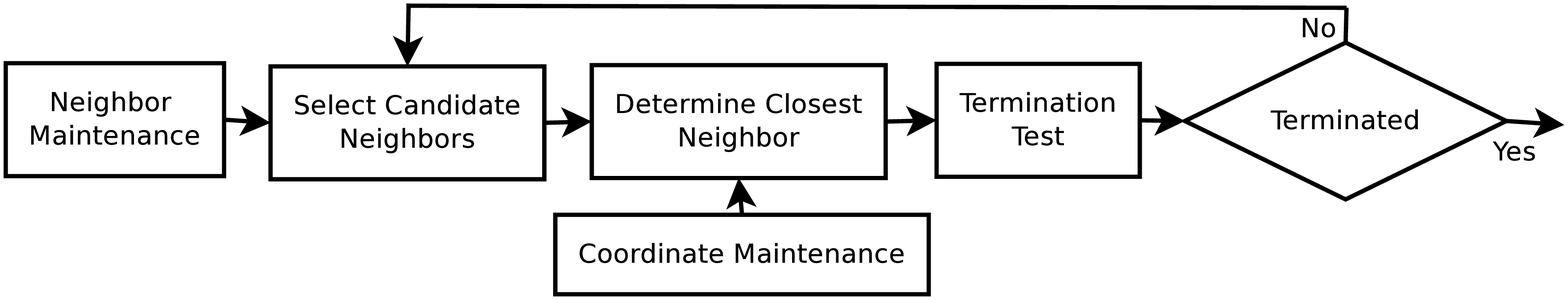}
  \caption{The flow chart of four search steps at a service node for a DNNS query.}
  \label{fig:HybridNN_Illustration}
\end{figure}

\co{
    \item We propose to terminate the DNNS process conservatively, by choosing low delay reductions during the DNNS queries to take into account the clustering structure in the delay space.

}

\subsection{Neighbor Maintenance}
\label{neighborMain}

In order to facilitate the neighbor sampling for DNNS forwarding,
each service node maintains neighbors that are sampled from
different regions in the delay space. We introduce the neighbor
discovery and update in this section.

\subsubsection{Organize Neighbors Into Rings for Proximity
Selection}

\co{To limit the storage overhead, the number of rings is limited
to be a constant $i^*$ (20 by default), and the number of
neighbors in each ring is bounded to be a constant $\Delta$.
Consequently, all rings $i>i^*$ are collapsed into a single
outermost ring spanning the interval $\left( {\alpha {s^{{i^*}}},
+ \infty } \right]$.}

Since the proximity region for neighbor sampling in the simple
DNNS method is a closed ball, we choose the concentric ring to
organize neighbors for each node. For instance, if we need to
locate all neighbors that are at most $d_2$ ms away, we select all
neighbors from those rings whose ring numbers are at most
$\left\lceil {{{\log }_2}{d_2}} \right\rceil$.

An important parameter for the concentric ring is its ring size
$\Delta$, which determines the maximum number of neighbors per
ring. Since we need to sample enough neighbors using the
concentric ring to guarantee to locate a neighbor closer to the
target with a high probability, we analytically determine the
choice of $\Delta$ as follows. First, the total number of samples
$3{\left( {\frac{{{\rho ^2}}}{\beta }} \right)^\alpha }$ is within
the interval $\left[ {3\gamma _g^2,3\gamma _g^4} \right]$, since
we set the delay reduction threshold $\beta$ to 1. Therefore, if
we set the number of neighbors $\Delta$ at each ring to be at
least $O(\gamma _g^2)$, we can ensure that with a high
probability, we can find a neighbor that is closer to the target
than the current node $P$. Furthermore, since $\gamma _g$ is low
on average from previous sections, we can set the number of
neighbors $\Delta$ to be a modest integer (8 by default).

\co{ In order to sample enough neighbors at the proximity region
for each current node that receives a DNNS query, we analyze the
number of neighbors $\Delta$ at each ring, which should fulfill
the number of samples required to find neighbors closer to the
target.

Recall that the samples lie in the set of nodes whose delays to
current node $P$ are not larger than ${\rho {d_{PT}}}$.

Since we set the delay reduction threshold $\beta$ to be 1, the
total number of samples $3{\left( {\frac{{{\rho ^2}}}{\beta }}
\right)^\alpha }$ is within the interval $\left[ {3\gamma
_g^2,3\gamma _g^4} \right]$. }


Furthermore, to adapt to the dynamics of delays, we use a moving
median as a latency filter for extracting stable delay
measurements to each neighbor \cite{DBLP:conf/nsdi/LedlieGS07},
which allows to have up-to-date delay estimates resilient to the
measurement noises.

\subsubsection{Biased Sampling based Neighbor Discovery}
\label{newNodesFinding}

\co{ Based on the distribution of neighbors for each ring in the
previous section, we have seen that most neighbors are mapped into
the middle portion of the concentric rings, and only quite a few
nodes are mapped into the innermost and outermost rings. We can
immediately see that by mapping all other nodes into a concentric
ring, there are many nodes in the concentric ring within the
middle portion of the delay ranges.

Empirically, we found that if we only use the gossip based
neighbor discovery like Meridian, then the innermost and outermost
rings will have insufficient neighbors. This is because the
neighbors sampled by the gossip process are mostly mapped into a
few number of rings, which leave other rings to be close to empty.

Recall that in the previous section, we have shown that we need to
sample neighbors lying in the 'cold' rings that contain only a few
nodes for any concentric ring as shown in Fig \refXXX, we have to
adopt a biased sampling approach to obtain neighbors that are
within 'cold' rings.

To realize the maximum diversity of the delay space with a bounded
number of neighbors, our principle is to oversample nodes in close
and far-away delay regions to augment the uniform sampling based
neighbor discovery (Sec~\ref{newNodesFinding}), and maximize the
regions covered by the neighbors by maximizing the inter-node
delays (Sec~\ref{updateNeighbors}). We introduce the neighbor
discovery in Sec \ref{newNodesFinding}.

In order to find sufficient neighbors for each ring, the basic
idea is to uniformly sample neighbors from high-density delay
ranges and to oversample neighbors from low-density delay ranges.
First, uniform sampling can cover significant clusters
efficiently. Second, oversampling complements the diversity of
uniform sampling by locating nearby nodes and faraway nodes
towards each node, which are not easily found by uniformly
sampling due to their smaller percentages. Each node periodically
selects one of its potential neighbors, and starts finding new
neighbors by uniform sampling and oversampling methods.
 }

Based on the distribution of neighbors for each ring in the
previous section, we have seen that we need to over-sample
neighbors mapped into the head portion and the tail portion of the
concentric rings. To that end, we adopt both uniform sampling and
over-sampling approaches.

\textbf{Uniform sampling}. We reuse the gossip process in
Meridian. Briefly, each node $P$ periodically starts the gossip
process by uniformly selecting a neighbor $Q$ from $P$'s
concentric ring as
 communication partner, and sends a gossip request message to
node $Q$ containing randomly sampled neighbors, one neighbor per
non-empty ring. When $Q$ receives the gossip request, $Q$ will
send a gossip ACK to $P$ immediately; besides, $Q$ iteratively
sends gossip requests towards the sampled neighbors in the gossip
request message of $P$.

Finally, if we use the RTT metric, then node $P$ inserts $Q$ into
the corresponding ring according to the round trip delays measured
as the period between the gossip request and the gossip ACK.
Alternatively, if node $P$ is able to measure the one-way delay
from $P$ to $Q$, then node $Q$ is inserted into the corresponding
ring according to the one-way delay from $P$ to $Q$.

\textbf{Over-sampling}. Our goal is to sample enough neighbors
from those mapped neighbors lying in the head and tail portions of
the concentric rings. For this purpose, we use $K$ closest
neighbor search and $K$ farthest neighbor search. The returned
nodes are directly stored into the concentric ring, as the delay
values between the current service node to the returned nodes are
obtained during the $K$ closest neighbor search and $K$ farthest
neighbor search processes.

\begin{itemize}
    \item \textit{$K$ closest neighbor search}. Each node $P$ periodically finds nearby nodes by issuing $K$
closest neighbor search with itself as target. Here $K$ is a
system parameter. Firstly, node $P$ randomly selects a neighbor
$Q$ from its concentric ring, and sends to $Q$ a $K$ nearby
neighbor search message.  Then node $Q$ starts a $K$ closest
neighbor search process. After the $K$ closest neighbor search
process is completed, found nearby nodes and the corresponding
delays to $P$ are returned to node $P$, and $P$ saves these
returned nearby nodes into its concentric ring.
    \item \textit{$K$ farthest neighbor search}. Similar as the $K$  closest neighbor search process, each node $P$
periodically issues $K$ farthest neighbor search. Later, the $K$
farthest neighbor search results include found distant neighbors
and the corresponding delay values to node $P$. $P$ stores the
returned distant neighbors into its concentric ring by the
corresponding delay values.
\end{itemize}
Due to space limits, the details for $K$ closest neighbor search
and $K$ farthest neighbor search are omitted here, which can be
found in the full technical report \cite{HybridNNReport}.

\subsubsection{Replacing Suboptimal Neighbors Without Probes}
\label{updateNeighbors}

In order to bound the memory overhead of the concentric ring, we
need to manage the size of the concentric rings when some rings
reach their maximum capacity $\Delta$. To reduce CPU costs due to
frequent ring managements, we lower the frequency of ring
managements: we first set up another tolerance threshold $\Delta
_t$ for each ring; then we begin the ring management when some
rings having at least $\Delta+\Delta _t$ neighbors; during the
ring management, we remove $\Delta _t$ neighbors from those rings
that have at least $\Delta+\Delta _t$ neighbors.

When we need to remove $\Delta _t$ neighbors from some rings, we
follow the removing philosophy of Meridian: preserve those that
maximize the diversity of neighbors in a ring using the maximal
hypervolume polytope algorithm
(\cite{DBLP:conf/sigcomm/WongSS05}). This is because the higher
diversity in the neighbor set translates to better chances of
locating a nearby nodes for any target. However, the maximal
hypervolume polytope algorithm requires all-pair delay
measurements of nodes in a ring, which needs $O\left( {{\Delta
^2}} \right)$ probes. In order to avoid such measurements, we turn
to adopt network coordinates for delay predictions.

\co{

To maximize the neighbor diversity per ring, as nodes in one
identical ring are of similar distances to the center of the ring,
we need to select nodes from different locations. As a result, we
utilize the greedy maximal hypervolume polytope algorithm
(\cite{DBLP:conf/sigcomm/WongSS05}) to remove neighbors that are
near to other neighbors.}

For delay predictions, we use the revised Vivaldi algorithm
\cite{DBLP:conf/sigcomm/DabekCKM04} that is robust to TIVs
\cite{DBLP:conf/imc/WangZN07}. We denote the revised Vivaldi
\cite{DBLP:conf/imc/WangZN07} as \textit{TIV-Vivaldi($x_i$, $e_i$,
$d_{ij}$, $x_j$, $e_j$)}, where the input $x_i$, $x_j$ denote the
coordinate of node $i$ and $j$, respectively; the input $e_i$,
$e_j$ denote the averaged error of node $i$'s and $j$'s
coordinates, respectively. The output of \textit{TIV-Vivaldi} are
the updated coordinate $x_i$ and coordinate error $e_i$ of node
$i$.

Each service node passively maintains a coordinate, and estimates
delays using coordinate distances. Besides, for estimating delays
with neighbors in the concentric ring, each service node also
stores its neighbors' coordinates.

Since delay varies, each node updates its own and cached
coordinates periodically. Rather than introduce additional delay
probes, we update coordinates by reusing the delay measurements to
other service nodes during the biased sampling procedure.
Therefore, we significantly reduce the maintenance costs compared
to Meridian. First, each node receiving the gossip message
piggybacks its coordinate to the sender along with the
acknowledged gossip message. After receiving the coordinate from
the gossip receiver node, the gossip sender node stores the new
coordinate of the gossip receiver node, and updates its own
coordinate by triggering \textit{TIV-Vivaldi} using the delays
obtained during the gossiping process.

\co{

 Specifically, each node maintains a coordinate vector
representing its position in an Euclidean space based on the
revised Vivaldi algorithm \cite{DBLP:conf/sigcomm/DabekCKM04} that
is robust to TIVs \cite{DBLP:conf/imc/WangZN07} in delays. We
denote the revised Vivaldi \cite{DBLP:conf/imc/WangZN07} as
\textit{TIV-Vivaldi($x_i$, $e_i$, $d_{ij}$, $x_j$, $e_j$)}, where
the input $x_i$, $x_j$ denote the coordinate of node $i$ and $j$,
respectively; the input $e_i$, $e_j$ denote the averaged error of
node $i$'s and $j$'s coordinates, respectively. On the other hand,
the output of \textit{TIV-Vivaldi} are the updated coordinate
$x_i$ and coordinate error $e_i$ of node $i$.

Besides, each node also stores the coordinates of the neighbors in
its concentric ring. Therefore, each node updates a ring based on
the maximal hypervolume polytope algorithm with the coordinate
distances of all-pair nodes on that ring as the input.

\textit{Measurement-Reuse based Network Coordinate Update}. Each
service node updates its coordinate in a passive manner, in order
to avoid additional delay measurements. As we simultaneously
measure the delay when receiving the corresponding gossip response
in Sec~\ref{newNodesFinding}, we can reuse such delay values to
maintain the network coordinates passively without doing
additional delay measurements.

Specifically, during the uniform gossip process in
Sec~\ref{newNodesFinding}, each node receiving the gossip message
piggybacks its coordinate to the sender along with the
acknowledged gossip message. After receiving the coordinate from
the gossip receiver node, the gossip sender node updates its
coordinate by reusing the delays of the gossiping process. }

\co{
\subsection{DNNS Message Format}

HybridNN works in a recursive mode to reduce the waiting time of
DNNS processing. When an end host $A$ issues a DNNS query message,
it will not receive the DNNS result until the completion of the
DNNS. During the DNNS query period, each machine recursively sends
a DNNS \textit{query} message to the next-hop node, and the
service node which terminates the DNNS request acknowledges a
\textit{response} message to $A$.

The DNNS communication process involves two message formats, i.e.,
the \textit{query} and the \textit{response} messages.

(1) The \textit{query} message represents a DNNS request that has
not completed. Each DNNS query message consists of several fields,
shown in Fig \ref{fig:messageFormat} (a). The field \textit{T}
corresponds to the contact address (IP address or the domain name)
of the target machine for the DNNS query. The field \textit{A}
represents the first node that issues the DNNS query for the
\textit{T}. Next, the field $x$ denotes the coordinate vector of
the target \textit{T}, maintained at each DNNS node during the
DNNS process based on the Algorithm~\ref{alg:InitTargetCoord} in
Sec~\ref{coordMaintain}. The field \textit{init} is a boolean
variable to indicate  whether the coordinate field $x$ of the
\textit{T} has been initialized. The initialization of coordinate
$x$ aims to speed up its convergence rate. To avoid routing loops
of the DNNS queries, we add a \textit{$Path$} field to maintain
all predecessors on the DNNS query chain. And any node that lies
in the \textit{$Path$} is avoided as the next-hop DNNS candidate.

(2) The \textit{response} message is relatively simple in its
content, shown in Fig \ref{fig:messageFormat} (b). It contains the
target $T$, the end host $A$, the contact address of the found
nearest neighbor $U$ and the delay $d$ to the target $T$.

\begin{figure}[tp]
 \leavevmode \centering \setlength{\epsfxsize}{.23\textwidth}
 \epsffile{illustrate/messageF1.eps}
  \caption{The DNNS query and response message format.}
  \label{fig:messageFormat}
\end{figure}
}

\subsection{Select Candidate Neighbors} \label{choose}

Assume that node $P$ receives a DNNS query to the target $T$.
Based on the sampling conditions of the simple DNNS method, node
$P$ needs to select $3{\left( {\frac{{{\rho ^2}}}{\beta }}
\right)^\alpha }$ neighbors whose delays to node $P$ are in the
delay range $[0, \rho d_{PT}]$. Since each ring contains $O(\gamma
_g^2)$ neighbors, we simply select all neighbors of rings numbered
in the range $\left[ {1,\left\lceil {{{\log }_2}\left( {\rho
{d_{PT}}} \right)} \right\rceil } \right]$ as candidate neighbors.

Furthermore, we also prune several neighbors that mislead the DNNS
process. First, candidate neighbors that contain too few non-empty
rings are more likely to provide no hints on continuing the DNNS
queries, thus the DNNS queries can be trapped into local minima,
due to the neighbors' sparse diversity of the delay space.
Therefore, we remove all neighbors with fewer than $\tau$
non-empty rings ($\tau = 4$ by default). Second, all neighbors
that have received the identical DNNS query should be removed in
order to avoid the search loops. Therefore, let the\emph{
forwarding path } of a DNNS query be the sequence of nodes
forwarding the query. we remove any node on the forwarding path.

\co{ However, we found that the number of candidate neighbors
satisfying the delay range becomes quite high, as $[0, \rho
d_{PT}]$ spans a large portion of the delay distribution.
Therefore, to reduce the communication overhead, we propose to
reduce the size of candidate neighbors, while keeping the best
next-hop neighbors staying in the selected candidate neighbors.

The basic idea is to prune candidate neighbors in $[0, \rho
d_{PT}]$ that do not help DNNS queries. First, let the\emph{
forwarding path } of a DNNS query be the sequence of nodes
forwarding the query. Therefore, any node on the forwarding path
must not be considered, as revisiting such nodes introduce search
loops. Second, candidate neighbors that contain too few non-empty
rings are more likely to provide no hints on continuing the DNNS
queries, thus the DNNS queries can be trapped into local minima,
due to the neighbors' sparse diversity in the delay space.
Therefore, neighbors with few non-empty rings (default as 4)
should be avoided. The pseudo-code is shown in algorithm
\ref{alg:chooseCandidates}.

 \begin{algorithm}[t]
 \caption{Pruning the size of candidate neighbors.}
\label{alg:chooseCandidates}
 \begin{algorithmic}
{\small \STATE $\mathit{chooseCandidates}(P, T)$ \STATE
\COMMENT{Input: current node $P$, target $T$, DNNS query message
$M$} \STATE \COMMENT{Output: the set of candidate neighbors $S$}

\STATE $S \gets \emptyset$;

 \FOR{each neighbor $i$ satisfying
$d_{iP} \leq \rho d_{PT}$}

 \STATE \COMMENT{$\mathit{NumOfNonEmptyRing}(i)$, the number of non-empty
rings of node $i$}

\IF{$\mathit{NumOfNonEmptyRing}(i)>\tau$}

 \STATE $S \leftarrow S \cup \left\{ i \right\}$;

\ENDIF

\ENDFOR

\STATE $S_f \gets $ $M$.Path; \algorithmiccomment{avoid nodes
already visited on the forwarding path}

\STATE $S \gets S-S_f$;

\STATE Return $S$;

}
\end{algorithmic}
\end{algorithm}
}

\subsection{Coordinate Maintenance for Targets}
\label{coordMaintain}

In order to reduce the delay measurement costs, we predict delays
from service nodes to the target, since each service node has
computed its network coordinate during the neighborhood management
process (Sec~\ref{updateNeighbors}). As a result, reusing the
coordinates for predicting delays can reduce the measurement
costs.

Unfortunately, we may not know the coordinate of the target, as
the target can be any machine on the Internet. Therefore, we
propose to compute the coordinate for the target on-the-fly based
on the \textit{TIV-Vivaldi}.

First, when node $P$ receives the DNNS query for a target $T$,
node $P$ will initialize the network coordinate $x_T$ for target
$T$ if $T$'s coordinate is not stored in the DNNS query message.
To that end, node $P$ asks a fixed number of neighbors (at most
10) to directly probe the target $T$. Then, node $P$ updates
target $T$'s coordinate by \textit{TIV-Vivaldi} using the
coordinates and delay measurements from these neighbors to target
$T$, which updates $T$'s coordinate $x_T$ and coordinate error
$e_T$ as the output of \textit{TIV-Vivaldi}. Finally, node $P$
stores target $T$'s coordinate into the DNNS query and forwards to
the next-hop node for recursive search. This completes the
coordinate initialization for the target $T$.

Second, after initializing $T$'s coordinate, each node $Q$ that
forwards the DNNS query will update target $T$'s coordinate for
better convergence of target $T$'s coordinate. To that end, each
node $Q$ applies \textit{TIV-Vivaldi} to update target $T$'s
coordinate $x_T$ and coordinate error $x_T$, using node $Q$'s
coordinate and delay $d_{QT}$ the target $T$.

\co{ using Algorithm~\ref{alg:InitTargetCoord}

\begin{algorithm}[t]
\caption{Maintaining target's network coordinate.}
\label{alg:InitTargetCoord}
\begin{algorithmic}
{\small \STATE \textit{InitTargetCoord($P$, $T$, $M$)} \STATE
\COMMENT{Input: current node $P$, target $T$, DNNS query $M$, the
constant $L$ (15 by default) as the number of sampled neighbors to
initialize the target's coordinate.} \STATE \COMMENT{Output: the
initialized coordinate of target $T$: $x_T$}

\IF {$M$.init == False}

 \STATE ${\Omega _P} \gets$ $L$ sampled neighbors from $P$'s concentric
 ring;

 \FOR{$i \in {\Omega _P}$}

 \STATE $d_{iT}\gets directProbe(i,T)$; \algorithmiccomment{ICMP probes}

 \STATE $\left[ {{x_T},{e_T}} \right] \gets$ TIV-Vivaldi$(x_T, e_T,d_{iT}, x_i,e_i)$; \algorithmiccomment{update target's coordinate}

 \ENDFOR

 \STATE $M$.init $\gets$ True;

 \ELSE

 \STATE $d_{PT}\gets directProbe(P,T)$;

 \STATE $\left[ {{x_T},{e_T}} \right] \gets$ TIV-Vivaldi$(x_T,e_T,d_{PT},x_P,e_P)$;

\ENDIF

 \STATE Return $\left[ {{x_T},{e_T}} \right]$;

}
\end{algorithmic}
\end{algorithm}
}

\subsection{Determine Closest Neighbor} \label{find}

After we assign a network coordinate to the target in Sec
\ref{coordMaintain}, we can use the network coordinate distances
to approximate the real-world delay and reduce the measurement
costs. Nevertheless, since the coordinate distances are only
approximations, closest neighbors selected according to the
network coordinates may be inconsistent with the real ones.

Therefore, we locate closest neighbors to the target $T$ from the
candidate neighbors found in Sec \ref{choose}, by combining the
delay predictions with a small number of direct probes.

First, based on the coordinate distances from candidate neighbors
to target $T$, we find top-$m$ nearest neighbors ${S_c}$ to the
target $T$ from the candidate neighbors.

Second, since coordinate distances may be erroneous, we also
choose those candidate neighbors $S_e$ whose coordinates are not
reliable. Since each TIV-Vivaldi coordinate $x_i$ is accompanied
by a coordinate error metric $e_i$ \cite{DBLP:conf/imc/WangZN07},
we choose unreliable neighbors whose coordinate errors exceed a
threshold. We found that setting the threshold to be 0.7 can
significantly reduce the negative impact due to the coordinate
inaccuracies.

Third, to adapt to coordinate errors caused by TIV, since high
coordinate distance errors indicate violations of triangle
inequality \cite{DBLP:conf/imc/WangZN07}, we simply include all
candidate neighbors $S_t$ whose coordinate distance and real delay
towards the current node $P$ differs by more than 50 ms, which has
good tradeoff between accuracy and bandwidth costs.

Finally, using the union of selected candidate neighbors $S_* =
{S_c} \cup {S_e} \cup {S_t} $, the current node $P$ asks neighbors
in $S_*$ to probe the delays to target $T$, from which node $P$
determines the closest neighbor.  Ties are broken by choosing the
neighbor with most accurate coordinate.

\co{

 Our experiments show that combining delay estimation and
direct probes, we can get closest neighbors with success in more
than 95\%.

 By combining the top-$m$ nearest candidate neighbors to
targets, with neighbors of high uncertainty, we propose the
nearest candidate neighbor detection algorithm
\ref{alg:NearestDetector}. Algorithm \textit{NearestDetector}()
takes the candidate neighbors found in
Algorithm~\ref{alg:chooseCandidates} as input, and outputs one
nearest candidate neighbor based on the direct probe results from
all candidate neighbors to the target $T$. Ties are broken by
choosing the neighbor with most accurate coordinate.

\begin{algorithm}[t]
\caption{Detecting the nearest candidate neighbor with delay
estimations and direct probes.} \label{alg:NearestDetector}
\begin{algorithmic}
{\small \STATE \textit{NearestDetector($P$, $S$, $x_T$, $M$)}
\STATE \COMMENT{Input: current hop $P$, candidate neighbors $S$,
the coordinate  $x_T$ of target $T$, DNNS query message $M$,
 the coordinates $\left\{ {{x_i}\left| {i \in S} \right.} \right\}$ and coordinate errors $\left\{ {{e_i}\left| {i \in S} \right.} \right\}$ of neighbors stored by $P$.} \STATE
\COMMENT{Output: nearest candidate neighbor $u_1$, all chosen
neighbors $S_c$, delay measurements of chosen neighbors to target
$T$ $D_T$ }
\STATE $S_c \gets \emptyset$; \algorithmiccomment{store the
candidate neighbors}

\STATE ${S_*} \leftarrow S$;

\REPEAT

 \STATE ${i_m} \leftarrow \mathop {\arg \min }\limits_{i \in {S_*}} \left\| {{x_i} - {x_T}}
 \right\|$;

 \STATE ${S_c} \leftarrow {S_c} \cup \left\{ {{i_m}} \right\}$;

 \STATE ${S_*} \leftarrow {S_*} - \left\{ {{i_m}} \right\}$;

 \UNTIL{$\left| {{S_c}} \right| =  = m$} \algorithmiccomment{top-$m$ nearest neighbors to target $T$}

\STATE ${S_c} \leftarrow {S_c} \cup \left\{ {i\left| {{e_i} >
0.7,i \in S} \right.} \right\}$; \algorithmiccomment{nodes with
uncertain coordinates}

\STATE ${S_c} \leftarrow {S_c} \cup \left\{ {i\left| {\left|
{\left\| {{x_i} - {x_P}} \right\| - {d_{iP}}} \right| > 50ms,i \in
S} \right.} \right\}$; \algorithmiccomment{nodes with erroneous
estimations}

 \FOR {$i \in S_c$}

 \STATE delay $d_{iT} \gets$ directProbe$(i,T)$; \algorithmiccomment{ICMP probes}

\STATE $D_S \gets  D_S \cup \{d_{iT}\}$;

\ENDFOR

\STATE $D_T \gets {{D_S} \cup \left\{ {{d_{PT}}} \right\}}$;

\STATE ${u_1} \leftarrow \mathop {\arg \min }\limits_{i \in {S_c}}
\left\{ {{d_{iT}}\left| {{d_{iT}} \in {D_T}} \right.} \right\}$;

\STATE Return $\left[ {{u_1},{S_c},{D_T}} \right]$;

}
\end{algorithmic}
\end{algorithm}
}


\subsection{Termination Test} \label{terminate}

Recall from Sec \ref{limitations}, HybridNN set the delay
reduction threshold $\beta$ to be 1, in order to reduce the number
of selected neighbors and obtain better approximation ratios for
the found nearest neighbors. Therefore, when the closest neighbor
selected from Sec \ref{find} has a larger delay to the target than
that of the current node $P$, node $P$ terminates the DNNS query.
Then node $P$ sends the currently closest node to the host that
issues the DNNS query.

\co{

bypass the "flat plateau" in the search space, which means that
the selected closest neighbor is not closer to the target than the
current node

Determining when to stop the DNNS process must be robust to the
clustering of delay in the network delay space. Since  delays
between nodes in the same cluster towards the target become small,
thus there may exist "near-plateau" regions where the degree of
delay reduction by closest neighbors to the target is quite low at
some intermediate search steps. Therefore, it is better to accept
small delay reductions when the current-hop node is in the same
cluster with the target.

HybridNN uses large delay threshold  ${\beta
_{\mathit{cutoff}}}<1$ (0.9 by default) to determine whether to
continue the DNNS queries. By calculating the delay reduction
ratios with the nearest candidate neighbor to the target $T$,
HybridNN determines whether the DNNS queries should be continued.
Besides, if the current node fails to find a neighbor that has
smaller delay to the target, but there exist several candidate
neighbors whose delays to targets are equal or closer than current
node, then current node delegates DNNS queries to one of such
candidate neighbors, in order to bypass "flat plateau" in the
search space. On the other hand, if no such next-hop candidate
neighbors exist, the search process is stopped by returning the
currently closest node to the DNNS query node $M$.$A$.
}

 \co{
Algorithm~\ref{alg:terminateTest} illustrates the above ideas.
When forwarding DNNS queries to next-hop neighbors, HybridNN also
updates the forwarding path by appending current hop $P$ to the
tail of the forwarding path in the DNNS query message.

\begin{algorithm}[t]
\caption{Determining whether to continue the DNNS query.}
\label{alg:terminateTest}
\begin{algorithmic}
{\small \STATE \textit{TerminateTest($P$,  $u_1$, $S_c$, $D_T$,
$M$)} \STATE \COMMENT{Input: current hop $P$, nearest node $u_1$,
all chosen neighbors $S_c$, delay measurements $D_T$ of chosen
neighbors to target $T$, DNNS query message $M$} \STATE
\COMMENT{Output: nearest node with according delay value to $T$,
current service node $P$}

\IF {$d_{u_{1}T} \le {\beta _{\mathit{cutoff}}} \times d_{PT}$ AND
$u_{1} \ne P$}

    \STATE $M$.Path $\gets$ $M$.Path $\cup$ $P$;

    \STATE $\mathit{HybridNN(u_1, T, M)}$; \algorithmiccomment{$u_1$ is the next-hop node handling the DNNS query message $M$}

   \ELSIF{$\exists S_{2} \subseteq {S_c}$, satisfying $d_{iT} \leq d_{PT}$, for any $i \in S_{2}$, ${d_{iT}} \in {D_T}$}

    \STATE ${u_2} \leftarrow \mathop {\arg \min }\limits_{i \in {S_2}} \left\{ {{d_{iT}}\left| {{d_{iT}} \in {D_T}} \right.}
    \right\}$;

    \STATE $M$.Path $\gets$ ($M$.Path, $P$);

    \STATE $\mathit{HybridNN(u_2, T, M)}$;

    \ELSE

     \STATE Return $[u_1, d_{{u_1}T}, P]$; \algorithmiccomment{$u_1$ as the nearest neighbor to
     $T$};

   \ENDIF
   }
\end{algorithmic}
\end{algorithm}
}

\co{

\subsection{Putting It All Together}

We assume that a set of distributed service nodes running HybridNN
collaboratively provide a nearest service node location service.
When a service node $P$ running our DNNS algorithm receives the
DNNS query message $M$ from one end host $A$ for a target $T$,
then the service node $P$ follows the procedure of HybridNN shown
in Algorithm~\ref{alg:HybridNN} to determine whether the DNNS
query should be forwarded or stopped.


%

\begin{algorithm}[t]
\begin{algorithmic}
{\small \STATE \textit{HybridNN($P$,$T$,$M$)} \STATE
\COMMENT{Input: current node $P$, the target $T$, DNNS query
message $M$} \STATE \COMMENT{Output: the nearest node $u_1$ to
$T$, the delay $d_{{u_1}T}$ from $u_1$ to $T$, the last-hop node
$P_1$ on the DNNS forwarding path.}

\STATE $S \gets$ chooseCandidates($P$, $T$, $M$);

\STATE $x_T \gets$ InitTargetCoord($P$, $T$);

\STATE $[u_1, S_c, D_T] \gets $ NearestDetector($P$, $S$, $x_T$,
$M$);

\STATE $[u_1, d_{{u_1}T}, P_1] \gets$ TerminateTest($P$,
$u_1$,$S_c$, $D_T$, $M$);

}
\end{algorithmic}
\caption{The pseudo-code of HybridNN.} \label{alg:HybridNN}
\end{algorithm}

}

\section{Extensions to HybridNN}

HybridNN can be readily extended to search more than just one
nearest node. Here we will just give two examples namely, $K$
closest neighbor search and $K$ farthest neighbor search, which
are both utilized to oversample neighbors in the network delay
space in order to increase the diversity for neighborhood
management.

\subsection{$K$ Distributed Nearest Neighbor Search} \label{KNN}

The $K$ Distributed Nearest Neighbor Search (KDN$^2$S) aims to
locate the $K$ nearest neighbors to a target $T$, where $K$ is a
system parameter. To store the found nearest neighbors, we append
a new field $M$.$\Omega$ that caches nearest neighbors to the DNNS
query message $M$.

A naive KDN$^2$S solution is based on the \textit{finding and
removing} approach:  first we find one closest neighbor towards
the target based on the HybridNN algorithm, then we delete the
found nearest neighbor from the system, and we restart the
HybridNN algorithm from the same query node until we locate $K$
nearest servers to the target. Nevertheless, deleting the closest
neighbors from the system is not practical for a large-scale
system due to the broadcasting communication overhead, and
repeated DNNS processes increase the query overhead for the
service nodes on the DNNS forwarding paths.

On the other hand, if we assume that the concentric ring of each
node does not append new neighbors, the network coordinate of each
node keeps unchanged and the network delays keep stable during the
period of a KDN$^2$S query, we find that there exists
\textit{temporal correlation} in the forwarding paths of
consecutive DNNS queries starting from the identical node in the
naive KDN$^2$S solution: \textit{if we issue a new DNNS query from
the same starting node immediately after the preceding DNNS query,
then the forwarding path truncated the last-hop node of the new
DNNS process is a subpath of the forwarding path of the preceding
DNNS query, since we can see that the intermediate nodes on these
two forwarding paths are identical in HybridNN.} Our assumption
generally holds after the network coordinates converge and the
concentric rings contain enough neighbors. Furthermore, the
constancy of end to end network delays has been confirmed to be on
the orders of hours by Zhang and Duffield
\cite{DBLP:conf/imw/ZhangD01} as well as the iPlane project
\cite{DBLP:conf/nsdi/MadhyasthaKAKV09,DBLP:conf/osdi/MadhyasthaIPDAKV06}.

Using the temporal correlation of consecutive forwarding paths
from the same starting node, we propose a backtracking based
KDN$^2$S algorithm, as shown in Algorithm~\ref{alg:KNNAlgorithm}.
After we find one nearest neighbor and terminate at a service node
$P_1$ by HybridNN, we resume the KDN$^2$S query from $P_1$, by
backtracking from $P_1$ to its predecessor node $P_2$ on the DNNS
forwarding path, and by recursively finding the nearest neighbor
at $P_2$, until we locate $K$ nearest neighbors. With
backtracking, the KDN$^2$S resumes the query at service nodes that
are close to the target, therefore we can quickly locate new
nearest neighbors with reduced forwarding overhead compared to the
naive KDN$^2$S solution.

Fig \ref{fig:KDNS} gives an example of KDN$^2$S using
Algorithm~\ref{alg:KNNAlgorithm}. Suppose an end host $A$ needs
two nearest neighbors to the target $T$. Node $A$ sends a KDN$^2$S
request to a service node $B$. Then $B$ starts the KDN$^2$S by
forwarding a  KDN$^2$S query $M$ to a neighbor $P_2$ closer to
$T$. Similarly, $P_2$ forwards the query $M$ to $P_1$. Now node
$P_1$ finds that it cannot find a neighbor closer to the target
$T$ than itself, therefore, $P_1$ is the first nearest neighbor to
the target. Then $P_1$ appends its address into $M$.$\Omega$ as a
found nearest neighbor.  Next $P_1$ triggers the KDN$^2$S
backtracking step by forwarding $M$ to $P_1$'s predecessor $P_2$
on the KDN$^2$S forwarding path. On receiving $M$, $P_2$ excludes
$P_1$ from the choice of candidate neighbors, and finds a new
neighbor $P_3$ closer to the target $T$ than $P_2$. Then $P_2$
forwards $M$ to $P_3$. $P_3$ decides that it is the closest node
to $T$ among its neighbors. Therefore, $P_3$ appends itself to
$M$.$\Omega$ as a new nearest neighbor. Finally, $P_3$ sends the
found nearest neighbors in $M$.$\Omega$, i.e., $P_1$ and $P_3$, to
the end host $A$, which completes the KDN$^2$S.

\begin{figure}[tp]
   \leavevmode \centering \setlength{\epsfxsize}{0.8\hsize}
 \epsffile{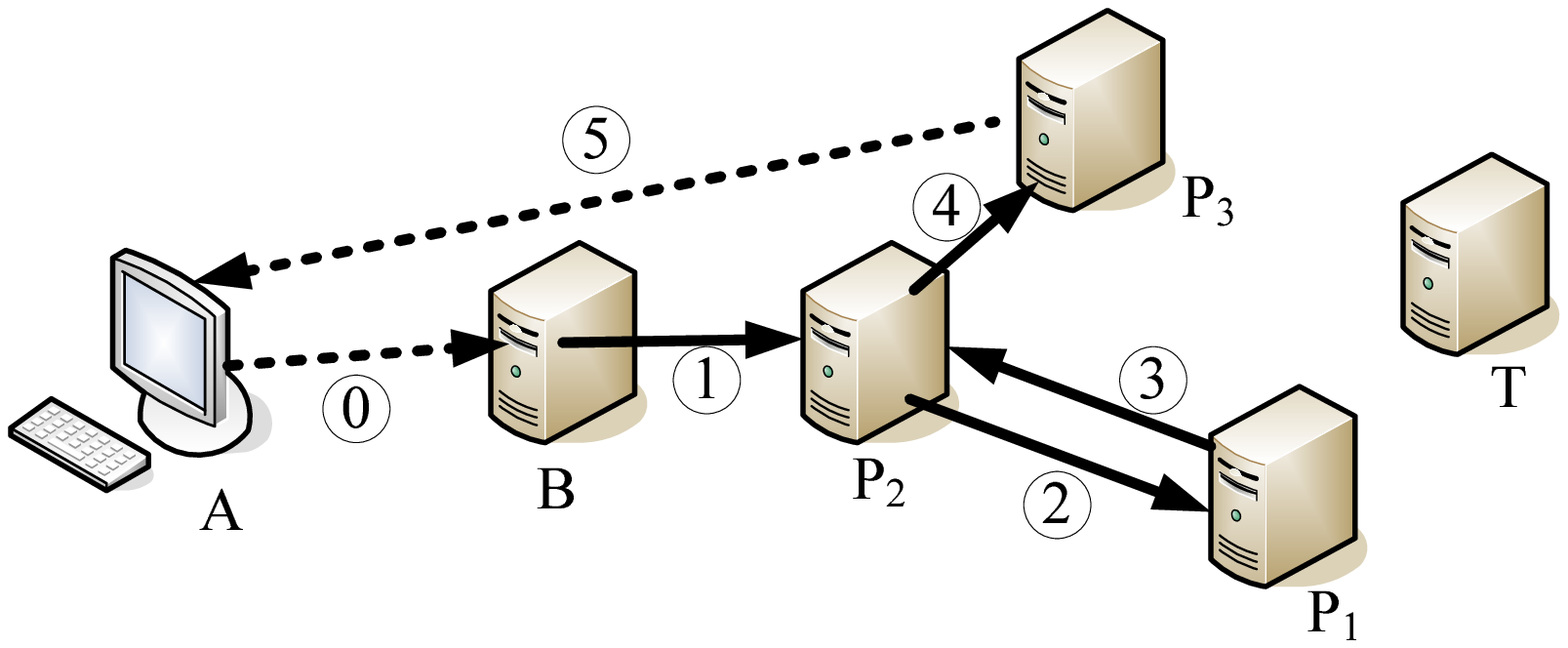}
  \caption{KDN$^2$S.}
  \label{fig:KDNS}
\end{figure}

\begin{algorithm}[t]
\begin{algorithmic}[1]
{\small \STATE \textit{KDN$^2$S($H$, $T$, $K$, $M$)} \STATE
\COMMENT{Input: current node $H$, the target $T$, required number
of closest neighbors $K$, query message $M$} \STATE
\COMMENT{Output: nearest neighbors to $T$}

\IF{$\left| {M.\Omega } \right| = = K$}

\STATE Return ${M.\Omega }$; \algorithmiccomment{enough closest
neighbors}

\ENDIF

\STATE $S \gets$ chooseCandidates($P$, $T$, $M$);

\STATE $S \gets S - M.\Omega$ \algorithmiccomment{remove found
nearest neighbors to avoid search loops}

\STATE $x_T \gets$ InitTargetCoord($P$, $T$);

\STATE $[u_1, S_c, D_T] \gets $ NearestDetector($P$, $S$, $x_T$,
$M$);

 \STATE $[\phi_1, d_{{\phi_1}T}, P_1] \gets$ TerminateTest($P$, $u_1$,$S_c$, $D_T$,
 $M$); \algorithmiccomment{find one closest neighbor, and terminate at node $P_1$}

\STATE $M.\Omega  \leftarrow M.\Omega  \cup \left\{ {{\phi _1}}
\right\}$; \algorithmiccomment{cache ${{\phi _1}}$ into the query
message}

\STATE  Select the predecessor node $P_2$ of node $P_1$ on the
forwarding path $M$.Path; \algorithmiccomment{find the predecessor
for backtracking}

\STATE \textit{KDN$^2$S($P_2$, $T$, $K$, $M$)};
\algorithmiccomment{recursive search}

 }
\end{algorithmic}
\caption{The pseudo-code of KDN$^2$S.} \label{alg:KNNAlgorithm}
\end{algorithm}

\co{\WHILE{node $P_1$ can find a neighbor $P_c$ satisfying
${d_{{P_c}T}} \leq {d_{{P_1}T}}$, ${P_c} \in S$}

\STATE $M.\Omega \gets M.\Omega \cup {P_c}$; \algorithmiccomment{a
new closest neighbor is found}

\ENDWHILE}

%
%

\subsection{$K$ Distributed Farthest Neighbor Search} \label{KFN}

Similar as the KDN$^2$S, $K$ Distributed Farthest Neighbor Search
(KDFNS) is also based on the backtracking idea. First, we locate
one farthest neighbor and terminate at a service node $P$, then we
backtrack from node $P$ to its predecessor node on the forwarding
path to recursively locate the rest $K-1$ farthest neighbors.

To locate one farthest neighbor, we recursively forward the
KDN$^2$S query to a service node $P_1$ that is at least $\left( {1
+ {\beta _{\mathit{farthest}}}} \right)$ (${\beta
_{\mathit{farthest}}}$ is 1.2 by default) times farther to the
target $T$ than the current service node $P$. In other words, we
need to locate a node that is not covered by the ball ${B_T}\left(
{\left( {1 + {\beta _{farthest}}} \right){d_{PT}}} \right)$. Since
${B_T}\left( {\left( {1 + {\beta _{farthest}}} \right){d_{PT}}}
\right) \subseteq {B_P}\left( {\rho \left( {1 + {\beta
_{farthest}}} \right){d_{PT}}} \right)$ by the sandwich lemma in
Lemma~\ref{sandwich}, $P_1$ needs to be at least $\rho ( {1 +
{\beta _{farthest}}} ){d_{PT}}$ from node $P$.

Accordingly, in each search step, we try to find such node $P_1$
from the concentric ring of the current service node $P$, whose
delay value to $P$ is larger or equal the $\rho \left( {1 + {\beta
_{farthest}}} \right){d_{PT}}$. If there exists a such node $P_1$,
then node $P_1$ recursively runs the KDFNS as node $P$. Otherwise,
if we can not locate such node $P_1$, the search is terminated,
and the currently farthest node to the target is cached as a
farthest neighbor to the target. Afterwards, we select the rest
$K-1$ distant neighbors by the backtracking process similar as
that in $K$ closest neighbor search.

Algorithm~\ref{alg:KFNAlgorithm} shows the complete KDFNS process.
First, we choose candidate neighbors satisfying the delay
constraint to the current service node $P$. Then we find the
farthest neighbor to the target (\textit{FarthestDetector}())
combining the delay predictions with direct probes in order to
reduce the measurement overhead. Specifically, we choose $m$
farthest neighbors from the candidate neighbors; besides, we also
add neighbors with uncertain coordinates and erroneous predictions
similar as Sec~\ref{find}. Next, we determine one farthest
neighbor recursively (FarthestTerminateTest). Finally, from the
terminating node $P_1$, we backtrack to the predecessor node of
$P_1$ on the forwarding path, and recursively run the KDFNS until
we locate enough farthest nodes to the target.

\begin{algorithm}[t]
\begin{algorithmic}[1]
{\small \STATE \textit{KDFNS($H$, $T$, $K$, $M$)} \STATE
\COMMENT{Input: current node $H$, the target $T$, required number
of farthest neighbors $K$, query message $M$} \STATE
\COMMENT{Output: farthest neighbors to $T$}

\IF{$\left| {M.\Omega } \right| = = K$}

\STATE Return ${M.\Omega }$; \algorithmiccomment{complete the
KDFNS}

\ENDIF

\STATE $S \gets$ chooseFarthestCandidates($P$, $T$, $M$);
\algorithmiccomment{choose neighbors whose delay values to $P$ is
larger than or equal to $\rho ( {1 + {\beta _{farthest}}}
){d_{PT}}$}

\STATE $S \gets S - M.\Omega$ \algorithmiccomment{remove found
farthest neighbors to avoid search loops}

\STATE $x_{T} \gets$ InitTargetCoord($P$, $T$);

\STATE $[u_1, S_c, D_T] \gets $ FarthestDetector($P$, $S$, $x_T$,
$M$); \algorithmiccomment{select the farthest neighbor to $T$ from
$S$}

 \STATE $[\phi_1, d_{{\phi_1}T}, P_1] \gets$ FarthestTerminateTest($P$, $u_1$,$S_c$, $D_T$,
 $M$); \algorithmiccomment{find one farthest neighbor, and terminate at node $P_1$}

\STATE $M.\Omega  \leftarrow M.\Omega  \cup \left\{ {{\phi _1}}
\right\}$; \algorithmiccomment{cache ${{\phi _1}}$ into the query
message}

\STATE  Select the predecessor node $P_2$ of node $P_1$ on the
forwarding path $M$.Path; \algorithmiccomment{find the predecessor
for backtracking}

\STATE \textit{KDFNS($P_2$, $T$, $K$, $M$)};
\algorithmiccomment{recursive search}

 }
\end{algorithmic}
\caption{The pseudo-code of KDFNS.} \label{alg:KFNAlgorithm}
\end{algorithm}

\co{\WHILE{node $P_1$ can find a neighbor $P_c$ satisfying
${d_{{P_c}T}} \geq {d_{{P_1}T}}$, ${P_c} \in S$}

\STATE $M.\Omega \gets M.\Omega \cup {P_c}$; \algorithmiccomment{a
new farthest neighbor is found}

\ENDWHILE}

\section{Simulation}
\label{simulationExp}

In this section, we report the results of simulation experiments
based on the real-world data sets in Sec~\ref{datset}.

\subsection{Experimental Setup}
\label{simSetup}

We compare HybirdNN with several DNNS algorithms.
(1)\textbf{Vivaldi}. We compute the coordinate of each node based
on the Vivaldi algorithm \cite{DBLP:conf/nsdi/LedlieGS07}, and
find the nearest service nodes for each requesting node using
shortest coordinate distances. The coordinate dimension for
Vivaldi is 5. (2) \textbf{CoordNN}. To quantify the usefulness of
direct probes of HybirdNN, we present a DNNS algorithm CoordNN,
which is identical with HybridNN except that it uses only and no
direct probes when determining the best next-hop neighbors. (3)
\textbf{DirectDN2S}. To evaluate HybridNN, we present a DNNS
algorithm DirectDN2S, which is identical with HybridNN except that
it only utilizes direct probes for finding next-hop best neighbors
without pruning neighbors based on coordinate distances as
HybridNN. (4) \textbf{Meridian} \cite{DBLP:conf/sigcomm/WongSS05}.
Meridian recursively forwards the DNNS queries to a node that is
$\beta$ times closer to the target than the current node, and
returns the found nearest neighbor when no such node is selected.
We configure the parameters of Meridian algorithm identical with
the original configuration by Wong et
al.~\cite{DBLP:conf/sigcomm/WongSS05}, with the delay reduction
threshold $\beta$ as 0.5, the upper bound on the size of each ring
as 10, and the number of rings in the concentric ring is 20.

For HybridNN, the default configuration is summarized in Table
\ref{tab:parameterValues}. CoordNN and DirectDN2S share identical
parameters with HybridNN. We also evaluated the sensitivity of
parameters for HybridNN, which is reasonably robust against the
parameter choices. The detailed sensitivity results of system
parameters for HybridNN can be found in the technique report
published online \cite{HybridNNReport}.

\begin{table}\footnotesize
 \centering
\caption{Parameter values of HybridNN for
simulation.\label{tab:parameterValues}} {
\begin{tabular}{|l| l| l|} \hline
Parameter & Meaning & Value \\
\hline
 $\Delta$ & maximal size of the ring & 8\\\hline
$\Delta+\Delta _t$ & threshold of the ring size for ring updates &
10\\\hline ${\beta}$ & nearest search threshold & 1\\\hline
 $\rho$ & inframetric parameter & 3\\\hline $\left| x
\right|$ & coordinate dimension  & 5\\\hline $K$ & size of sampled
neighbors for neighbor discovery & 10\\\hline $m$ & number of
neighbors for direct probes  & 4\\\hline $\tau$ & number of
non-empty rings & 4\\\hline
\end{tabular}}
\end{table}%


We have developed a discrete-time simulator for  DNNS. The
simulator randomly chooses a set of nodes as service nodes (by
default 500) that can receive DNNS queries. Other nodes in the
system are clients that can issue DNNS queries to these service
nodes. For Host479, 200 nodes are the service nodes. The DNNS
queries are repeated 10,000 times. For each DNNS query, we
uniformly select one client as the target machine, and a random
service node receiving the query. Besides, the simulation is
repeated 5 times by shuffling the set of service nodes to avoid
biases in choosing service nodes. For HybridNN, CoordNN,
DirectDN2S and Meridian, the inter-gossip events for neighborhood
discovery are generated by an exponential distribution with
expected value of 1 second. The inter-ring management events are
generated by an exponential distribution with expected value of 2
seconds. For HybridNN, DirectDN2S and CoordNN, the time interval
between two oversampling events of $K$ closest neighbor search and
$K$ farthest neighbor search are generated by an exponential
distribution with expected value of 60 seconds. The inter-DNNS
event generation follows an exponential distribution with expected
value of 60 seconds. For Vivaldi, the coordinate of each node is
updated for 1000 rounds, by uniformly selecting a service node as
the counterpart during each round.

The performance metrics for each DNNS query include: (1)
\textbf{Absolute Error}: defined as the absolute difference
between the estimated nearest neighbor $j$ and the real nearest
neighbor $i$ to the target $T$, i.e., $ {{d_{jT}} - {d_{iT}}}$.
(2) \textbf{Relative Error}:  defined as the ratio of the absolute
error for the estimated nearest neighbor $j$ to the delay between
the real nearest neighbor $i$ and the target $T$, i.e., $\frac{{
{{d_{jT}} - {d_{iT}}}}}{{{d_{iT}}}}$. The absolute error
quantifies the increased delay values of the estimated nearest
neighbors, while the relative error measures the multiplicative
ratios to the optimal delay values for the estimated neighbors.
Therefore, large relative errors do not necessarily correspond to
high absolute errors. (3) \textbf{Search Hop}: defined as the
number of service nodes on the forwarding path minus one.
Therefore, if node $A$ forwards a DNNS query to node $B$ and node
$B$ returns the nearest neighbor to the query host, the search hop
for the DNNS query is one.

\subsection{Comparison}


\textbf{Absolute Error}. Fig~\ref{fig:AbsoErr} shows the absolute
errors of the different algorithms. DirectDN2S achieves lowest
absolute errors except for the Host479 data sets. HybridNN is
close to DirectDN2S in terms of reducing absolute errors, however,
HybridNN is the most accurate on Host479 data sets. Next, CoordNN
is worse than both DirectDN2S and HybridNN. The accuracy of
DirectDN2S and HybridNN compared to CoordNN indicates that
utilizing direct probes greatly reduces the inaccuracy of the
estimation, while using coordinate distances alone can lead to a
bad local minima.

The inaccuracy of DirectDN2S compared to HybridNN on the Host479
data set is rather counter-intuitive. The inaccuracy of DirectDN2S
may be caused by the asymmetry in the delay data sets that
misleads the greedy search into a local minima, since DirectDN2S
is more accurate than HybridNN on the other three data sets that
are all symmetric for pairwise delays. On the other hand, HybridNN
does not always choose the neighbor closest to the target as the
forwarding node, since HybridNN also incorporates the approximated
delay predictions when choosing neighbors, which can help HybridNN
bypass the bad local minimum caused by the asymmetry in the delay
values.

Furthermore, Meridian shows greater absolute errors compared to
other algorithms including Vivaldi, which implies that the
coordinate distances are at least effective if used it in the
centralized approach. We are aware that the superiority of Vivaldi
over Meridian in most cases are consistent with the experiments
independently performed by Choffnes and Bustamante
\cite{Choffnes:2010:PTE:1764873.1764880}. The main reasons for the
less accuracy of Meridian are the local minima caused by the TIV
and clustering in the delay space. On the other hand, Vivaldi can
adapt to TIV using adaptive coordinate movements.

\begin{figure}[tp]
     \centering
             \subfigure[DNS1143.]
        {
          \setlength{\epsfxsize}{.44\hsize}
          \epsffile{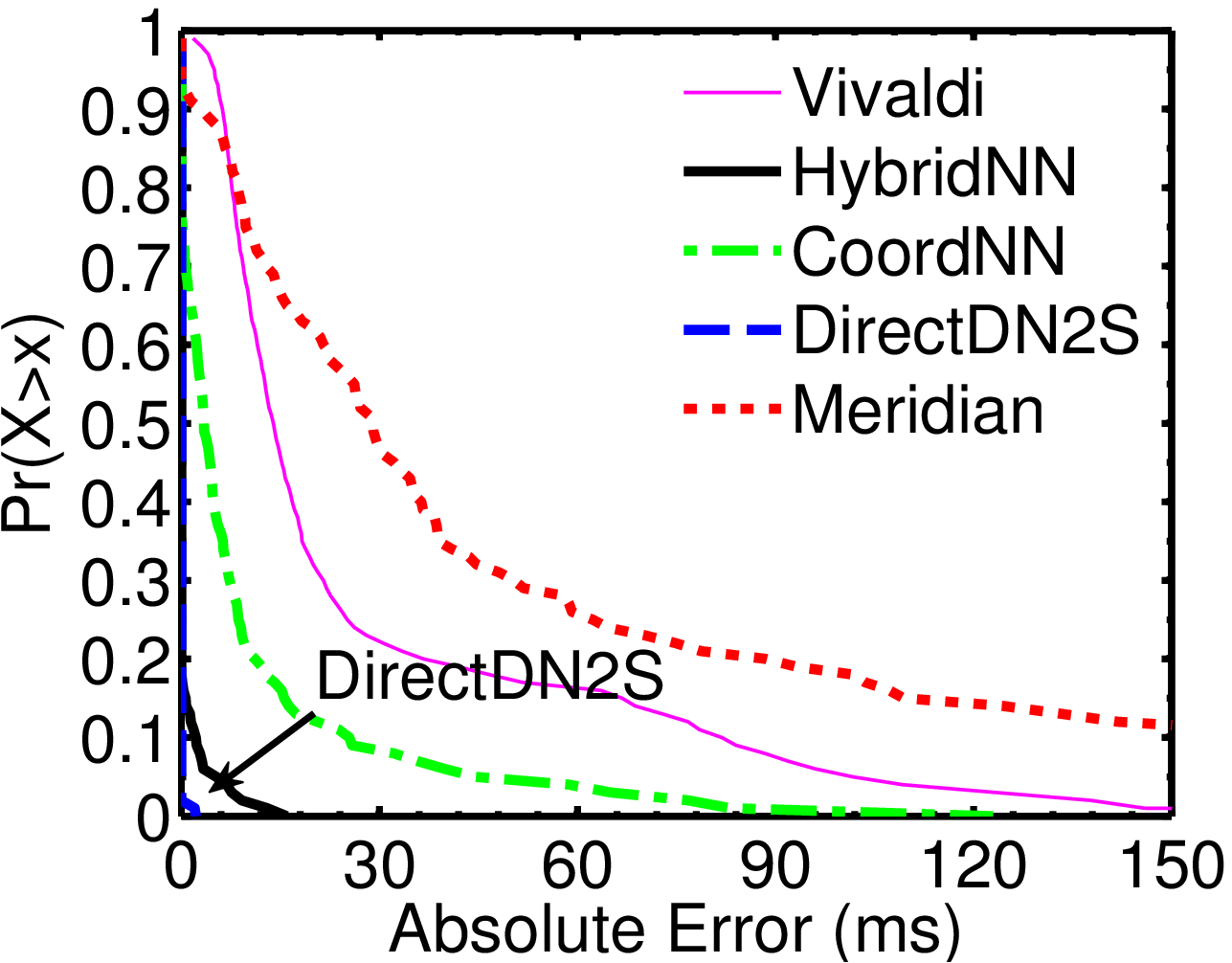}
         }
               \subfigure[DNS2500.]
        {
          \setlength{\epsfxsize}{.44\hsize}
          \epsffile{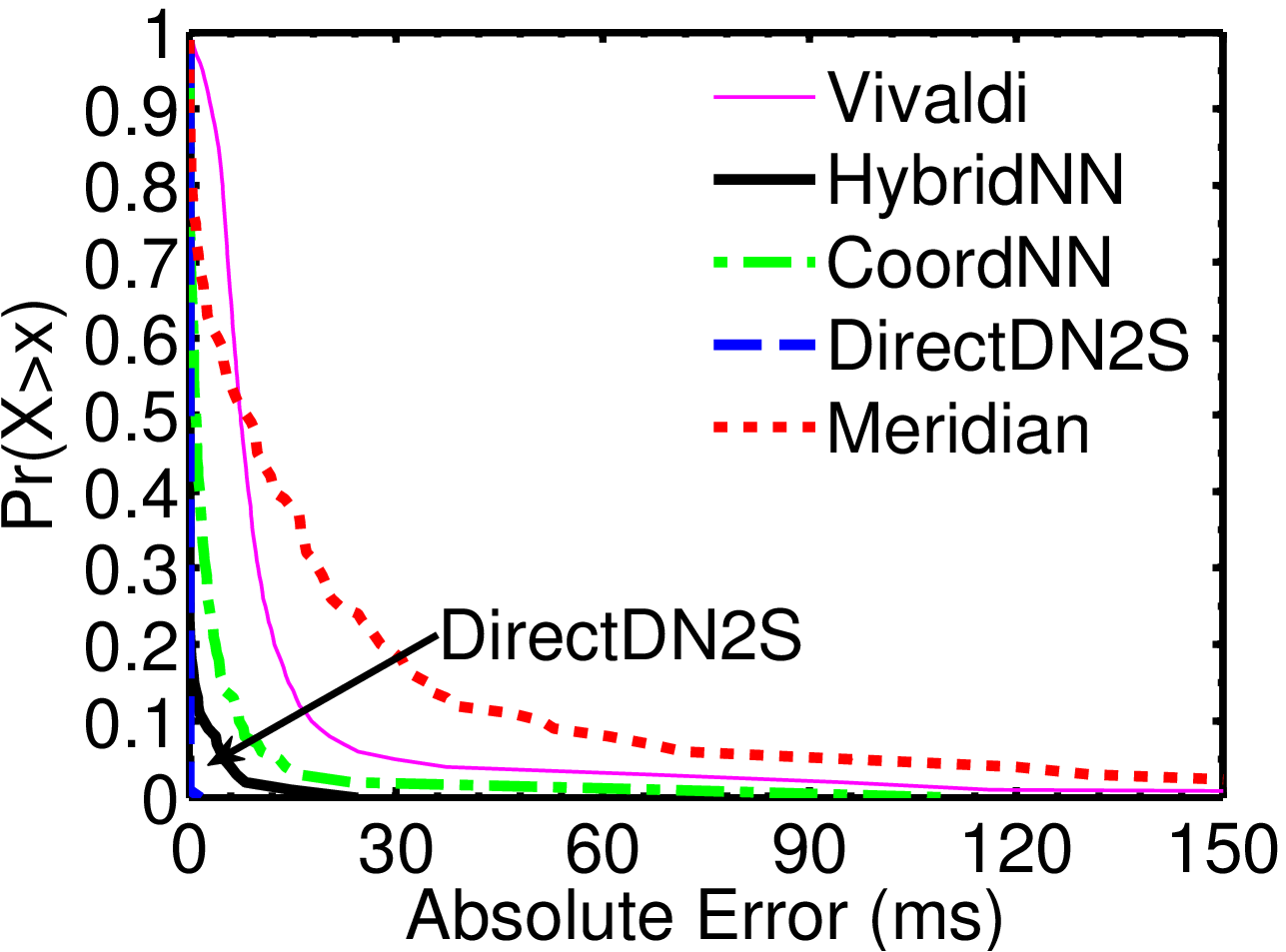}
         }
               \subfigure[DNS3997.]
        {
          \setlength{\epsfxsize}{.44\hsize}
          \epsffile{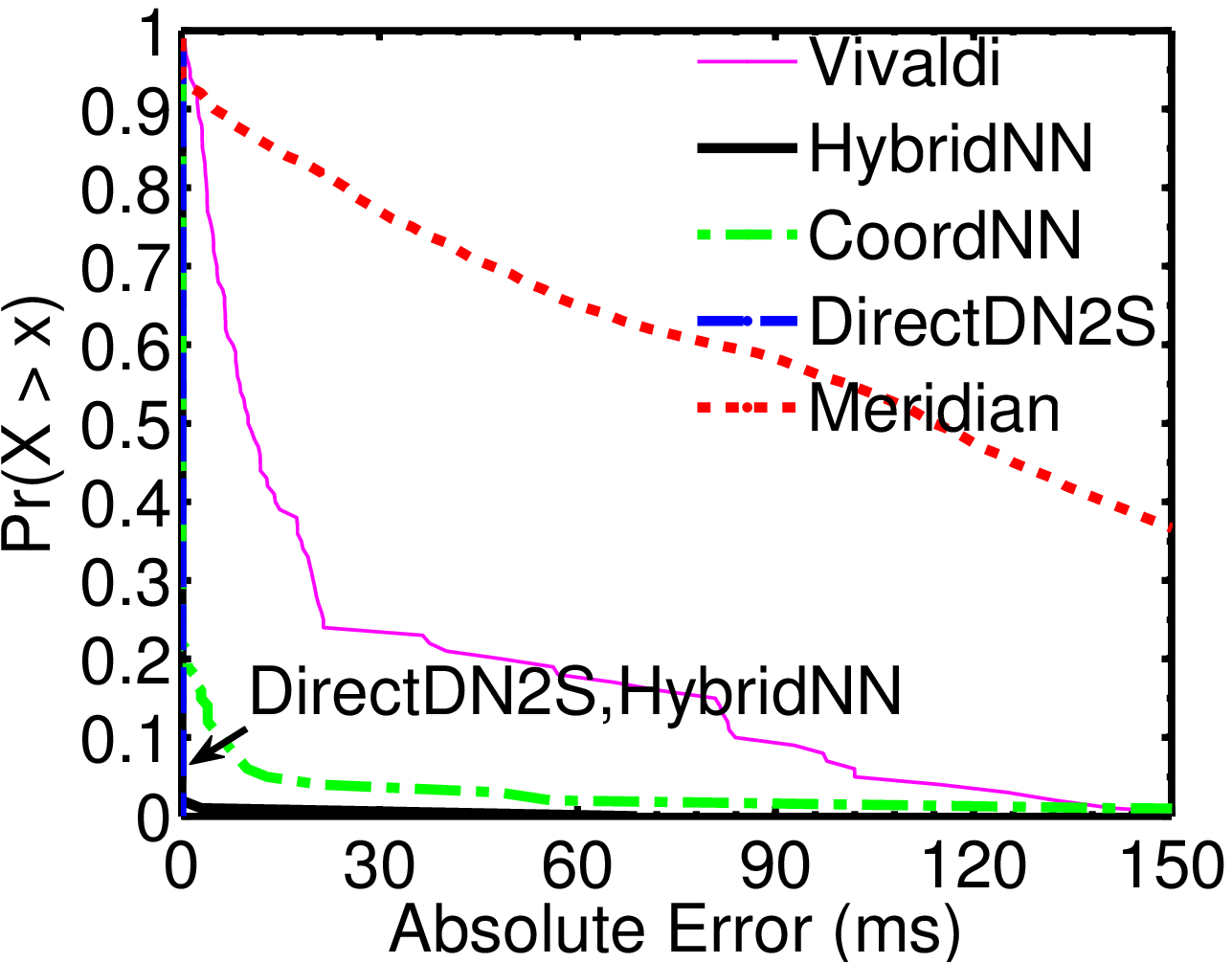}
         }
               \subfigure[Host479.]
        {
          \setlength{\epsfxsize}{.44\hsize}
          \epsffile{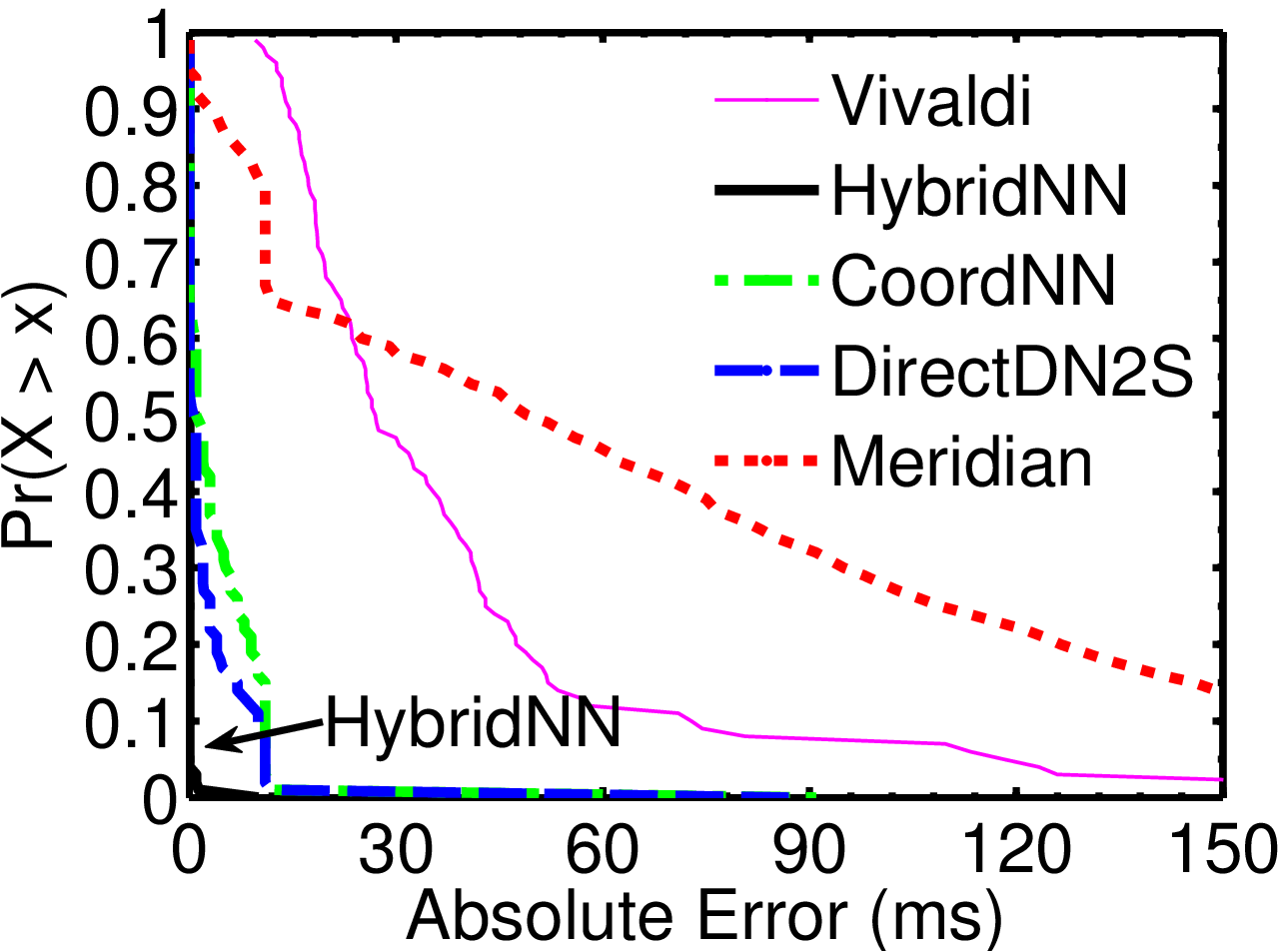}
         }
     \caption{The CCDFs of absolute errors.}
     \label{fig:AbsoErr}
\end{figure}

\textbf{Relative Error}. Fig~\ref{fig:relErrDist} shows the
relative errors of DNNS algorithms. The results are consistent
with those of the absolute errors. DirectDN2S achieves near-zero
relative errors for most DNNS queries on all data sets except
Host479. HybridNN and DirectDN2S have similar accuracy, while
HybridNN is more accurate than DirectDN2S on Host479. Furthermore,
CoordNN is less accurate than HybridNN, while Meridian and Vivaldi
are less accurate than DirectDN2S, HybridNN and CoordNN.

\begin{figure}[tp]
     \centering
            \subfigure[DNS1143.]
        {
          \setlength{\epsfxsize}{.44\hsize}
          \epsffile{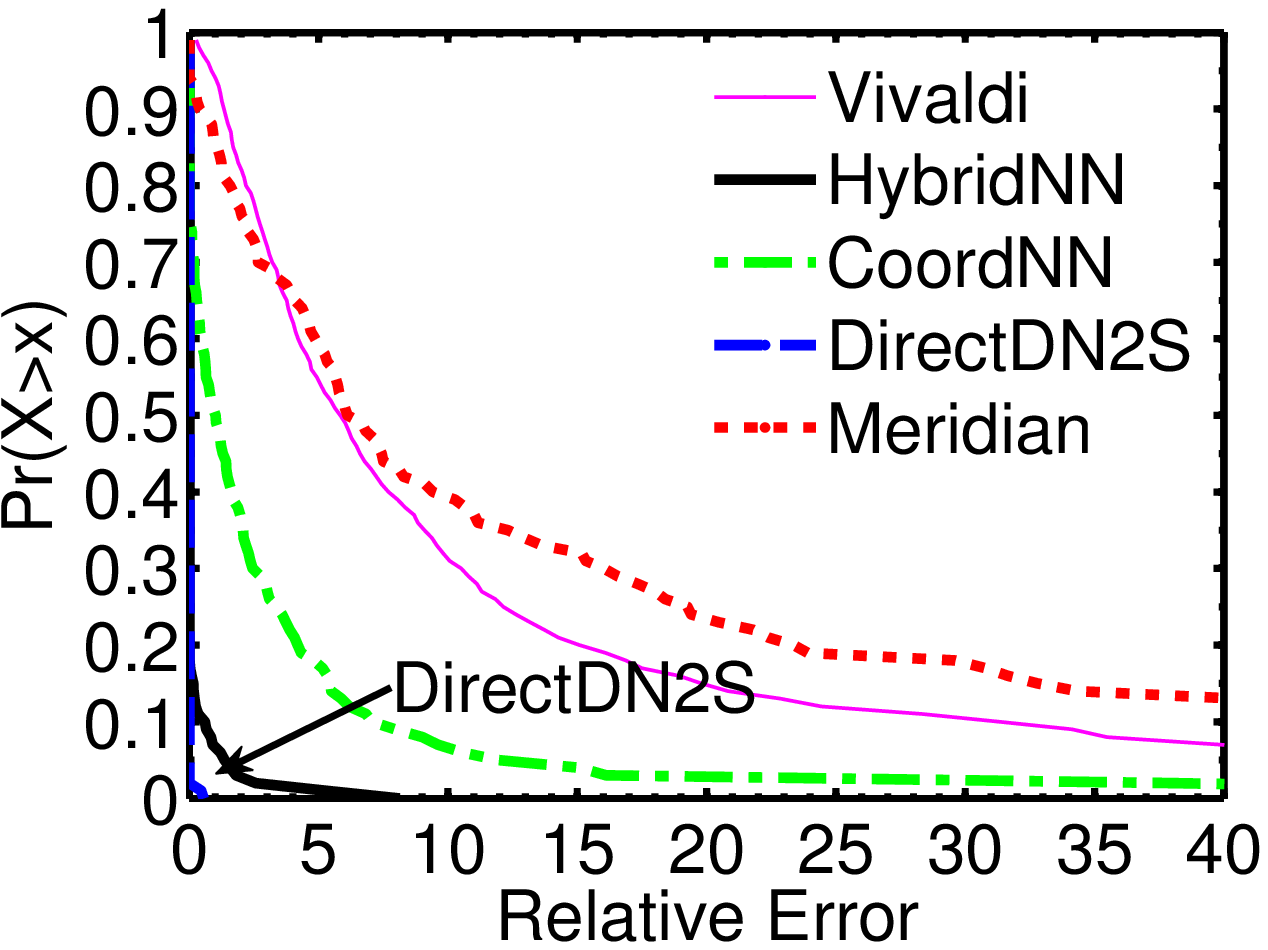}
         }
               \subfigure[DNS2500.]
        {
          \setlength{\epsfxsize}{.44\hsize}
          \epsffile{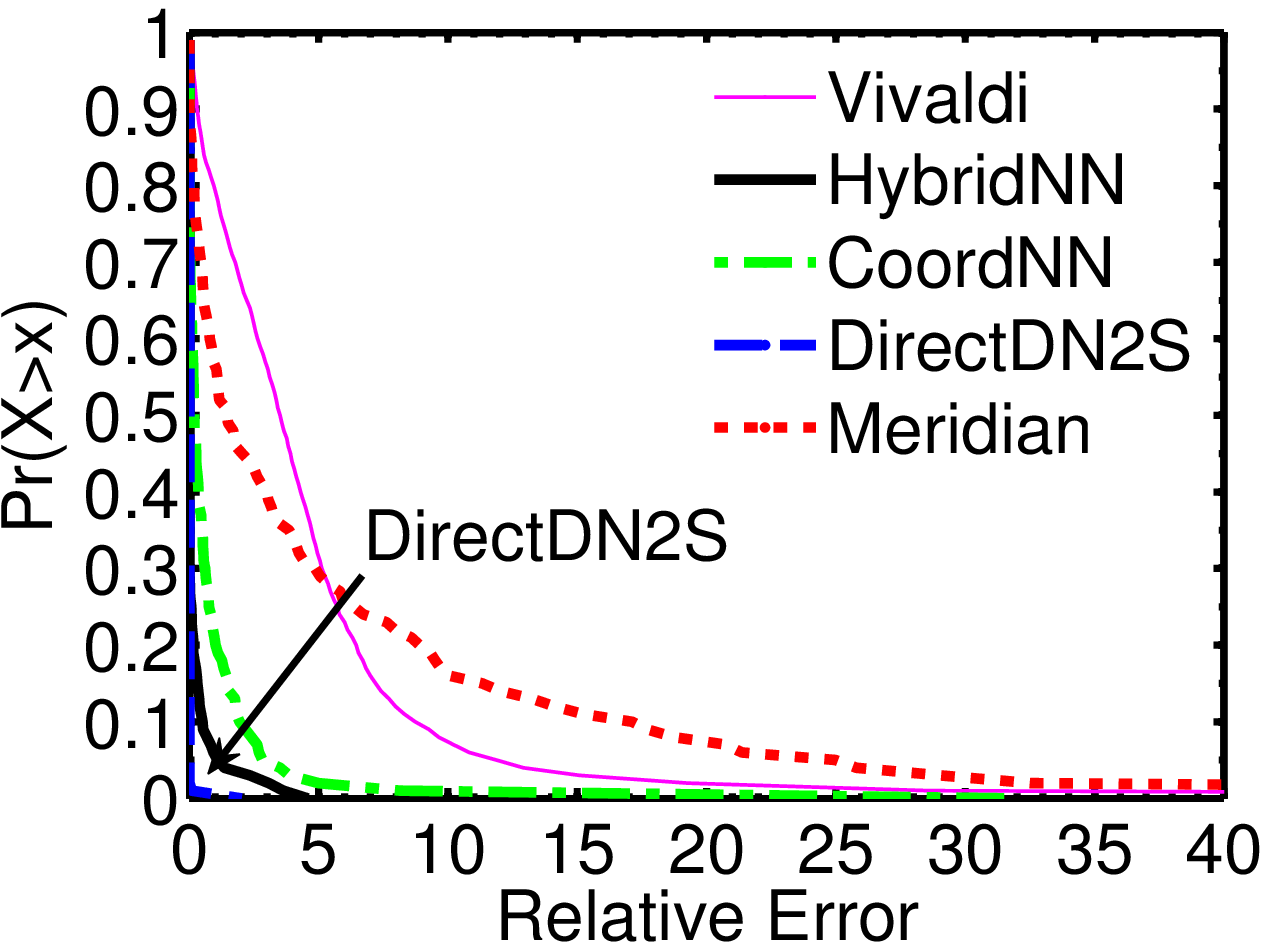}
         }
          \subfigure[DNS3997.]
        {
          \setlength{\epsfxsize}{.44\hsize}
          \epsffile{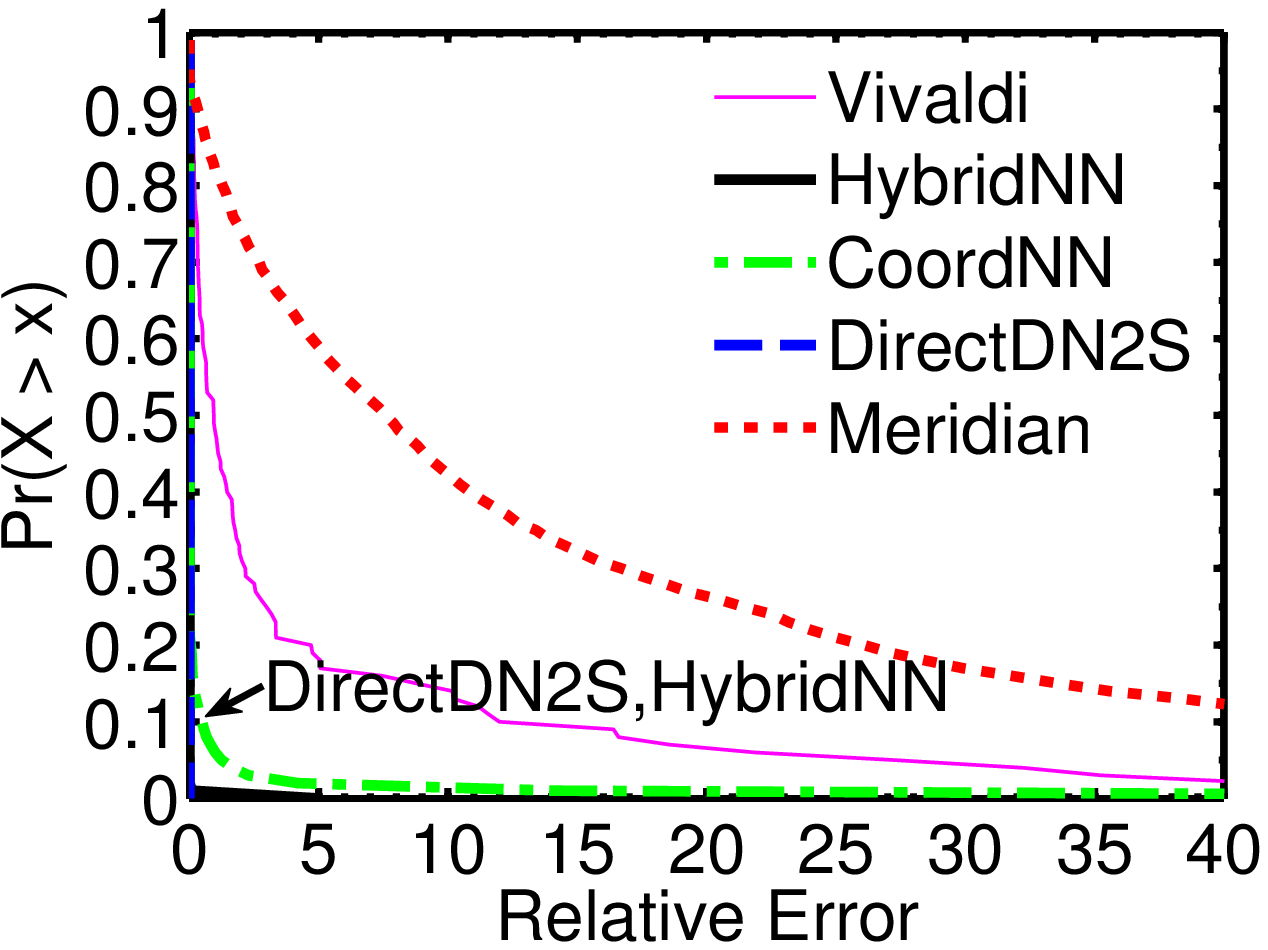}
         }
               \subfigure[Host479.]
        {
          \setlength{\epsfxsize}{.44\hsize}
          \epsffile{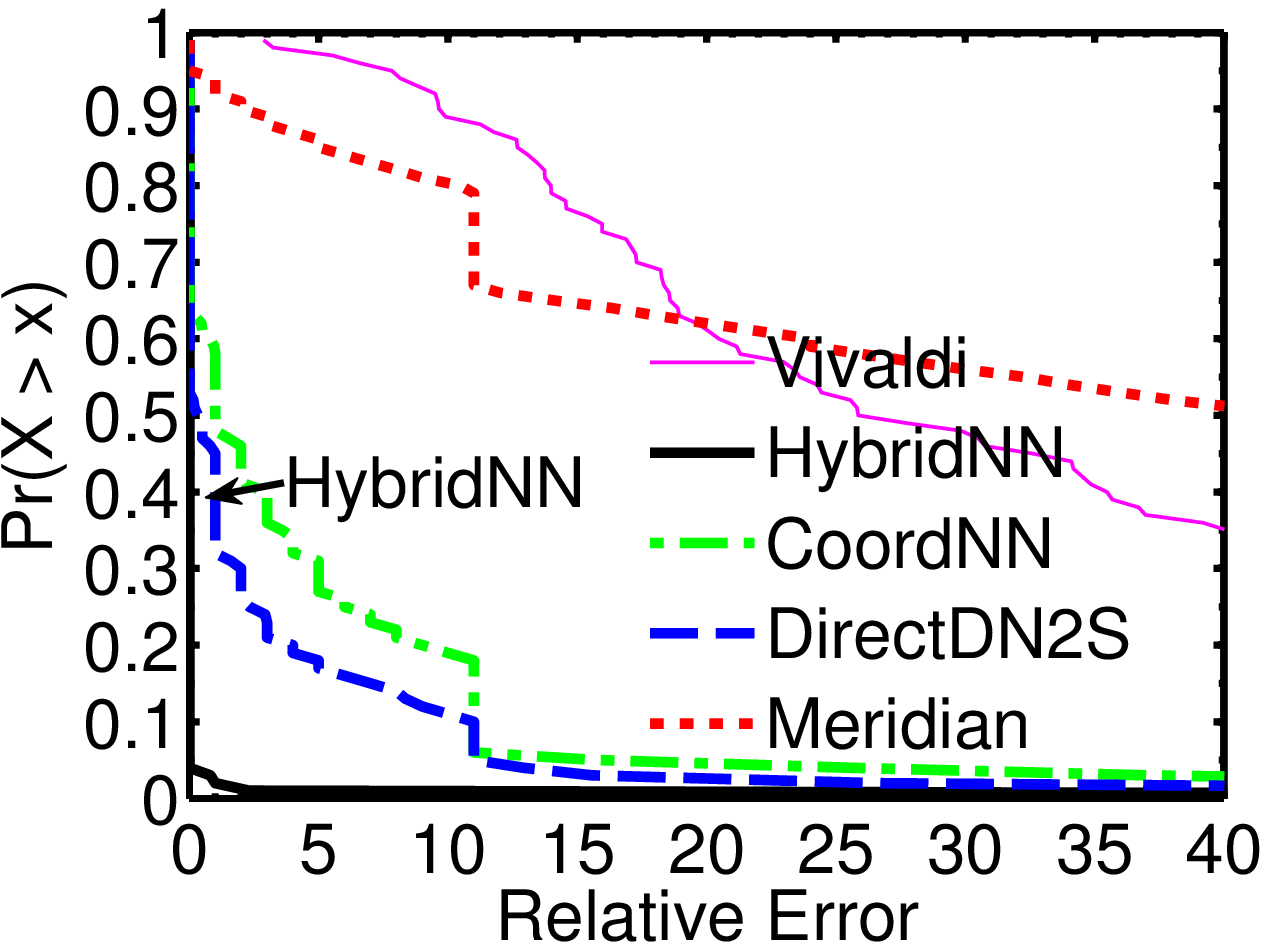}
         }
     \caption{The CCDFs of relative errors.}
     \label{fig:relErrDist}
\end{figure}

\textbf{Search hops}. Next, we quantify the distributions of the
number of search hops for DNNS algorithms, as shown in Fig
\ref{fig:searchHopDist}. Recall that the search hops are equal to
the lengths of DNNS forwarding paths minus one.

We can see that the search hops of most DNNS queries are rather
modest for all DNNS algorithms. Meridian in about 80\% of the
cases has 2 search hops. While HybridNN and DirectDN2S in over
80\% of the cases have no more than 3.

Moreover, almost all searches for Meridian, HybridNN, DirectDN2S
are below 6 search hops. On the other hand, CoordNN has longer
search hops than Meridian, HybridNN and the DirectDN2S; and a
fraction of search hops even exceed 10 on all data sets.

\co{
\subsection{Discussions}

Next, we discuss the relation between the accuracy of the
simulated DNNS algorithms with the distributions of the search
hops (Fig \ref{fig:searchHopDist}).

First, observing that the search hops for 80\% of DNNS queries in
Meridian are just 2, implies that Meridian is prone to terminate
before finding the ground-truth nearest neighbors to the targets.
Therefore, Meridian incurs higher absolute errors and relative
errors than HybridNN, DirectDN2S and CoordNN, which all have
longer forwarding paths than Meridian.

Second, a modest number of search hops suffices for accurate DNNS
queries, as the search hops of DirectDN2S and HybridNN are mostly
below 4.

Third, CoordNN is less accurate than HybridNN and DirectDN2S,
although CoordNN has longer forwarding paths than HybridNN and
DirectDN2S. This is mainly because CoordNN can be misguided into
the local minimum owing to the inaccuracy of delay predictions by
network coordinates.

Fourth, HybridNN and DirectDN2S have similar accuracy where
HybridNN and DirectDN2S also have similar distributions of search
hops. Nevertheless, HybridNN is more accurate than DirectDN2S on
the Host479 data set. Meanwhile, we can see that HybridNN has
fewer 2-hop search paths than DirectDN2S, which may cause
DirectDN2S to be trapped into local minimum like those in
Meridian. }

\begin{figure}[tp]
     \centering
           \subfigure[DNS1143.]
        {
          \setlength{\epsfxsize}{.44\hsize}
          \epsffile{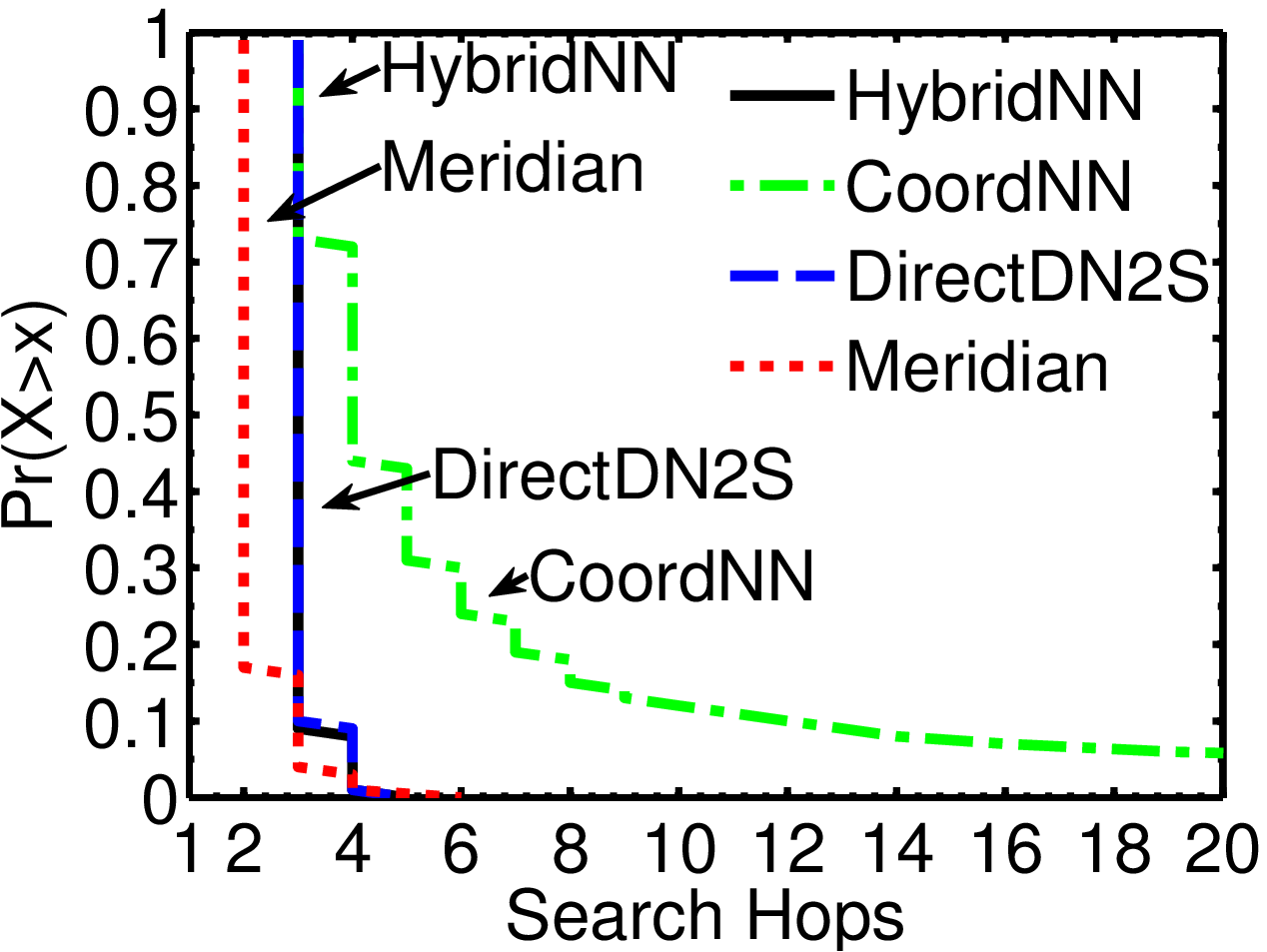}
         }
               \subfigure[DNS2500.]
        {
          \setlength{\epsfxsize}{.44\hsize}
          \epsffile{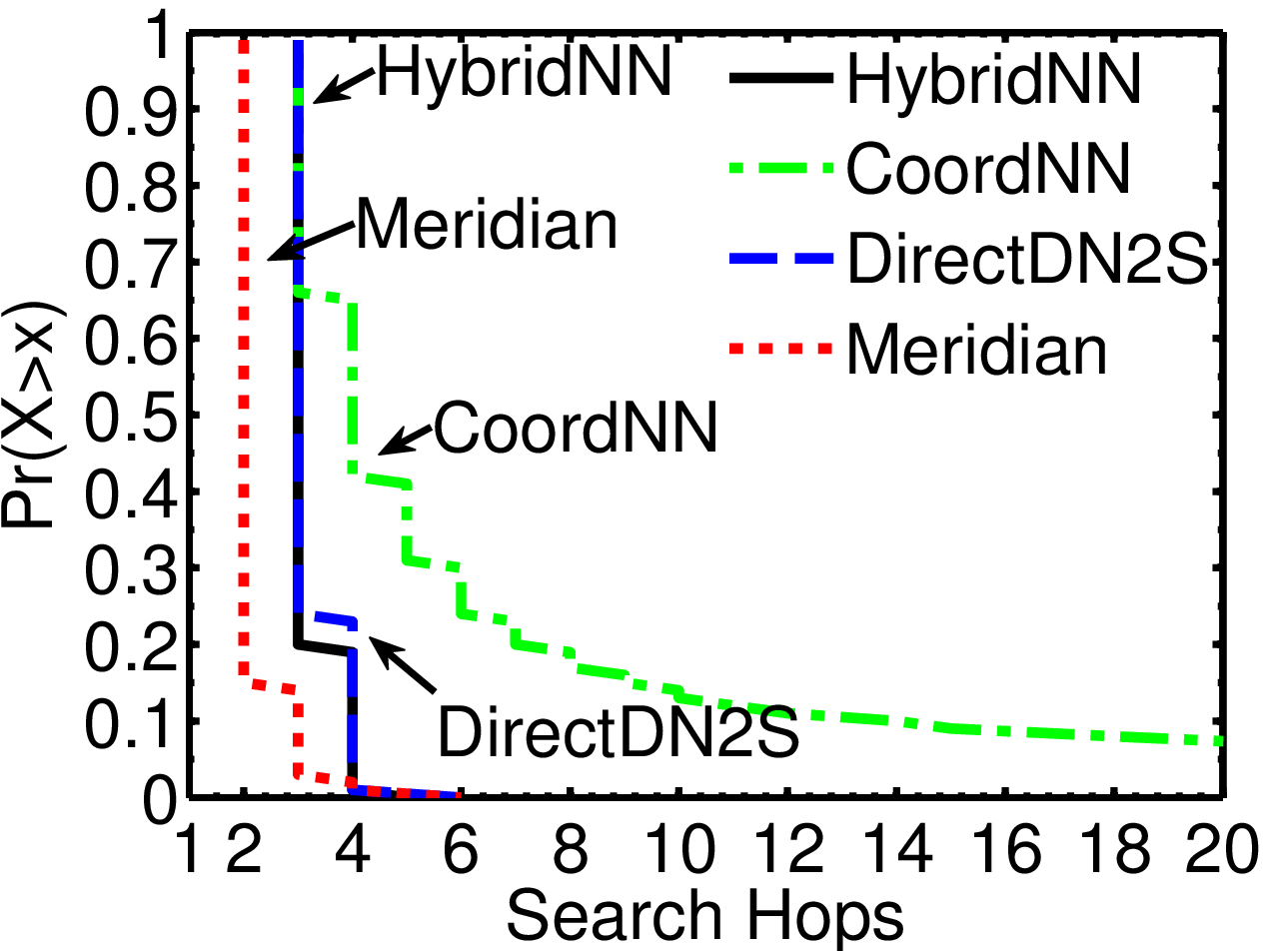}
         }
          \subfigure[DNS3997.]
        {
          \setlength{\epsfxsize}{.44\hsize}
          \epsffile{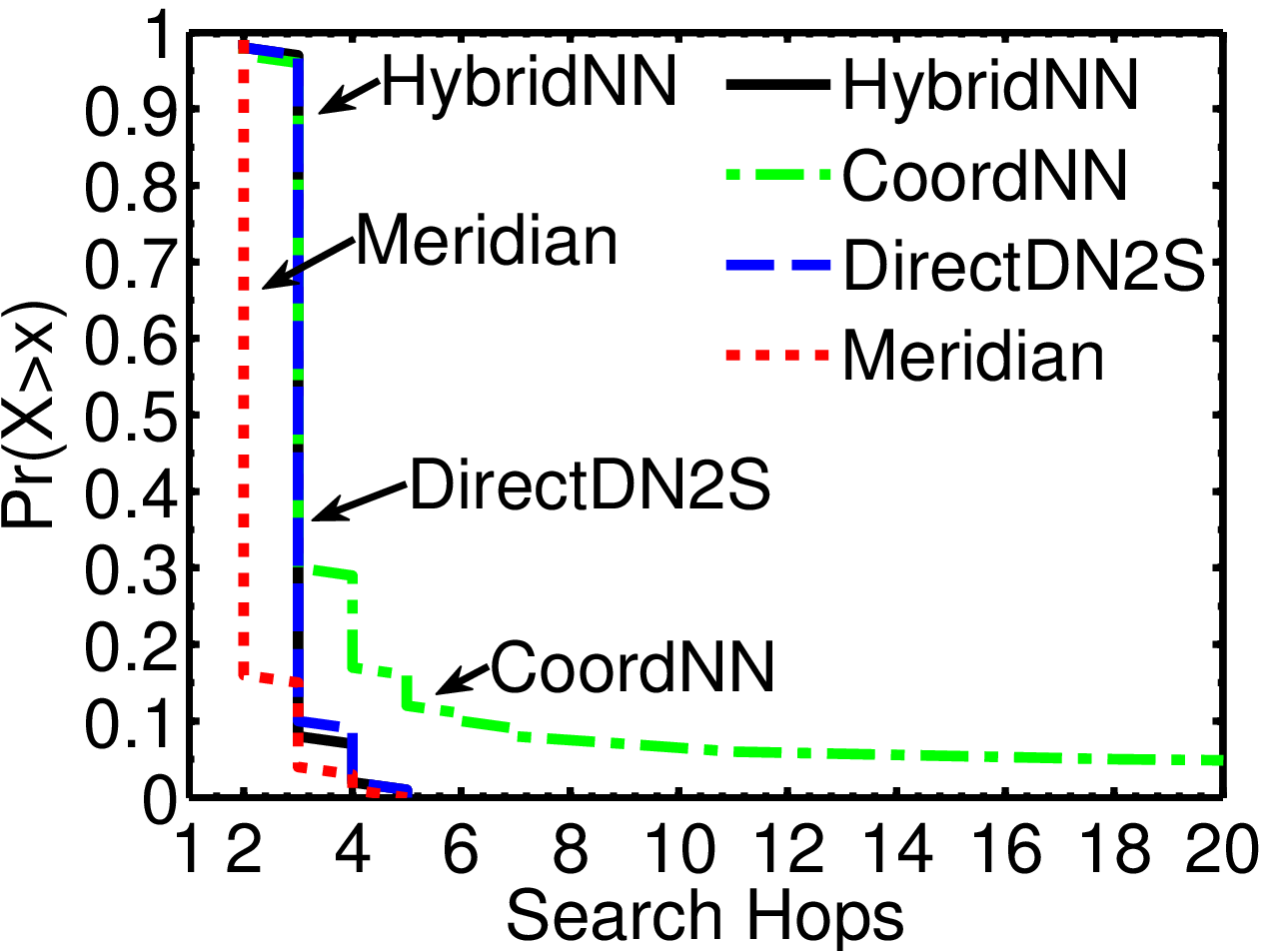}
         }
               \subfigure[Host479.]
        {
          \setlength{\epsfxsize}{.44\hsize}
          \epsffile{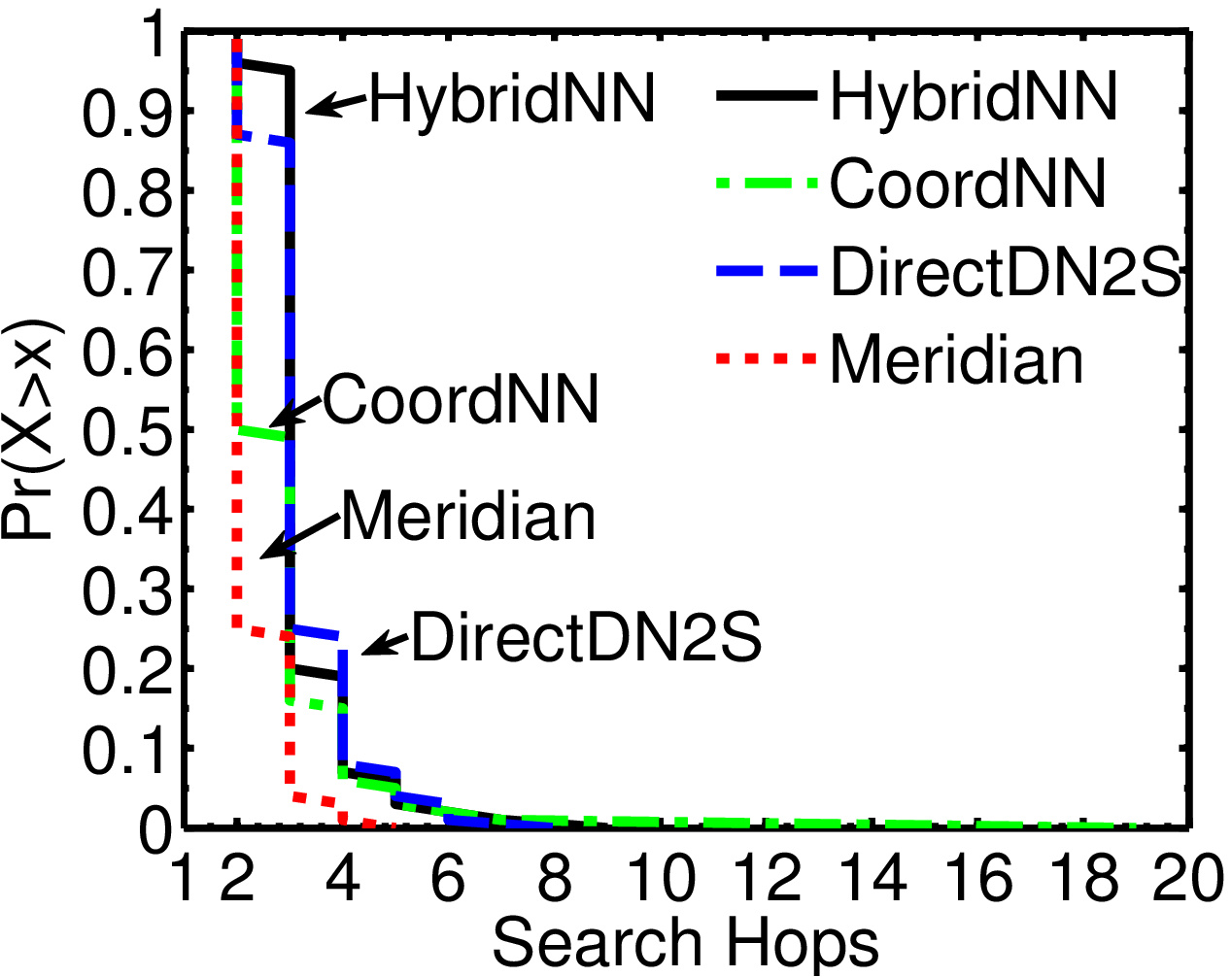}
         }
     \caption{The CCDFs of search hops.}
     \label{fig:searchHopDist}
\end{figure}

\subsection{Sensitivity of Parameters}

In this section, we evaluate the robustness of HybirdNN to the
 system size as well as the choices of system parameters.

\subsubsection{System Size $N$}

To evaluate the size of service machines on the performance of
HybridNN, we evaluate the performance of HybridNN by increasing
the size of service machines. We select target machines randomly
from all nodes, including the clients and the service machines, as
the size of clients shrinks when increasing the percentage of
service machines.  Fig.~\ref{fig:NumberOfServiceNodes} shows the
performance of HybridNN with increasing the percentage of service
nodes. HybridNN achieves similar accuracy when the size of service
nodes increase compared to clients. Therefore, HybridNN is quite
robust to the different scales of systems. On the other hand, the
query loads of HybridNN increase slowly, for example, HybridNN
nearly double the loads when the percentage of service nodes
reaches 1.

\begin{figure}[htbp]
     \centering
          \subfigure[DNS1143.]
        {
          \setlength{\epsfxsize}{.44\hsize}
          \epsffile{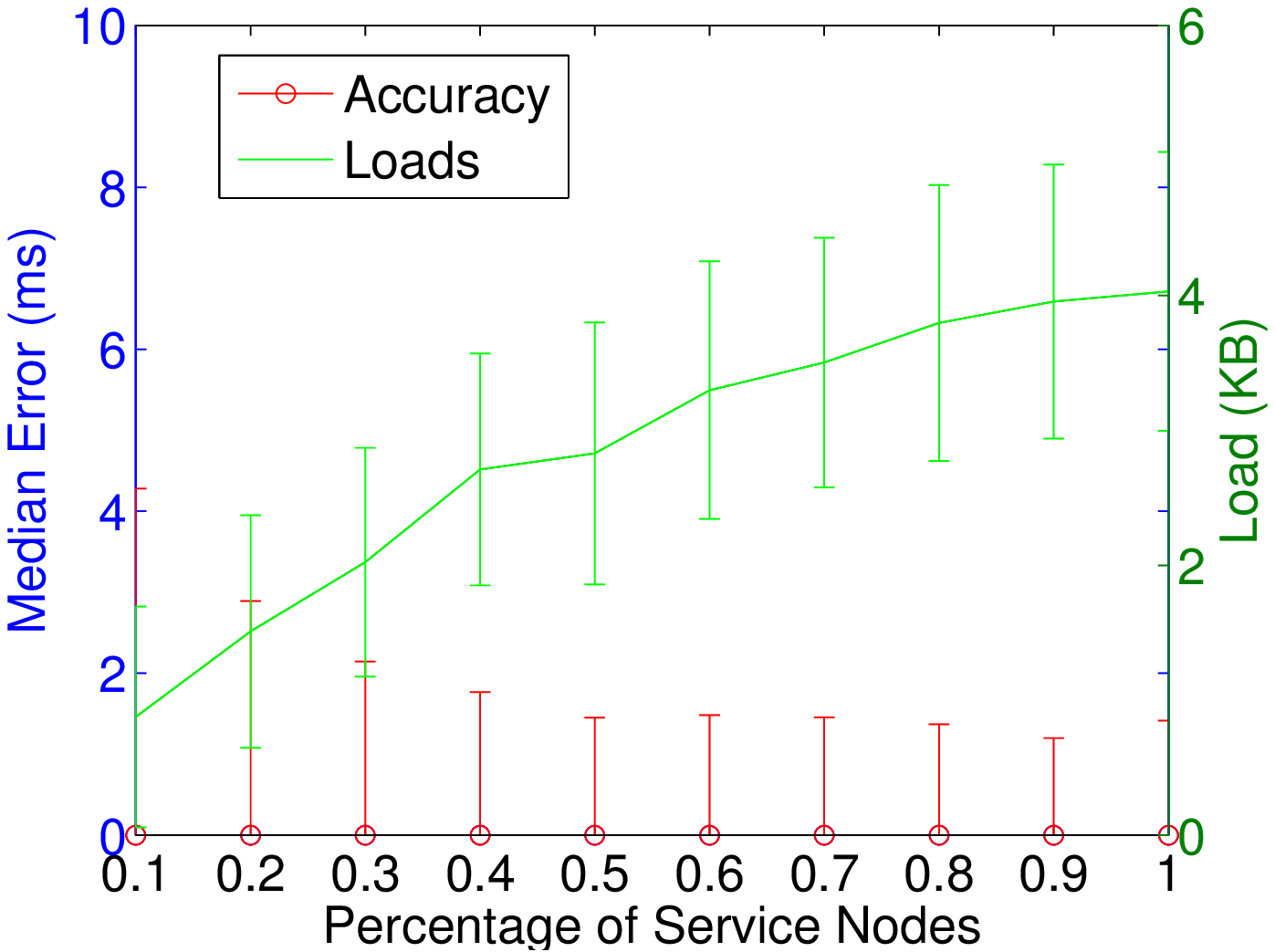}
          \label{fig:B}
         }
               \subfigure[DNS2500.]
        {
          \setlength{\epsfxsize}{.44\hsize}
          \epsffile{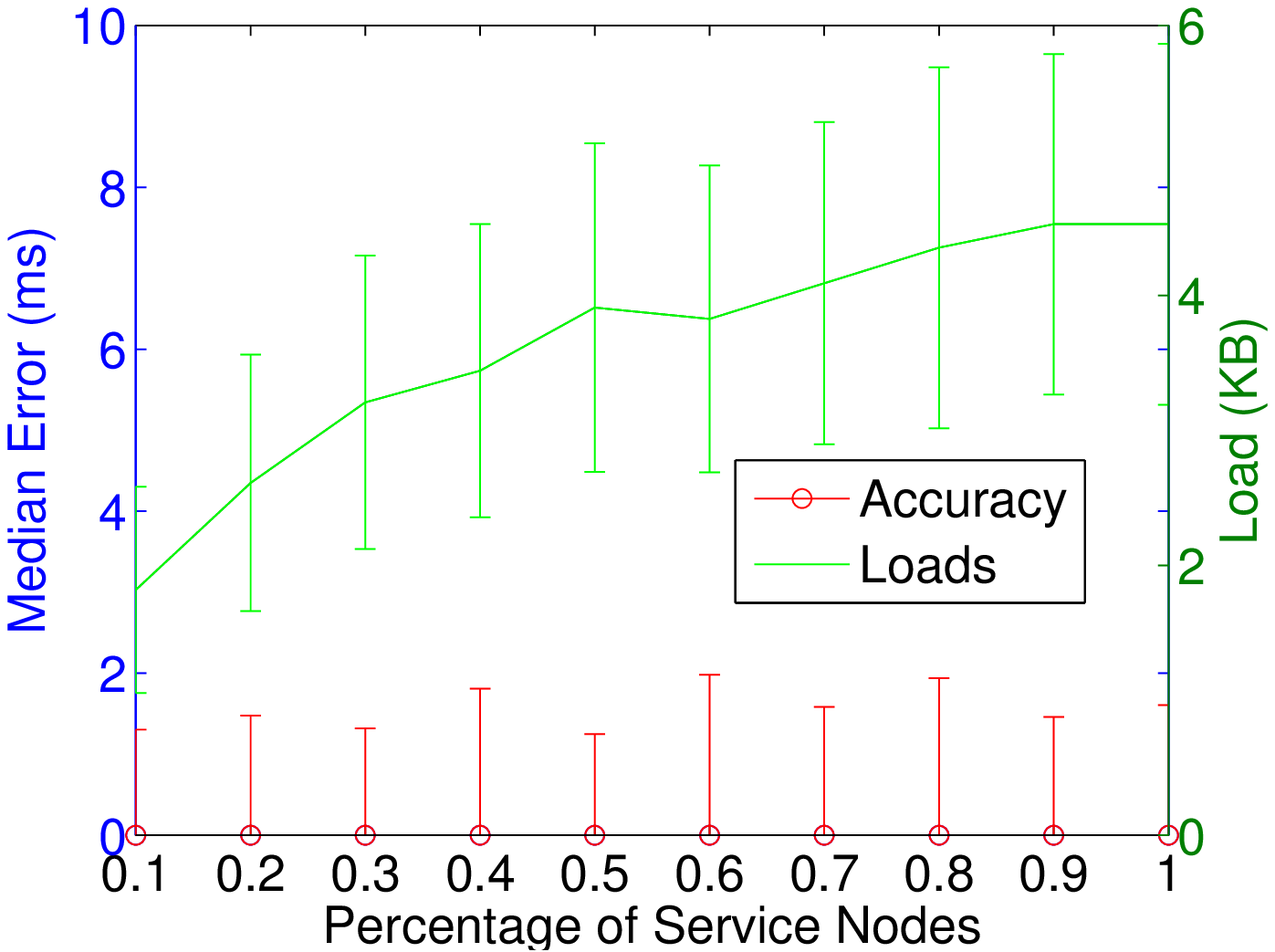}
          \label{fig:B}
         }
                        \subfigure[DNS3997.]
        {
          \setlength{\epsfxsize}{.44\hsize}
          \epsffile{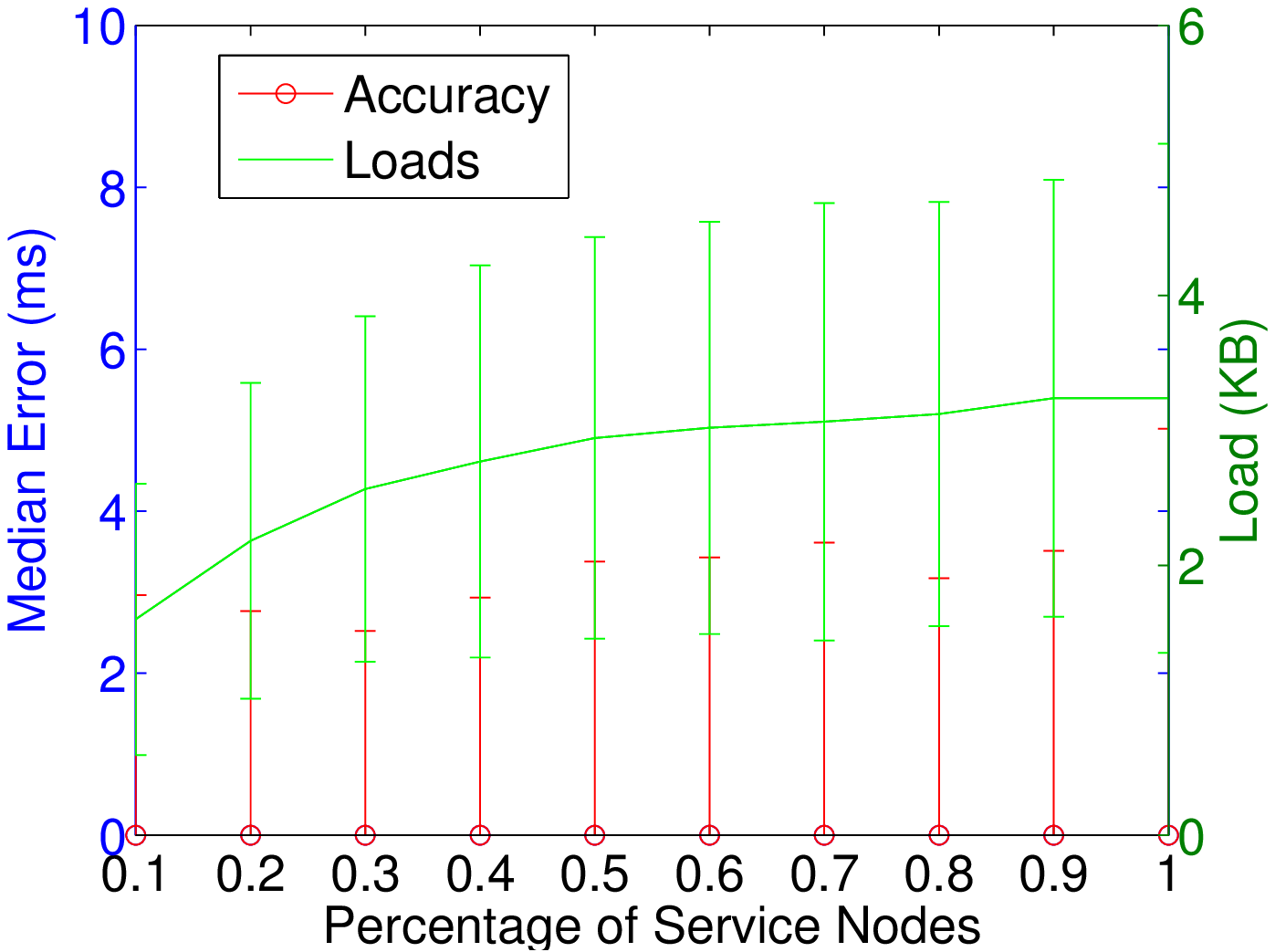}
          \label{fig:B}
         }
               \subfigure[Host479.]
        {
          \setlength{\epsfxsize}{.44\hsize}
          \epsffile{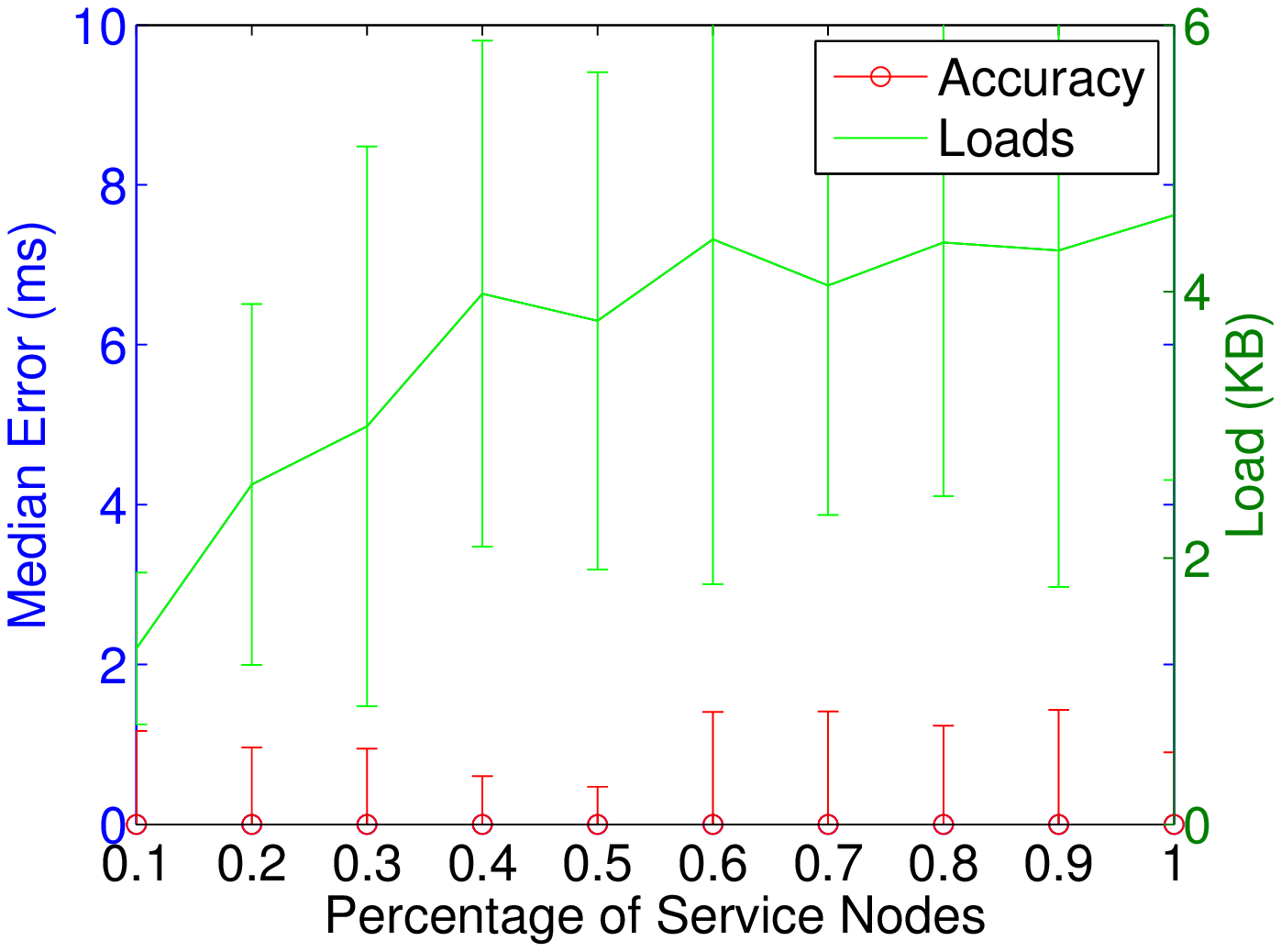}
          \label{fig:B}
         }
     \caption{Size of Service Nodes.}
     \label{fig:NumberOfServiceNodes}
\end{figure}

\subsubsection{Inframetric $\rho$}

Fig.~\ref{fig:rhoSen} shows the accuracy and loads as the
increment of Inframetric parameter $\rho$. The accuracy of
HybridNN is insensitive to choices of $\rho$.  This is because for
most delays, its $\rho$-edge metrics are quite lower. Therefore,
with lower $\rho$ we can cover possible best next-hop neighbors
for DNNS queries. Furthermore, although larger $\rho$ increases
the size of possible next-hop candidate neighbors, the loads of
DNNS queries of HybridNN keep stable for different $\rho$, due to
that we use nearly constant-sized next-hop nodes. Besides, we can
see the standard deviations of errors are quite low for most data
sets.

\co{ Since we add constants to delays of TIVMin, the loads of DNNS
queries are enlarged due to broader ranges of candidate neighbors,
and the presence possibility of closer nodes are reduced, as many
nodes are put into identical rings by the similarity of shifted
delays.}

\begin{figure}[htbp]
     \centering
          \subfigure[DNS1143.]
        {
          \setlength{\epsfxsize}{.44\hsize}
          \epsffile{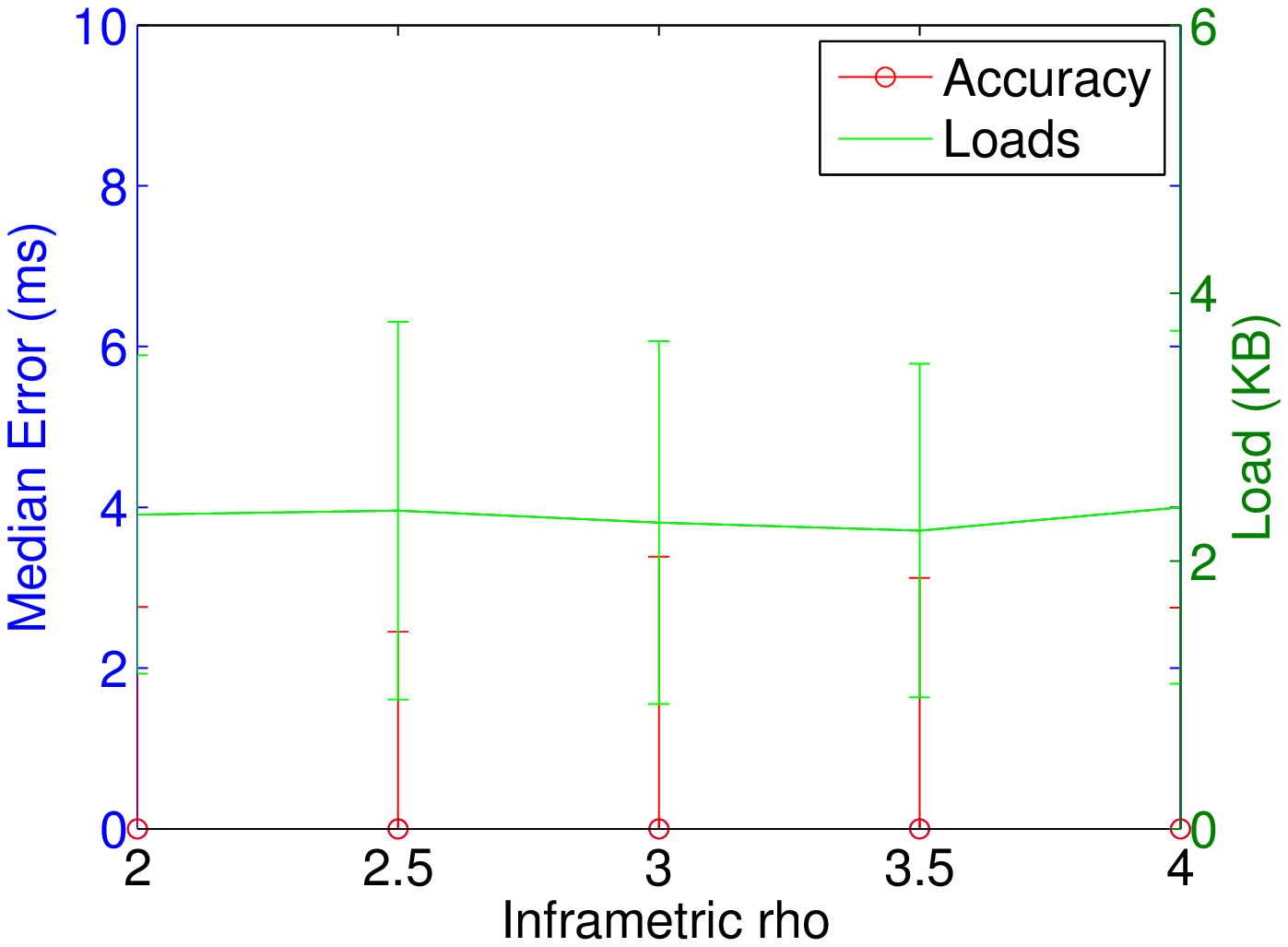}
          \label{fig:B}
         }
      \subfigure[DNS2500.]
        {
          \setlength{\epsfxsize}{.44\hsize}
          \epsffile{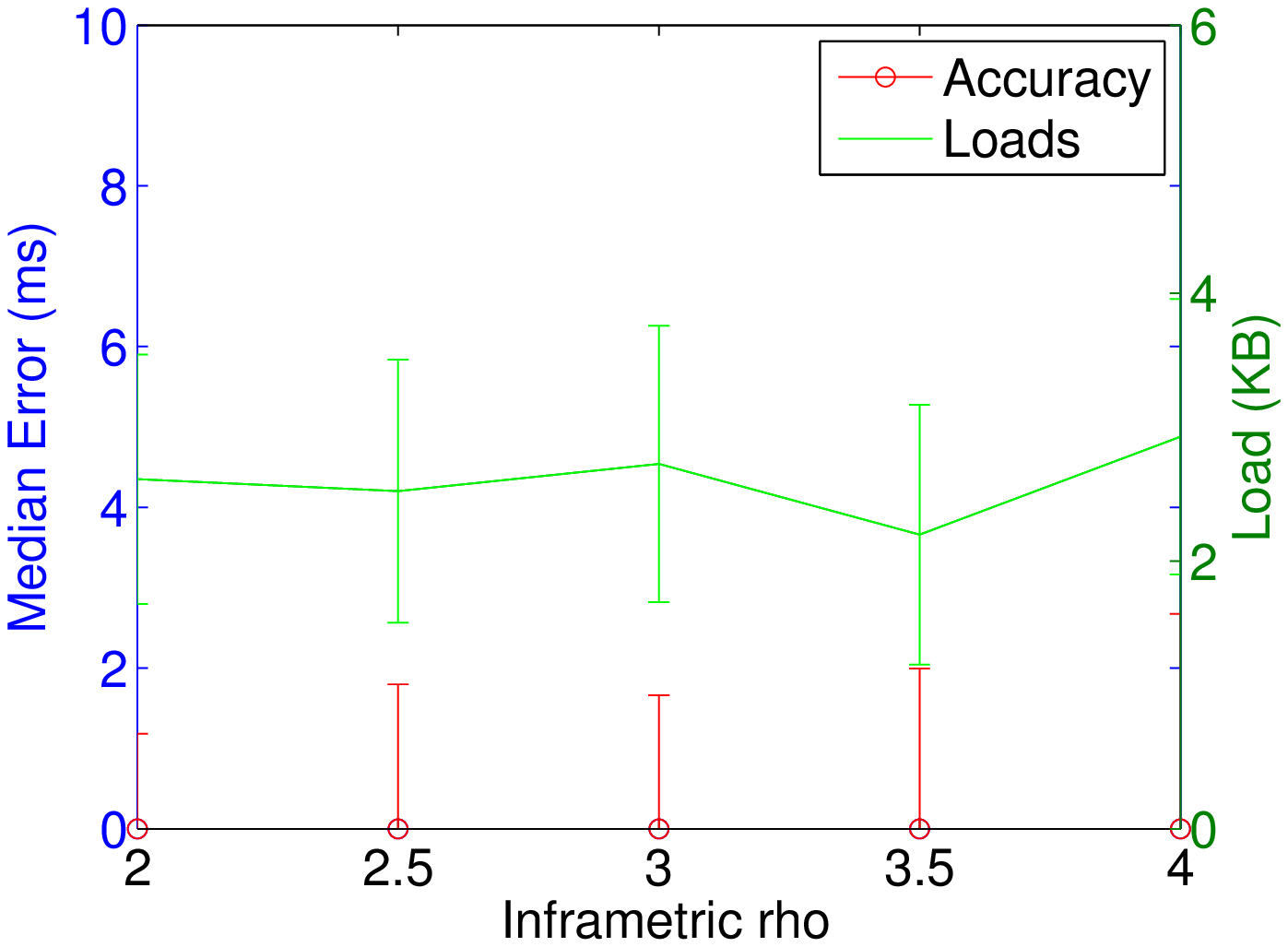}
          \label{fig:B}
         }
               \subfigure[DNS3997.]
        {
          \setlength{\epsfxsize}{.44\hsize}
          \epsffile{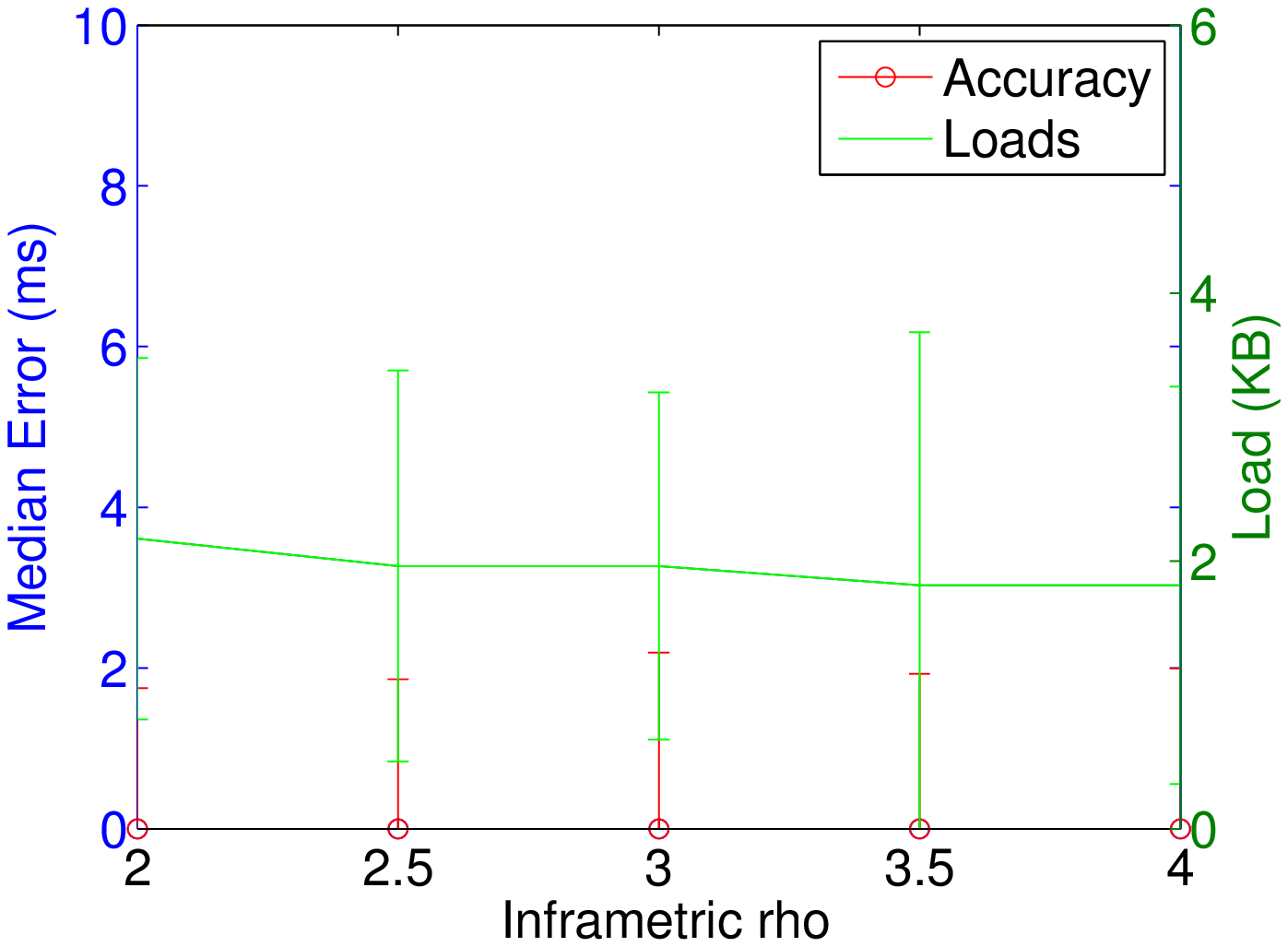}
          \label{fig:B}
         }
               \subfigure[Host479.]
        {
          \setlength{\epsfxsize}{.44\hsize}
          \epsffile{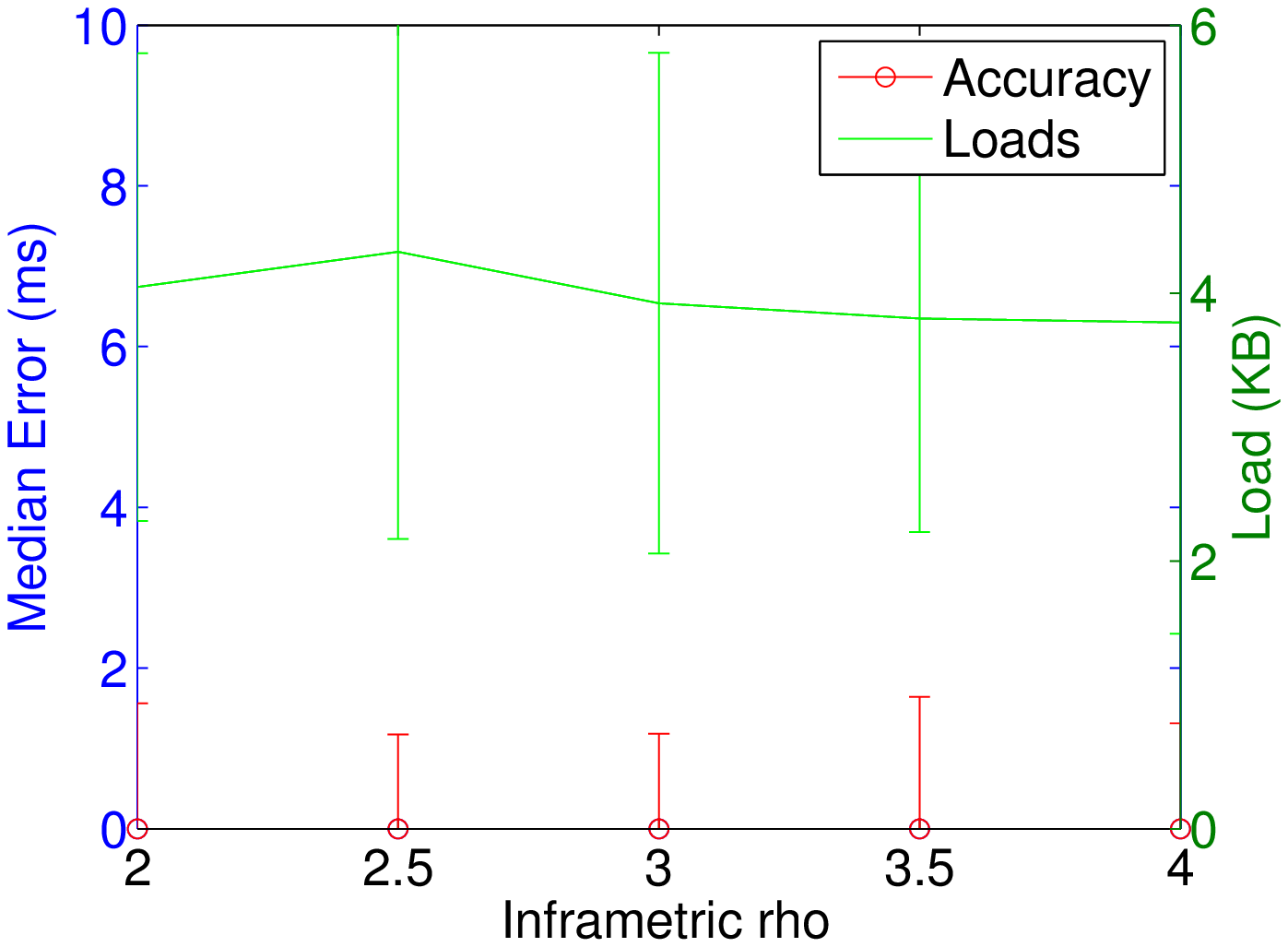}
          \label{fig:B}
         }
     \caption{Inframetric $\rho$.}
     \label{fig:rhoSen}
\end{figure}

\subsubsection{Non-Empty Threshold $\tau$}

Fig.~\ref{fig:nmTh} shows the accuracy and loads as the increment
of Non-empty thresholds for pruning candidate neighbors for
next-hop nodes. As the increment of non-empty thresholds for
pruning candidate neighbors that have too few rings containing
nodes, the standard deviation of HybridNN is reduced before the
threshold reaches 4, then increases after the threshold is over 4,
and the median errors are increased when the non-empty threshold
exceed 8. Besides, the loads are reduced when the non-empty
thresholds increase. Therefore, selecting modest-sized non-empty
thresholds (e.g., 4) can keep accuracy and reduce loads.

\begin{figure}[htbp]
     \centering
        \subfigure[DNS2500.]
        {
          \setlength{\epsfxsize}{.44\hsize}
          \epsffile{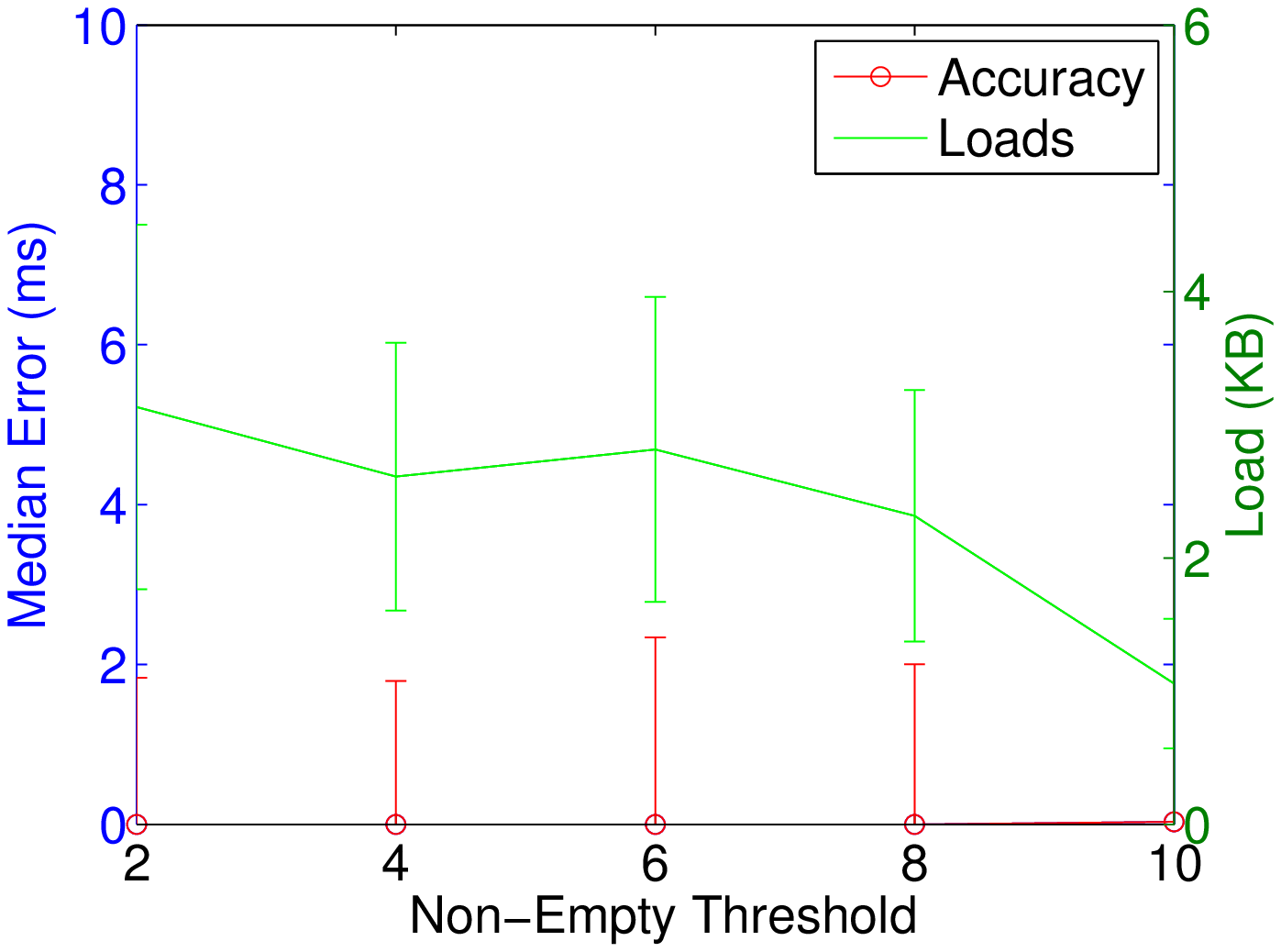}
          \label{fig:B}
         }
               \subfigure[DNS1143.]
        {
          \setlength{\epsfxsize}{.44\hsize}
          \epsffile{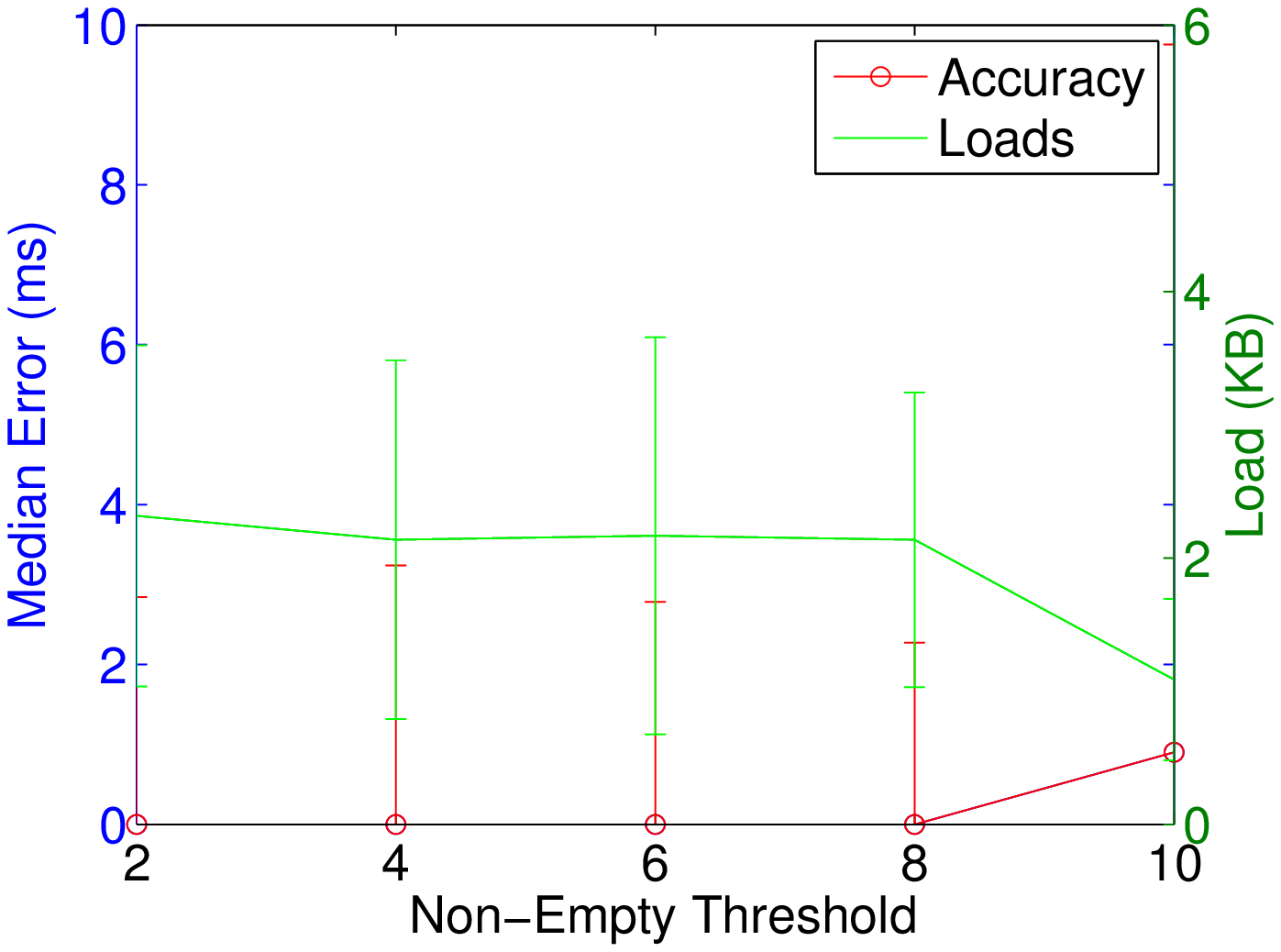}
          \label{fig:B}
         }
          \subfigure[DNS3997.]
        {
          \setlength{\epsfxsize}{.44\hsize}
          \epsffile{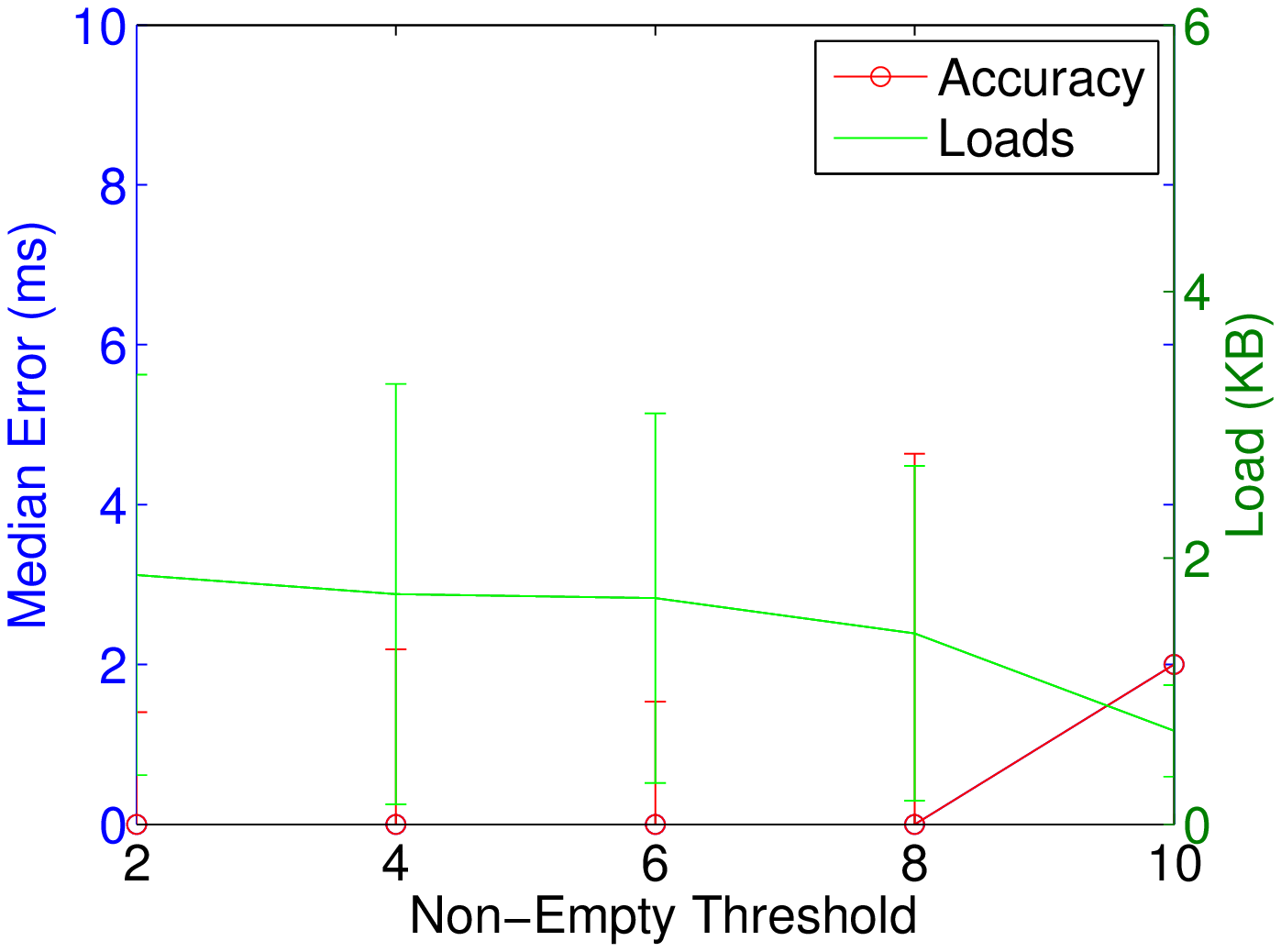}
          \label{fig:B}
         }
               \subfigure[Host479.]
        {
          \setlength{\epsfxsize}{.44\hsize}
          \epsffile{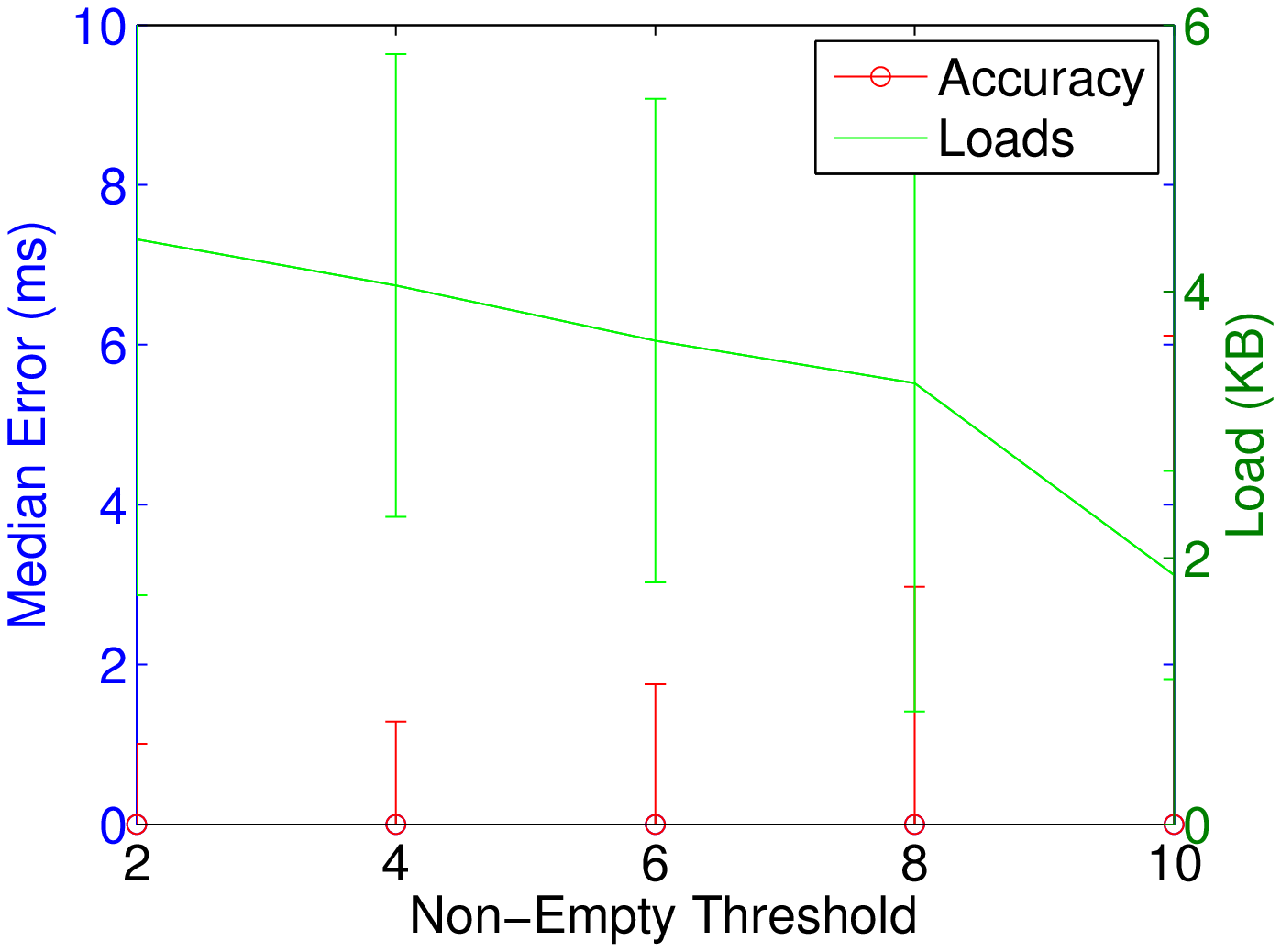}
          \label{fig:B}
         }
     \caption{Non-Empty Threshold.}
     \label{fig:nmTh}
\end{figure}

\co{

\subsubsection{Cutoff threshold}

Fig.~\ref{fig:AB} plots the accuracy and loads as the increment of
cutoff thresholds of the search process. The standard deviations
increase with increasing cutoffs; while the standard deviations
are reduced. But the median errors keep stable at 0. Since we
adaptively permit the continuation of DNNS queries when the delay
of next-hop node to the target is lower. Therefore, the cutoff
threshold dos not

\begin{figure}[htbp]
     \centering
        \subfigure[DNS1143.]
        {
          \setlength{\epsfxsize}{.44\hsize}
          \epsffile{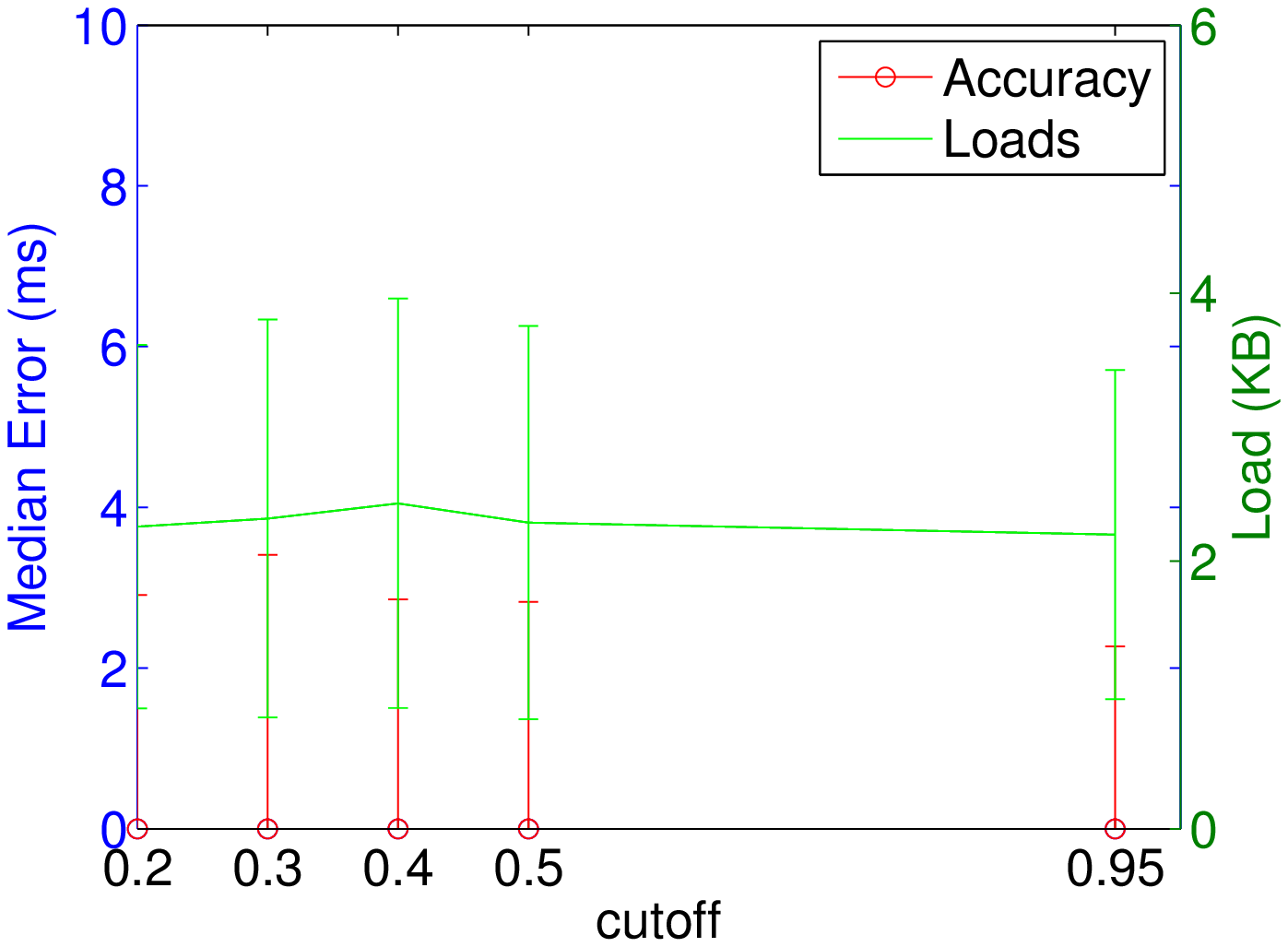}
          \label{fig:B}
         }
    \subfigure[DNS2500.]
        {
          \setlength{\epsfxsize}{.44\hsize}
          \epsffile{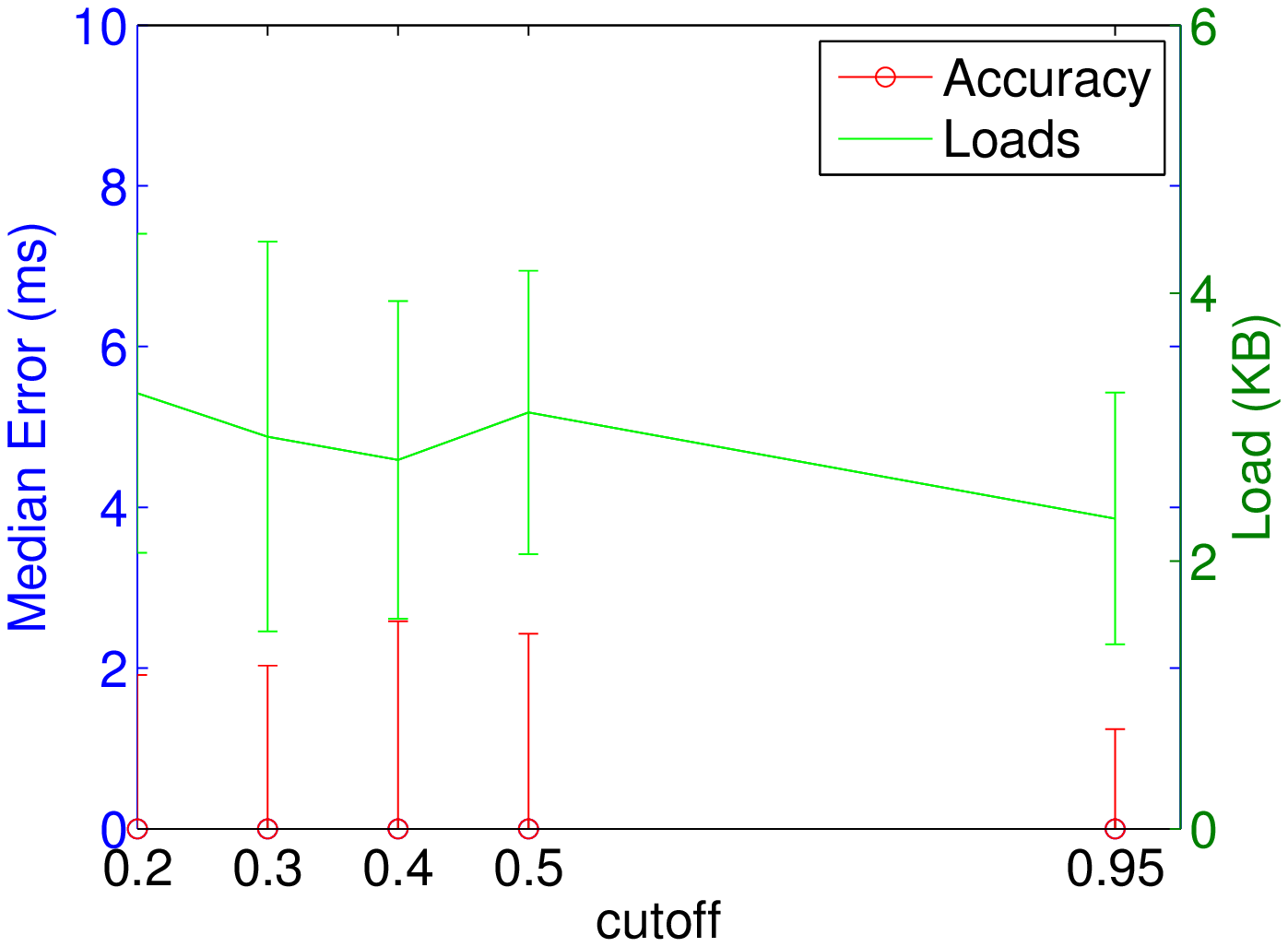}
          \label{fig:B}
         }
               \subfigure[DNS3997.]
        {
          \setlength{\epsfxsize}{.44\hsize}
          \epsffile{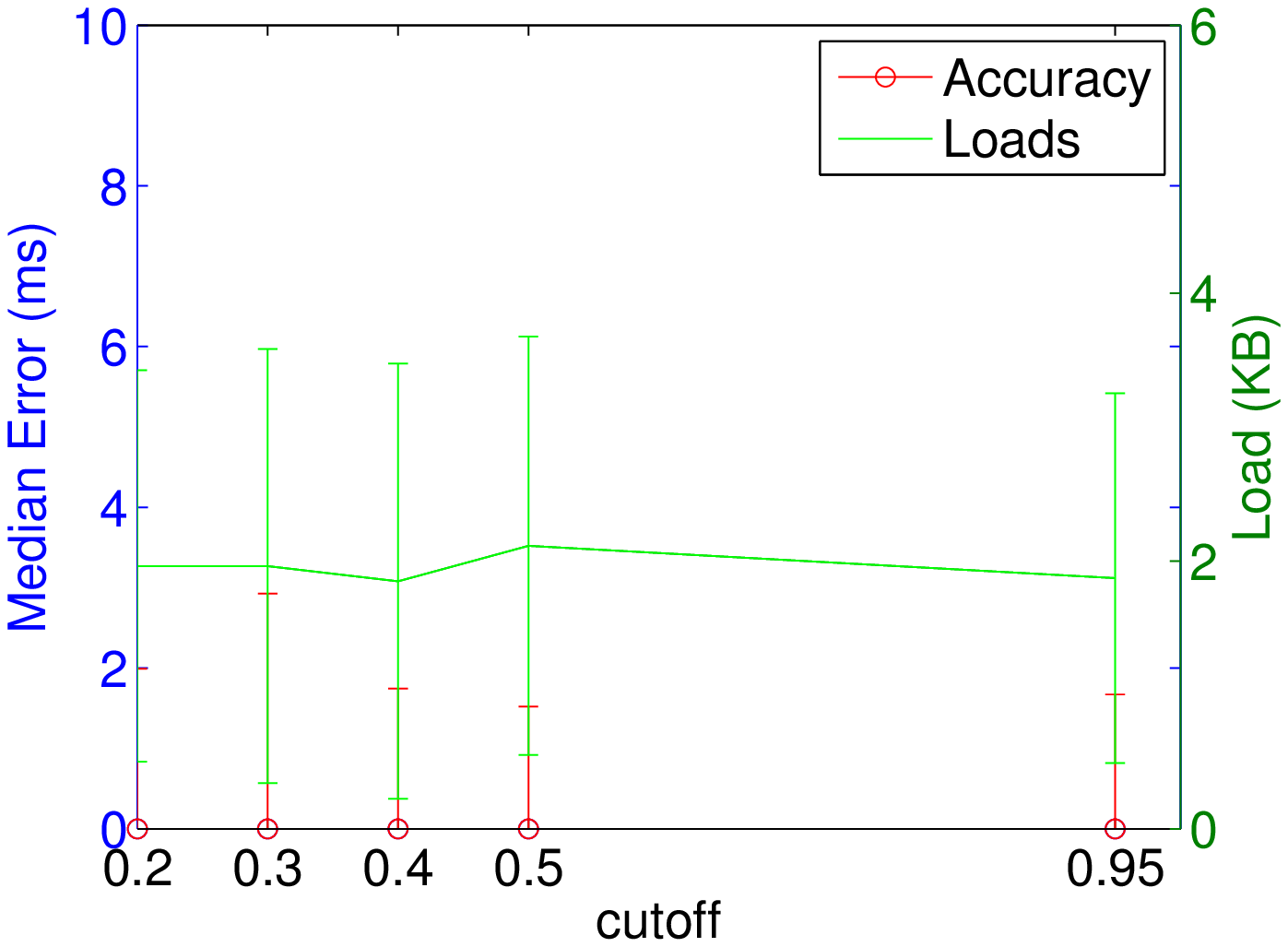}
          \label{fig:B}
         }
               \subfigure[Host479.]
        {
          \setlength{\epsfxsize}{.44\hsize}
          \epsffile{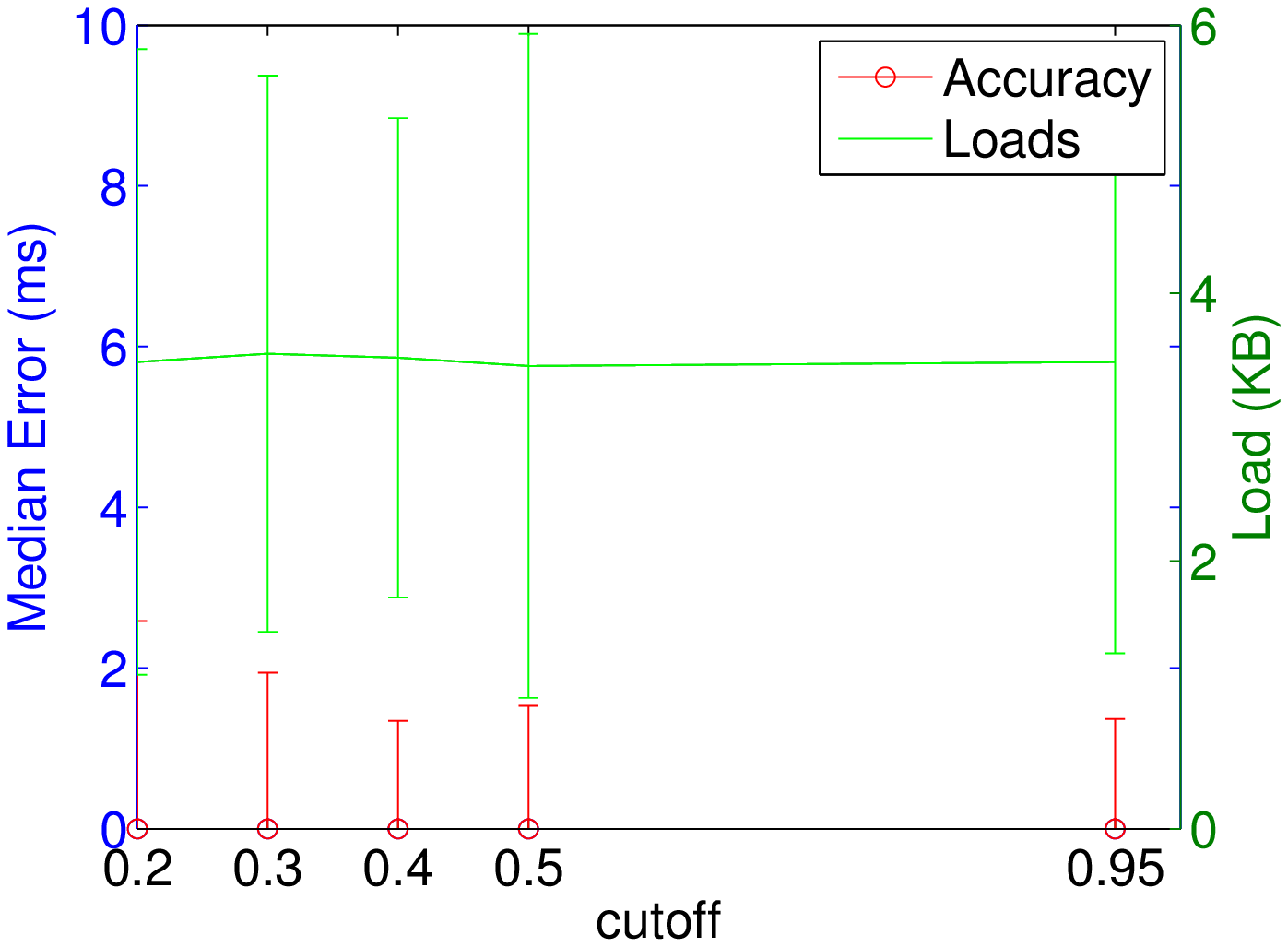}
          \label{fig:B}
         }
     \caption{Search Cutoff.}
     \label{fig:AB}
\end{figure}
}

\subsubsection{Coordinate Dimension $|x|$}

Fig.~\ref{fig:coordDimSen} illustrates the accuracy and loads when
the coordinate dimension changes. HybridNN achieves similar
accuracy and loads as the accuracy of coordinates keeps stably
accurate as the dimension is over 3. Therefore, HybridNN can adapt
to inaccuracy of different dimensions of coordinates without
increasing DNNS query loads efficiently.

\begin{figure}[htbp]
     \centering
           \subfigure[DNS1143.]
        {
          \setlength{\epsfxsize}{.44\hsize}
          \epsffile{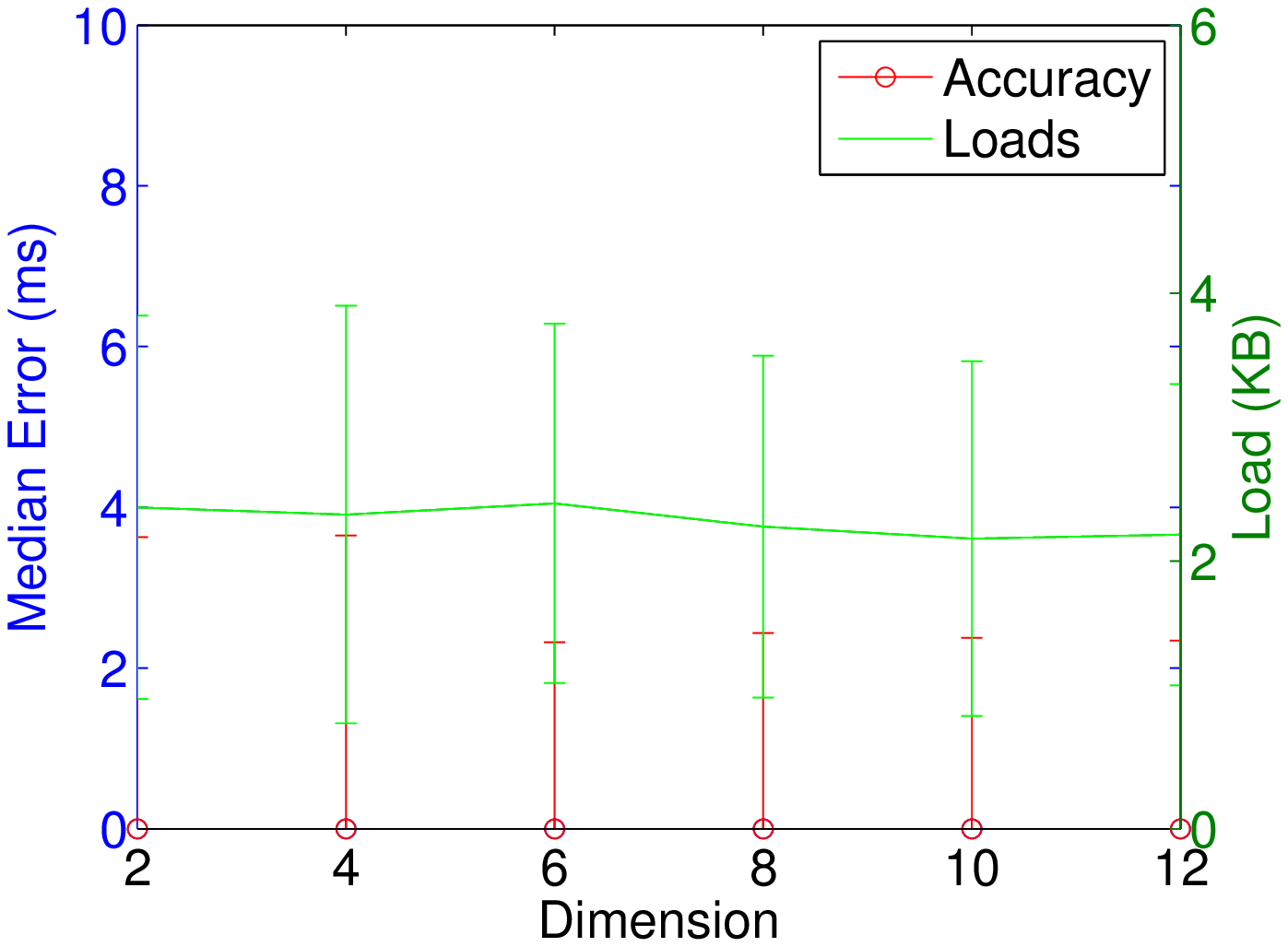}
          \label{fig:B}
         }
               \subfigure[DNS2500.]
        {
          \setlength{\epsfxsize}{.44\hsize}
          \epsffile{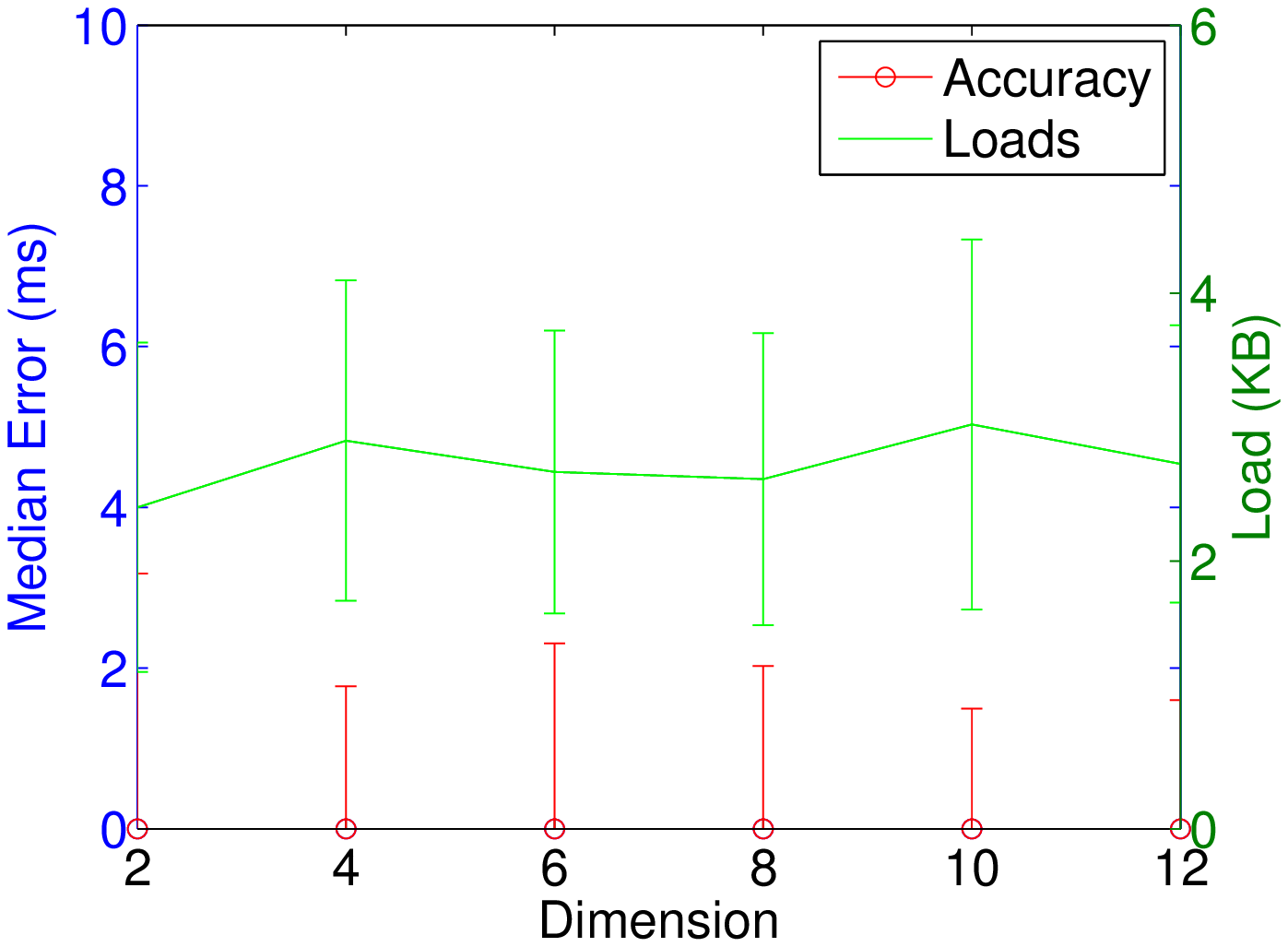}
          \label{fig:B}
         }
               \subfigure[DNS3997.]
        {
          \setlength{\epsfxsize}{.44\hsize}
          \epsffile{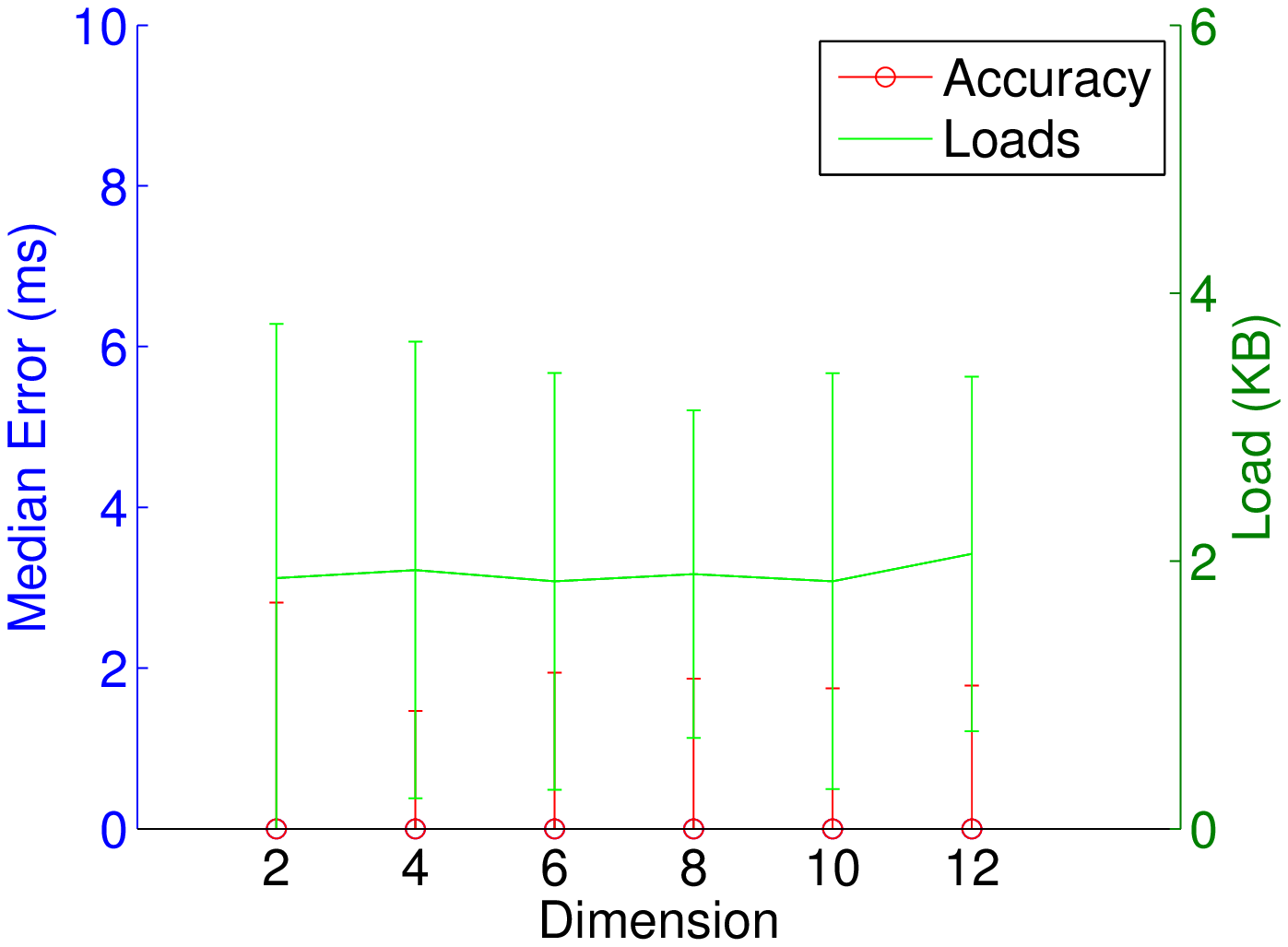}
          \label{fig:B}
         }
               \subfigure[Host479.]
        {
          \setlength{\epsfxsize}{.44\hsize}
          \epsffile{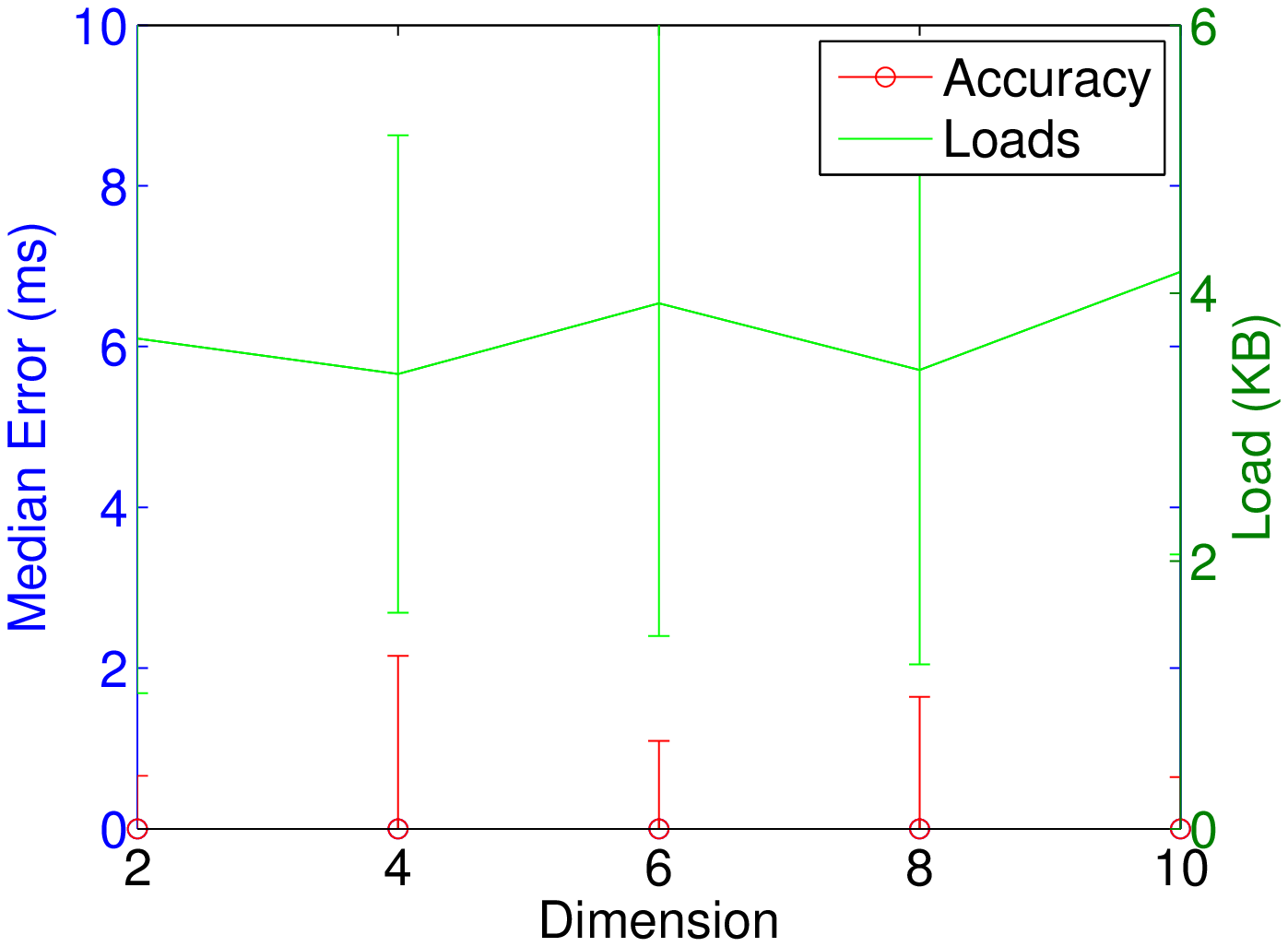}
          \label{fig:B}
         }
     \caption{Coordinate Dimension.}
     \label{fig:coordDimSen}
\end{figure}

\subsubsection{Nodes Per Ring $\Delta$}

Fig.~\ref{fig:NodesPerRing} describes the performance of HybridNN
with increasing upper bounds of nodes per ring. HybridNN achieves
high accuracy event the size of one ring is as small as 5. This is
because HybridNN selects neighbors from broader range $\left[
{0,\rho d} \right]$, where $d$ is the delay from current node to
targets. Besides, the loads of HybridNN grow slowly as the size of
ring increases. As HybridNN utilizes coordinate distances to
select limited number of candidate neighbor.

\begin{figure}[htbp]
     \centering
           \subfigure[DNS1143.]
        {
          \setlength{\epsfxsize}{.44\hsize}
          \epsffile{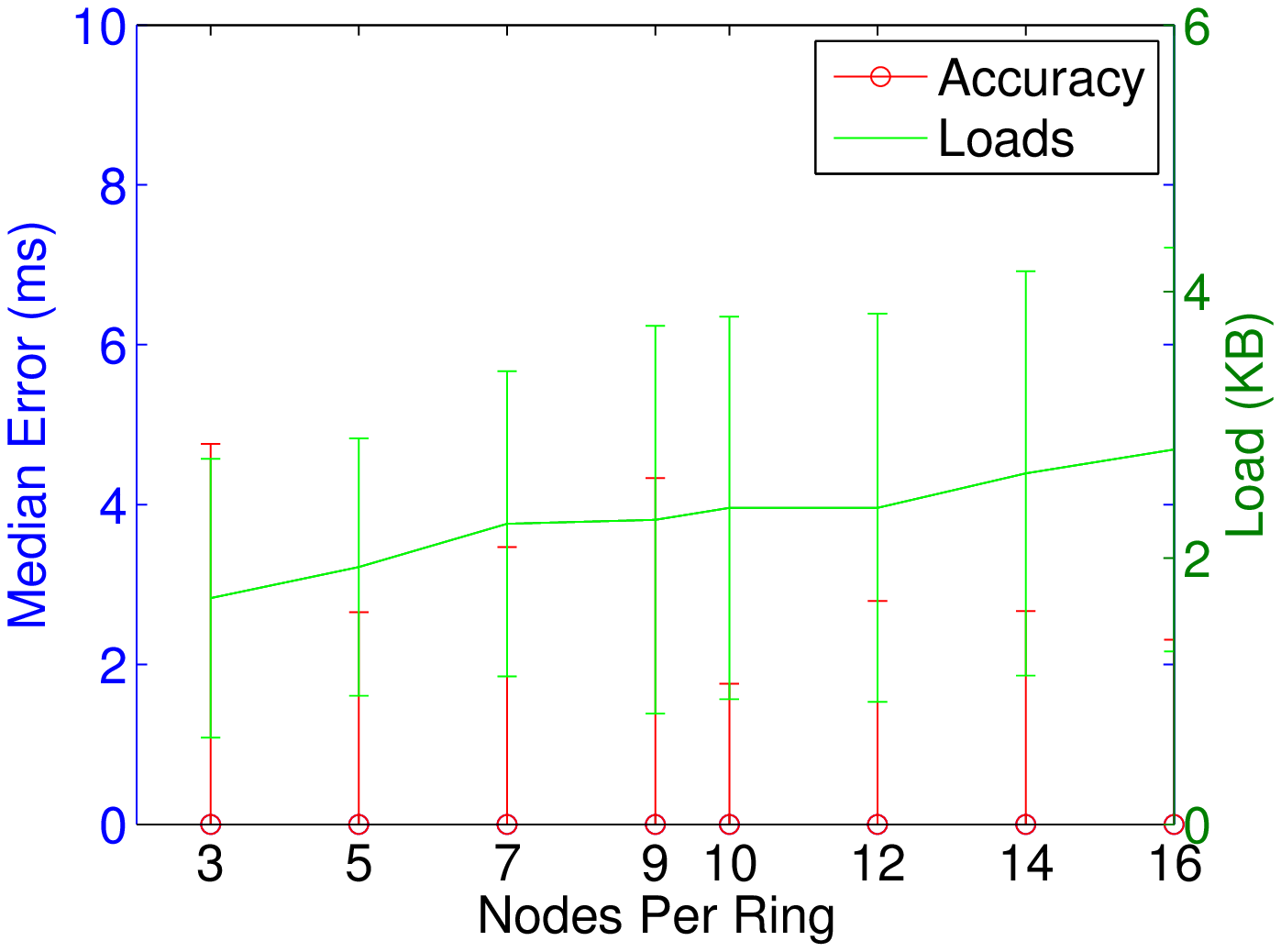}
          \label{fig:B}
         }
               \subfigure[DNS2500.]
        {
          \setlength{\epsfxsize}{.44\hsize}
          \epsffile{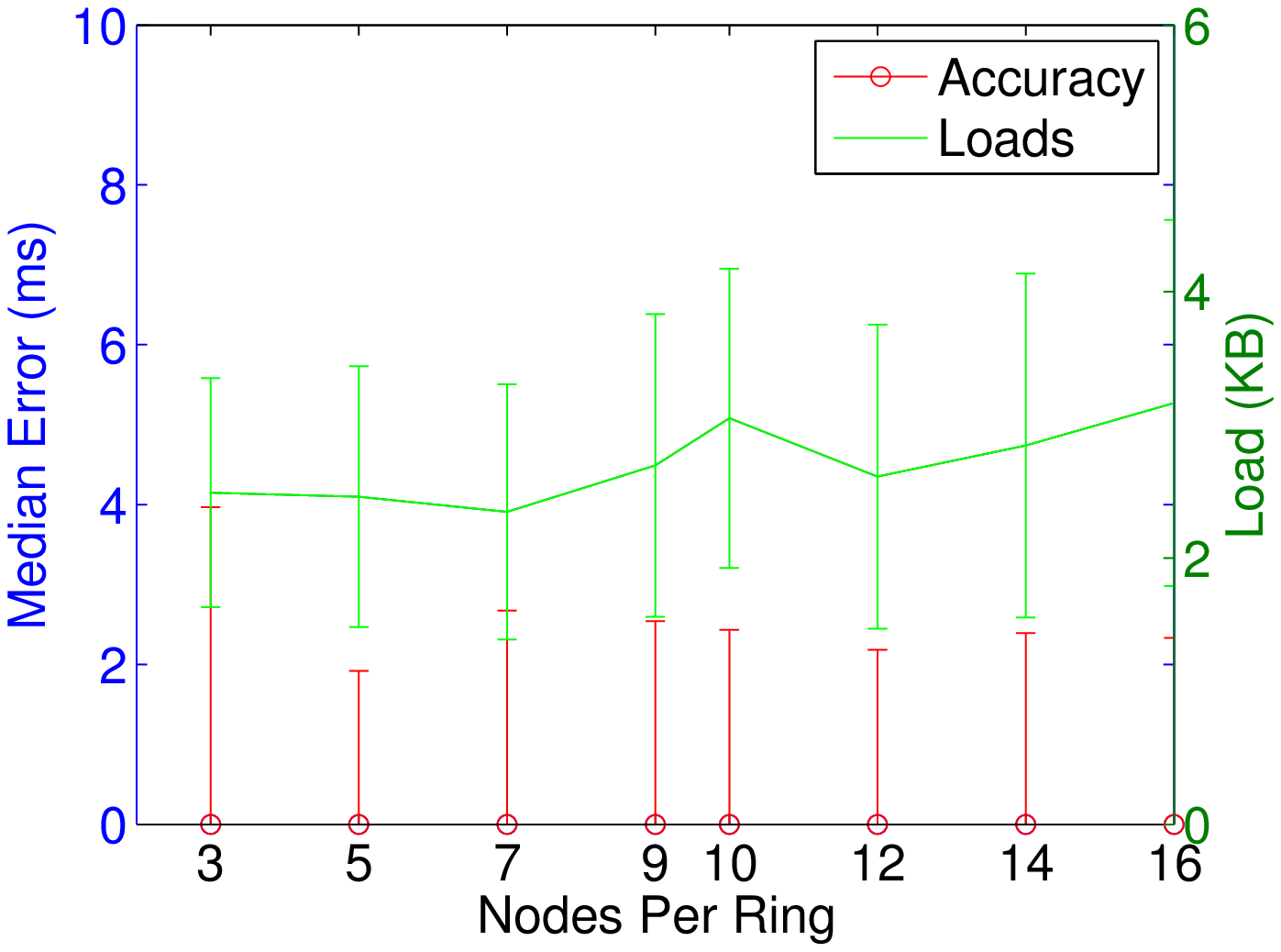}
          \label{fig:B}
         }
               \subfigure[DNS3997.]
        {
          \setlength{\epsfxsize}{.44\hsize}
          \epsffile{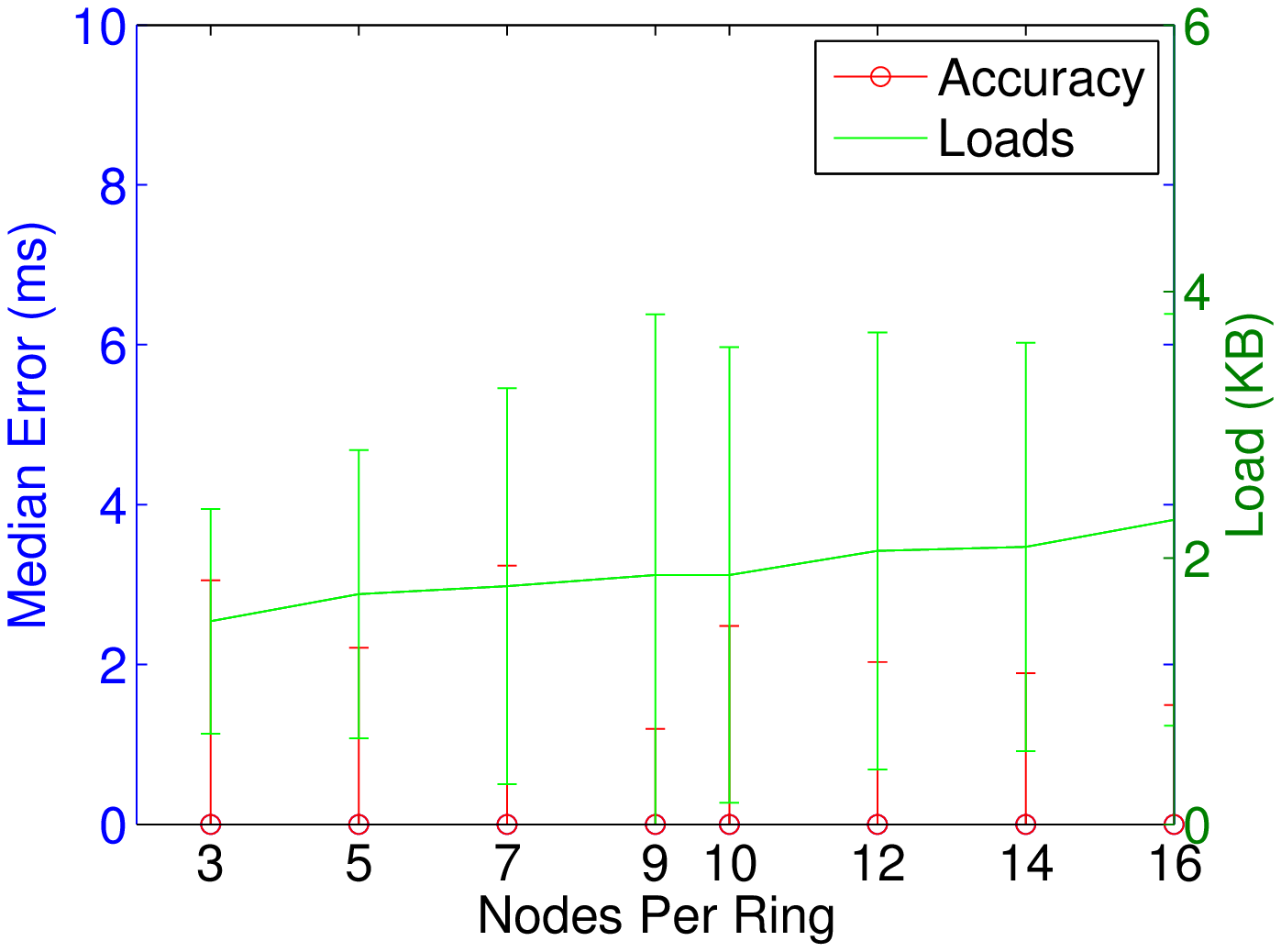}
          \label{fig:B}
         }
               \subfigure[Host479.]
        {
          \setlength{\epsfxsize}{.44\hsize}
          \epsffile{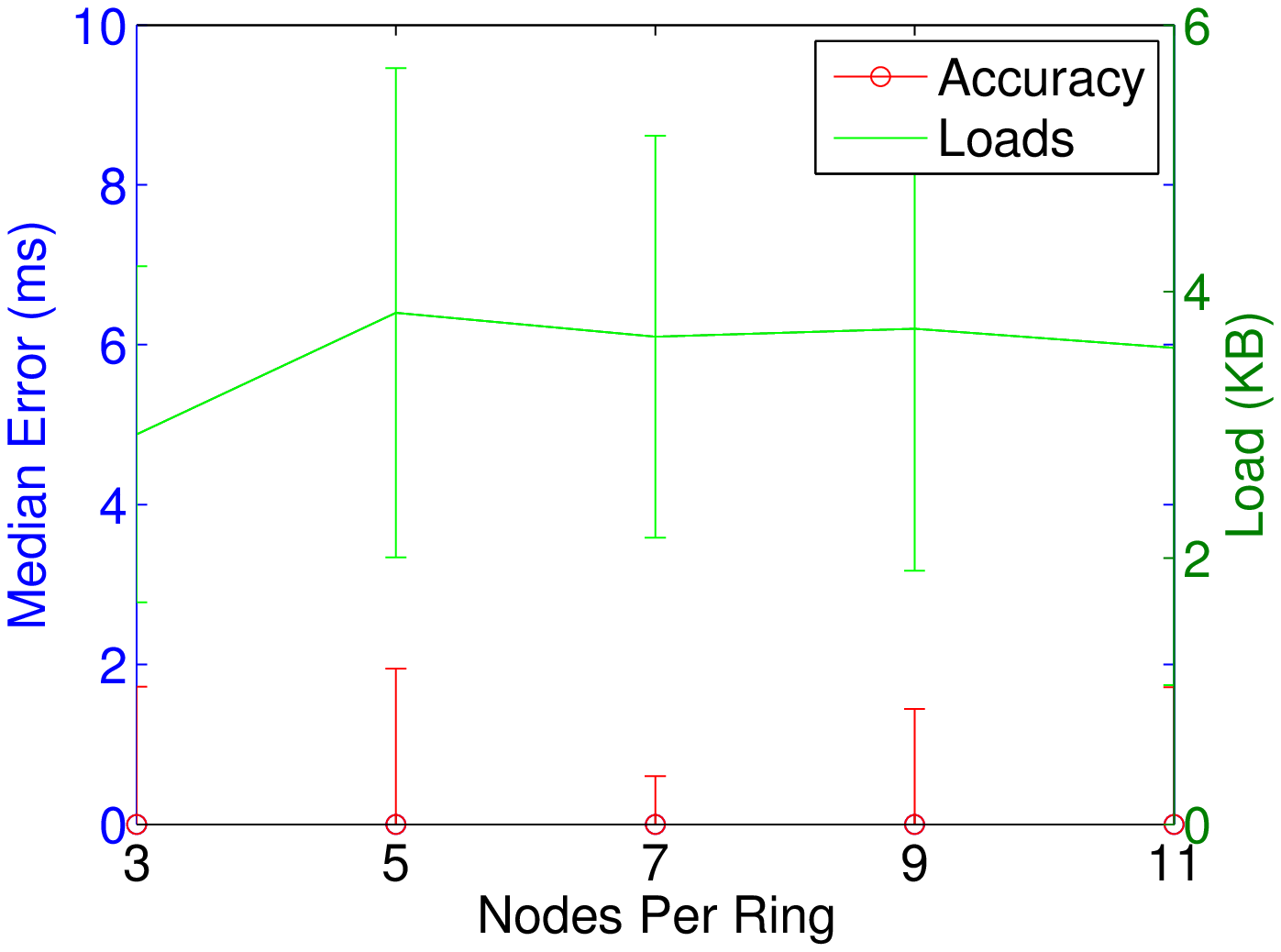}
          \label{fig:B}
         }
     \caption{Nodes Per Ring.}
     \label{fig:NodesPerRing}
\end{figure}

\subsubsection{OverSampled nearest and farthest nodes $K$}

Fig.~\ref{fig:Over-samplednumber} illustrates the performance of
HybridNN as the variation of oversampled number of nearest and
farthest nodes $K$. HybridNN achieves similar accuracy and loads
when the oversampled size $K$ of nearest neighbors and farthest
neighbors. This is because we periodically start the oversampled
process, which can find many nearby or far-away nodes
accumulatively.

\begin{figure}[htbp]
     \centering
         \subfigure[DNS1143.]
        {
          \setlength{\epsfxsize}{.44\hsize}
          \epsffile{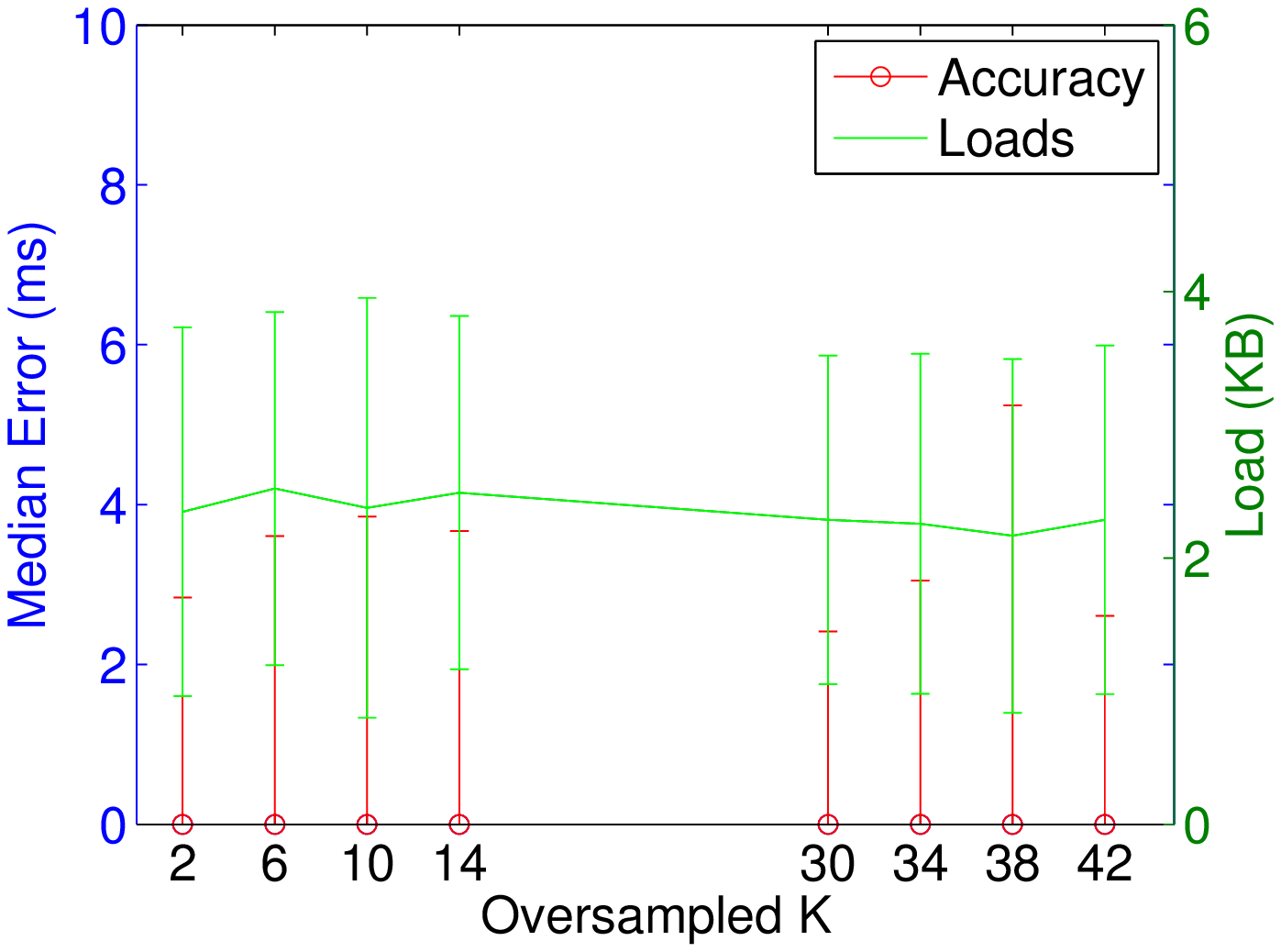}
          \label{fig:B}
         }
               \subfigure[DNS2500.]
        {
          \setlength{\epsfxsize}{.44\hsize}
          \epsffile{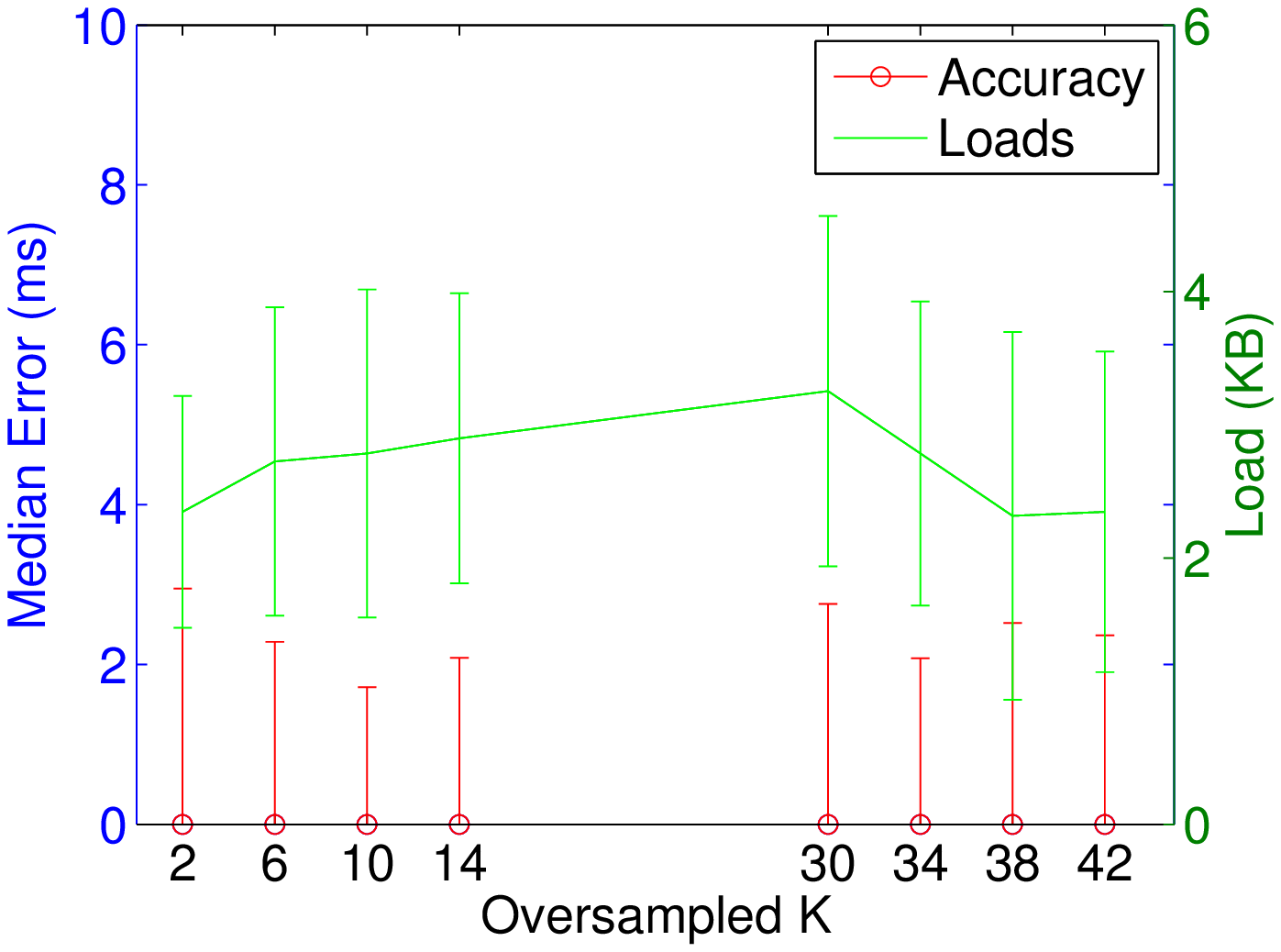}
          \label{fig:B}
         }
               \subfigure[DNS3997.]
        {
          \setlength{\epsfxsize}{.44\hsize}
          \epsffile{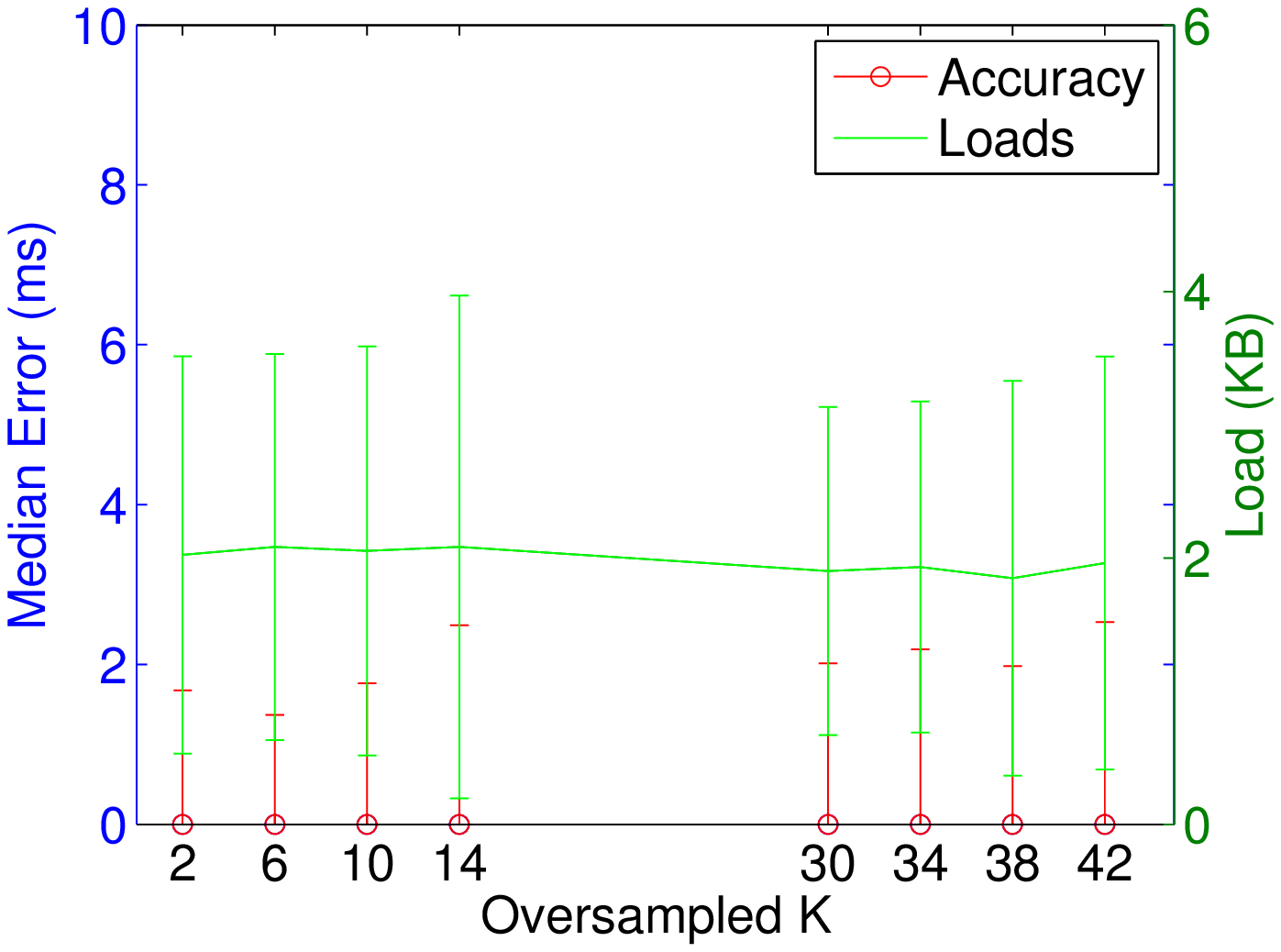}
          \label{fig:B}
         }
               \subfigure[Host479.]
        {
          \setlength{\epsfxsize}{.44\hsize}
          \epsffile{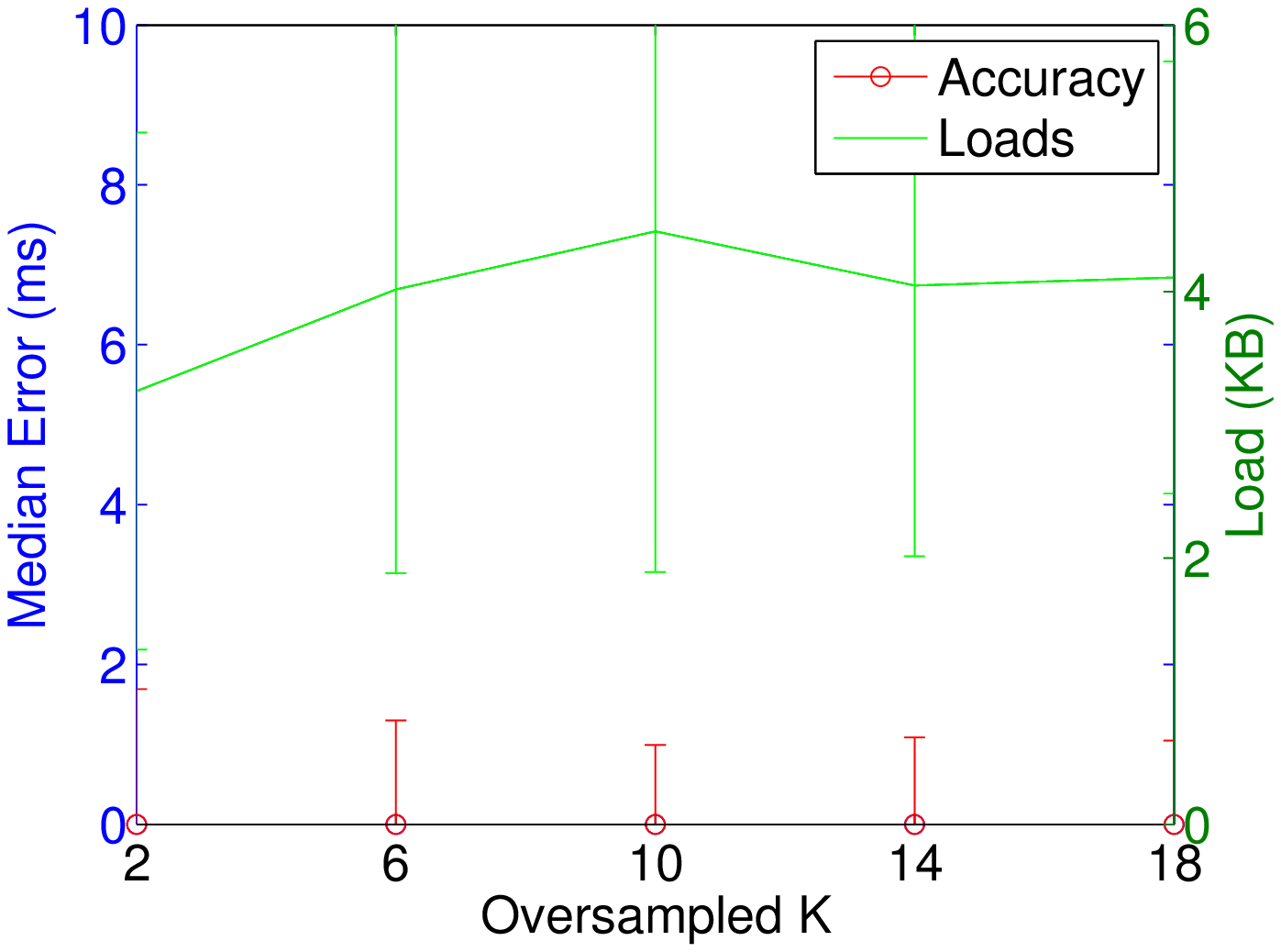}
          \label{fig:B}
         }
     \caption{Over-sampled number of neighbors.}
     \label{fig:Over-samplednumber}
\end{figure}

\subsubsection{Returned Nodes For Next-Hop Probe $m$}

Fig.~\ref{fig:returnedNodes} plots the median errors and loads of
HybridNN with increasing returned nodes for next-hop probes for
HybridNN. For all data sets, HybridNN is accurate when the size of
estimated nearest candidate neighbors for direct probes exceeds 2.
Moreover, the loads of HybridNN increase slowly as the increment
of relaxed probes. This is because we also add neighbors with
higher uncertain coordinates, weakening the increased overhead of
relaxed probes. Besides, the search process typically terminates
at 3 to 5 hops as we found during experiments, therefore the
measurement overhead is  mostly bounded below 3 KB.

\begin{figure}[htbp]
     \centering
     \subfigure[DNS1143.]
        {
          \setlength{\epsfxsize}{.44\hsize}
          \epsffile{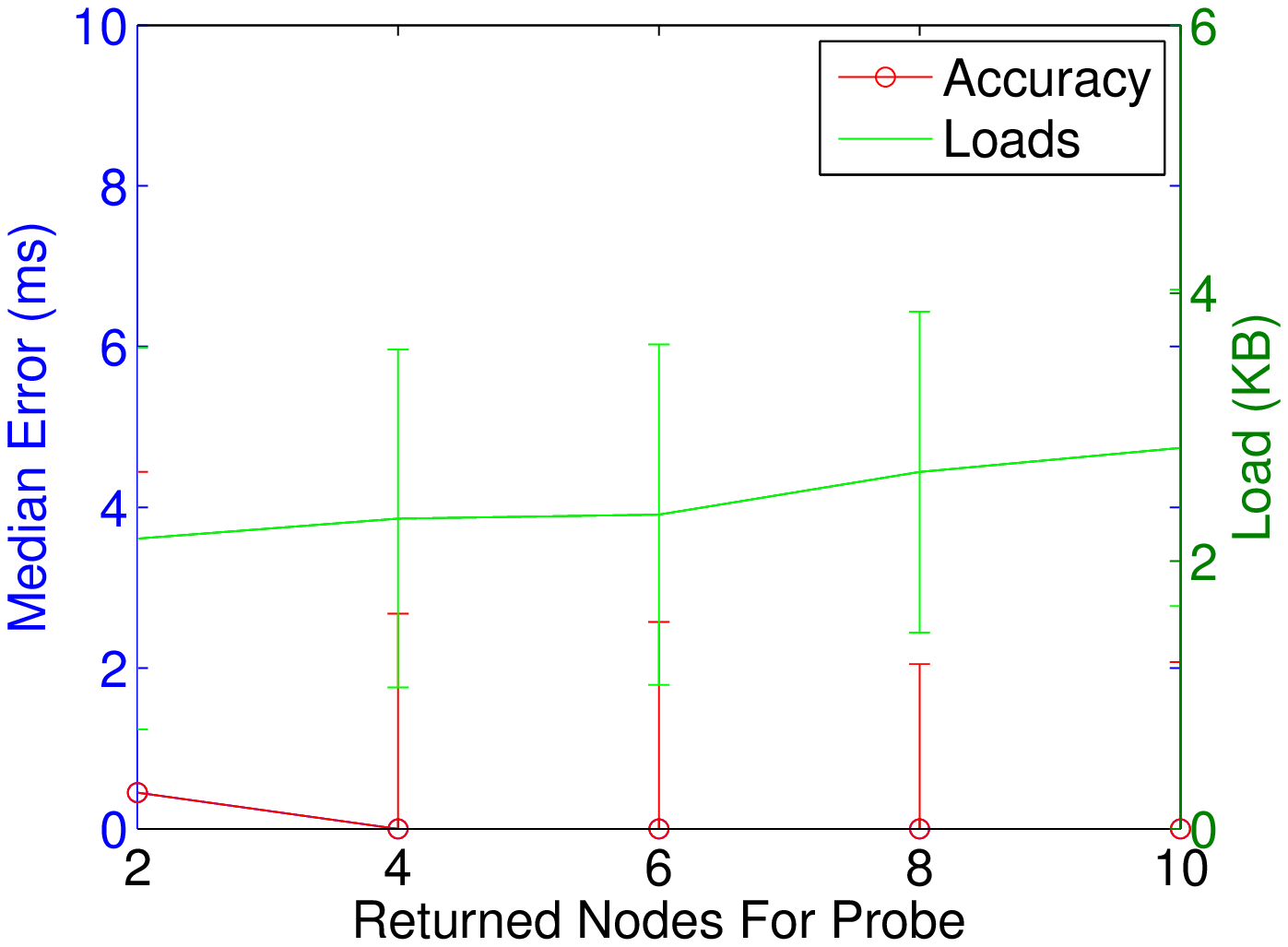}
          \label{fig:B}
         }
               \subfigure[DNS2500.]
        {
          \setlength{\epsfxsize}{.44\hsize}
          \epsffile{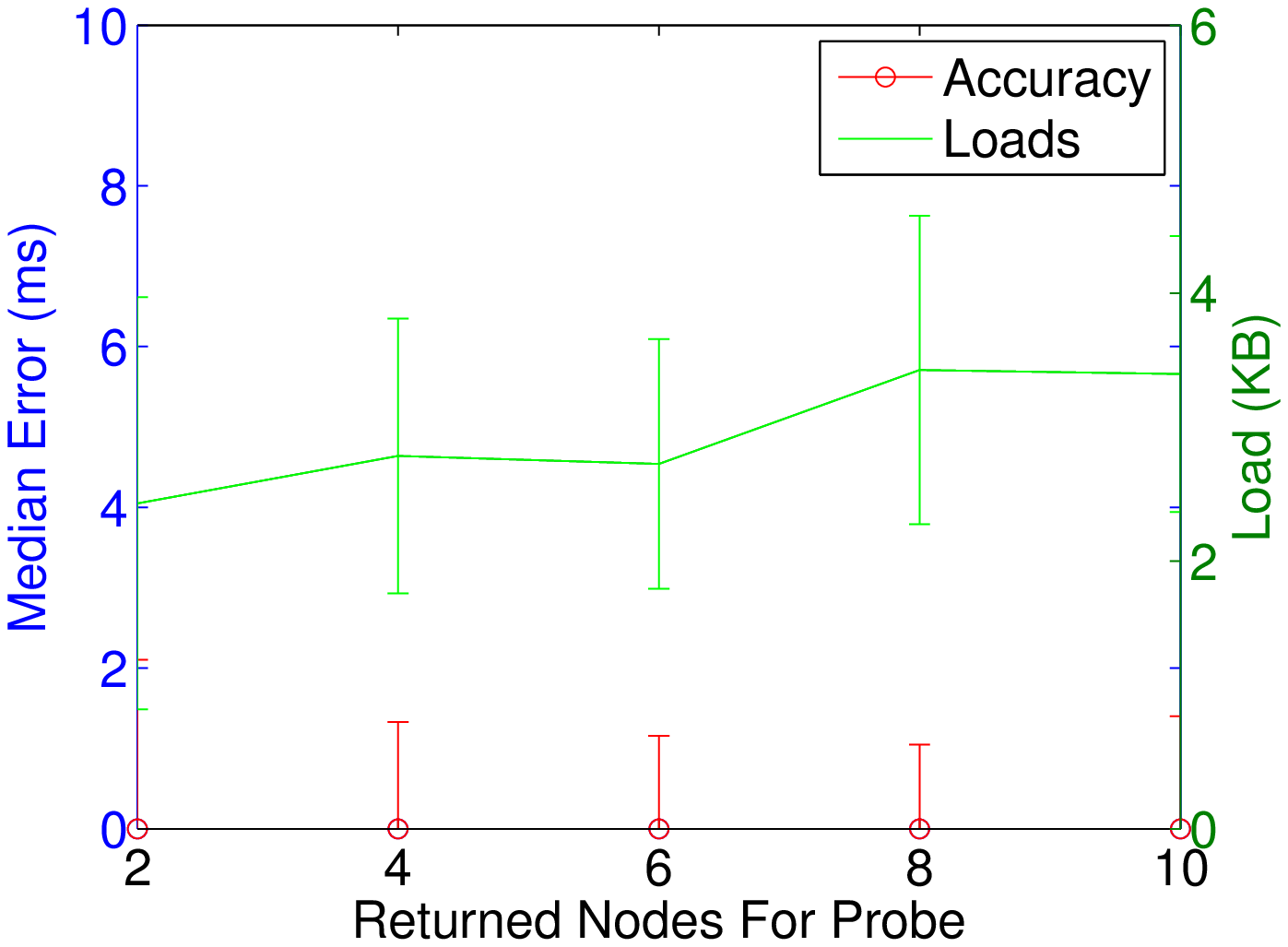}
          \label{fig:B}
         }
               \subfigure[DNS3997.]
        {
          \setlength{\epsfxsize}{.44\hsize}
          \epsffile{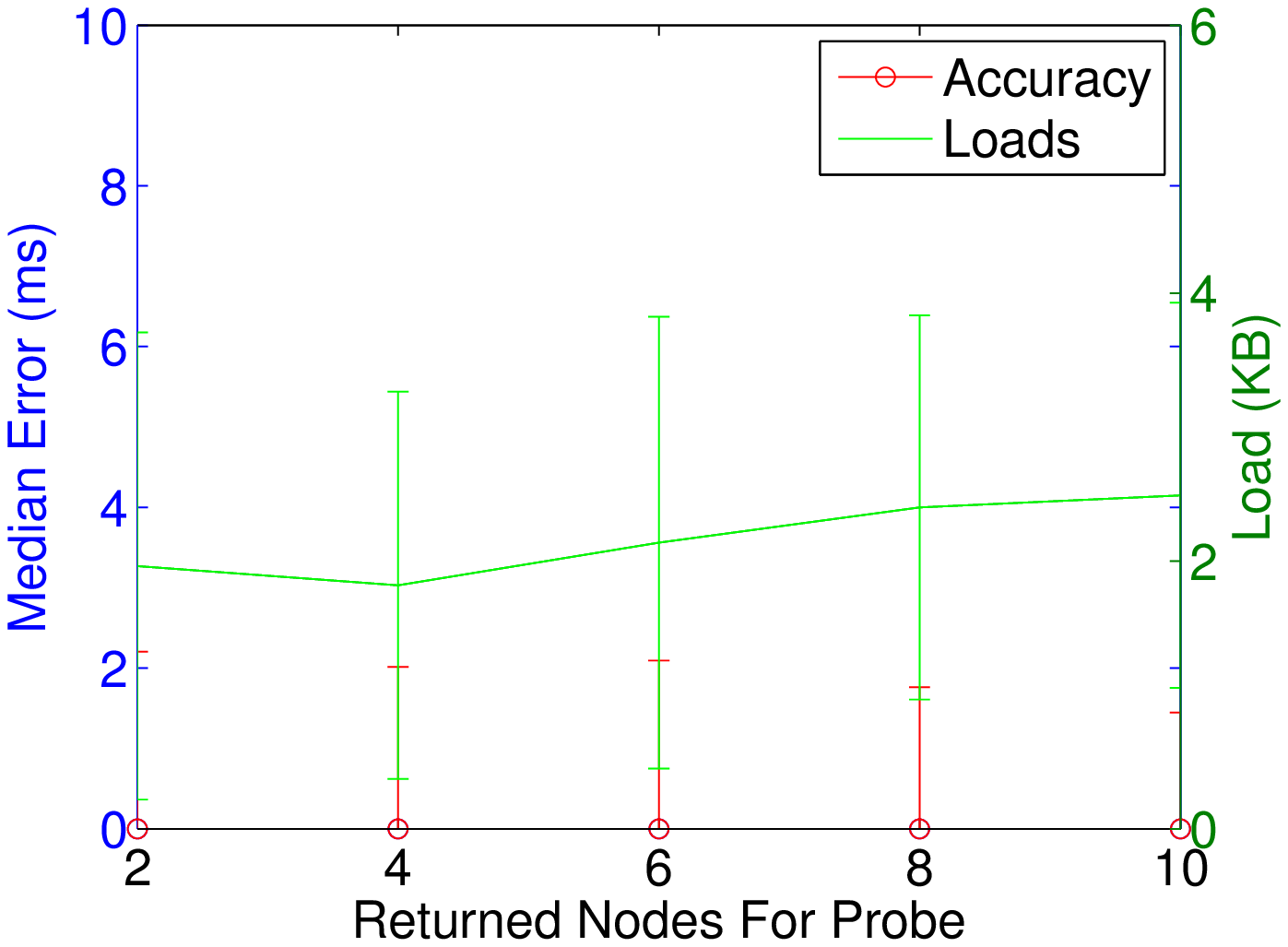}
          \label{fig:B}
         }
               \subfigure[Host479.]
        {
          \setlength{\epsfxsize}{.44\hsize}
          \epsffile{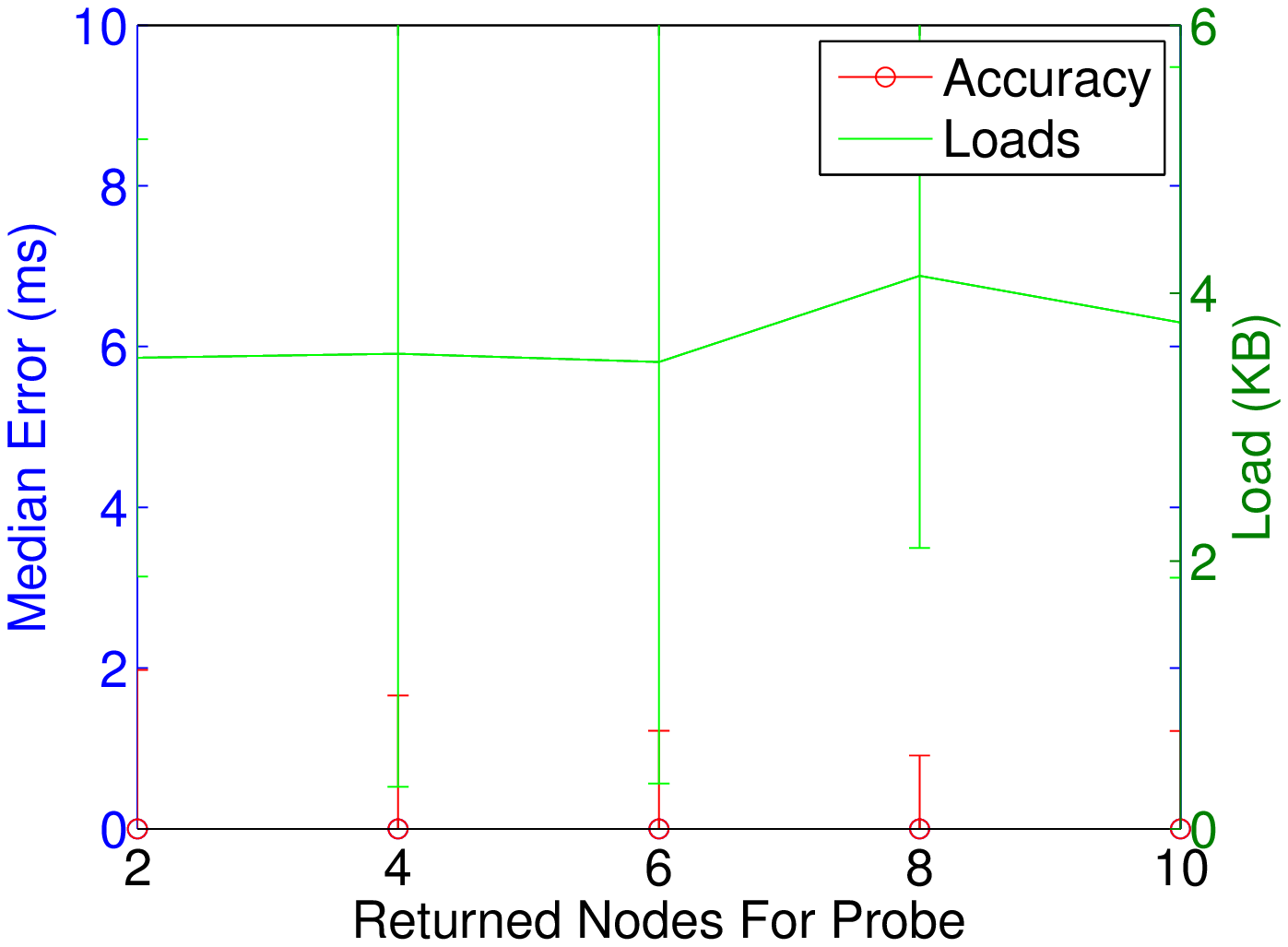}
          \label{fig:B}
         }
     \caption{Returned Nodes For Next-Hop Probes.}
     \label{fig:returnedNodes}
\end{figure}

\section{PlanetLab Experiments}
\label{planetlabExp}

We have implemented a prototype DNNS query system in Java using
the asynchronous communication library. We implemented both
HybridNN and Meridian. The core DNNS logic consists of around
5,000 lines of codes comprising three main modules: (1) prober
module, which uses the kernel-level ping for delay measurements,
to allievate application level perturbations caused by high loads
of PlanetLab nodes; (2) neighborhood management module, which
finds and maintains neighbors on the concentric rings; (3) DNNS
module, which utilizes the HybridNN or Meridian algorithm.

\co{Recall that the completion time for a DNNS query consists of
three components: (i) \textbf{query submission delay}, which is
the period of submitting a DNNS query to a service node by a
client; (ii) \textbf{query handling delay}, which is the period of
locating the nearest service node to the client; (iii)
\textbf{query answering delay}, which is the period of sending the
DNNS result to the client from a service node. Studying the
optimal policy of in-advance DNNS probing is our future work.}

Our objective is to compare the accuracy and efficiency of DNNS
queries with related nearest server location methods using
real-world deployments. To that end, we choose 173 servers
distributed globally on the PlanetLab as the service nodes. Then
we select another 412 servers on the PlanetLab as the target
machines. Our experiments last one week from 05-05-2011 to
12-05-2011.

\co{ , whose locations are depicted in
Fig~\ref{fig:Map533749-CommunityWalk}
  \begin{figure}[tp]
  \leavevmode \centering \setlength{\epsfxsize}{0.8\hsize}
  \epsffile{location.eps}
  \caption{The geographical distributions of selected machines on the PlanetLab.}
  \label{fig:Map533749-CommunityWalk}
\end{figure}}

We compare HybridNN with Meridian and iPlane
\cite{DBLP:conf/nsdi/MadhyasthaKAKV09}. We choose the same
parameter configurations for HybridNN and Meridian as in the
Simulation section (Sec \ref{simSetup}). For iPlane, we query
iPlane to obtain the delays between service nodes and target
machines, then we compute the nearest service node for each target
machine.

Besides, in order to compare the found nearest servers to the
ground-truth nearest servers, we compute the ground-truth nearest
servers using direct probes  (denoted as \emph{Direct}).
Specifically, since pairwise delays between PlanetLab machines
keep varying due to routing dynamics, we first use the median
delay of any node pairs to summarize the long-term delay trend.
Then we select the service node that has the lowest median delay
value to the target.

\co{ we select the service node that has the lowest median delay
value to the target according to the delay measurement traces
collected during the experiment period (denoted as \emph{Direct}).

Furthermore, we record all probes between the service nodes to the
targets to be able to compute the long-term ground-truth nearest
neighbor for each target.

As pairwise delays between PlanetLab machines keep varying due to
routing dynamics, therefore collecting timely all-pair delays from
service nodes to all targets is costly due to measurement costs
and synchronization overhead. To find the ground-truth nearest
neighbor for each target, we select the service node that has the
lowest median delay value to the target according to the delay
measurement traces collected during the experiment period (denoted
as \emph{Direct}). }

\co{Since HybridNN and Meridian both use on demand probing during
DNNS queries, their completion time is prolonged by the probing
delays. However, as discussed in Sec \ref{cacheDiscuss}, such long
periods can be hidden based on in-advance DNNS queries. Currently
we did not implement the in-advance probing module yet.}

\co{ To ensure fair comparisons, for both, HybridNN and Meridian,
each node periodically updates its neighborhoods following an
exponential distribution with expected value of 30 seconds.

The inter-DNNS query time is two minutes.

 , whose locations are depicted in
Fig~\ref{fig:Map533749-CommunityWalk}
  \begin{figure}[tp]
  \leavevmode \centering \setlength{\epsfxsize}{0.8\hsize}
  \epsffile{location.eps}
  \caption{The geographical distributions of selected machines on the PlanetLab.}
  \label{fig:Map533749-CommunityWalk}
\end{figure}
}

\subsection{Accuracy}

First we compare the accuracy of different methods with the
absolute error metric and the relative error metric defined in Sec
\ref{simSetup}. The results are shown in
Fig~\ref{PL-deployment}(a) and (b). HybridNN has significantly
lower absolute errors and relative errors than Meridian. iPlane is
similar with HybridNN, but incurs higher errors. The inaccuracy of
iPlane is caused by the mismatch of the estimated routing paths
and the real-world ones. The inaccuracy of Meridian shows that
Meridian is easily trapped at local minimum far away from the
optimal solutions.

On the other hand, HybridNN and iPlane are much accurate, which
implies that hybridNN can avoid bad local minima in most cases.
Nevertheless, HybridNN and iPlane also have around 3\% of DNNS
queries with relative errors above 10. we find that HybridNN
incurs such high errors occur at the early stage, where nodes do
not have enough neighbors in their concentric rings.

\co{ Specifically, over 95\% of cases from HybridNN and iPlane
have relative errors smaller than one. However, HybridNN incurs
smaller errors than iPlane. On the other hand, Meridian only has
around 30\% of DNNS queries with relative errors below one.
Similarly, HybridNN in over 85\% of cases has absolute errors
below 20 ms, but Meridian only has 10\% of DNNS queries satisfying
such absolute errors. Besides, Meridian also has about 10\% of
DNNS queries with relative errors above 10, indicating that

}


\subsection{Completion Time}

Next, we evaluate the completion time of individual DNNS queries
for HybridNN and Meridian. Empirically, we have found that both
HybridNN and Meridian complete DNNS queries within three search
hops, which is consistent with the simulation results in Fig
\ref{fig:searchHopDist}. However, the overall query time for DNNS
searches depends on not only the number of search hops, but also
the completion time of message exchanges and delay probes.

Fig~\ref{PL-deployment}(d) plots the distributions of query time
of HybridNN and Meridian. Around 85\% of the DNNS queries in
HybridNN are similar with those of Meridian. Therefore, query time
for HybridNN and Meridian are similar in most cases. However,
around 20\% of the queries take much large time to answer in
Meridian, and 10\% have query time larger than 15 seconds, while
the hybrid measurement approach of HybridNN can avoid large query
latencies.

\co{

Fig \ref{PL-deployment}(c) shows the distribution of search hops
of Meridian and HybridNN. Around 20\% of Meridian searches
complete in one hop, and the rest of Meridian searches complete in
two hops. On the other hand, around 52\% and 45\% of HybridNN
searches complete in two and three hops, and nearly 3\% searches
require one hop. Therefore, HybridNN and Meridian complete in very
few search hops, consistent with the simulation results in Fig
\ref{fig:searchHopDist}.

 Nevertheless, the overall query time for DNNS searches
depends on not only the number of search hops, but also the
completion time of message exchanges and delay probes.
Fig~\ref{PL-deployment}(d) plots the distributions of query time
of HybridNN and Meridian. Around 85\% of the DNNS queries in
HybridNN are similar with those of Meridian. Therefore, query time
for HybridNN and Meridian are similar in most cases. However,
 around 20\% of the queries take much large time to answer in Meridian, and
 10\% have query time larger than 15 seconds, while the hybrid
 measurement approach of HybridNN can avoid large query
 latencies.}

\subsection{Query Overhead}
Next, to quantify the bandwidth overhead of the DNNS queries of
HybridNN and Meridian, we define the load of a DNNS query as the
total size of the transmitted packets during the DNNS process. We
plot the CDFs of the loads for HybridNN and Meridian in
Fig~\ref{PL-deployment}(d). The load of HybridNN is significantly
lower than that of Meridian. In more than 95\%  of the cases the
load of HybridNN is less than 2KBytes, while in more than 50\% of
the cases the load of Meridian is more than 10 KBytes, which is
due to the large size of the candidate neighbor set for DNNS
queries. Therefore, the delay estimation of HybridNN substantially
reduces the measurement overhead.

\subsection{Control Overhead}

To measure the efficiency of HybridNN and Meridian. We collected
the bandwidth overhead of the neighborhood management in HybridNN
and Meridian for each service node every two minutes, as shown in
Fig~\ref{PL-deployment}(e). The maintenance overhead of Meridian
includes both the gossip process and the ring maintenance costs,
while the maintenance of HybridNN includes the gossip messages,
$K$ nearest neighbor search messages and the $K$ farthest neighbor
search messages. The average maintenance overhead of HybridNN is 2
KBytes per minute, and for Meridian is over 20 KBytes  per minute.
Since the time interval of ring maintenance for both HybridNN and
Meridian is identical, the all-pair probes between nodes in the
same ring is the main cause of the control overhead in Meridian.
On the other hand, as HybridNN uses the coordinate distances to
update the rings, it does not need to do all-pair probes between
nodes in a ring.

\begin{figure*}[tp]
     \centering
               \subfigure[Absolute Error.]
        {
          \setlength{\epsfxsize}{.17\hsize}
          \epsffile{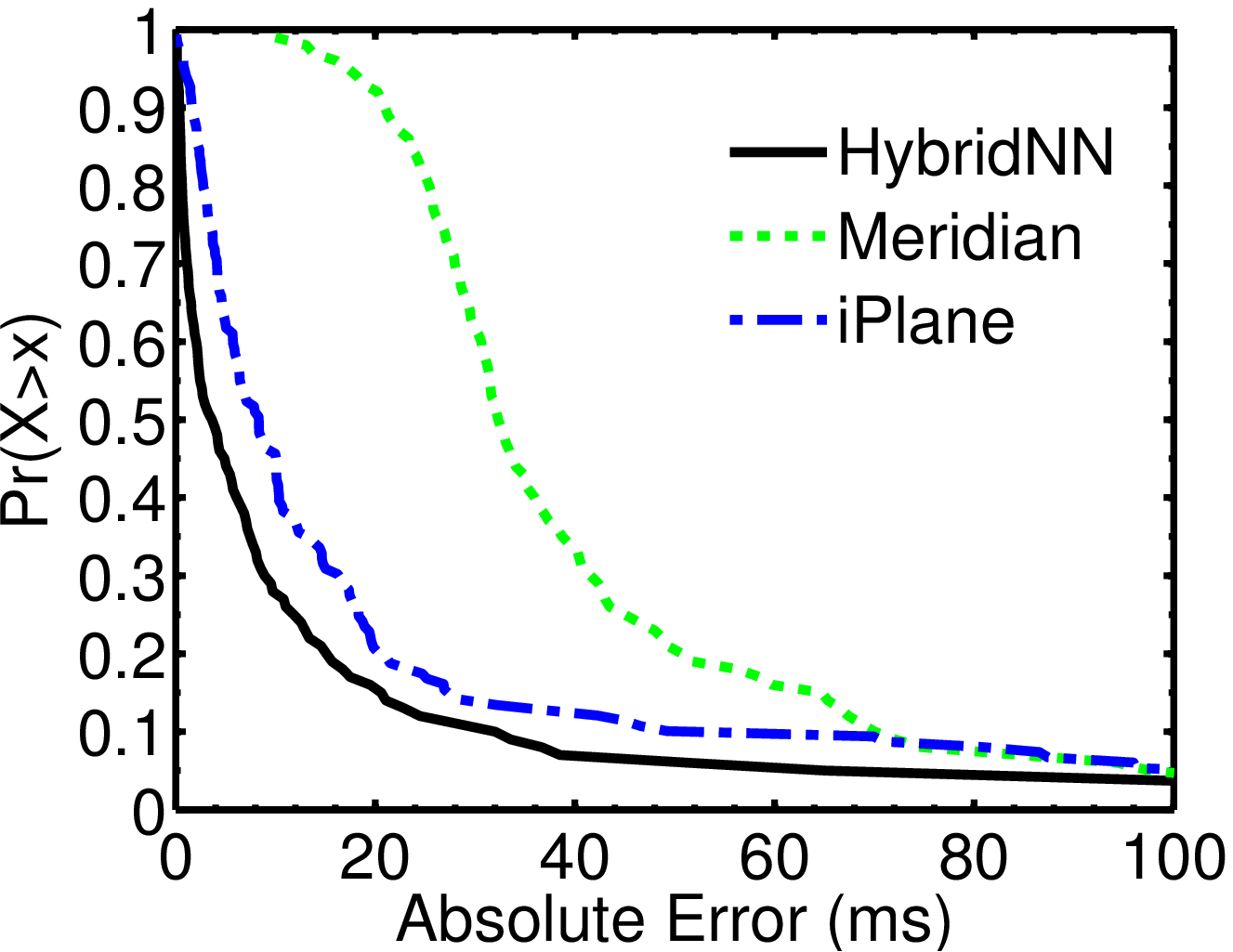}
         }
         \label{fig:1-new-tmpclosest_PL}%
         \subfigure[Relative Error.]
        {
          \setlength{\epsfxsize}{.17\hsize}
          \epsffile{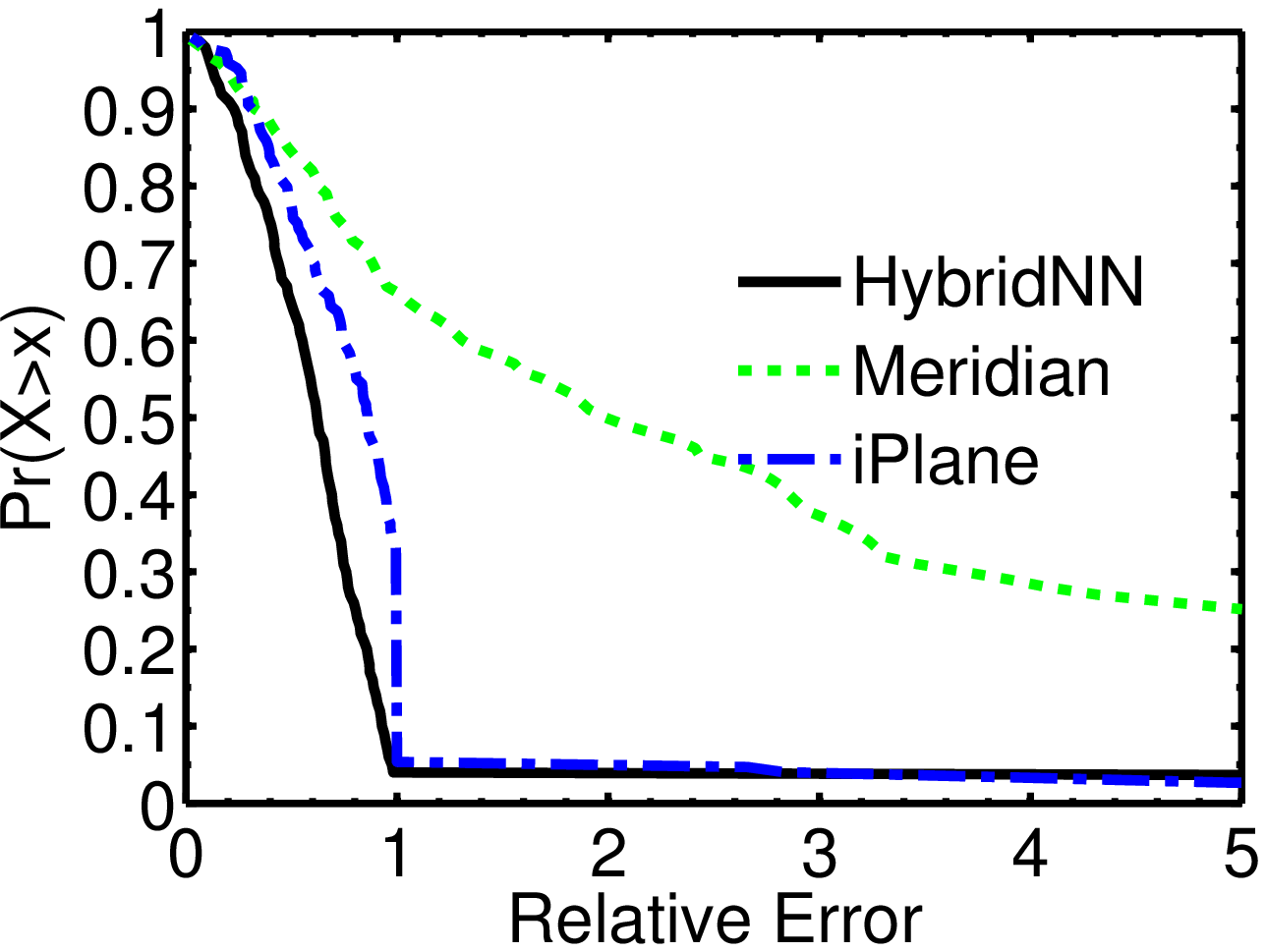}
         }
          \subfigure[Query time.]
        {
          \setlength{\epsfxsize}{.17\hsize}
          \epsffile{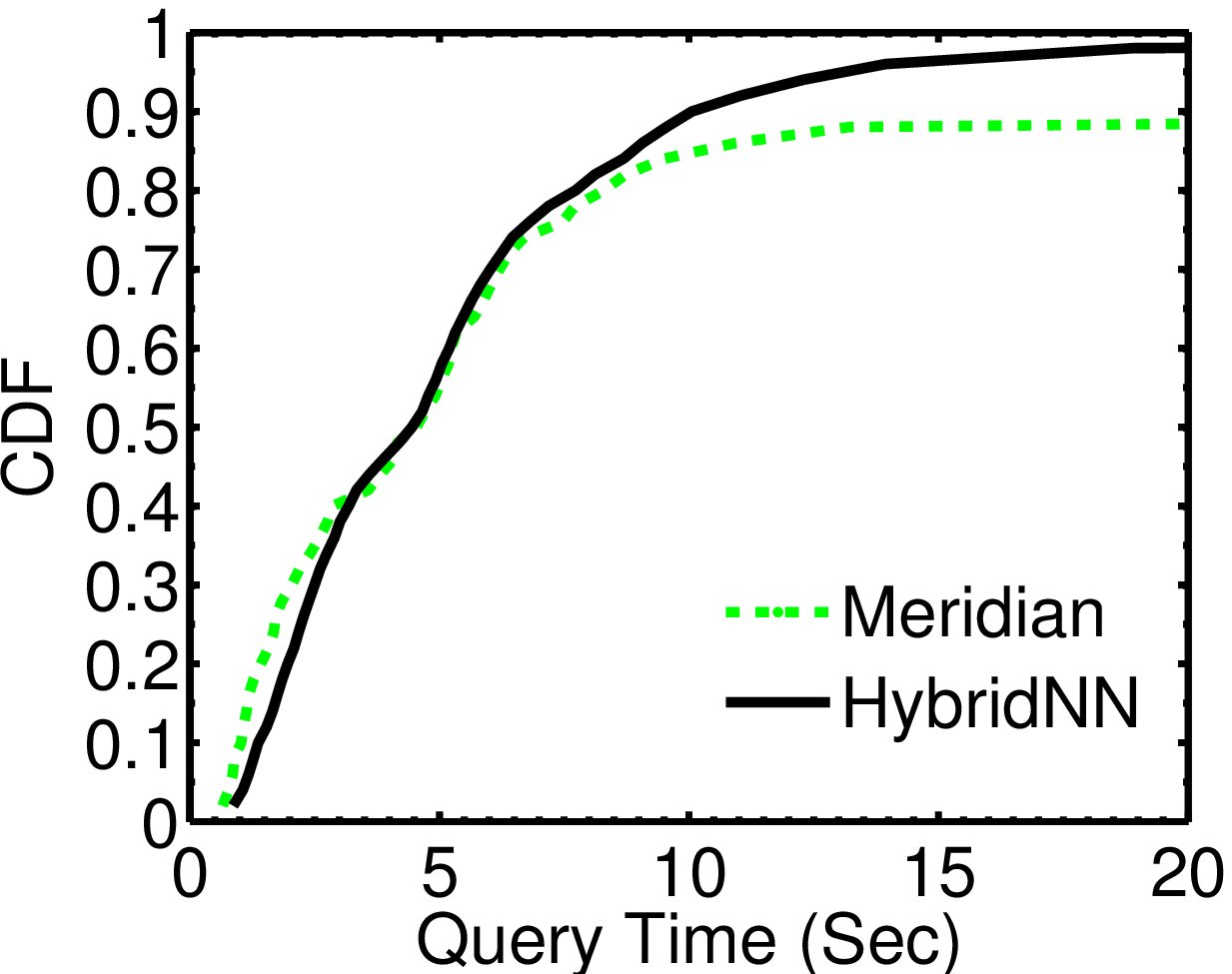}
         }
\label{fig:1_queryTime_PL}%
               \subfigure[Query load.]
        {
          \setlength{\epsfxsize}{.17\hsize}
          \epsffile{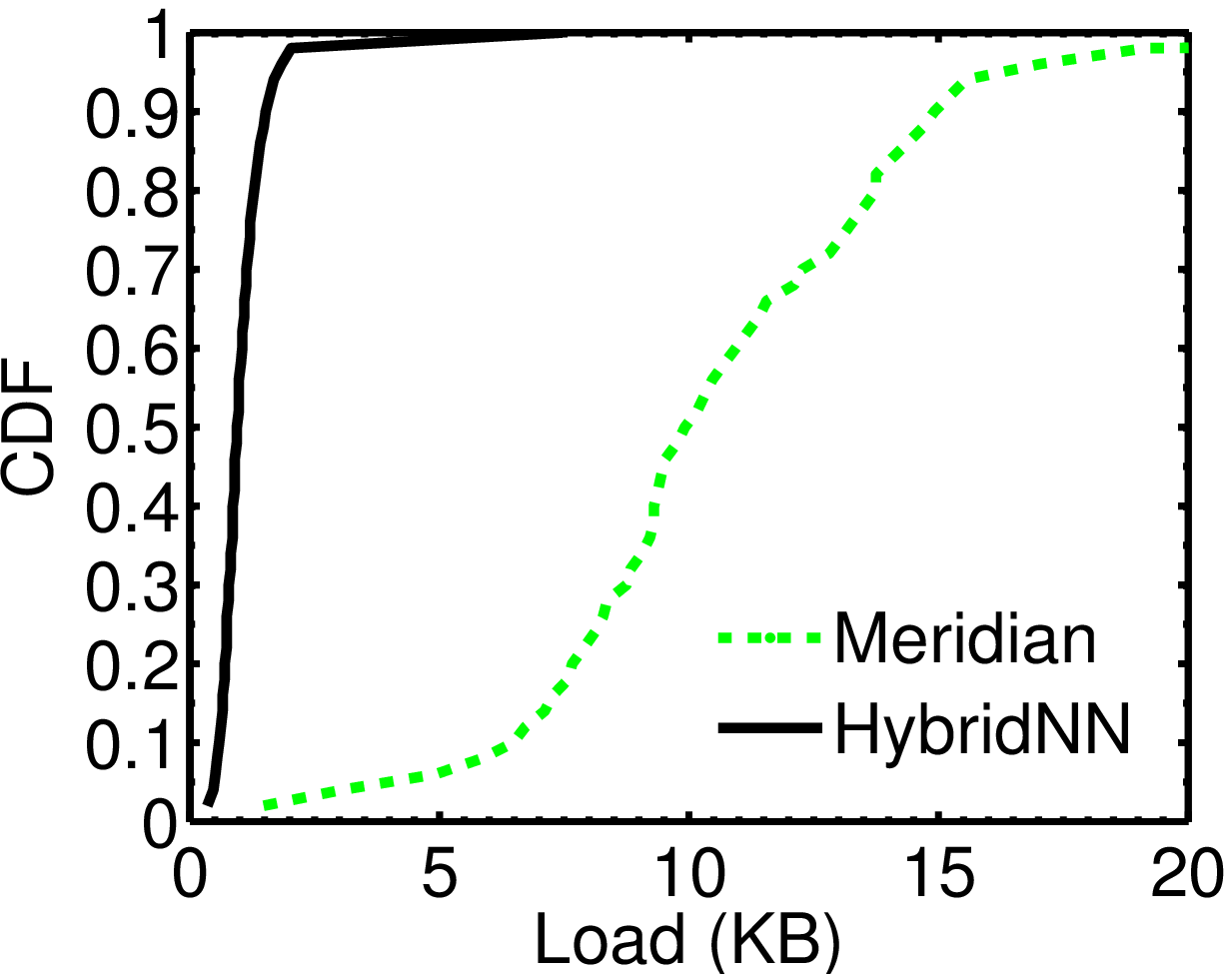}
         }
\label{fig:1_loads_PL}%
         \subfigure[Control overhead.]
        {
          \setlength{\epsfxsize}{.17\hsize}
          \epsffile{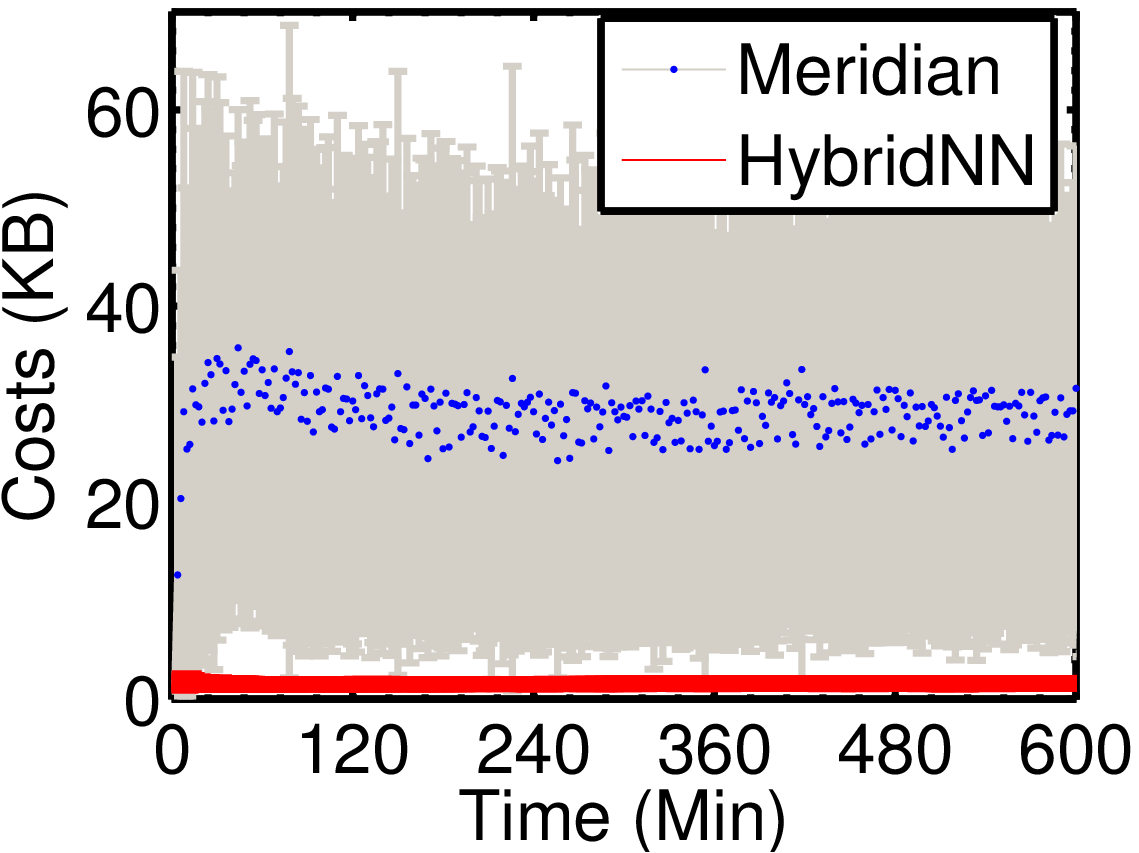}
         }
\label{fig:averagedCost_PL}%
     \caption{Performance comparison on the PlanetLab.}
 \label{PL-deployment}
\end{figure*}

\co{

    \subfigure[Search Hops.]
        {
          \setlength{\epsfxsize}{.17\hsize}
          \epsffile{figures/PLNew/SearchHopsoutput.eps}
         }
}



\section{Conclusion and Future Work}

We have addressed the problem of designing an accurate and
efficient DNNS algorithm in a comprehensive way. We first
formulate the DNNS problem to account for both symmetric and
asymmetric delay metrics for latency optimizations. Given the
generalized delay metrics, we proposed to use the relaxed
inframetric for modelling the delay space as a foundation for
designing new DNNS algorithms with strong theoretical guarantees
concerning search overhead and accuracy of the search results.

Next we apply all the insights gained to design a new DNNS
algorithm called HybrirdNN. HybridNN locates nearest neighbors for
any target using low bandwidth costs. For locating closer server
to any target, HybridNN maximizes the diversity in the neighbor
set, by discovering neighbors within each delay range through a
light-weight neighbor sampling process. Next, in order to reduce
the measurement costs of locating closer servers, HybridNN
combines network coordinate based delay estimation and direct
probes for fast and efficient nearest neighbor determination.
Although the symmetric coordinate distances may deviate from the
asymmetric delays, HybridNN is able to locate the nearest neighbor
to the target at each search step, since we use direct probes to
replace erroneous delay estimations. Finally, HybridNN terminates
the search process conservatively in order to obtain better
approximations of nearest neighbors. We confirmed the efficiency
and effectiveness of HybridNN with extensive simulation and a
prototype deployment on the PlanetLab. HybridNN can locate
approximately closest neighbors quickly with low measurement
costs.

As future work, we plan to continue two lines of research. First,
currently we use the revised Vivaldi to estimate delays, which
mismatches the asymmetric delay metric due to the symmetry of the
coordinate distances. We plan to extend Vivaldi to asymmetric
delay metrics. Second, we plan to study in-advance DNNS probing in
order to hide the waiting time of on-demand DNNS queries for more
practical latency-optimizations.

\co{

integrate HybridNN into existing latency sensitive applications
such as VoIP or Peer-to-Peer streaming for latency optimizations

we can see that although the search process for HybridNN
terminates in 3 hops, the query time may last more than 10
seconds. This is partially because our Java implementation incurs
high waiting delays for message handling. We plan to optimize the
Java implementation. Third,

We first analyzed different delay data sets and identified several
features that make it challenging for existing DNNS algorithms to
correctly determine the closest service node. Given these
features, we proposed to use the relaxed inframetric for modelling
the delay space as a foundation for designing new DNNS algorithms
with strong theoretical guarantees concerning search overhead and
accuracy of the search results. Finally, we apply all the insights
gained to design a new DNNS algorithm called HybrirdNN. Our
contributions can be stated as follows:
\begin{itemize}
\item We systematically evaluated the negative impact on existing
DNNS schemes like Meridian by the unique features of the delay
space, such as the skewed and multimodal delay distribution, the
clustering phenomena, and the TIV. \item We proposed a relaxed
inframetric model that does not assume the triangle inequality and
the delay symmetry, and proved the feasibility of designing
accurate DNNS algorithms based on this model. Furthermore, we
empirically found that the inframetric parameters, the growth
dimension is low on average, which indicates that we can use
modest number of neighbors for completing the DNNS requests. \item
Based on the low dimensional inframetric model of the network
delay space, we designed and implemented a DNNS algorithm named
HybridNN. HybridNN adopts a maximum-diversity-oriented
neighborhood management scheme to discover neighbors within each
delay range, by adapting to the skewed distributions of delays.
HybridNN
 combines network coordinate based delay estimation and direct
probes for fast and efficient nearest neighbor determination.
HybridNN terminates the search process conservatively in order to
adapt to the near-plateau phenomenon caused by the clustering of
delays.
 \item We confirmed the efficiency and
effectiveness of HybridNN with extensive simulation and a
prototype deployment on the PlanetLab. HybridNN can locate
approximately closest neighbors quickly with low measurement
costs.
\end{itemize}

As future work, we plan to continue two lines of research. First,
we can see that the query time for HybridNN may last more than 10
seconds. This is partially because our Java implementation incurs
high waiting delays for message handling. We plan to optimize the
Java implementation. Second, we plan to integrate HybridNN into
existing latency sensitive applications such as VoIP or
Peer-to-Peer streaming for latency optimizations. }

\bibliographystyle{IEEEtran}
\bibliography{IEEEabrv,ProblemStateRevisedFull}  


\appendix

\section{Proof of Lemma \ref{grid_growth}}

\textit{Lemma \ref{grid_growth}: Given a $\rho$-inframetric with
growth ${\gamma _{g}}\geq1$, for any $x\geq\rho$, $r>0$ and any
node $P$, the volume of a ball $B_P(r)$ is at most $x^{\alpha}$
smaller than that of the ball $B_P(xr)$, where ${\log _\rho
}{\gamma _{g}}  \le \alpha  \le 2{\log _\rho }{\gamma _{g}}$.}

\begin{proof}
First, according to the definition of the growth, it follows:
\[\left| {{B_P}\left( {xr} \right)} \right| \le {\gamma _{g}} \left|
{{B_P}\left( {\frac{x}{\rho }r} \right)} \right|\] Then, by
recursively calling $\left\lceil {{{\log }_\rho }x} \right\rceil $
times the growth definition, until $\frac{x}{{{\rho ^{\left\lceil
{{{\log }_\rho }x} \right\rceil }}}} < 1$, then
\[\begin{array}{l}
 \left| {{B_P}\left( {xr} \right)} \right| \le {{\gamma _{g}} ^{\left\lceil {{{\log }_\rho }x} \right\rceil }}\left| {{B_P}\left( r \right)} \right| = {x^{{{\log }_x}{{\gamma _{g}} ^{\left\lceil {{{\log }_\rho }x} \right\rceil }}}}\left| {{B_P}\left( r \right)} \right| \\
  = {x^\alpha }\left| {{B_P}\left( r \right)} \right|,\alpha  = {\log _x}{\gamma _{g}}  \times \left\lceil {{{\log }_\rho }x} \right\rceil  \\
 \end{array}\]
Therefore, by the definition of the ceiling function, we can
calculate the lower bound of $\alpha$ as:
\[
\alpha  \ge {\log _x}{\gamma _{g}}  \times {\log _\rho }x = {\log
_\rho }{\gamma _{g}}
\]
On the other hand, due to $x \ge \rho$, ${\gamma _{g}} >1$, we get
\[{\log _\rho }{\gamma _{g}}  = \frac{{\log {\gamma _{g}} }}{{\log \rho }} \geq \frac{{\log {\gamma _{g}} }}{{\log x}} = {\log _x}{\gamma _{g}} \]
thus we can compute the upper bound of $\alpha$ as:
\[\begin{array}{l}
 \alpha  \le {\log _x}{\gamma _{g}}  \times \left( {{{\log }_\rho }x + 1} \right) \\
  = {\log _\rho }{\gamma _{g}}  + {\log _x}{\gamma _{g}}  \\
  \le {\log _\rho }{\gamma _{g}}  + {\log _\rho }{\gamma _{g}}  \\
  = 2{\log _\rho }{\gamma _{g}}  \\
 \end{array}\]
this concludes the proof.
\end{proof}

\section{Proof of Lemma \ref{sandwich}}

\textit{Lemma \ref{sandwich}: (Sandwich lemma) For any pair of
node $p$ and $q$, and $d_{pq} \leq r$, then
\[{B_q}\left( r \right) \subseteq {B_p}\left( {\rho r} \right) \subseteq {B_q}\left( {{\rho ^2}r} \right)\]}

\begin{proof}
(1)For any node $i$ satisfying ${d_{qi}} \le r$, i.e., $i \in
{B_q}\left( r \right)$, by the definition of the inframetric
model, ${d_{pi}} \le \rho \max \left\{ {{d_{pq}},{d_{qi}}}
\right\} \le \rho r$,thus $i \in {B_p}\left( {\rho r} \right)$,
that is,
\[{B_q}\left( r \right) \subseteq {B_p}\left( {\rho r} \right)\]

(2) For any node $j$ satisfying $j \in {B_p}\left( {\rho r}
\right)$, by the definition of the inframetric model, it follows
\[
{d_{qj}} \le \rho \left\{ {{d_{pq}},{d_{pj}}} \right\} \le {\rho
^2}r
\]

Summing up (1) and (2) conclude the proof.
\end{proof}

\section{Proof of Theorem  \ref{thm:SamplingGrowth}}

\textit{Theorem  \ref{thm:SamplingGrowth}: (Sampling efficiency in
the growth dimension) For a $\rho$-inframetric model with growth
${\gamma _g} \geq 1$, for a service node $P$, and a DNNS target
$T$ satisfying ${d_{PT}} \le r$, when selecting $3{\left(
{\frac{{{\rho ^2}}}{\beta }} \right)^\alpha }$ nodes uniformly at
random from ${B_P}\left( {\rho r} \right)$ with replacement, with
probability of at least 95\%, one of these nodes will lie in
${B_T}\left( {\beta r} \right)$, where ${\log _\rho }{\gamma _g}
\le \alpha  \le 2{\log _\rho }{\gamma _g}$ and $\beta<1$.}

\begin{proof}
since ${B_T}\left( {\beta r} \right) \subset {B_T}\left( r \right)
\subseteq {B_P}\left( {\rho r} \right)$  by the sandwich lemma
\ref{sandwich}, all nodes covered by ${B_T}\left( {\beta r}
\right)$ are also covered by ${B_P}\left( {\rho r} \right)$.
Therefore, we only need to sample enough nodes in ${B_P}\left(
{\rho r} \right)$ in order to sample a node located in
${B_T}\left( {\beta r} \right)$.

Furthermore, for the pair of nodes $P$ and $T$ satisfying
${d_{PT}} \le r$, it follows
\[\left| {{B_P}\left( {\rho r} \right)} \right| \le \left| {{B_T}\left( {{\rho ^2}r} \right)} \right| = \left| {{B_T}\left( {\frac{{{\rho ^2}}}{\beta }\beta r} \right)} \right|\]
Since  we know $\rho>1$, then $\frac{{{\rho ^2}}}{\beta } > {\rho
^2} > \rho$, therefore the preconditions of lemma
\ref{grid_growth} hold, by lemma \ref{grid_growth}, we can show
the relation between the ball ${{B_P}\left( {\rho r} \right)}$ and
the ball ${B_T}\left( {\beta r} \right)$ where $\beta <1$,
\[\left| {{B_P}\left( {\rho r} \right)} \right| \le \left| {{B_T}\left( {\frac{{{\rho ^2}}}{\beta }\beta r} \right)} \right| \le {\left( {\frac{{{\rho ^2}}}{\beta }} \right)^\alpha }\left| {{B_T}\left( {\beta r} \right)} \right|\]
where ${\log _\rho }{\gamma _g}  \le \alpha  \le 2{\log _\rho
}{\gamma _g}$. Therefore, the probability of uniformly sampling a
node from ${{B_P}\left( {\rho r} \right)}$ which lies in the ball
${B_T}\left( {\beta r} \right)$ is:
\[\frac{{\left| {{B_T}\left( {\beta r} \right)} \right|}}{{\left| {{B_P}\left( {\rho r} \right)} \right|}} \ge \frac{{\left| {{B_T}\left( {\beta r} \right)} \right|}}{{{{\left( {\frac{{{\rho ^2}}}{\beta }} \right)}^\alpha }\left| {{B_T}\left( {\beta r} \right)} \right|}} = \frac{1}{{{{\left( {\frac{{{\rho ^2}}}{\beta }} \right)}^\alpha }}}\]
Consequently, the probability that $3{\left( {\frac{{{\rho
^2}}}{\beta }} \right)^\alpha }$ samples are not in the ball
${B_T}\left( {\beta r} \right)$ is at most
\[{\left( {1 - \frac{1}{{{{\left( {\frac{{{\rho ^2}}}{\beta }} \right)}^\alpha }}}} \right)^{3{{\left( {\frac{{{\rho ^2}}}{\beta }} \right)}^\alpha }}} \le {\left( {\frac{1}{e}} \right)^3} \le 0.05\]
Thus, with probability more than 95\% we succeed in locating a
node lying in the ball ${B_T}\left( {\beta r} \right)$ with
$3{\left( {\frac{{{\rho ^2}}}{\beta }} \right)^\alpha }$ samples.
\end{proof}

\section{Proof of Corollary \ref{corrolaryGrowthDN2S}}

\begin{corollary}
For a relaxed inframetric model with growth ${\gamma _g}$,
according to the DNNS process in Definition
\ref{samplingProcedure}, the found nearest neighbor is a
$\frac{1}{\beta }$-approximation, and the number of search steps
is smaller than ${{{\log }_{\frac{1}{\beta }}}\Delta }$, where
$\Delta$ is the ratio of the maximum delay to the minimum delay of
all pairwise delays.
\end{corollary}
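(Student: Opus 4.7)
The plan is to establish the two assertions of the corollary separately, leveraging the sampling guarantee in Theorem \ref{thm:SamplingGrowth} at each step of the DNNS process defined in Def \ref{samplingProcedure}.

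For the approximation ratio, I would analyze the termination step. Let $P$ be the node at which the process stops and let $r = d_{PT}$. By Def \ref{samplingProcedure}, termination means that the $3(\rho^2/\beta)^\alpha$ neighbors sampled uniformly at random from ${B_P}(\rho r)$ do not contain any node lying in ${B_T}(\beta r)$. Taking the contrapositive of Theorem \ref{thm:SamplingGrowth}, if ${B_T}(\beta r)$ contained any service node, then with probability at least $95\%$ the sampling would have returned one. Hence, with the corresponding high probability, ${B_T}(\beta r)$ contains no service node, so any service node $n^*$ (in particular the ground-truth nearest one) satisfies $d_{n^* T} > \beta r = \beta d_{PT}$, which rearranges to $d_{PT} < (1/\beta)\, d_{n^* T}$, matching the $(1/\beta)$-approximation in Def of $\omega$-approximation.

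For the bound on the number of search steps, I would track the evolution of the distance to $T$ along the forwarding path. Let $P_0, P_1, \ldots, P_k$ be the sequence of nodes that handle the query, with $P_k$ the terminating node. Each non-terminating step satisfies $d_{P_{i+1}T} \le \beta\, d_{P_i T}$ by construction, so by induction $d_{P_k T} \le \beta^k\, d_{P_0 T}$. Since all nonzero pairwise delays lie in $[d_{\min}, d_{\max}]$ with $\Delta = d_{\max}/d_{\min}$, we have $d_{P_k T} \ge d_{\min}$ and $d_{P_0 T} \le d_{\max}$. Combining these bounds, $\beta^k \ge d_{\min}/d_{\max} = 1/\Delta$, which rearranges (using $\beta < 1$) to $k \le \log_{1/\beta} \Delta$.

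The main subtlety, which deserves an explicit remark rather than a deep argument, is the probabilistic nature of the sampling guarantee: Theorem \ref{thm:SamplingGrowth} provides a $95\%$ bound per step, whereas up to $\log_{1/\beta}\Delta$ steps may be executed. A union bound over all steps degrades the overall success probability, so either this degradation should be stated as part of the corollary's ``with high probability'' semantics, or the per-step sample count should be boosted by a factor logarithmic in $\Delta$ to restore a uniform $95\%$ guarantee on the entire DNNS run. I expect this bookkeeping to be the only non-routine step; the distance-contraction and contrapositive arguments themselves are immediate from the sampling lemma and the sandwich lemma (Lemma \ref{sandwich}).
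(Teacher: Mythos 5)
Your proposal is correct and follows essentially the same route as the paper's proof: per-step distance contraction by $\beta$ guaranteed by Theorem \ref{thm:SamplingGrowth}, a $\log_{\frac{1}{\beta}}\Delta$ bound on the number of steps from the delay spread $\Delta$, and the termination/contrapositive argument showing the final node is within $\frac{1}{\beta}$ of the optimum $d_*$. Your closing remark about the per-step $95\%$ guarantee compounding over up to $\log_{\frac{1}{\beta}}\Delta$ steps identifies a probabilistic bookkeeping issue that the paper's own proof silently glosses over, so on that point your version is the more careful one.
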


\begin{proof}
If a DNNS request is forwarded from node $P$ to node $Q$, the
\textit{progress} is said to be $\frac{d_{PT}}{d_{QT}}$. According
to the DNNS search process, by Theorem \ref{thm:SamplingGrowth},
the progress is at least $\frac{1}{\beta}$ at every node $P$,
therefore in at most ${{{\log }_{\frac{1}{\beta }}}\Delta }$
steps, we reach some node $v$ satisfying $d_{vT} <
{{\frac{1}{\beta }}d_*}$, which terminates the DNNS query process
as we can not find suitable next-hop neighbors, where $d_*$ is the
minimum delay to target $T$. Therefore, the found nearest neighbor
$v$ is $\frac{1}{\beta }$-approximation.
\end{proof}

\end{document}